\newcommand{\Tr}{\mathrm{Tr}}
\newcommand{\bra}[1]{\langle #1 |}
\newcommand{\bbra}[1]{\langle\!\langle #1 |}
\newcommand{\ket}[1]{| #1 \rangle}
\newcommand{\kket}[1]{| #1 \rangle\!\rangle}
\newcommand{\bk}[2]{\langle #1 | #2 \rangle}
\newcommand{\ball}[1]{\mathcal{B}^{\epsilon} (#1)}
\newcommand{\supp}{\mathrm{supp}}
\newcommand{\kb}[2]{| #1 \rangle \langle #2 |}
\newcommand{\hilbert}{\mathcal{H}}
\DeclareMathOperator{\id}{id}
\DeclareMathOperator{\sech}{sech}
\newcommand{\identity}{\mathds{1}}
\newcommand{\stage}{stage\xspace}
\newcommand{\emptypage}{\newpage\null\thispagestyle{plain}\newpage}
\newtheoremstyle{mystyle}
  {\baselineskip}
  {\topsep}
  {\itshape}
  {0pt}
  {\bfseries}
  {.}
  {5pt plus 1pt minus 1pt}
  {}
  \theoremstyle{mystyle}
\newtheorem{thm}{Theorem}[section]
\newtheorem{lemma}[thm]{Lemma}
\newtheorem{defn}[thm]{Definition}
\newcommand{\tfont}{\raggedright\usefont{T1}{qhv}{b}{n}\selectfont} 
\newcommand{\tofont}{\usefont{T1}{qhv}{m}{n}\selectfont} 
\newcommand*{\quotefont}{\itshape} 
\newcommand{\tspace}{\hspace{0.6em}}
\titleformat{\chapter}[display]{\fontsize{22pt}{1em}\tfont}{{\chaptertitlename} \thechapter\vspace{0.4cm}}{0pt}{\fontsize{30pt}{1em}\tfont}
\titleformat{\section}[block]{\Large\tfont}{\thesection\tspace}{0pt}{\Large\tfont}
\titleformat{\subsection}[block]{\large\tfont}{\thesubsection\tspace}{0pt}{\large\tfont}
\titleformat{\subsubsection}[block]{\tfont}{\thesubsubsection\tspace}{0pt}{\tfont}
\definecolor{citec}{RGB}{190,40,5} 
\definecolor{linkc}{RGB}{5,160,40} 
\apptocmd{\sloppy}{\hbadness 10000\relax}{}{} 
\patchcmd{\BR@backref}{\newblock}{\newblock[}{}{}
\patchcmd{\BR@backref}{\par}{]\par}{}{}
\begin{document}

\pagenumbering{Roman}

\begin{titlepage}
\thispagestyle{empty}
\begin{center}

DISS. ETH NO. $22269$ \\[2.6cm]

{\Large \tfont{ASSUMPTIONS IN QUANTUM CRYPTOGRAPHY}} \\[2.6cm]

A thesis submitted to attain the degree of \\[0.2cm]
DOCTOR OF SCIENCES of ETH ZURICH \\[0.2cm]
(Dr. sc. ETH Zurich) \\[1.8cm]

presented by\\[0.6cm]
\textit{NORMAND JAMES BEAUDRY}\\[0.6cm]
Master of Science The University of Waterloo\\[0.2cm]
born on \textit{19.09.1984}\\[0.2cm]
citizen of Canada\\[1.8cm]

accepted on the recommendation of\\[0.6cm]
\textit{Prof.~Dr.~Renato Renner, examiner}\\
\textit{Prof.~Dr.~Christian Schaffner, coexaminer}\\
\textit{Prof.~Dr.~Norbert L{\"u}tkenhaus, coexaminer}

\vfill

2014

\end{center}
\end{titlepage}
\pagenumbering{roman}

\emptypage


\thispagestyle{plain}
\section*{\centerline{Acknowledgements}}
First, thank you to my supervisor, Renato Renner, for the freedom to pursue the research that interested me and the trust you had in my work throughout my PhD. Your keen insight and positive outlook always seem to lead to new ways of tackling any problem.

Thank you to my doctoral committee, including Christian Schaffner and Norbert L{\"u}tkenhaus, for many helpful comments and suggestions for this thesis.

Thanks to the group members of the quantum information group at ETH for an enjoyable time during my PhD. Thank you to those who read early parts of this thesis: David Sutter, L{\'i}dia del Rio, Lea Kr{\"a}mer, Christopher Portmann, Phillipp Kammerlander, Omar Fawzi, Daniela Frauchiger, and Rotem Arnon-Friedman. You all provided very valuable feedback. Thank you to my office mates for helping me with my many questions: Fr{\'e}d{\'e}ric Dupuis, Marco Tomamichel, Rotem Arnon-Friedman, and Christopher Portmann. To Johan {\AA}berg, David Sutter, and Joe Renes, thanks for listening to my questions, despite not being in my office.

To my collaborators, thank you for the opportunity to work with you and our many fruitful discussions: Marco Lucamarini, Stefano Mancini, Nikola Ciagnovi{\'c}, Lana Sheridan, Adriana Marais, Johannes Wilms, and Omar Fawzi, as well as Oleg Gittsovich, Varun Narasimhachar, Ruben Alvarez, Tobias Moroder, and Norbert L{\"{u}}tkenhaus. Thank you to the students I supervised: Nikola Ciganovi{\'c}, Pascal Basler, Paul Erker, and David Reutter. It was a pleasure to work with you.

Thanks to Marco Tomamichel for some of the formatting of this thesis, most notably the excellent references with links. For the translation of the abstract into German, thanks to Volkher Scholz, Lea Kr{\"a}mer, and David Sutter.

Thank you to my family and Jan for all your love and support.

Lastly, special thanks to Fred, Marco, L{\'i}dia, and Johan for all your help and encouragement. Without you, this thesis would not be possible.

\emptypage


\thispagestyle{plain}
\section*{\centerline{Abstract}}

Quantum cryptography uses techniques and ideas from physics and computer science. The combination of these ideas makes the security proofs of quantum cryptography a complicated task.

To prove that a quantum-cryptography protocol is secure, assumptions are made about the protocol and its devices. If these assumptions are not justified in an implementation then an eavesdropper may break the security of the protocol. Therefore, security is crucially dependent on which assumptions are made and how justified the assumptions are in an implementation of the protocol.

This thesis analyzes and clarifies the connection between the security proofs of quantum-cryptography protocols and their experimental implementations. In particular, we focus on quantum key distribution: the task of distributing a secret random key between two parties.

We propose a framework that decomposes quantum-key-distribution protocols and their assumptions into several classes. Protocol classes can be used to clarify which proof techniques apply to which kinds of protocols. Assumption classes can be used to specify which assumptions are justified in implementations and which could be exploited by an eavesdropper.

We provide a comprehensive introduction to several concepts: quantum mechanics using the density operator formalism, quantum cryptography, and quantum key distribution. We define security for quantum key distribution and outline several mathematical techniques that can either be used to prove security or simplify security proofs. In addition, we analyze the assumptions made in quantum cryptography and how they may or may not be justified in implementations.

\emptypage

\thispagestyle{empty}
\vspace*{4.5cm}
\begin{figure}[h]
  {\raggedright{}\quotefont
   In all science we have to distinguish two sorts of laws:\\
   \mbox{first, those that are empirically verifiable but probably only approximate;}\\
   secondly, those that are not verifiable, but may be exact.\par\bigskip
   }  
  \raggedleft{}\small{\scshape -- Bertrand Russell}, On the Notion of Cause (1913)\par
\end{figure}
\clearpage

\emptypage
\tableofcontents
\emptypage
\phantomsection
\addcontentsline{toc}{chapter}{Figures}
\listoffigures
\emptypage

\newpage

\pagenumbering{arabic}

\pagestyle{plain}


\chapter*{Notation}
\addcontentsline{toc}{chapter}{Notation}
\pagestyle{plain}
\begin{table}[ht]
\begin{tabular}{c | p{9cm}}
\bf{Abbreviation} & \bf{Description} \\ \hline
CPTP & Completely positive and trace preserving \\
QKD & Quantum key distribution \\
i.i.d. & Independent and identically distributed \\
POVM & Positive operator valued measure\\
CQ & Classical-quantum  \\
CJ & Choi-Jamio{\l}kowski \\
P\&M & Prepare and Measure
\end{tabular}
\caption{List of common abbreviations.}
\end{table}

\begin{table}[ht]
\begin{tabular}{c | p{9cm}}
\bf{Term} & \bf{Description} \\ \hline
Bit & A binary digit that is either $0$ or $1$ \\
String & A list of bits (or other numbers) \\
Qubit & A quantum bit, i.e.~a two-level quantum system, typically represented with the basis $\{\ket{0},\ket{1}\}$ \\
Key & A string that is supposed to be secret \\
Seed & A short random string used as a catalyst to extract randomness from a system \\
Key rate & The ratio of secret key to number of signals in the limit as the number of signals goes to infinity \\
Error rate & The ratio of the number of errors in the key to the size of the key \\
Threshold & Maximum tolerable error rate \\
Active & A device that requires active control \\
Passive & A device that does not require active control \\
\end{tabular}
\caption{List of common terms in quantum key distribution and information theory.}
\end{table}

\begin{table}[ht]
\vspace{1cm}
\begin{tabular}{c | p{9.5cm}}
\bf{Symbol} & \bf{Description} \\ \hline
$A,B,C,...$ & Quantum systems \\
$W,X,Y,Z$ & Classical systems \\
$\hat{a}, \hat{D}$ & Quantum operators \\
$\hilbert_{A}$ & Hilbert space corresponding to the system $A$ \\
$\hilbert^{\otimes n}$ & $n$ copies of $\hilbert$: $\hilbert\otimes\hilbert\otimes\cdots\otimes\hilbert$ \\
$\Tr,\Tr_A$ & The trace and the partial trace of the system $A$ \\
$\mathcal{P}(\mathcal{H})$ & The set of positive semi-definite operators on $\mathcal{H}$ \\
$A\geq B$ & An operator inequality equivalent to $A-B\in \mathcal{P}(\mathcal{H})$ \\
$S_{=}(\mathcal{H})$ & The set of normalized quantum states \\
$S_{\leq}(\mathcal{H})$ & The set of sub-normalized quantum states \\
$\log \equiv \log_2$ & The logarithm with base 2 \\
$\ln$ & The natural logarithm \\
$\mathds{R},\mathds{C}$ & The real and complex numbers \\
$\identity$ & The identity operator \\
$\id$ & The identity superoperator \\
$\rho,\sigma,\tau$ & Quantum density operators \\
$D(\rho,\sigma)$ & Trace distance between $\rho$ and $\sigma$ \\
$F(\rho,\sigma)$ & The generalized fidelity between $\rho$ and $\sigma$ \\
$P(\rho,\sigma)$ & The purified distance between $\rho$ and $\sigma$ \\
$X^{\dag}$ & The Hermitian adjoint of operator $X$ \\
$X^{T}$ & The transpose of operator $X$ \\
$X^{-1}$ & The generalized inverse of operator $X$ \\
$\| X \|_{\infty}$ & The operator norm of operator $X$ \\
$\| X \|_{1}$ & The trace norm of operator $X$ \\
$[n]$ & The set of integers $\{1,2,\dots,n\}$
\end{tabular}
\caption{List of commonly used symbols and expressions.}
\end{table}


\chapter{Introduction} \label{chap:intro}
\pagestyle{fancy}

Physics aims to describe our physical reality so that we can make predictions about our universe. With mathematics as its backbone, physics has been the most successful way humanity has devised to describe the physical world, allowing us to reach the technological advancement we have today. Usually the fundamental theories of physics have a simple description. This fact is a remarkable feature of our universe! We can reduce the complicated phenomena we observe to mathematical models. However, this raises two questions. Firstly, do the models we use to describe reality are actually what we mean by `reality?' It is ambiguous what the difference is between the models we use to describe reality and what we mean by `reality.' Secondly, do our descriptions properly describe the way nature works or will we continually find that our models are never accurate enough? Maybe it is continually necessary to update our models as we do new and more accurate experiments that go beyond what we have done previously.

Consider Newton's law of gravity. If there are two point masses $m_1$ and $m_2$ with a distance $r$ between them, then the strength of the force that they exert on each other is
\begin{equation}\label{eq:gravity}
F=\frac{Gm_1m_2}{r^2},
\end{equation}
where $G$ is the gravitational constant.

How do we know that this is the way gravity works? First, you could imagine performing an experiment where you try different masses for $m_1$ and $m_2$ separated at different distances and measure the force between them. Then you could see that Eq.~\ref{eq:gravity} seems to describe the value of the force within a certain level of accuracy. As more and more precise experiments are performed, by more accurately measuring the masses and distances, a better estimate on the exact value of the gravitational constant $G$ could be obtained and it could be determined if Eq.~\ref{eq:gravity} holds. Not only can we perform more accurate experiments, but we can also push the boundaries of these parameters. We can try very large or very small masses, as well as very large and very small distances. In these two ways, we can test whether Eq.~\ref{eq:gravity} describes reality or not.

Sometimes laws like Eq.~\ref{eq:gravity} are interpreted as not just a model for reality but as reality itself. That reality \emph{is} the model. However, this equivalence is not true! Physics can only make models for physical reality; we never have direct access to reality itself.

There are two ways in which Eq.~\ref{eq:gravity} can fail. The first is that this model may be fundamentally wrong because there are ranges of parameters or a level of accuracy where the model no longer describes reality. For example, there could be a term we can add that just has a small influence on the force, such as
\begin{equation}
F=\frac{Gm_1m_2}{r^2} + \varepsilon \frac{Gm_1m_2}{r^3},
\end{equation}
for a small constant $\varepsilon$. Maybe we have not performed an experiment that is accurate enough to find this small deviation. Maybe one of the assumptions that is made about Newton's law of gravity, such as the uniformity of three-dimensional space, is wrong. Only by doing more experiments, trying to increase the ranges of the parameters, can we see in which situations our models are applicable. Indeed, we now know that Newton's law of gravity is actually a special case of general relativity. Many situations deviate from Eq.~\ref{eq:gravity}, such as the orbit of planet Mercury \cite{verrier59}.

The second way the model can break down is if the experimental conditions are not ideal. For example, in practice there are no point masses, so does Eq.~\ref{eq:gravity} still apply to reality? For many practical purposes, indeed it is applicable. If the masses are very far apart, then they can be treated approximately like point particles. By using approximations, simple mathematical models like Eq.~\ref{eq:gravity} can be very successful. They describe the way the world works with surprising accuracy and applicability in a variety of situations.

This thesis concerns itself with this second way that reality deviates from the models used to describe it: when the approximations and assumptions we make in order to apply a model to a physical situation are no longer true. Enter cryptography.

Cryptography is the field of study of tasks in the presence of an adversary. One general task in this field is to enable separated people to communicate without giving away any information to an eavesdropper who tries to figure out what they are communicating.

In contrast to physics, cryptography and its parent field, computer science, start with an idealized model, which is implemented using physical devices. This strategy makes the construction of protocols easier to work with, as they are precisely defined. To show that a cryptographic protocol is secure against an adversary can be (relatively) straightforward because a precise model is used that avoids the two types of deviations mentioned above. However, there may be imperfections with the physical devices used to implement the cryptographic protocol. The security may be compromised by imperfections, since these imperfections may leak information to an eavesdropper or decrease the efficiency so that the protocol no longer accomplishes the goal it was designed for. For example, the amount of power a computer uses may tell an adversary what calculation it is running. As another example, two people may want to communicate securely over the internet but imperfections may lead to a leak of their secure messages to an eavesdropper. This potential information leakage means that better cryptographic models are necessary in order to guarantee security in real implementations. It is not enough to prove that a protocol is secure in an idealized setting.

Information is inherently physical, since implementing cryptographic protocols requires the use of physical devices. This means that the disconnect between the models of cryptography and cryptographic implementations is actually the same problem as with the models of physics and physical reality. This relationship is especially apparent in quantum cryptography where quantum physics is used to perform cryptographic tasks. Usually these protocols are described in an idealized setting and then security is proved in these settings. While this idealization is useful, especially when showing that a certain protocol can be secure in principle, it does not say very much about whether any actual implementation is secure or not.

There is an additional challenge: how do we prove that an implementation of a protocol is secure? There have been several efforts to close the gap between the idealized models and their actual implementations. However, much work remains to ensure that the models are robust and realistic enough to be applicable with minimal assumptions. This thesis aims to clarify this connection.

A model can always be applied to an implementation if enough assumptions are made. Therefore, security of a cryptographic protocol is proven \emph{under} a set of assumptions. To apply this security proof to an implementation, the assumptions need to be justified (i.e.~devices need to behave as modelled). If they are not justified, then an adversary may break the security by exploiting this imperfection. It is therefore extremely important that the assumptions made are clearly presented and understood, so that cryptography can be implemented in a way that is as secure as possible. There are two kinds of assumptions: those that are fundamental (such as that quantum mechanics is correct) and those that are practical (such as the characterization of a device). These latter assumptions are the ones that adversaries can exploit and therefore need to be justified.

This thesis will focus primarily on quantum key distribution, but many of the implementations of protocols in quantum cryptography use the same physical devices and have similar assumptions. In the assumptions chapter (Chapter~\ref{chap:assumptions}) many of the issues discussed will be applicable to quantum cryptography in general.

The reader should have a basic understanding of quantum mechanics, including operators, the quantum harmonic oscillator, Dirac notation, and Hilbert spaces. In addition, the reader should have some mathematical knowledge of linear algebra and statistics.

We take an abstract approach to the field of quantum cryptography and in particular, quantum key distribution. This approach will give us the advantage of starting with simple quantum systems. Various protocols in quantum cryptography can then be defined without having to deal with the physical devices used in their implementations. This abstraction sets the foundation for the two goals of this thesis: how security can be proven for quantum key distribution and how these security proofs correspond to implementations. We will not present a complete security proof for a protocol, but instead describe several tools and outline how they are used to prove security. This framework has the advantage that we can separate the techniques and challenges of proving security for idealized models from the techniques for connecting these idealized proofs with implementations. Then we will explain how these protocols may be implemented such that a secure model applies to experiments. As we will see, there are many challenges to overcome to bridge the gap between the perfect models and the physical devices in quantum cryptography. 

In this introduction we will start with an overview of quantum cryptography and some of the protocols that are illustrative of what kinds of tasks are possible in this field. Then, quantum key distribution will be introduced. We start out with describing simple models for several protocols and introduce various abstract resources that are needed to perform quantum key distribution.

\section{Quantum Cryptography}

Quantum cryptography uses quantum states and quantum maps to perform communication or computational tasks in a secure way. There are several tasks and protocols that have been studied, each with a specific goal they try to accomplish. Many of these protocols share similar resources, so before describing particular protocols in quantum cryptography, we list a few resources which are often used.

The protocols used in quantum cryptography usually involve two parties called Alice and Bob. They are named in order to simplify discussions of the protocol. Also, there may be a malicious third party, Eve, who tries to stop Alice and Bob's cryptography protocol or try to learn information that is supposed to be hidden from her.

\subsection{Resources}

One of the basic resources for communication and cryptography are \emph{channels}. Channels allow communication between two or more parties and are usually specified by which kind of messages they allow to be transmitted. For example, a channel may only transmit classical messages or it may allow for quantum states. Also, the channel may be \emph{authenticated}, which means that if one party, Alice, sends a message to another party, Bob, then Bob knows that the messages he receives from this channel must have come from Alice and not from an eavesdropper, Eve. Eve will have access to the communication in an authenticated channel, but she will not be able to change it.

Channels have three eavesdropping models. \emph{Secure} channels only allow communication between the communicating parties and no eavesdropper can get any access to the communication. However, the eavesdropper may learn the length of the communication sent through the secure channel. \emph{Public} channels announce their messages to any eavesdroppers in addition to the communicating parties but the eavesdropper cannot interfere with the communication. Finally, a channel may be \emph{insecure}, which means that Eve can interfere with the signal sent through the channel as much as she likes. For example, for a quantum channel, Eve could apply any quantum map to the signals jointly with an ancillary system of her own.

A classically authenticated public channel between two parties can be constructed from an insecure channel and a shared secret \emph{key}.\footnote{A key is a string (e.g.~a list of numbers) in cryptography that is supposed to be unknown to an adversary.} The key does not need to be uniformly random, but may instead have a lower bound on its entropy \cite{renner03}.

Another resource is a \emph{source}. Sources are either classical or quantum, and produce either random variables with particular distributions (in the classical case) or quantum states. There are also \emph{measurements}. These take quantum states as input and have a classical output.

Lastly, there is \emph{randomness}. This is a string of bits\footnote{A string is a list of characters (but for our purposes, these characters will just be numbers), and bits are the binary numbers that are either $0$ or $1$. \label{foot:string_bits}} that is (preferably) uniformly random at a fixed length. For some applications it can be sufficient to have a non-uniform random string but there may be a guarantee of having a certain amount of randomness, such as a lower bound on the min-entropy with respect to an eavesdropper (see Defn.~\ref{defn:Hmin}).

\subsection{Quantum-Cryptography Protocols} \label{sec:qc_protocols}

Now that we have outlined typical resources, we describe some examples of protocols in quantum cryptography to give a brief overview of the field. 

Sometimes in the literature the term quantum cryptography is used synonymously with quantum key distribution though this is not correct. There are a wide variety of quantum-cryptography tasks.

Many of the protocols below have analogous protocols in a classical setting but using quantum states or quantum computers often have an advantage over what is possible classically.

\begin{itemize}
\item {\bf Secure quantum distributed computing}

Secure distributed computing can be related to many tasks where one party, Alice, wants an untrusted party, Bob, to implement a computation for her. One such protocol is quantum homomorphic encryption: Alice, who usually only has a simple quantum device, wants to get the result of a computation \cite{rohde12}. She then asks Bob (who has a quantum computer) to do this computation for her. However, Alice does not want Bob to find out what her data is. To accomplish this secure computation, Alice encodes her data and sends it to Bob, Bob applies the computation on the encoded data and sends the output to Alice who then decodes the output. Ideally, Bob's computation does not reveal any information about Alice's data to Bob and Alice's decoded output should correspond to the computation applied to her original unencrypted input. For homomorphic encryption Bob knows what computation is being performed.

Quantum homomorphic encryption has been shown to be possible with perfect security \cite{liang13}, and it is possible using boson sampling \cite{rohde12}.

Another distributed computing protocol is blind computation. It is the same protocol as homomorphic encryption, except it should be even more secure: Bob should not know what the computation is either. In this case, Alice sends Bob an encrypted description of the computation she wants to be performed on her encrypted data \cite{childs05}. Bob can input this encrypted description into his quantum computer to tell it what computation to perform. At any stage in the computation Bob should not be able to figure out what Alice's data is or what computation is being applied.

Blind quantum computation is both possible for arbitrary quantum computations \cite{broadbent09} and efficient in the amount of communication needed and the simplicity of the quantum device Alice needs to interact with Bob \cite{giovannetti13,mantri13}. Also, the situation where Alice can do quantum measurements has been considered \cite{morimae13}.

In general, distributed computing is secure, even under composition with other protocols (see Section~\ref{sec:composability}) \cite{dunjko13}. Classical distributed computing can also be enhanced by using quantum devices \cite{dunjko14b}. However, if Alice only has classical devices then she cannot perform quantum distributed computing securely \cite{morimae14}.

\item {\bf Quantum coin flipping}

Quantum coin flipping is designed to have two mutually distrustful parties, Alice and Bob, jointly flip a coin. Even if one of them tries to influence the coin flip, the flip should still be uniformly random \cite{bb84}. While the coin flip cannot be performed perfectly \cite{lo98,mayers97}, it can be performed with a bound of $1/\sqrt{2}$ on the probability a dishonest party gets the outcome they want \cite{chailloux09a}. While this scenario is called strong coin flipping, the task of weak coin flipping is where Alice wants to bias the coin to one result and Bob wants to bias it to the other. In this case the probability a dishonest party gets the outcome they want is $1/2$, which is the optimal achievable bound on the bias \cite{mochon07a,aharonov14}.

It is important to note that quantum coin-flipping protocols can always outperform classical ones \cite{aharonov00}. Coin flipping has been implemented \cite{berlin11} and has applications in other areas of quantum information \cite{damgaard09}.

\item {\bf Quantum zero-knowledge proofs}

Zero-knowledge proofs involve two parties, a prover and a verifier, where the prover tries to convince the verifier that a certain statement is true without revealing any information about the proof, only that the statement is indeed true. This task is usually done in a probabilistic way so that the verifier will be certain with high probability that the statement is true \cite{watrous02}.

As a classical example of a zero-knowledge proof, consider a colour-blind Bob who has two spheres that are identical, except one is red and the other is green. Bob cannot tell them apart, but Alice, who is not colour blind, can still prove to Bob that they are different. Bob takes one sphere in each hand, which Alice can see, and then secretly either leaves them that way or switches which hand holds which sphere. Alice can then tell whether Bob made a switch or not. If Alice can tell them apart, then after many repetitions of the game, Bob will be convinced the spheres are different. If Alice cannot tell them apart then Alice will not be able to guess what Bob did and will probably make a mistake in guessing whether Bob did a switch or not.

Another related task is zero-knowledge proof of knowledge, where the prover not only tries to prove that something is true but that they have access to the proof \cite{unruh12}. For example, not only that a signature from a trusted authority exists but that the prover has such a signature.

Zero-knowledge proofs can be used in cryptography to ensure that honest parties are indeed honest, without needing to reveal any other information and ensures that the quantum-cryptography protocol does not leak any additional information to an eavesdropper or a dishonest party. Zero-knowledge proofs also have applications to the hardness of determining whether the output state of a quantum circuit is entangled or separable (see Section~\ref{sec:operators}) \cite{hayden13}.

Some classical and quantum zero-knowledge proofs can be secure against verifiers who either try to get some information about the proof or provers who try to lie about knowing that the statement is true \cite{watrous06}. Some classical zero-knowledge proofs are not secure in the quantum setting \cite{ambainis14}. The connection between classical and quantum zero-knowledge proofs has been analyzed \cite{chailloux08}. Also, quantum zero-knowledge proofs can be constructed based on quantum bit commitment (see below)\cite{donascimento09}.

\item {\bf Random number generation}

Random numbers are useful for a variety of tasks, such as online gambling, computation, and cryptographic protocols. In classical computation, pseudorandom numbers are often used and are sufficient for many applications. Pseudorandom numbers are generated through a deterministic process but may appear under some statistical tests to be random. However, for cryptography, it can be completely insecure to use pseudorandom numbers in the place of truly generated random numbers.

Random numbers can only be produced from physical processes that are stochastic. Examples include atmospheric noise, thermal noise, or quantum processes. Quantum devices can produce randomness with relatively simple devices and rely on the randomness inherent in quantum mechanics, since measurement outcomes sample a probability distribution. It has recently been shown that quantum random numbers can even be extracted by using the camera in a mobile phone \cite{sanguinetti14}.

Random number generation is a cryptographic task because randomness is defined as having some information (such as a string of bits) that is independent of any adversary who tries to get information about the randomness during its generation.

There are related tasks, such as trying to amplify randomness: by starting from a small string of randomness (called a \emph{seed}) a larger string of randomness can be constructed \cite{colbeck12}. Some recent results show that any information that is not completely deterministic can be made completely random, even in the presence of noise \cite{gallego13,brandao13}. Also, randomness can be extracted from devices without making assumptions about the structure of the devices used \cite{colbeck06,pironio10a}.

\item {\bf Quantum oblivious transfer}

Oblivious transfer involves one party, Alice, who has a list of possible messages, and another party, Bob, who wants to learn one of the messages \cite{wiesner83}. However, Alice should not learn which message Bob asked for and Bob should not be able to get any information about any of Alice's other messages except for the message he requests. This protocol would ideally work even if Alice or Bob tries to behave adversarially. In general, oblivious transfer protocols are denoted as ``$k$-out-of-$n$'', meaning that Bob requests $k$ messages from the total, $n$. The simplest oblivious transfer protocol is then $1$-out-of-$2$.

As with coin flipping, this protocol cannot be implemented perfectly: Alice may learn which messages Bob requested and Bob may be able to get access to some of Alice's messages he did not request \cite{mayers97,lo98}. The minimum probability that Alice or Bob can cheat in this protocol and not be detected is $2/3$ \cite{chailloux13}. If additional assumptions are made, such as that the adversary has a limit on her computational power (in the classical case) or can only store a certain size of quantum system (in the quantum case) then oblivious transfer is possible \cite{damgard06}.

\item {\bf Quantum bit commitment}

Quantum bit commitment is closely related to coin flipping, zero-knowledge proofs, and oblivious transfer. It is the task of having one party, Alice, commit to a value that is hidden until a later point when she will reveal the value. To implement this protocol, Alice sends a quantum state to Bob that will contain an encrypted version of her committed value. At some later point she will reveal her value by telling Bob how to decode the encrypted value from the state he received. The protocol is secure against a cheating Alice if Alice cannot change the value after she has committed to it. The protocol is secure against a cheating Bob if Bob cannot learn the value before Alice chooses to reveal it. However, quantum bit commitment is not completely secure against a cheating Alice or a cheating Bob unless additional assumptions are made \cite{lo98,mayers97,brassard97}.

The optimal bound on the probability that Alice changes her commitment without being detected in this setting is $0.739$ \cite{chailloux11}. If special relativity is used with quantum mechanics, then bit commitment can be made secure against a cheating Alice \cite{kent11,kent12,kent12a,croke12}. Also, if additional assumptions are made about the capabilities of Alice and Bob, quantum bit commitment is possible and can be implemented with current technology \cite{loura14}.

\item {\bf Quantum key distribution}

The goal of quantum key distribution (QKD) is to distribute a secret random string of classical bits between two (or more) trusted parties. That is, they want to have a string of bits (see Footnote~\ref{foot:string_bits}) that are identical and unknown to an eavesdropper that has tried to figure out what the string is by listening to or modifying the communication between the parties. In order for the string to be secret, it should be random, which means that there is an equal probability of getting a $0$ or a $1$ at every position in the string, independent of any other bit in the string as well as any other information. The string in this context is referred to as a \emph{key}.

Secret random classical strings are useful for a variety of tasks in cryptography and computer science. One straightforward use is as a key for the one-time pad encryption (also called the Vernam cipher) \cite{miller1882,vernam19}. It is a protocol that allows for two parties to communicate privately (i.e.~to construct a private channel) by using a secret random classical string that they share and an authenticated public classical channel. Alice encodes her message by adding it bit-wise $\text{mod } 2$ to her key (see Fig.~\ref{fig:OTP}), which results in a string called a \emph{ciphertext}. Alice sends the ciphertext through an authenticated classical channel to Bob. Then, since Bob has the same key, if he adds the key to his received message, he gets Alice's original message.

\begin{figure} \centering
\includegraphics[width=\textwidth]{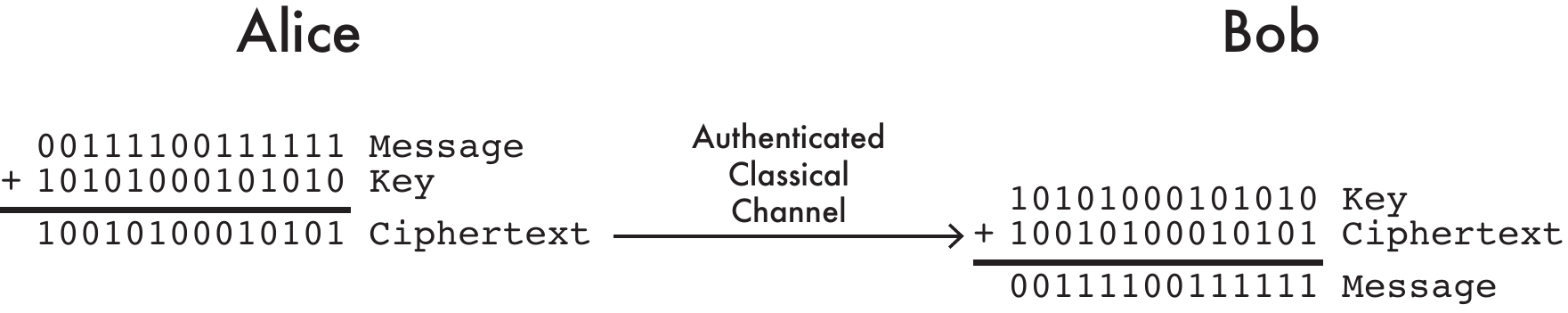}
\caption[The One-Time Pad]{The one-time pad protocol. Alice and Bob share a secret key that is at least the length of a message. Alice adds her key to her message (modulo $2$), which encodes the message as a \emph{ciphertext}. Alice sends the ciphertext to Bob through an authenticated classical channel. Bob can determine Alice's message by adding the key to the ciphertext (modulo $2$).}
\label{fig:OTP}
\end{figure}

In order to distribute a secret random string several resources are needed. Alice and Bob will use an insecure quantum channel to send quantum states to each other. Alice and Bob will also need to communicate classically, which they do through an authenticated classical channel. They will also need some randomness which may be used to choose measurement settings or for classical post-processing.

Since an authenticated channel is used for QKD, QKD has also been referred to as a quantum key growing or a quantum key extending protocol, since it often uses an authenticated channel constructed from a shared secret key that Alice and Bob share. QKD then extends, or grows, this key. Other authentication schemes can be used which do not require an initial shared secret key, such as in \cite{mosca13}.

\end{itemize}

While this thesis focuses on QKD, many implementations of quantum-cryptography protocols use similar devices, and therefore many of the issues discussed throughout this thesis will also apply to other quantum-cryptography protocols.

\section{Quantum Key Distribution}

The task of distributing random secret keys can be accomplished without performing quantum key distribution (QKD). Instead of going through the trouble of using quantum mechanics, keys could be distributed by using a source of randomness and then copying this randomness onto two hard drives, where Alice keeps one and gives the other to Bob. Also, why do we need quantum key distribution in combination with the one-time pad for secure communication if we can use current classical cryptography used for online security today?

There are several advantages that QKD provides over other alternatives \cite{scarani09,scarani09b}. Current classical cryptography is usually based on the assumption that a particular mathematical problem is hard, such as factoring large integers \cite{diffie76}. Using this kind of cryptography carries the risk that it may be broken if a classical algorithm is invented that is faster at factoring large numbers than what is currently known. Also, if and when quantum computers are built, they can factor large integers efficiently using an algorithm by Shor \cite{shor99}. Furthermore, even if a problem is hard to solve, it can still be solved! So if an eavesdropper has enough time and computing power they can always decode the secure communication. For information that must be secret for a long period of time, classical cryptography may not be sufficient.

In contrast to classical cryptography, QKD does not rely on the computational difficulty of a mathematical problem but instead it relies on \emph{information-theoretic} security, which means that the probability that an eavesdropper gets any information about the key can be made incredibly small, no matter what computational power an eavesdropper has at their disposal. There are other notions of security but these rely, for example, on computational hardness assumptions.

QKD also has an advantage over the distributed hard drive scenario above, since it can make arbitrary long keys from an initial seed. New hard drives would have to be distributed to extend the key in the other scenario.

QKD also has some disadvantages over classical cryptography. Due to losses and errors QKD cannot be done over distances longer than $\sim 200$-$300$km with current technology \cite{scarani09,xu13,korzh14}. However, it may be possible in the future to use satellites to extend this distance \cite{meyer-scott11,wang12,qi14,radchenko14,vallone14}. Also, the speed at which a secure key can be generated is typically much slower than what is possible with current classical cryptography. Lastly, some assumptions that are needed for a quantum-cryptography protocol to be secure are impractical or not yet possible with current technology. These are some of the challenges that QKD faces in order to become more widely used.

In this section, we discuss the structure that QKD protocols follow. Then we list several protocols and how they would be implemented in an ideal way. We categorize QKD protocols by whether they have discrete or continuous measurement outcomes, as the devices used in these two kinds of protocols are different. We also discuss device-independent protocols that do not make assumptions about the structure of the devices or the states used in the protocol.

\subsection{General QKD Structure}

Almost all QKD protocols follow the same general structure. We will focus on bipartite QKD, where there are two parties, Alice and Bob, who are trying to construct a shared secret random string. However, there are also schemes for multi-party QKD \cite{cabello00,lee04}.

First, there is a quantum \stage followed by a classical \stage. In the quantum \stage, Alice and Bob send quantum states to each other, or perhaps only Alice sends states to Bob, through an insecure quantum channel. These quantum states are associated with classical bits that Alice and Bob are trying to communicate to each other. The classical stage, usually called classical post-processing, is performed on their measurement outcomes to correct any errors due to noise in the quantum channel or in their devices. Also, an eavesdropper could have interfered with the signals, and they need to ensure that any knowledge an eavesdropper has gained is removed.

In classical post-processing there are usually at least three steps: parameter estimation, information reconciliation (also sometimes called error correction), and privacy amplification. Alice and Bob will need to communicate classically for the classical post-processing and they need to know that an eavesdropper does not interfere with this communication, so they use an authenticated classical channel.

Typically, there is an asymmetry in the quantum \stage of the protocol between Alice and Bob. For example, Alice may prepare quantum states that Bob measures. This implies that the classical data that Alice and Bob hold after the quantum \stage come from different sources. Alice may have prepared a uniformly random string to pick which quantum states she prepares, while Bob gets his classical data from the output of a quantum measurement. This creates an asymmetry in the classical post-processing, which can be performed in one of three ways. The first way, called \emph{direct reconciliation}, is if Alice only sends classical information about her string to Bob and Bob does not tell Alice anything about his string. If the roles of Alice and Bob are reversed, so that Bob only sends classical information about his string to Alice, then this is called \emph{reverse reconciliation}. Direct and reverse reconciliation are \emph{one-way} classical post-processing. Despite that the communication is one-way, the other party, such as Bob in direct reconciliation, may need to communicate some auxiliary information to Alice, such as whether they should abort or continue the protocol (see below for more information on aborting).

They can also implement the post-processing by using \emph{two-way} communication, where Alice and Bob send information to each other about their strings. Typically one-way communication is considered since it is usually easier to analyze and sufficient to perform the post-processing. Throughout this thesis we will assume that direct reconciliation is being performed.

The first step of classical post-processing is parameter estimation, where Alice and Bob can get some statistical knowledge about their strings in order to figure out how many errors they have and also how much information an eavesdropper may have on their strings. Then they use the information they learned from parameter estimation to perform an information-reconciliation step to correct any errors between their two strings. After this subprotocol they should have the same string (at least with very high probability). They finish with privacy amplification, in order to remove any information that an eavesdropper may have about their strings (at least with very high probability). In order to perform the classical post-processing, Alice and Bob need a source of randomness (see Section~\ref{sec:post_processing}).

Note that after parameter estimation they may see that their strings have a large fraction of errors between them. In this case they have to abort the protocol since an eavesdropper could have gained so much knowledge about their strings that no amount of privacy amplification would make their key secure. We call the number beyond which the error rate (or other statistical quantity) cannot exceed the \emph{threshold} of the protocol. To find this threshold, the parameters of the protocol need to be analyzed (see Section~\ref{sec:tuning}).

\subsubsection{Parameter Estimation}

Parameter estimation in QKD is the task of using statistics on a small sample of Alice's and Bob's strings to estimate a global property of those strings. For example, the number of errors between Alice's and Bob's strings can be estimated from a small sample by using Chernoff-Hoeffding type bounds \cite{chernoff52,hoeffding63,serfling74} (see Lemma~\ref{lemma:serfling}). These bounds are statistical inequalities that state that if a random subset of data is known, then a statistical property of the sample must be close to the statistical property of all of data. In the example of estimating the number of errors, Alice communicates to Bob a fraction of her string and Bob finds that they have an error fraction (also called an error ratio or error rate), say, of $5\%$. Then they know that the total error rate of their strings is (with high probability) close to $5\%$. The closeness is exponentially close in the size of the sample (see Lemma~\ref{lemma:serfling}).

Parameter estimation can be accomplished if Alice sends Bob a small sample of her string through the authenticated classical channel. Bob can then tell Alice what error rate he sees so that Alice also knows the error rate. If they see that their error rate is beyond the threshold allowed, they abort the protocol. Otherwise, they continue.

See Section~\ref{sec:parameter_estimation2} for the details of parameter estimation. After Alice and Bob have done the estimation, they are ready to correct the errors between their strings.

\subsubsection{Information Reconciliation}

In information reconciliation, Alice and Bob try to correct the errors between their strings which may have been caused by an eavesdropper or noise in the channel and devices they used. They want to communicate a minimal amount of relevant information to each other over the classically authenticated channel so that they can correct any errors. From parameter estimation they have an estimate on the number of errors between their strings, so they just need to figure out where their errors are \cite{wegman81}.

The information-reconciliation procedure may be probabilistic so that with high probability it succeeds in correcting all the errors and with a small probability it does not. Alice and Bob may have to check if error correction has succeeded or not. Therefore, they can communicate a small amount of information to ensure they have the same string after their error correction.

See Section~\ref{sec:error_correction2} for the details of how information reconciliation can be implemented.

\subsubsection{Privacy Amplification}

After information reconciliation, Alice and Bob have the same strings. Now they need to remove any information an eavesdropper may have learned about their shared string. Privacy amplification achieves this task at a cost of reducing the size of Alice and Bob's string \cite{bennett95a}. The shorter they make their shared string, the more secure their shared string will be.

Note that the eavesdropper gets information about Alice and Bob's string in one of two ways. One is through manipulating the quantum states during the quantum \stage of the protocol. The other is by using the information that is sent through the authenticated classical channel, which includes the communication used for parameter estimation, the communication used to correct the errors during information reconciliation, and the communication to make sure the error correction procedure has succeeded.

For the details of privacy amplification, see Section~\ref{sec:privacy_amplification2}.

We now list common QKD protocols in two categories that classify what kind of states are used (see Section~\ref{sec:classes} for the full classification of QKD protocols). First, there are discrete protocols that have measurements with discrete outcomes, and second, there are continuous-variable protocols that have measurements with continuous outcomes. We present the protocols here in their idealized form for clarity of exposition and leave the details of their implementations for later (Chapter~\ref{chap:assumptions}). We will also discuss the current status of the security of these protocols. In the assumptions chapter (Chapter \ref{chap:assumptions}), we will discuss how these protocols are actually implemented and how these implementations differ from their idealized form.

\subsection{Discrete Protocols} \label{sec:discrete_protocols}

Discrete protocols have at least one quantum measurement whose outcomes come from a (usually small) discrete set. Typically, they are modelled in an ideal setting by the encoding of classical bits in finite-dimensional quantum states.

First, we list several protocols that use qubits (i.e.~two-level quantum systems) as the quantum states that are sent through the quantum channel and then we will list some protocols that are still discrete but do not use qubits for their quantum states.

\subsubsection{BB84}

BB84 was the first QKD protocol, developed in 1984 by Bennett and Brassard (hence the name) \cite{bb84}. It is probably the most analyzed QKD protocol, not only due to it being the first, but also due to its simplicity and symmetry. The BB84 protocol has several security proofs that apply under various assumptions, for example \cite{lo99,mayers01,mayers96,biham97,shor00,koashi03,gottesman04,rennerphd,kraus05,renner05a,tomamichel12a,ferenczi12}.

The protocol is defined as follows. First, Alice prepares one of four qubit states
\begin{equation} \label{eq:bb84states}
\ket{0},\ket{1},\ket{+}:=\frac{\ket{0}+\ket{1}}{\sqrt{2}},\ket{-}:=\frac{\ket{0}-\ket{1}}{\sqrt{2}},
\end{equation}
and she sends them through an insecure quantum channel to Bob (see Fig.~\ref{fig:BB84}). Bob randomly chooses one of two bases ($\{\ket{0},\ket{1}\}$ or $\{\ket{+},\ket{-}\}$) uniformly at random to measure each signal he receives (see Defn.~\ref{defn:measurement}). These bases are often referred to as the $Z$ and $X$ basis respectively, since they are the set of eigenvectors of the Pauli matrices
\begin{equation}
\sigma_Z = \begin{pmatrix} 1 & 0 \\ 0 & -1\end{pmatrix}, \sigma_X = \begin{pmatrix} 0 & 1 \\ 1 & 0\end{pmatrix}.
\end{equation}
Whenever Alice or Bob send/measure the states $\ket{0}$ or $\ket{+}$ they store a $0$ in their classical computer and whenever they send/measure $\ket{1}$ or $\ket{-}$ they store a $1$. They now both have a string of bits.

\begin{figure} \centering
\includegraphics[width=\textwidth]{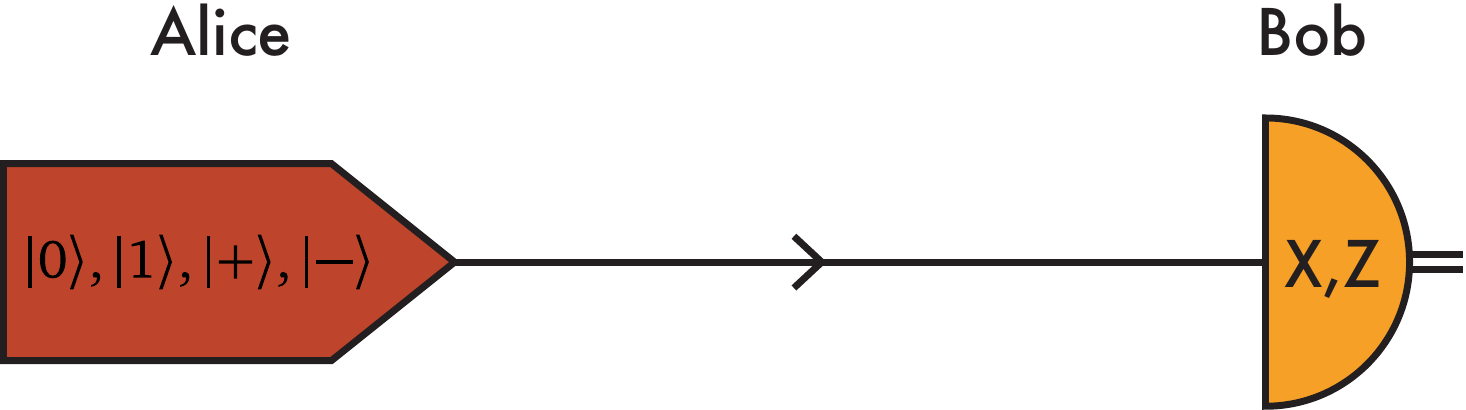}
\caption[The BB84 Protocol]{The BB84 protocol. Alice prepares one of the four states $\{\ket{0},\ket{1},\ket{+},\ket{-}\}$ with equal probability and Bob measures in the $X$ basis ($\{\ket{+},\ket{-}\}$) or $Z$ basis ($\{\ket{0},\ket{1}\}$) with equal probability.}
\label{fig:BB84}
\end{figure}

Alice then classically communicates which basis her states were in and Bob tells Alice which bases he measured in. Alice and Bob throw away the bits where Bob's measurement basis and Alice's signal do not match. This step of checking their bases and throwing away these bits is called \emph{basis sifting}. They continue on to the classical post-processing steps after basis sifting.

\subsubsection{Ekert91}

The Ekert91 protocol \cite{ekert91} is similar to the BB84 protocol and in an ideal setting is actually the same as the BB84 protocol \cite{bennett92b}. Here we present a slightly different version of what Ekert originally presented, in order to connect it with the BB84 protocol. Eve, or another untrusted source, prepares entangled bipartite qubit states (see Defn.~\ref{defn:separable}). Ideally this state has the form
\begin{equation}
\ket{\psi^+} = \frac{\ket{00}+\ket{11}}{\sqrt{2}},
\end{equation}
which is from the Bell basis (see Eq.~\ref{eq:bell_2}) \cite{nielsen00}. Alice gets one of the qubits, and Bob gets the other (see Fig.~\ref{fig:Ekert}). Uniformly at random they each choose a basis to measure in and do the same measurement as in the BB84 protocol ($\{\ket{0},\ket{1}\}$ or $\{\ket{+},\ket{-}\}$).

\begin{figure} \centering
\includegraphics[width=\textwidth]{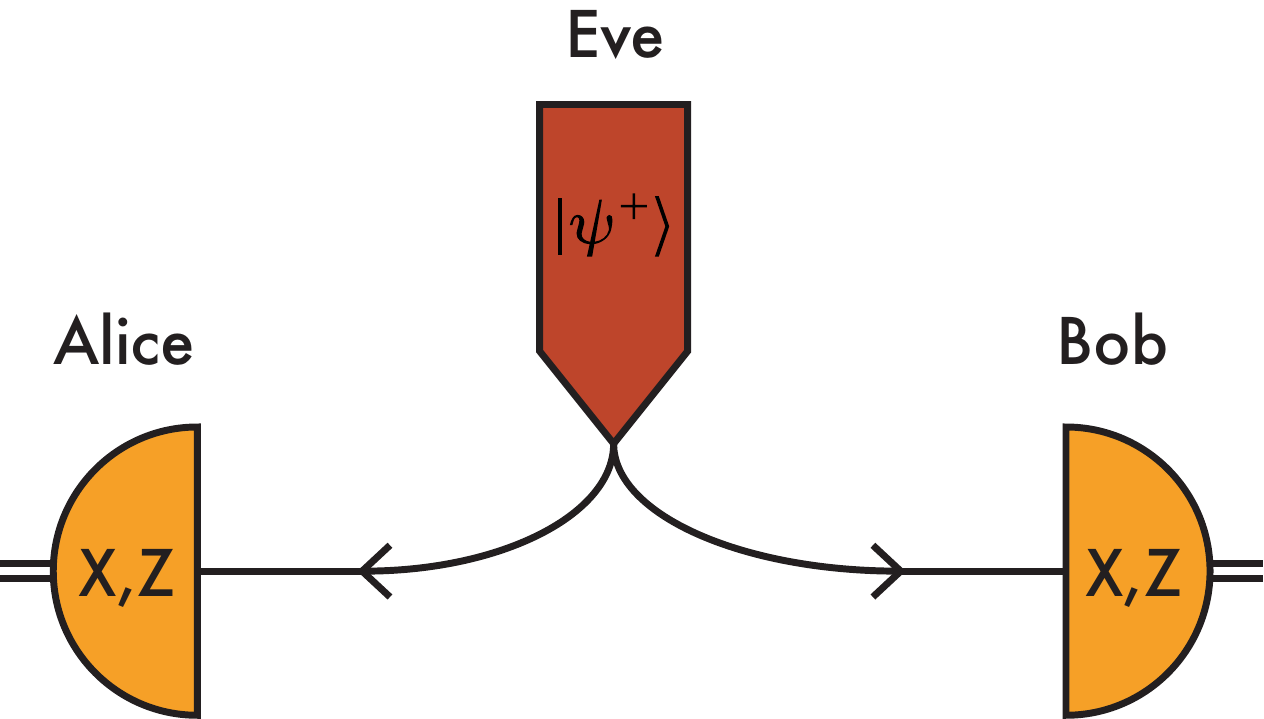}
\caption[The Ekert91 Protocol]{The Ekert91 protocol. Eve prepares a bipartite state that is ideally a maximally entangled two-qubit state. Alice and Bob uniformly at random measure in the $X$ or $Z$ basis.}
\label{fig:Ekert}
\end{figure}

Alice and Bob then do basis sifting, as in the BB84 protocol, followed by classical post-processing.

To see how the BB84 and Ekert protocols are equivalent, notice that the production of the entangled state $\ket{\psi^+}$ and a measurement on one of the qubits in one of the two bases $\{\ket{0},\ket{1}\}$ or $\{\ket{+},\ket{-}\}$ leaves the other qubit in one of the states from Eq.~\ref{eq:bb84states}. In the BB84 protocol Alice chooses one of four states to send, which she could choose by doing a measurement on a four-dimensional ancillary system consisting of the states $\ket{0},\ket{1},\ket{2},\ket{3}$. We write Alice's state in the BB84 protocol as:
\begin{align}
&\frac{1}{2}(\ket{0}\ket{0}+\ket{1}\ket{1}+\ket{2}\ket{+}+\ket{3}\ket{-}) \\
&= \frac{1}{2}\left(\ket{0}\ket{0}+\ket{1}\ket{1}+\ket{2}\left(\frac{\ket{0}+\ket{1}}{\sqrt{2}}\right)+\ket{3}\left(\frac{\ket{0}-\ket{1}}{\sqrt{2}}\right)\right) \\
&= \frac{1}{2}\left(\left(\ket{0}+\frac{\ket{2}+\ket{3}}{\sqrt{2}}\right)\ket{0}+\left(\ket{1}+\frac{\ket{2}-\ket{3}}{\sqrt{2}}\right)\ket{1}\right) \\
&= \frac{1}{\sqrt{2}}\left(\ket{\tilde{0}}\ket{0}+\ket{\tilde{1}}\ket{1}\right),
\end{align}
where $\ket{\tilde{0}}$ and $\ket{\tilde{1}}$ are orthonormal states that are linear combinations of the basis vectors in Alice's four-dimensional space. Therefore, if Alice prepares the entangled state and does a measurement on one half of it then it is the same as having a source that just prepares one of the states from Eq.~\ref{eq:bb84states}. Therefore, the BB84 and Ekert91 protocols are equivalent if Alice or Bob prepare the entangled state $\ket{\psi^+}$ and measure qubits.

If Eve is preparing the bipartite states in the Ekert91 protocol, then she will have more power than in the BB84 protocol, since in the BB84 protocol she can only modify the state sent from Alice to Bob.

In an experiment it is more difficult to connect the BB84 protocol to the Ekert91 protocol (see Section~\ref{sec:connection}). Also, the original protocol by Ekert was intended to be device-independent (see Section~\ref{sec:DIprotocols} for a description of device-independent protocols).

\subsubsection{BB84 Variants}

There are several variants of the BB84 protocol. Two notable examples are the six-state protocol \cite{bruss98} and SARG \cite{scarani04}.

The six-state protocol is an extension of the BB84 protocol from four states $\{\ket{0},\ket{1},\ket{+},\ket{-}\}$ to six states by adding $\ket{i}:=(\ket{0}+i\ket{1})/\sqrt{2},\ket{-i}:=(\ket{0}-i\ket{1})/\sqrt{2}$, called the Y basis, since it is the set of eigenvectors of the Pauli matrix
\begin{equation}
\sigma_Y=\begin{pmatrix}0 & -i \\ i & 0\end{pmatrix}.
\end{equation}
The six-state protocol is of interest because it was found to be more efficient than the BB84 protocol \cite{bruss98}. Also, the measurements are extended to include a third basis $\{\ket{i},\ket{-i}\}$. Bob then chooses one of the three bases uniformly at random. Alice and Bob do basis sifting afterwards, discarding any measurement/preparation pairs that are not in the same basis.

The SARG protocol was introduced as an alternative to the BB84 protocol to counteract an attack that Eve can apply to implementations of  BB84 \cite{brassard00,bennett92a,dusek99,huttner95}. It works the same as the BB84 protocol, except it reverses the role of the states and bases. If Alice sent a state in the $Z$ basis, she writes a $0$ and she sent a state in the $X$ basis, she writes a $1$. Bob's string is more complicated and will be explained below.

After the quantum \stage of the protocol, Alice communicates one of the following four sets that contains her sent state $\{\ket{0},\ket{+}\}, \{\ket{0},\ket{-}\},$ $\{\ket{1},\ket{+}\},$ $\{\ket{1},\ket{-}\}$. Since these sets have some states in common, Alice will uniformly at random choose a set that is compatible with the state she sent. Bob can then figure out which state Alice sent with probability $1/2$. For example, if Alice announces the set $\{\ket{0},\ket{+}\}$ and she sent the state $\ket{+}$, then if Bob measured in the $Z$ basis and gets outcome $\ket{1}$ he knows that Alice must have sent the state $\ket{+}$, and therefore writes down the bit $1$. Similarly, if Alice had sent the state $\ket{0}$ and announced the same set, and if Bob measured in the $X$ basis and got the outcome $\ket{-}$ he knows Alice must have sent $\ket{0}$ and he writes down a $0$.

Alice and Bob do basis sifting as in the BB84 protocol. If Bob gets a measurement outcome that is not in Alice's announced set or that is inconclusive (such as getting outcome $\ket{0}$ and the set announced is $\{\ket{0},\ket{+}\}$) then he tells Alice and they discard this measurement outcome.

Classical post-processing follows the six-state and SARG protocols after basis sifting.

\subsubsection{B92}

Another BB84 protocol variant is the B92 protocol \cite{bennett92a}. It differs from BB84 by only using two states: $\ket{0}$ and $\ket{+}$ (see Fig.~\ref{fig:B92}). Sometimes two non-orthogonal states are used other than $\ket{0}$ and $\ket{+}$, but here we use $\ket{0}$ and $\ket{+}$ for simplicity. Also, Bob only does a single measurement; he does not have a basis choice. This means that the basis sifting step is not necessary.

Bob's measurement is unambiguous state discrimination \cite{nielsen00}. For the states $\ket{0}$ and $\ket{+}$ Bob's measurement is described by the three positive operator valued measure (POVM) elements (see Defn.~\ref{defn:measurement})
\begin{equation}\label{eq:USD}
F_{0} = \frac{\sqrt{2}}{1+\sqrt{2}}\kb{-}{-}, \quad F_{1} = \frac{\sqrt{2}}{1+\sqrt{2}}\kb{1}{1}, \quad F_{?} = \identity - F_0 - F_1.
\end{equation}

With this measurement, Bob knows that when he gets outcome $0$ that he could not have had the state $\ket{+}$, since $\ket{+}$ and $\ket{-}$ are orthogonal ($\bk{+}{-}=0$). Similarly, when Bob gets outcome $1$, he could not have had the state $\ket{0}$. If he gets outcome `$?$' then he does not know which state he received. Bob will also keep track of the number of `$?$' measurement outcomes he gets. The `$?$' outcomes are important, since Eve could just do the same measurement as Bob before him, and always know what Bob's measurement outcomes would be. However, if Eve does the same measurement then Bob will see a higher number of `$?$' outcomes. Alice and Bob will abort the protocol if the number of `$?$' events is beyond a certain threshold.

Also, Bob reveals the positions in which he got outcome `$?$' so that Alice knows to throw that bit of her string away. Alice and Bob then continue to the classical post-processing steps.

\begin{figure} \centering
\includegraphics[width=\textwidth]{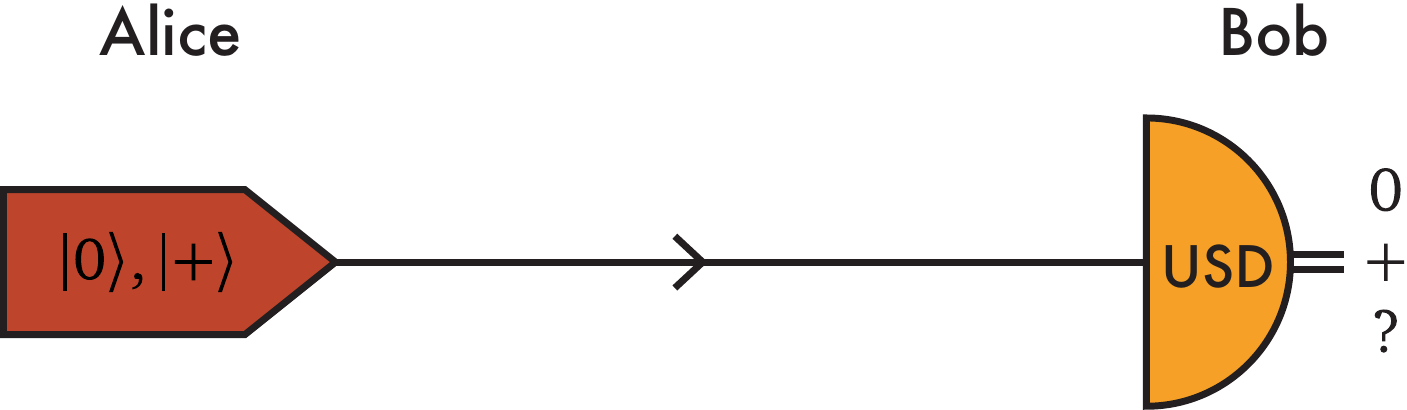}
\caption[The B92 Protocol]{The B92 protocol. Alice randomly prepares either $\ket{0}$ and $\ket{1}$ and Bob does unambiguous state discrimination between these states (Eq.~\ref{eq:USD}). Bob either gets outcome $0$ or $+$ to indicate which state he received or `$?$' when his measurement is inconclusive.}
\label{fig:B92}
\end{figure}

B92 has been proven secure for single photons \cite{tamaki03,tamaki04} as well as with more sophisticated models for the states used \cite{koashi04,tamaki09}.

When there is loss in the quantum channel, Eve can attack the B92 protocol by replacing the lossy channel with a lossless channel and by doing the same unambiguous state discrimination measurement as Bob. When she gets a definite outcome, she forwards the post-measurement state to Bob. If she gets the `$?$' outcome then she does not send Bob a state. If the loss is high enough in the channel, then Alice and Bob will not be able to tell this attack apart from loss, and Eve gets full information about the state that Alice sent whenever Bob gets a measurement outcome. To avoid this attack, some have proposed that Alice send a strong reference pulse with each quantum state \cite{koashi04,tamaki09}. The strong reference pulse is a laser pulse that has a huge number of photons and it can usually be considered to be a classical optical signal. Therefore, Bob is guaranteed to get a strong reference pulse, even if there is high loss in the channel between Alice and Bob, which makes it possible for them to detect when Eve is doing this attack.

Now we describe two discrete protocols that do not use qubits. These two protocols are \emph{distributed phase} protocols because they encode information in the relative phase between pulses of light.

\subsubsection{Differential Phase Shift (DPS)}

One of the problems with the above qubit protocols is that they often require a basis choice, which needs either active elements (i.e.~moving parts that require inputs) in the devices to choose the basis, or a device to do the basis choice in a passive way (without having to actively change the device, see Section~\ref{sec:measurements}). The DPS protocol was first proposed by \cite{inoue02,inoue03} as a protocol that can be implemented simply and in a passive way.

We present the simplified version of the protocol from \cite{inoue03} instead of how it was original proposed \cite{inoue02}. First, we introduce the notion of a coherent state, defined as
\begin{equation}\label{eq:coherent_state}
\ket{\alpha} := e^{-\frac{|\alpha|^2}{2}} \sum_{n=0}^{\infty} \frac{\alpha^n}{\sqrt{n!}}\ket{n}, \quad \alpha\in\mathds{C},
\end{equation}
where $\ket{n}$ is a Fock state\footnote{The Fock state, $\ket{n}$, is the energy eigenstate of the quantum harmonic oscillator with Hamiltonian $H=\hat{a}^{\dag} \hat{a}+\frac{1}{2}$, with creation and annihilation operators $\hat{a}^{\dag}$ and $\hat{a}$. This state represents the number of photons that are in a pulse from a laser. A coherent state is an eigenstate of the annihilation operator: $\hat{a}\ket{\alpha} = \alpha\ket{\alpha}$.}. Coherent states are a superposition of a Poisson distribution over the state for each number of photons. To see that this superposition follows a Poisson distribution, note that the probability of getting outcome $n$ when doing a projective measurement of the number of photons is
\begin{equation}\label{eq:Poisson}
\Pr\left[\text{n photons}\right] = \left|\bk{n}{\alpha}\right|^2 = e^{-|\alpha|^2} \frac{|\alpha|^{2n}}{n!},
\end{equation}
which means that the average number of photons is $|\alpha|^2$. Often in this context, instead of using the parameter $\alpha$, the average photon number $\mu:=|\alpha|^2$ is used instead (so a coherent state would be written as $\ket{\sqrt{\mu}}$).

In the protocol, Alice pulses her laser at fixed intervals to produce a train of pulses that each contain a coherent state (see Fig.~\ref{fig:DPS}). For each of the pulses she sends, she uses a secret random bit string, $S$, to choose if she will change the phase\footnote{By phase, we mean a factor $e^{i\varphi}$ in front a quantum state. Note that while global phases in quantum mechanics cannot be measured (and therefore descriptions of states with a global phase are all equivalent descriptions), relative phases can be measured. Also note that the phase $e^{i\varphi}$ is different from an optical phase (see Appendix~\ref{app:squeezed}). See Section~\ref{sec:phase_coherence} for more details on relative phases.} of the next pulse relative to the previous pulse. This phase encodes a classical string $S=S_1S_2\cdots S_n$, where $S_i\in\{0,1\}$ determines the relative phase between the pulses:
\begin{equation}
\ket{\Psi} = \ket{e^{i\phi_1}\alpha}\ket{e^{i\phi_2}\alpha} \cdots \ket{e^{i\phi_{n+1}}\alpha},
\end{equation}
where $\phi_i = \phi_{i-1} + \pi\cdot{S_{i-1}}, i\in\{2,3,\dots,n+1\}$ and $\ket{e^{i\phi_i}\alpha}$ is a coherent state. This leaves the global phase, $\phi_1$, as arbitrary.

Note that $\ket{\Psi}$ cannot be written as a tensor product state such that each individual state only depends on one bit of $S$:
\begin{equation}\label{eq:tensorDPS}
\ket{\Psi} \neq \ket{\psi(S_1)}\ket{\psi(S_2)} \cdots \ket{\psi(S_n)}.
\end{equation}

To measure this state, Bob uses a Mach-Zehnder interferometer (see Fig.~\ref{fig:DPS}). The input first goes into a 50:50 beamsplitter where each end has a different length (see Section~\ref{sec:devices}). The length difference is the distance between the pulses in Alice's state. These paths are recombined on two inputs of another beamsplitter so that these paths can interfere. The result is that the phases of neighbouring pulses will interfere and a detector can be placed at each end of the second beamsplitter. Depending on which one clicks, Bob will know the relative phase of Alice's pulses (either $\phi_i-\phi_{i-1}$ is $0$ or $\pi$). See Section~\ref{sec:MZ} to see how a Mach-Zehnder interferometer achieves this phase measurement.

\begin{figure} \centering
\includegraphics[width=\textwidth]{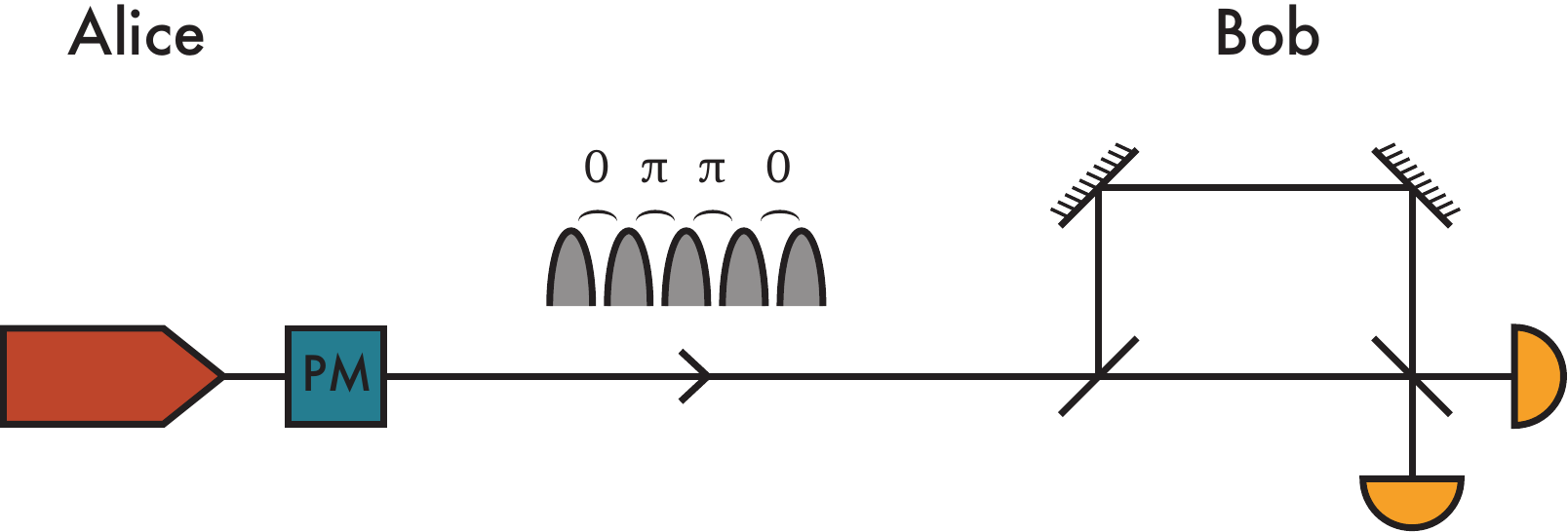}
\caption[The DPS Protocol]{The DPS protocol. Alice produces coherent states where she modulates their relative phase using a phase modulator (PM). She picks either $0$ or $\pi$ to be the phase angle between the pulses. Bob measures using a Mach-Zehnder interferometer that measures these relative phases.}
\label{fig:DPS}
\end{figure}

The security proof of this protocol is more challenging than for the qubit protocols listed earlier, since Alice's state cannot be broken down into the tensor product form of Eq.~\ref{eq:tensorDPS}. This means that there are less symmetries that can be exploited in order to use the same tools that work for qubit protocols. While there is no security proof for the way the protocol is described above, there is a security proof if a single photon is split up into $m$ pulses (called a \emph{block}) and then many of these independent blocks are used instead of using coherent states in a long chain of pulses \cite{wen09}. Attacks on the DPS protocol have also been analyzed \cite{curty08,gyongyosi12,moroder12}.

\subsubsection{Coherent One-Way (COW)}

Another protocol that does not use independent qubits, but is still discrete, is the coherent one-way (COW) protocol \cite{gisin04,stucki05}. Similarly to the DPS protocol, the COW protocol can be implemented with passive elements on Bob's side and the state that Alice sends cannot be decomposed into a tensor product of states that only depend individually on one of the bits Alice is trying to send to Bob.

The states that Alice prepares to send her uniformly random bit string, depending on if the bit in position $i\in\{1,2,\dots,n\}$ is $0$ or $1$, are
\begin{equation}\label{eq:COWstate1}
\ket{0_L}_i = \ket{\alpha}_{2i-1}\ket{0}_{2i}, \quad \ket{1_L}_i = \ket{0}_{2i-1}\ket{\alpha}_{2i},
\end{equation}
where $\ket{0}$ is the vacuum state\footnote{Sometimes $\ket{0}$ is used to denote the vacuum state, which will be used in some contexts, such as when we write coherent states (Eq.~\ref{eq:coherent_state}). When there is a conflict of notation between the bit values to correspond to the states (where here Alice wants to send the logical bit $0$) and the vacuum state, the logical bit will be written with the subscript $0_L$.}, $\ket{\alpha}$ is a coherent state, and $\ket{0_L}_i$ and $\ket{1_L}_i$ denote the logical bits Alice wants to send, $0$ and $1$, as the $i$th signal. Note that Alice will have two pulses per bit she would like to send (see Fig.~\ref{fig:COW}). Also, $\ket{0_L}_i$ and $\ket{1_L}_i$ in Eq.~\ref{eq:COWstate1} are not orthogonal, since the coherent state has a vacuum component.

In order to counteract an eavesdropper, Alice also has to send some other states that will not be used for Alice's and Bob's strings, but will only be used to detect an eavesdropper. Alice will, with probability $q$ prepare a decoy state that spans two time slots:
\begin{equation}
\ket{\text{decoy}}_{i} = \ket{\alpha}_{2i-1}\ket{\alpha}_{2i}.
\end{equation}
With probability $1-q$ she prepares her $\ket{0_L}$ or $\ket{1_L}$ state according to her starting string.

Bob's measurement is composed of two parts. With probability $p$ he will measure if there is at least one photon in each pulse. This measurement will tell him if Alice was trying to send a $0$ or a $1$. With probability $1-p$, he does a Mach-Zehnder interferometer measurement as in the DPS protocol. This interferometer can measure the relative phase between two sequence of states: between a neighbouring $\ket{1_L}$ followed by $\ket{0_L}$ that Alice sent, as well as the phase between the two pulses of a decoy state. The measurement can also measure the phase between a decoy state that is preceded by $\ket{1_L}$ or followed by $\ket{0_L}$. It turns out to be impossible for Eve to coherently measure both the $\ket{0_L}$ and $\ket{1_L}$ states as well as keep the phases undisturbed for the decoy states \cite{gisin04}. As such, the interferometer measurement outcomes will be used for parameter estimation to detect if there is an eavesdropper.

\begin{figure} \centering
\includegraphics[width=\textwidth]{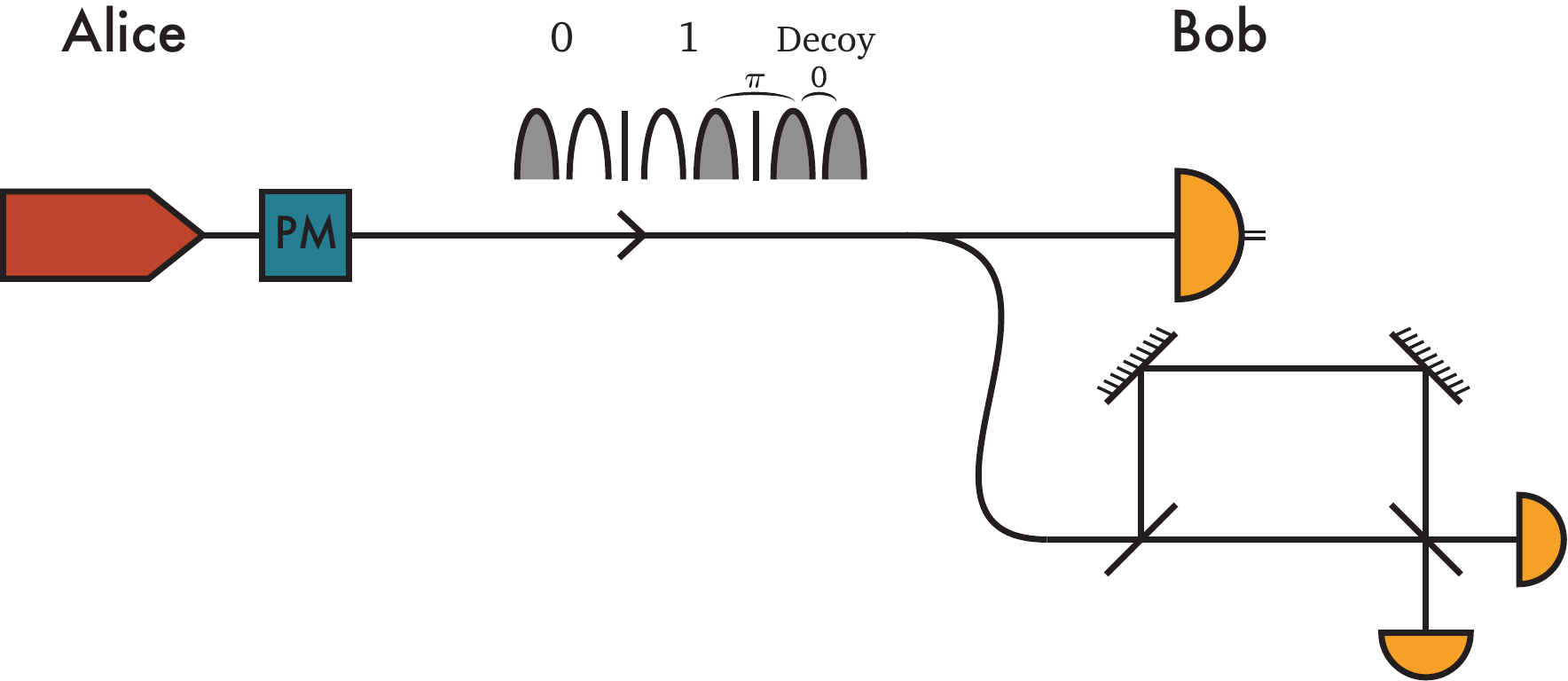}
\caption[The COW Protocol]{The COW protocol. Alice prepares one of three states: a coherent state followed by the vacuum ($0$), the vacuum followed by a coherent state ($1$), or a decoy state that is two coherent states, one after another. Alice modulates the relative phase between the coherent states by using her phase modulator (PM). Bob randomly chooses to either measure the timing of the incoming pulses or uses a Mach-Zehnder interferometer to measure the relative phase of the incoming pulses. He may measure the relative phase between two pulses of a decoy state, between a $1$ followed by a decoy state, between a decoy state followed by a $0$, or between a $1$ followed by a $0$.}
\label{fig:COW}
\end{figure}

Alice and Bob do a sifting step where Alice will tell Bob where she sent decoy states and he will throw away the measurement results when he measured those signals in his first measurement. Bob also tells Alice where he got measurement results in his interferometer, since these bits will be used for parameter estimation. They then continue with the classical post-processing steps on their classical strings.

The COW protocol does not have a full security proof, but a variant just like the DPS protocol variant \cite{wen09} that breaks up the protocol into blocks, with a single photon in each block, does have a security proof \cite{moroder12}. The COW protocol has been implemented experimentally \cite{walenta13}.

\subsection{Continuous-Variable Protocols} \label{sec:continuous_protocols}

Continuous-variable protocols typically use one of two kinds of states: coherent states, as in Eq.~\ref{eq:coherent_state}, and \emph{squeezed states}. Squeezed states are a more general state than coherent states. For a description of squeezed states and how they can be represented in phase space, see Appendix~\ref{app:squeezed}.

Usually, continuous-variable protocols are variations of the same protocol \cite{ralph99,reid00}. First, Alice prepares either a coherent state or a squeezed coherent state. If Alice prepares a squeezed coherent state, Bob does homodyne detection \cite{hillery00}. Homodyne detection is the measurement of the difference in the number of photons after interfering the input state and a local oscillator. The local oscillator for this measurement is a coherent state in phase with the input state. Homodyne detection actually measures either the $\hat{X}$ or $\hat{Y}$ quadrature operators (see Appendix~\ref{app:squeezed}), though the analysis to show this fact is beyond the scope of this thesis.

If Alice prepares coherent states then Bob does heterodyne detection. This detection can be thought of as measuring both $\hat{X}$ and $\hat{Y}$ simultaneously. Due to Heisenberg's uncertainty relation, there is some error inherent in this measurement, since both $\hat{X}$ and $\hat{Y}$ are non-commuting observables. Heterodyne detection is the same as homodyne detection, except instead of measuring photon numbers, the outputs of the beamsplitter are combined on a non-linear crystal (see Section~\ref{sec:devices}).

When Alice prepares coherent or squeezed states, she can choose different ways to vary her choice of state. Alice chooses her states from a finite discrete set in discrete protocols, and here she could do the same. She can also vary her states by choosing the parameters for the coherent or squeezed states according to a Gaussian sampling.

Continuous-variable protocols have been proven to be secure. For example, they are secure if coherent states are used and Gaussian variability is used to choose $\alpha$ \cite{leverrier13,leverrier14}.

\subsection{Device-Independent Protocols} \label{sec:DIprotocols}

Device-independent QKD was originally proposed by Ekert \cite{ekert91}. Unlike device-dependent protocols, no assumptions should be made about the devices used in the protocol. Instead, the idea is to verify that Alice and Bob share quantum states that have strong correlations (see Defn.~\ref{defn:separable} and Section~\ref{sec:CHSH}). If Alice and Bob have high correlations in their states then they can verify that Eve cannot have strong correlations with either Alice's or Bob's state.

For a history of device-independent QKD security, see \cite{vazirani12} and references therein. Security proofs of these kinds of protocols typically had to make unreasonable assumptions about the implementation, such as the need for a separate measurement device for each signal or that the protocol has no losses. However, the recent security proof of \cite{vazirani12} is the only proof to date that avoids these problems.

\subsection{Measurement-Device-Independent Protocols} \label{sec:MDI}

Measurement-device-independent (MDI) quantum key distribution is a hybrid of the device-independent and the device-dependent scenario. The advantage of using these protocols is that they are device-independent on the side of the measurement, which avoids many assumptions that are typically necessary to prove security (see Section~\ref{sec:measurements}). MDI QKD has another advantage over traditional QKD protocols, since it can be performed over longer distances than what is typically possible \cite{lo12,lim13a}.

There are two discrete-variable-type protocols, one which is an entanglement based version of the other (see Section~\ref{sec:classes}), in the same way that the BB84 protocol is equivalent to the Ekert91 protocol in their perfect descriptions.

The prepare and measure (P\&M, see Section~\ref{sec:classes}) protocol from \cite{lo12} starts with Alice and Bob uniformly at random preparing a state from the BB84 protocol (see Fig.~\ref{fig:MDI}). Alice and Bob send these states to Eve, who is untrusted. Ideally, Eve does a joint measurement of Alice and Bob's states in the Bell basis, a basis for two qubits $\{\ket{\psi^+},\ket{\psi^-},\ket{\phi^+},\ket{\phi^-}\}$, defined as
\begin{equation}\label{eq:bell_2}
\begin{aligned}
\ket{\psi^+}_{AB} &= \frac{\ket{00}_{AB}+\ket{11}_{AB}}{\sqrt{2}} \\
\ket{\psi^-}_{AB} &= \frac{\ket{00}_{AB}-\ket{11}_{AB}}{\sqrt{2}} \\
\ket{\phi^+}_{AB} &= \frac{\ket{01}_{AB}+\ket{10}_{AB}}{\sqrt{2}} \\
\ket{\phi^-}_{AB} &= \frac{\ket{01}_{AB}-\ket{10}_{AB}}{\sqrt{2}}.
\end{aligned}
\end{equation}
Eve publicly announces the measurement outcome she gets to Alice and Bob. Alice and Bob will also announce which basis they prepared their states in, followed by basis sifting to ignore measurement outcomes where their states were prepared in different bases. Alice and Bob correspond bit values to their prepared states the same way as in the BB84 protocol. Depending on the state that Eve announces, Alice may need to flip her bit value. For example, if Eve reveals $\phi^+$ and Alice and Bob prepared states in the $Z$ basis then Alice will flip her bit. Equivalently, Bob could flip his bit value instead. As another example, if Alice and Bob prepare states in the $Z$ basis and Eve reveals $\psi^+$ then Alice will not flip her bit value.

\begin{figure} \centering
\includegraphics[width=\textwidth]{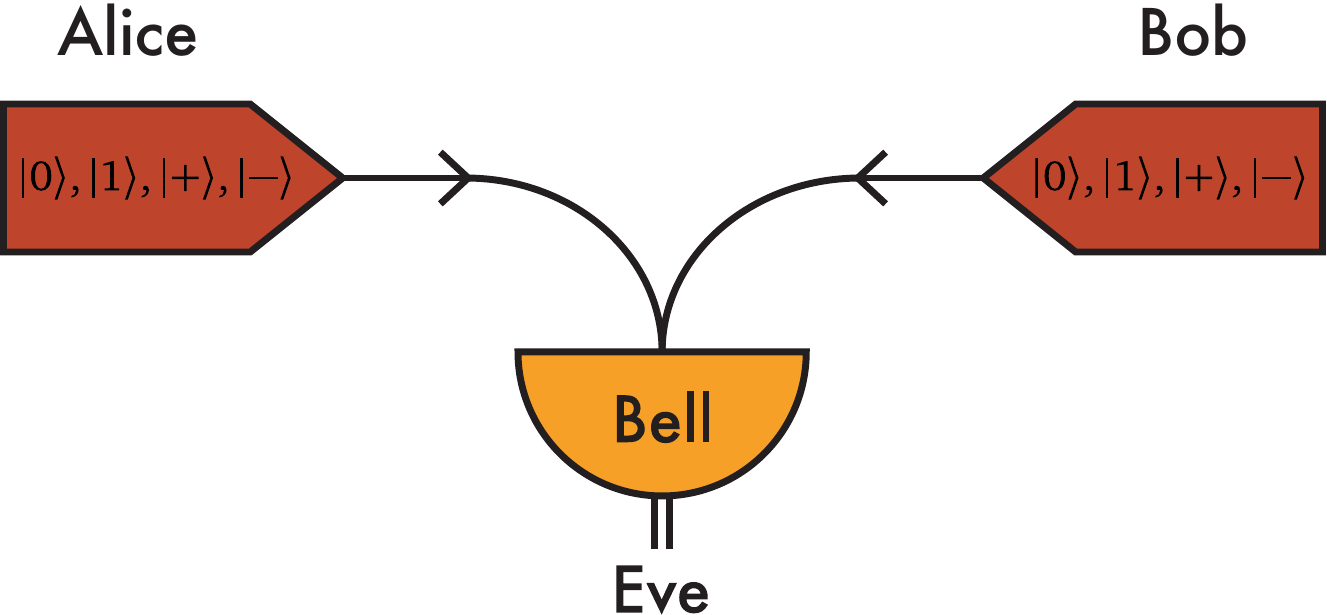}
\caption[The MDI protocol of \cite{lo12}]{MDI QKD \cite{lo12}. Alice and Bob randomly prepare one of the BB84 states ($\{\ket{0},\ket{1},\ket{+},\ket{-}\}$. Eve measures these states using a measurement in the Bell basis (Eq.~\ref{eq:bell_2}). Eve then communicates her measurement outcomes to Alice and Bob.}
\label{fig:MDI}
\end{figure}

Note that Eve cannot determine Alice's and Bob's bit values, since even if she knows the basis and the Bell measurement outcome, she only knows that Alice and Bob have the same bit value but not which bit value it is.

The entanglement based MDI protocol is the same as the above, except that Alice and Bob each prepare a copy of the state $\ket{\psi^+}$ \cite{lim13a}. Alice and Bob do a BB84 measurement on one half of this state and send the other half to Eve for her Bell measurement. The rest of the protocol follows the same steps as the P\&M version.

Both of these protocols are secure \cite{lo12,lim13a}. MDI QKD also has a continuous-variable version \cite{zhang14,li14,ma14a}, a uncharacterized qubit source version \cite{yin13,yin14}, a version that uses quantum repeaters to extend the maximum possible distance of the protocol \cite{piparo14,piparo14a,panayi14,azuma14}, and a version that uses Bell/CHSH inequalities \cite{yin14b,zhang14b}. Several experiments have now been performed \cite{tang14}.

\subsection{Counterfactual QKD}

Counterfactual (CF) QKD is where the quantum states used for the key are prepared and measured in Alice's lab. Bob infers the key from a setting of his device but he does not use measurement outcomes for his key. The states that are sent through and measured after going through the quantum channel from Alice to Bob are only used for parameter estimation. CF QKD was first introduced in \cite{noh09} and we describe this protocol here. It is related to two-way QKD protocols in its construction (see Section~\ref{sec:two_protocols}).

Alice prepares one of two qubits: $\ket{0_L}$ or $\ket{1_L}$ (see Fig.~\ref{fig:CF}). She inputs these states to a 50:50 beamsplitter (see Section~\ref{sec:BS}), which creates the state along two paths, $a$ and $b$:
\begin{equation} \label{eq:CF_state1}
\frac{\ket{0}_a\ket{\psi}_b+\ket{\psi}_a\ket{0}_b}{\sqrt{2}},
\end{equation}
where $\ket{0}$ is the vacuum state and $\ket{\psi}\in\{\ket{0_L},\ket{1_L}\}$. The state on path $a$ is kept by Alice while the state on path $b$ is sent to Bob. Bob uniformly at random uses a filtering switch that outputs $\ket{0_L}$ from $\ket{1_L}$ into different outputs. One output of this switch is the state Bob accepts and one is the state he rejects. The accepted state goes to a Z-basis measurement ($D_1$ in Fig.~\ref{fig:CF}). The rejected state is sent back to Alice.

Alice's state on path $a$ is put into a beamsplitter at the same time as the state from Bob is (potentially) returned on path $b$. If Bob rejected Alice's state then the states at the beamsplitter will interfere resulting in the output the state $\ket{\psi}$, which goes to the Z-basis measurement $D_2$ (see Fig.~\ref{fig:CF}). If Bob got a measurement result then Alice's detectors will not click. If Bob did not get a measurement result and the state and Alice's and Bob's state choices were the same then the state from Eq.~\ref{eq:CF_state1} collapses to
\begin{equation}
\ket{\psi}_a\ket{0}_b.
\end{equation}
Therefore, Alice may get a measurement in detector $D_1$ since there is no interference happening at the beamsplitter. If Alice's measurement outcome in $D_1$ is the same as her prepared state then she announces to Bob that she got a measurement outcome in $D_1$ but she does not reveal the outcome. Bob will know the outcome because it is the same as his choice of state.

Note that detector $D_1$ will click with probability $1/4$ since it clicks when Bob chooses his state to be the same as Alice's (which happens with probability $1/2$) and he does not get a measurement outcome in $D_3$ (which happens with probability $1/2$). This means that the fraction of measurement outcomes that can be used for the key is $1/4$.

\begin{figure} \centering
\includegraphics[width=\textwidth]{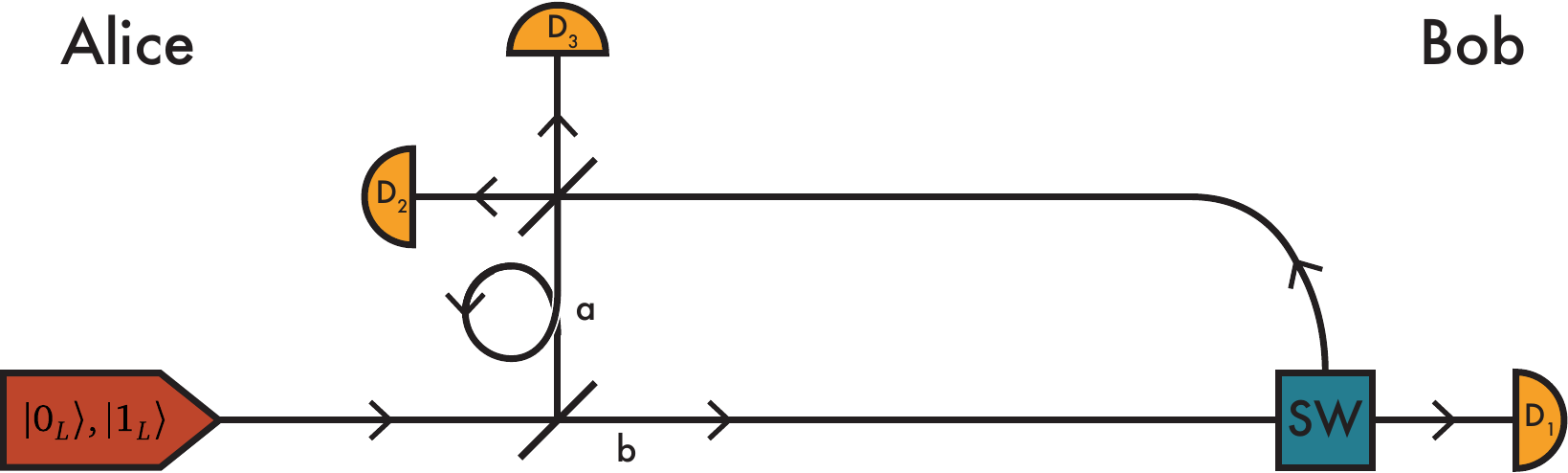}
\caption[The CF protocol of \cite{noh09}]{The counterfactual QKD protocol of \cite{noh09}. Alice prepares one of two orthogonal states: $\ket{0_L}$ or $\ket{1_L}$. After the first beamsplitter the state will be in a superposition of going to path $a$ or path $b$. On path $b$ Bob will choose a filtering switch (SW) that will select either $\ket{0_L}$ or $\ket{1_L}$. The state that Bob accepts goes to a measurement in the $\{\ket{0_L},\ket{1_L}\}$ basis at $D_1$. The state that he rejects goes back to Alice. The state along path $a$ is delayed so that it will arrive at the upper beamsplitter at the same time as Bob's rejected state. Alice then measures in the $\{\ket{0_L},\ket{1_L}\}$ basis at $D_2$ and $D_3$.}
\label{fig:CF}
\end{figure}

Alice and Bob reveal their measurement outcomes from detectors $D_2$ and $D_3$, as well as Alice's measurement outcomes when her outcome in $D_1$ did not match her prepared state. They use this information for parameter estimation.

The actual Noh09 protocol is more general and uses uneven beamsplitters, though we omit this generalization here (see Section~\ref{sec:BS}).

Another CF QKD protocol has been proposed as well \cite{salih13}.

The efficiency of the Noh09 protocol was improved in \cite{sun10}. It is not yet known if these protocols are secure, though the security of the Noh09 protocol has been analyzed in \cite{yin10,zhang12a,zhang12b}. Counterfactual QKD has also been implemented \cite{ren10,brida12}.

\section{Outline}

In Chapter~\ref{chap:preliminaries} we give an introduction to quantum mechanics using the density operator formalism as well as mathematical definitions and properties used throughout the thesis. Chapter~\ref{chap:security_proofs} discusses the security of QKD in detail and tools that can be used to prove security for a variety of protocols. Chapter~\ref{chap:assumptions} analyzes the different assumptions made in QKD and quantum cryptography and how these assumptions may be justified or may lead to insecurities. Chapter~\ref{chap:contributions} discusses two contributions of the author under the framework for security and assumptions developed in this thesis. Chapter~\ref{chap:conclusion} concludes with thoughts about the future of QKD and quantum cryptography.

Appendix~\ref{app:squeezed} presents squeezed states and phase space for continuous-variable QKD protocols. Appendix~\ref{app:math} outlines miscellaneous mathematical results used throughout this thesis.


\chapter{Preliminaries} \label{chap:preliminaries}
\section{Introduction}

In this chapter we outline several mathematical properties and tools that will be used in this thesis. We also give an introduction to the density operator formalism of quantum mechanics.

There are several resources available to learn quantum information and computation. There are lecture notes by John Watrous \cite{watrous13}, John Preskill \cite{preskill98}, and Renato Renner \cite{renner12a}. There are also several books, such as the most widespread quantum information and computation book \cite{nielsen00} and the recent book on quantum information theory by Mark Wilde \cite{wilde13}. The Preliminaries chapter of the PhD thesis of Marco Tomamichel \cite{tomamichelthesis} has a technical introduction to quantum information as well.

We start with an introduction to quantum mechanics using density operators instead of wave functions (Section~\ref{sec:qm}). Next we present various entropies (Section~\ref{sec:entropy}) and mutual information (Section~\ref{sec:mi}).

Further mathematical details can be found in Appendix~\ref{app:math}.

\section{Quantum Mechanics}\label{sec:qm}

Quantum mechanics is the physical model we use to characterize the quantum-cryptography protocols in this thesis. In order to prove that a quantum-cryptography protocol is secure, we need to be able to characterize what an eavesdropper or dishonest party is able to do to attack the protocol. For example, a very conservative assumption in quantum key distribution is that an eavesdropper can do anything to the states sent between Alice and Bob in the quantum channel that is allowed by quantum mechanics. Implicitly, by using quantum mechanics to characterize Eve's attack, it is assumed that Eve cannot get any further information about the quantum states sent between Alice and Bob than what quantum mechanics allows her to learn. This assumption is that quantum mechanics is complete, which will be discussed in further detail in Section~\ref{sec:foundational_assumptions}.

In addition, we assume that Alice's and Bob's devices are characterized by quantum mechanics. This limits what kind of states, measurements, and operations Alice and Bob can perform in quantum cryptography. Since these protocols are characterized by quantum mechanics, we provide descriptions of what states and transformations are permissible in this theory.

We assume that the reader understands the basics of quantum mechanics, which includes Dirac notation, Hamiltonians, and the Schr\"odinger equation. Mathematically, the reader should be familiar with the basics of linear algebra in finite dimensions such as vectors, matrices, and eigenvalues; as well as statistics such as random variables, expectation values, and probability distributions.

We introduce the density operator formalism for quantum mechanics, which is useful for treating quantum mechanics from a computer science and information theory perspective. It provides an equivalent formulation of quantum mechanics as the Schr\"odinger or Heisenberg picture using Hamiltonians, wavefunctions, and the Schr\"odinger equation.

Intuitively, the density operator formalism of quantum mechanics represents states and the transformations of states as operators and superoperators respectively. Instead of states as elements in a Hilbert space, they are operators that act on a Hilbert space. States can then be represented as matrices. The transformations allowed are no longer described by Hamiltonians (or equivalently, unitaries) but superoperators: linear maps from operators to operators. For the purposes of this thesis we will remove the time component of these superoperators and instead consider that a Hamiltonian has acted for a predetermined time. This complete transformation will then be a fixed map from operators acting on one Hilbert space to another set of operators acting on another Hilbert space.

The density operator formalism is powerful: it is a mathematically simple way (i.e.~it uses linear algebra) to represent quantum mechanics. This formalism also makes quantum mechanics easier to combine with computer science and information theory as it allows for the consideration of states that are not physical (i.e.~unnormalized states), which can be a helpful mathematical technique in quantum information theory. Unnormalized states are then related to physical states by a renormalization.

Ironically, it may also be useful to turn these matrices (and maps between matrices) that represent states and their transformations back into vectors and the matrices that act on them respectively.

\subsection{Operators and States} \label{sec:operators}

We begin by introducing operators, states, and quantum maps. This section is adapted from the more thorough exposition in \cite{tomamichelthesis}. First recall that a set of vectors in a Hilbert space $\ket{e_i}\in\hilbert$ is an orthonormal basis if $\bk{e_i}{e_j}=\delta_{ij}$ and $\text{span}{\{\ket{e_i}\}} = \hilbert$. Now we define linear operators.

\begin{defn}[Linear Operators] A linear operator $L$ is a linear map from Hilbert space $\hilbert_A$ to $\hilbert_B$ that takes elements of $\hilbert_A$, $\ket{\psi}_A \in\hilbert_A$ to $\hilbert_B$: $L\ket{\psi}_A \in \hilbert_B$. A linear operator can be represented as a matrix in a pair of orthonormal bases for $\hilbert_A$ and $\hilbert_B$, $\ket{e_i}_A$ and $\ket{f_j}_B$, respectively, for $i \in [d_A], j \in [d_B]$, where $d_A$ and $d_B$ are the dimensions of $\hilbert_A$ and $\hilbert_B$ and $[d_A]:=\{1,2,\dots,d_A\}$. The matrix representation for $L$ is then given by
\begin{equation} \label{eq:matrix_rep}
L = \sum_{i,j}\bra{f_j}L\ket{e_i} \kb{f_j}{e_i},
\end{equation}
so that the matrix element $L_{i,j}$ is given by $\bra{f_j}L\ket{e_i}$. We define the set of linear operators that map from $\hilbert_A$ to $\hilbert_B$ as $\mathcal{L}(\hilbert_A,\hilbert_B)$ and the linear operators that map from $\hilbert$ to $\hilbert$ (i.e.~endomorphisms) as $\mathcal{L}(\hilbert)$.
\end{defn}

In addition, the adjoint of an operator that maps from $\hilbert_A$ to $\hilbert_B$ is denoted as $L^{\dag}$ and is defined via 
\begin{equation}
\bra{\psi}L\ket{\phi} = \left(\bra{\phi}L^{\dag}\ket{\psi}\right)^{*} \quad \forall \ket{\phi}\in\hilbert_A, \ket{\psi}\in\hilbert_B,
\end{equation}
where $^{*}$ is the complex conjugate.

One special case of linear operators are \emph{projectors}. They are operators, $\Pi\in\mathcal{L}(\hilbert)$ that satisfy $\Pi^2=\Pi$. They can be written as $\sum_i \kb{\phi_i}{\phi_i}$ for a set of orthonormal states $\{\ket{\phi_i}\}$ that is not necessarily complete (i.e.~$\sum_i \kb{\phi_i}{\phi_i} \leq \identity$).

Another special case of linear operators are the valid physical states on Hilbert spaces: density operators. To define these, we define a few different kinds of operators and the trace of an operator.

An operator $L$ is \emph{Hermitian} if $L\in\mathcal{L}(\hilbert)$ and $L^{\dag}=L$. A \emph{positive-semidefinite} operator is a linear operator, $M$, that is Hermitian and that satisfies
\begin{equation}
\bra{\psi}M\ket{\psi} \geq 0, \quad\forall \ket{\psi} \in \hilbert .
\end{equation}
A positive semidefinite operator can be written as $M\geq 0$ and the set of all such states in a given Hilbert space is denoted as $\mathcal{P}(\hilbert)$. A \emph{unitary} operator, $U$, is a linear operator $U\in\mathcal{L}(\hilbert)$ that satisfies
\begin{equation}
UU^{\dag}=U^{\dag}U = \identity,
\end{equation}
where $\identity$ is the identity operator, which can be written as $\sum_i \kb{e_i}{e_i}$ for an orthonormal basis $\{\ket{e_i}\}$. A more general kind of operator than a unitary is an \emph{isometry}. An isometry satisfies $U\in\mathcal{L}(\hilbert_A,\hilbert_B)$ and $U^{\dag}U=\identity_A$, but $UU^{\dag} =\identity_B$ does not necessarily hold. This means that the operator $U$ maps from $\hilbert_A$ to a subspace of its full output space, $\hilbert_B$.

In addition to various kinds of operators, we also need the trace function.

\begin{defn}[Trace] Given an orthonormal basis $\{\ket{e_i}\}$ for a Hilbert space $\hilbert$ the trace of a Hermitian operator, $L$, is defined as
\begin{equation}
\Tr L := \sum_i \bra{e_i}L\ket{e_i}.
\end{equation}
\end{defn}
The trace is independent of the choice of orthonormal basis, since if the basis is chosen to be the eigenvectors of $L$ then $\Tr L$ is the sum of the eigenvalues of $L$. Specifically, if we write $L$ in its eigendecomposition (also called the \emph{spectral} decomposition) $L=\sum_i\lambda_i\kb{f_i}{f_i}$ (where $\lambda_i$ and $\ket{f_i}$ are the eigenvalues and eigenvectors of $L$ respectively) then for any unitary $U$ it holds that
\begin{equation}
U^{\dag}LU = \sum_i \lambda_i \kb{g_i}{g_i},
\end{equation}
where $\ket{g_i} = U\ket{f_i}$. Note that the set of states $\{\ket{g_i}\}$ are orthonormal ($\bk{g_i}{g_j} = \bra{f_i}U^{\dag}U\ket{f_j}=\bk{f_i}{f_j}=\delta_{ij}$) so $\lambda_i$ are the eigenvalues for $U^{\dag}LU$ as well as $L$. This means that for any basis $\{\ket{e_i}\}$ there exists a unitary $U$ such that $\ket{e_i}=U\ket{f_i}$ so that $\Tr L = \sum_i \bra{e_i}L\ket{e_i} = \sum_i \bra{f_i} U^{\dag}LU\ket{f_i} = \sum_i\lambda_i$, which does not depend on the basis $\ket{e_i}$ and therefore the trace does not depend on the basis $\{\ket{e_i}\}$ used to calculate the trace.

It is now straightforward to define quantum states in this formalism as density operators.

\begin{defn}[Density operators] A density operator, $\rho$, is defined as a Hermitian linear operator such that
\begin{equation}
\rho \in \mathcal{P}(\hilbert) \text{ and } \Tr\rho =1.
\end{equation}
The set of all density operators on a Hilbert space $\hilbert$ is written as $S_{=}(\hilbert)$.
\end{defn}

As an example, there are pure states $\ket{\psi}$ that have a corresponding density operator $\kb{\psi}{\psi}$, which can be represented as a rank-one matrix. Density operators that cannot be written as a rank-one matrix are called \emph{mixed}. Sometimes, for mathematical convenience, we will use unnormalized density operators, so that $\rho\in\mathcal{P}(\hilbert)$ and $\Tr\rho \leq 1$. These set of states on a Hilbert space $\hilbert$ is denoted as $S_{\leq}(\hilbert)$. While these states are not physical, they can be related to physical states by renormalization. If $\rho\in S_{\leq}(\hilbert)$ then $\rho/\Tr\rho\in S_{=}(\hilbert)$, which is physical.

To describe two separate systems as a single joint system the tensor product is used.

\begin{defn}[Tensor product] Given two Hilbert spaces, $\hilbert_A$ and $\hilbert_B$, the tensor product, denoted by $\hilbert_A\otimes\hilbert_B \equiv \hilbert_{AB}$ is the combination of these two spaces together. In particular, for two independent states $\rho_A\in S_{\leq}(\hilbert_A)$ and $\rho_B\in S_{\leq}(\hilbert_B)$ the global state state that describes the system is given by $\rho_A\otimes\rho_B$. If $\rho_A$ is written in an orthonormal bases for its space as $\rho_A = \sum_{ij}c_{ij}\ket{e_i}_{A}\bra{e_j}$, then the tensor product combines them in block matrix form
\begin{equation}\label{eq:tensor}
\rho_A\otimes\rho_B = \begin{pmatrix}
c_{1,1}\rho_B & c_{1,2}\rho_B & \cdots & c_{1,d_A}\rho_B \\
c_{2,1}\rho_B & c_{2,2}\rho_B & \cdots & c_{2,d_A}\rho_B \\
\vdots & \vdots & \ddots & \vdots \\
c_{d_A,1}\rho_B & c_{d_A,2}\rho_B & \cdots & c_{d_A,d_A}\rho_B
\end{pmatrix},
\end{equation}
where $d_A$ is the dimension of $\hilbert_A$. A constant times a matrix, $c\rho$, is the matrix $\rho$ with each of its elements multiplied by $c$.
\end{defn}

With composite systems, the trace may also be taken only over one of the systems.

\begin{defn}[Partial trace] Given a joint quantum state $\rho_{AB}\in S_{\leq}(\hilbert_{AB})$ and an orthonormal basis for $\hilbert_A$, $\{\ket{e_i}\}$, then the partial trace over $A$ is defined as
\begin{equation}
\Tr_A \rho_{AB} = \sum_{i} \left(\bra{e_i}_A\otimes\identity_B\right) \rho_{AB} \left(\ket{e_i}_A\otimes\identity_B\right).
\end{equation}
\end{defn}

We now define a state to be classical with respect to a quantum system if it can be written in the following form.

\begin{defn}[Classical-quantum (CQ) state] \label{defn:CQ} A state, $\rho\in S_{\leq}(\hilbert_{XB})$ is a CQ state if it can be decomposed as
\begin{equation}
\rho_{XB} = \sum_{i} p_i \ket{i}_{X}\bra{i} \otimes \rho_{B}^{i},
\end{equation}
for probabilities $p_i$, $\rho_{B}^{i}\in S_{\leq}(\hilbert_{B})\forall i$, $i\in [d_X]$, and $\ket{i}$ are orthogonal states in $\hilbert_X$.
\end{defn}

A very important set of quantum states are those that are \emph{entangled}. They represent states that have stronger correlations than what is possible with two quantum systems that are only correlated in a classical way (i.e.~are separable, see Defn.~\ref{defn:separable}). They are defined by those states that cannot be written in a \emph{separable} form. 

\begin{defn}[Separable and entangled states] \label{defn:separable} Let $\rho_{AB}\in S_{\leq}(\hilbert_{AB})$, then $\rho_{AB}$ is separable if it can be written in the form
\begin{equation}\label{eq:separable}
\rho_{AB} = \sum_{i} p_i \rho_{A}^i\otimes \rho_{B}^i,
\end{equation}
for some probabilities $p_i$ and states $\rho_{A}^i\in S_{\leq}(\hilbert_A)$ and $\rho_B^i\in S_{\leq}(\hilbert_B)$. A state that cannot be written as Eq.~\ref{eq:separable} is entangled. Also, a state is maximally entangled if it is a pure state $\sigma_{AB}\in S_{\leq}(\hilbert_{AB})$ such that the reduced density operators $\sigma_A := \Tr_{B}(\sigma_{AB})$ and $\sigma_{B} = \Tr_{A}(\sigma_{AB})$ are maximally mixed and equal to $\identity/d$, where $d$ is the dimension of $\hilbert_A$ or $\hilbert_B$ respectively.
\end{defn}

Lastly, an important equivalence between mixed states and pure states is \emph{purification}. Given a mixed state $\rho_{A}\in S_{\leq}(\hilbert_A)$ a purification of $\rho_A$ is a pure state $\ket{\Psi}_{AB}\in\hilbert_{AB}$ such that $\Tr_{B}(\kb{\Psi}{\Psi})=\rho_{A}$. In addition, for all $\rho_A$ there exists a system $B$ and a pure state $\ket{\psi}_{AB}$ such that the dimension of $B$ is at most the dimension of $A$ and $\ket{\psi}_{AB}$ is a purification of $\rho_A$. If the spectral decomposition of $\rho_A$ is written as $\sum_{i}\lambda_i \kb{i}{i}$ then one such purification can be written as
\begin{equation}
\sum_{i} \sqrt{\lambda_i}\ket{i}_A\ket{i}_B.
\end{equation}
All purifications of $\rho_A$ are equivalent up to an isometry on the purifying system, B.

\subsection{Quantum Maps} \label{sec:maps}

Now that we have defined states, we can also define the way in which states can be transformed. All possible quantum transformations are captured by completely-positive trace-preserving maps. 

\begin{defn}[Completely-positive trace-preserving (CPTP) maps]
A completely-positive trace-preserving (CPTP) map is a superoperator. Superoperators map linear operators in $\mathcal{L}(\hilbert_A)$ to linear operators in $\mathcal{L}(\hilbert_B)$. A superoperator, $\mathcal{E}$, is trace-preserving if
\begin{equation}
\Tr\mathcal{E}(L) = \Tr L, \quad\forall L \in \mathcal{L}(\hilbert_A).
\end{equation}
A super operator is completely positive if
\begin{equation}
\mathcal{E}\otimes \id (L) \geq 0, \quad\forall L\in\mathcal{P}(\hilbert_{AC}), \forall \hilbert_C
\end{equation}
where $\hilbert_C$ is an auxiliary Hilbert space and $\id$ is the identity superoperator
\begin{equation}
\id(M) = M, \quad\forall M\in\mathcal{L}(\hilbert_C).
\end{equation}
\end{defn}

In addition, a map is called \emph{positive} if its output is a positive operator.

In order to represent a CPTP map in a concrete way, there are several options. The typical one is the Kraus-operator representation. 

\begin{lemma}[Kraus-operator representation] Any CPTP map $\mathcal{E}$ can be represented as a set of linear operators $A_i$ that satisfy $\sum_{i}A^{\dag}_i A_i=\identity$ (called Kraus operators) so that $\mathcal{E}$ maps states $\rho_A\in S_{=}(\hilbert_A)$ to $S_{=}(\hilbert_B)$ by
\begin{equation}
\mathcal{E}(\rho_A) = \sum_i A_i\rho_A A_i^{\dag}.
\end{equation}
\end{lemma}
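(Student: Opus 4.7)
The plan is to construct the Kraus operators via the Choi--Jamio\l kowski correspondence. First I would introduce an auxiliary copy $\hilbert_{A'}$ of $\hilbert_A$ with matching orthonormal bases $\{\ket{i}_A\}$ and $\{\ket{i}_{A'}\}$, and form the (unnormalized) maximally entangled vector $\ket{\Omega}_{AA'}=\sum_i \ket{i}_A\ket{i}_{A'}$. Applying $\mathcal{E}$ on the $A$ side defines the Choi operator
\begin{equation*}
J(\mathcal{E}) := (\mathcal{E}\otimes \id_{A'})(\kb{\Omega}{\Omega}) \in \mathcal{L}(\hilbert_{BA'}).
\end{equation*}
The first key step is to show that complete positivity of $\mathcal{E}$ is equivalent to $J(\mathcal{E})\in\mathcal{P}(\hilbert_{BA'})$. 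One direction is immediate, because $\kb{\Omega}{\Omega}\in\mathcal{P}(\hilbert_{AA'})$ and complete positivity (taken with $\hilbert_C=\hilbert_{A'}$) forces the image to be positive. For the converse I would rely on the fact that any $\sigma\in\mathcal{P}(\hilbert_{AC})$ can be written as $(\identity_A\otimes K)\kb{\Omega}{\Omega}(\identity_A\otimes K^{\dag})$ for a suitable $K\in\mathcal{L}(\hilbert_{A'},\hilbert_C)$, which follows from purification together with the ``transpose trick'' $(M\otimes\identity_{A'})\ket{\Omega}=(\identity_A\otimes M^T)\ket{\Omega}$; then positivity of $(\mathcal{E}\otimes\id_C)(\sigma)$ reduces to positivity of $J(\mathcal{E})$.

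Next, since $J(\mathcal{E})\geq 0$, I would write its spectral decomposition with the eigenvalues absorbed into unnormalized vectors, $J(\mathcal{E})=\sum_k \kb{\phi_k}{\phi_k}$ with $\ket{\phi_k}\in\hilbert_{BA'}$. Using the vectorization isomorphism between $\hilbert_B\otimes\hilbert_{A'}$ and $\mathcal{L}(\hilbert_A,\hilbert_B)$, I would define candidate Kraus operators $A_k\in\mathcal{L}(\hilbert_A,\hilbert_B)$ by the relation $\ket{\phi_k} = (A_k\otimes\identity_{A'})\ket{\Omega}$.

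Then I would verify $\mathcal{E}(\rho)=\sum_k A_k \rho A_k^{\dag}$. By linearity it suffices to check the identity on the basis $\{\kb{i}{j}\}$ of $\mathcal{L}(\hilbert_A)$, and this follows directly by evaluating $(\identity_B\otimes \bra{i}_{A'})J(\mathcal{E})(\identity_B\otimes\ket{j}_{A'})$ in two ways: via the definition of $J(\mathcal{E})$ it equals $\mathcal{E}(\kb{i}{j})$, while via the spectral expansion and the defining relation for $A_k$ it equals $\sum_k A_k \kb{i}{j} A_k^{\dag}$. Finally, the normalization condition comes from trace preservation: for every $\rho\in\mathcal{L}(\hilbert_A)$, the cyclicity of the trace gives $\Tr\!\left(\rho \sum_k A_k^{\dag}A_k\right)=\Tr\mathcal{E}(\rho)=\Tr\rho$, so $\sum_k A_k^{\dag}A_k=\identity$.

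The main obstacle I anticipate is the converse direction of the Choi equivalence, where one must promote the single positive operator $J(\mathcal{E})$ to a certificate of positivity of $(\mathcal{E}\otimes\id_C)(\sigma)$ for every auxiliary system $C$ and every $\sigma\in\mathcal{P}(\hilbert_{AC})$. Getting that factorization through $\ket{\Omega}$ cleanly, with $K$ chosen so that tensoring with $\mathcal{E}$ commutes with the enveloping maps, is the only genuinely nontrivial bookkeeping; once that is in place, the rest of the argument is essentially a linear-algebraic repackaging of the spectral decomposition of $J(\mathcal{E})$.
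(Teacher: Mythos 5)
The paper states this lemma without giving a proof, so there is nothing to compare line by line; but your route is exactly the one the surrounding text endorses, since the paper immediately afterwards introduces the Choi--Jamio{\l}kowski matrix (Theorem~\ref{thm:CJ}) and records the decomposition $\Xi=\sum_i\kket{A_i}\bbra{A_i}$ (Eq.~\ref{eq:CJ_decomp}), i.e.~that the (vectorized) eigenvectors of the Choi operator are precisely the Kraus operators. Your construction and the two verification steps (evaluating $(\identity_B\otimes\bra{i}_{A'})J(\mathcal{E})(\identity_B\otimes\ket{j}_{A'})$ both ways, and extracting $\sum_k A_k^\dag A_k=\identity$ from trace preservation) are correct. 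Two remarks. First, the step you flag as the main obstacle --- promoting $J(\mathcal{E})\geq 0$ back to complete positivity --- is not needed for this lemma at all: the statement only asks you to go from CPTP to Kraus form, so the easy direction (complete positivity applied with $\hilbert_C=\hilbert_{A'}$ forces $J(\mathcal{E})\geq 0$) already suffices, and everything after the spectral decomposition goes through. Second, if you do want the converse, note that a general $\sigma\in\mathcal{P}(\hilbert_{AC})$ is a \emph{sum} of terms $(\identity_A\otimes K_m)\kb{\Omega}{\Omega}(\identity_A\otimes K_m^{\dag})$, one for each vector in a rank-one decomposition of $\sigma$, not a single such conjugate; since $\mathcal{E}\otimes\id_C$ commutes with conjugation by $\identity_A\otimes K_m$ and positivity is preserved under sums, the argument still closes, but the single-$K$ phrasing as written is only valid for pure $\sigma$.
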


A particular kind of CPTP map is a measurement, where a quantum system is mapped to a classical one. Measurements can be put into two frameworks that are equivalent. These frameworks are projective measurements and positive operator valued measures (POVMs). These are equivalent because a POVM can be seen as projective measurement on a larger Hilbert space. POVMs will be the only framework for measurements we need for this thesis, so we introduce them here. For more information on the relation between projective measurements and POVMs, see \cite{nielsen00}.

Now we define POVM measurements. 

\begin{defn}[Quantum measurements] \label{defn:measurement} A POVM is a set of linear operators $\{F_i\}$ (each operator $F_i$ is called a POVM element) that are positive semidefinite $F_i\in\mathcal{P}(\hilbert)$ that satisfy $\sum_i F_{i} = \identity$. A measurement is defined with a POVM, where the measurement has classical outcomes $i$. Given a state $\rho\in S_{=}(\hilbert)$ that is measured using the POVM $\{F_i\}$ the probability of getting outcome $i$ is $\Tr(F_i\rho)$. The post-measurement state for an input $\rho\in S_{\leq}(\hilbert)$ is given by $\sum_i \Tr(F_i\rho)\kb{i}{i}$. Measuring in a basis $\{\ket{\psi_i}\}$ corresponds to measuring the POVM $\{\kb{\psi_i}{\psi_i}\}$.
\end{defn}

There are two properties of quantum maps that are both conceptually striking and incredibly useful. The first, the Stinespring dilation, is a correspondence between CPTP maps and unitaries. Essentially, any CPTP map can be considered as a unitary on a higher dimensional space. The second, the Choi-Jamio{\l}kowski isomorphism, is a mapping from CPTP maps to quantum states.

One problem with the Kraus representation of CPTP maps is that the set of operators $\{A_i\}$ that describe it are not unique! Kraus operators are not unique because if a set of operators $\{A_i\}$ represent a CPTP map, then so do $C_i:=UA_i$, where $U$ is a unitary. To see that this is the case, note that $\sum_{i} C_i^{\dag}C_i = \sum_{i} A_i^{\dag}U^{\dag}UA_i = \sum_i A_i^{\dag}A_i = \identity$ and 
\begin{equation}
\sum_i C_i \rho_A C_i^{\dag} = \sum_i U A_i \rho_A A_i^{\dag} U^{\dag} = U \mathcal{E}(\rho) U^{\dag}.
\end{equation} 
The unitary applied to the last term is just a change of basis for the system in $S_{=}(\hilbert_B)$ and therefore does not change the outcomes of the map $\mathcal{E}$. 

The lack of uniqueness for the Kraus operator representation makes it unideal for the analysis of some quantum information tasks (see Section~\ref{sec:squashing}) and so we use the Choi-Jamio{\l}kowski representation. The Choi-Jamio{\l}kowski (CJ) representation for quantum maps can be constructed from the Choi-Jamio{\l}kowski isomorphism, which is a linear transformation that is an isomorphism (i.e.~a transformation with an inverse) from CPTP maps to quantum states. The CJ isomorphism as presented here is not as general as it can be, since it can also apply to linear maps in general and not just ones that map positive semidefinite operators. However, here we state the CJ isomorphism only for the correspondence between quantum maps and quantum states.

\begin{thm}[Choi-Jamio{\l}kowski (CJ) isomorphism \cite{jamiolkowski72,choi75}] \label{thm:CJ} Given a CPTP map $\mathcal{E}$ that maps states in $S_{\leq}(\hilbert_A)$ to states in $S_{\leq}(\hilbert_B)$, where $\hilbert_A$ and $\hilbert_B$ have dimensions $d_A$ and $d_B$ respectively, then the CJ map is given by
\begin{equation}
\tau: \mathcal{E} \mapsto \Xi = \mathcal{E}\otimes\id (\kb{\Psi}{\Psi}) ,
\end{equation}
where $\ket{\Psi}= 1/d_A\sum_{i=1}^{d_A^2} \ket{e_i}_A\otimes\ket{e_i}_{A'}$, $A'$ is an auxiliary space that has the same dimension as $A$, and $\{\ket{e_i}\}$ is an orthonormal basis for $\hilbert_A$ and $\hilbert_{A'}$. $\Xi$ is called the Choi-Jamio{\l}kowski (CJ) matrix or CJ state.
\end{thm}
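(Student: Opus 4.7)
The plan is to show that $\tau$ is a linear bijection by exhibiting an explicit two-sided inverse. Linearity of $\tau$ is immediate: since $\mathcal{E}\otimes\id$ depends linearly on $\mathcal{E}$, we have $\tau(\alpha\mathcal{E}_1+\beta\mathcal{E}_2)=\alpha\tau(\mathcal{E}_1)+\beta\tau(\mathcal{E}_2)$. The nontrivial content is thus the existence of the inverse, together with the characterization of which states $\Xi$ arise as $\tau(\mathcal{E})$ for some CPTP $\mathcal{E}$.

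First, I would define the candidate inverse explicitly: for $\Xi\in\mathcal{L}(\hilbert_B\otimes\hilbert_{A'})$, set
\begin{equation}
\tau^{-1}(\Xi)(\rho_A) \;:=\; d_A\,\Tr_{A'}\bigl[(\identity_B\otimes\rho_{A'}^{T})\,\Xi\bigr],
\end{equation}
where the transpose is taken in the fixed basis $\{\ket{e_i}\}$ used to build $\ket{\Psi}$. The key tool is the ``ricochet'' identity for the maximally entangled state, $(M\otimes\identity_{A'})\ket{\Psi}=(\identity_A\otimes M^{T})\ket{\Psi}$ for every $M\in\mathcal{L}(\hilbert_A)$, which follows from expanding both sides in the basis $\{\ket{e_i}\}$. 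Then I would verify $\tau^{-1}\circ\tau=\id$ by inserting a Kraus representation $\mathcal{E}(\rho)=\sum_k A_k\rho A_k^{\dag}$ into $\tau(\mathcal{E})$, obtaining $\Xi=\sum_k(A_k\otimes\identity)\kb{\Psi}{\Psi}(A_k^{\dag}\otimes\identity)$, and substituting into $\tau^{-1}(\Xi)(\rho)$. Two applications of the ricochet identity convert the factors of $A_k^T$ and $A_k^{*}$ coming from $\rho^{T}$ and $\ket{\Psi}$ into factors acting on $\rho$, and the partial trace collapses $\ket{\Psi}\bra{\Psi}$ with the prefactor $d_A$, leaving $\sum_k A_k\rho A_k^{\dag}=\mathcal{E}(\rho)$.

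For the converse direction $\tau\circ\tau^{-1}=\id$, I would expand an arbitrary $\Xi$ in the product basis as $\Xi=\sum_{i,j}X_{ij}\otimes\kb{e_i}{e_j}$ with $X_{ij}\in\mathcal{L}(\hilbert_B)$. A short calculation using the definition of $\tau^{-1}$ gives $\tau^{-1}(\Xi)(\kb{e_j}{e_i})=X_{ij}$, and then applying $\tau$ to this map reproduces the same coefficients $X_{ij}$, so $\tau(\tau^{-1}(\Xi))=\Xi$. Finally, to complete the picture I would record that the physical constraints on $\mathcal{E}$ translate under $\tau$ into constraints on $\Xi$: complete positivity of $\mathcal{E}$ is equivalent to $\Xi\geq 0$ (since $\kb{\Psi}{\Psi}\geq 0$ and the converse uses $\tau^{-1}$ applied to $\Xi\geq 0$), and trace preservation is equivalent to $\Tr_B\Xi=\identity_{A'}/d_A$, obtained by applying $\Tr_B$ to $\tau(\mathcal{E})$ and using $\Tr_B\mathcal{E}(X)=\Tr X$ on the operators appearing after the partial trace.

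The main obstacle is purely algebraic bookkeeping: tracking the dimension factor $d_A$ and the transpose conventions through the ricochet identity, and checking that the factor of $1/d_A$ in $\ket{\Psi}$ pairs correctly with the $d_A$ prefactor in $\tau^{-1}$. Once this accounting is done cleanly, both composition checks reduce to evaluating $\Tr_{A'}[(\identity_B\otimes\kb{e_i}{e_j}^{T})(\kb{e_k}{e_l}\otimes\kb{e_m}{e_n})]$-type expressions, which are elementary.
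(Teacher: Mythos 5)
Your proposal is correct and is the standard proof of the CJ isomorphism; note however that the paper does not actually prove this theorem --- it states it with citations to \cite{jamiolkowski72,choi75} and then simply records the inverse map as Eq.~\ref{eq:CJ2} and derives the trace-preservation condition from it. Your argument supplies exactly the missing content: linearity, the explicit inverse via the ricochet identity $(M\otimes\identity)\ket{\Psi}=(\identity\otimes M^{T})\ket{\Psi}$, the two composition checks, and the translation of complete positivity and trace preservation into $\Xi\geq 0$ and a partial-trace condition. The one point to reconcile is normalization: you work with the normalized maximally entangled state $\frac{1}{\sqrt{d_A}}\sum_i\ket{e_i}\ket{e_i}$, which forces the prefactor $d_A$ in $\tau^{-1}$ and gives $\Tr_B\Xi=\identity_{A'}/d_A$, whereas the paper's Eq.~\ref{eq:CJ2} has no $d_A$ prefactor and its Eq.~\ref{eq:tpp} yields $\Tr_B\Xi=\identity_A$, which is the convention of the \emph{unnormalized} vector $\sum_i\ket{e_i}\ket{e_i}$. (The paper's own displayed $\ket{\Psi}$, with prefactor $1/d_A$ and an upper summation limit of $d_A^2$, is internally inconsistent with both and is evidently a typo.) Your version is self-consistent; just state explicitly which normalization you adopt and, if you want your inverse to coincide literally with Eq.~\ref{eq:CJ2}, drop the $\sqrt{d_A}$ from $\ket{\Psi}$ and the $d_A$ from $\tau^{-1}$ simultaneously. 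With that bookkeeping fixed, both composition checks go through exactly as you describe.
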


The CJ matrix $\Xi$ is therefore a $d_Ad_B \times d_Ad_B$ sized matrix. Note that since $\mathcal{E}$ is completely positive, it is clear that $\Xi\geq 0$. The way that the output of the map can be represented using $\Xi$ is by
\begin{equation} \label{eq:CJ2}
\mathcal{E}(\rho_A) = \Tr_{A'} \left( \identity_{B} \otimes \rho_{A'}^{T} \;\Xi\right),
\end{equation}
where $A'$ is a system of the same dimension as $A$, $\rho_{A'}$ is the same as $\rho_{A}$ but is in $S_{\leq}(\hilbert_{A'})$ instead of $S_{\leq}(\hilbert_{A})$, and $\rho_A^{T}$ is the transpose of $\rho_A$ with respect to an orthonormal basis $\ket{e_i}$, defined here. 

\begin{defn}[Transpose] Given a state $\rho_A\in S_{\leq}(\hilbert_A)$ and an orthonormal basis $\{\ket{e_i}\}$ for $\hilbert_A$ then the transpose with respect to this basis is defined as
\begin{equation}
\rho_A^{T} := \sum_{i,j} \bra{e_j}\rho_A \ket{e_i}\ket{e_i}\bra{e_j}.
\end{equation}
\end{defn}

We can use Eq.~\ref{eq:CJ2} to see what the trace-preserving property of $\mathcal{E}$ implies for the CJ matrix:
\begin{equation} \label{eq:tpp}
\Tr(\rho_A) = \Tr(\mathcal{E}(\rho_A)) = \Tr_A (\rho_{A}^{T} \Tr_B(\Xi)).
\end{equation}
Since Eq.~\ref{eq:tpp} has to hold for all possible $\rho_A\in S_{\leq}(\hilbert_A)$ then it holds that
\begin{equation}
\Tr_B(\Xi) = \identity_A.
\end{equation}

For more information about the CJ isomorphism, see \cite{fletcher07}, Exercise 8 at \cite{renner12a}, and the lecture notes mentioned at the beginning of this chapter. The CJ map has a concrete connection to the Kraus-operator representation. To define this connection, we introduce a notation found in \cite{fletcher07} as a representation of operators as vectors. 

\begin{defn}[Vector representation \cite{fletcher07}] \label{defn:vector} Given a linear operator $L \in L(\hilbert_A,\hilbert_B)$ that has a matrix representation from Eq.~\ref{eq:matrix_rep} where we define $c_{ij}=\bra{f_j}L\ket{e_i}$, then the vector representation of $L$ is defined as
\begin{equation}
\kket{L} := \sum_{ij} c_{ij} \ket{f_j}\ket{e_i}.
\end{equation}
\end{defn}
A ket is used here to show that $L$ is represented as a vector but the double bracket is included to show that $L$ is an operator.

Using this notation, we can represent the CJ matrix $\Xi$ in terms of the Kraus operators $A_i$ as \cite{fletcher07}
\begin{equation}\label{eq:CJ_decomp}
\Xi = \sum_i \kket{A_i}\bbra{A_i}.
\end{equation}
This means that the eigenvectors of the CJ matrix are the Kraus operators represented as vectors! Given a matrix, $\Xi$, its decomposition into a set of vectors $\kket{A_i}$ in Eq.~\ref{eq:CJ_decomp} is not necessarily unique. The decomposition, Eq.~\ref{eq:CJ_decomp}, therefore implies that the Kraus operators are not unique. In addition, this relation is a way to find one representation from the other. Given the Kraus operators and by turning them into vectors, the CJ matrix can be found. If the CJ matrix is known, find its eigenvectors, and a set of Kraus operators to represent the same map can be found as well.

In addition to the CJ isomorphism, there is another representation that is closely related, which explicitly shows the linear nature of CPTP maps. 

\begin{defn}[Normal representation \cite{watrous13}] \label{defn:normal_rep} Given a CJ matrix representation, $\Xi$, of a CPTP map, $\mathcal{E}$, and orthonormal bases for the input and output Hilbert spaces of $\mathcal{E}$, $\{\ket{e_i}\}$ and $\{\ket{f_j}\}$, then the Normal representation is defined as the matrix
\begin{equation}
\Xi^{R} = \sum_{ijkl} \bra{e_i}\bra{f_j}\Xi \ket{e_k}\ket{f_l} \; \ket{e_i}\ket{e_k}\bra{f_j}\bra{f_l}.
\end{equation}
\end{defn}

This representation is useful because of the way it acts on states. Instead of as in the CJ representation, Eq.~\ref{eq:CJ2}, a CPTP map acts as
\begin{equation}
\kket{\mathcal{E}(\rho_A)} = \Xi^R \kket{\rho_A}.
\end{equation}
This makes the linearity of CPTP maps clear: it is a matrix acting on an input vector. For complete positivity, it is easier to use the CJ representation, i.e.~$\Xi\geq 0$. The trace-preserving property, however, can be written as $\bbra{\identity}\Xi^R\kket{\identity} = 1$.

As is known from traditional quantum mechanics, all quantum maps can be represented as unitaries. In the CPTP map framework, this unitary representation comes from the Stinespring dilation. 

\begin{thm}[Stinespring dilation] \label{thm:stinespring} Given a CPTP map $\mathcal{E}$ from $S_{=}(\hilbert_A)$ to $S_{=}(\hilbert_B)$, this map can be represented as an isometry, $U_{\text{iso}}$, from $\hilbert_A$ to $\hilbert_{BR}$ followed by a partial trace over an ancillary system, $R$,
\begin{equation}
\mathcal{E}(\rho_A) = \Tr_{R} (U_{\text{iso}} \rho_{A} U^{\dag}_{\text{iso}}).
\end{equation}
Moreover, if the input space is extended to include another input system $A'$ in a fixed state $\rho_0$, then the CPTP map can be represented as a unitary, $U$, mapping $AA'$ to $BR$:
\begin{equation}
(\mathcal{E}\otimes\id)(\rho_{A}\otimes \rho_{0}) = \Tr_{R} (U \rho_{A}\otimes \rho_{0} U^{\dag}).
\end{equation}
\end{thm}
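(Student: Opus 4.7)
The plan is to build the Stinespring dilation directly from the Kraus-operator representation, which has already been established earlier in the excerpt. Writing $\mathcal{E}(\rho_A) = \sum_i A_i \rho_A A_i^\dagger$ with $\sum_i A_i^\dagger A_i = \identity_A$, I would introduce an ancillary Hilbert space $\hilbert_R$ of dimension equal to the number of Kraus operators, fix an orthonormal basis $\{\ket{i}_R\}$, and define
\begin{equation}
U_{\text{iso}} := \sum_i A_i \otimes \ket{i}_R.
\end{equation}

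First I would verify that $U_{\text{iso}}$ is an isometry from $\hilbert_A$ to $\hilbert_{BR}$: a direct computation gives $U_{\text{iso}}^\dag U_{\text{iso}} = \sum_{i,j} A_i^\dag A_j \otimes \bk{i}{j} = \sum_i A_i^\dag A_i = \identity_A$, where the trace-preserving condition on the Kraus operators is what makes this work. Next I would check that tracing out $R$ recovers $\mathcal{E}$: expanding $U_{\text{iso}}\rho_A U_{\text{iso}}^\dag = \sum_{i,j} A_i \rho_A A_j^\dag \otimes \ket{i}\bra{j}_R$ and applying $\Tr_R$ yields exactly $\sum_i A_i \rho_A A_i^\dag = \mathcal{E}(\rho_A)$, establishing the first part of the theorem.

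For the second part I would promote the isometry to a unitary by padding the input side. Since $U_{\text{iso}}$ has orthonormal columns spanning only a $d_A$-dimensional subspace of $\hilbert_{BR}$, I would introduce an ancillary system $A'$ of dimension $d_R d_B / d_A$ (enlarging $R$ if necessary so this is an integer) prepared in a fixed state $\rho_0 = \kb{0}{0}_{A'}$, and define the action of $U$ on $\ket{\psi}_A \otimes \ket{0}_{A'}$ to coincide with $U_{\text{iso}}\ket{\psi}_A$. The remaining columns of $U$ (indexed by basis vectors of $A'$ orthogonal to $\ket{0}$) can be completed to an orthonormal basis of $\hilbert_{BR}$ by any standard Gram–Schmidt argument, because the image of $U_{\text{iso}}$ is a proper subspace whose orthogonal complement has the right dimension. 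A short calculation then confirms $(\mathcal{E}\otimes\id)(\rho_A\otimes\rho_0) = \Tr_R(U(\rho_A\otimes\rho_0)U^\dag)$.

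The main obstacle is not the isometry construction itself, which is mechanical once the Kraus form is available, but rather the bookkeeping for the second statement: one has to ensure the ancillary dimensions match so that the extension to a unitary is genuinely possible and that the fixed state $\rho_0$ plays no role beyond selecting the correct subspace on which the unitary reproduces $U_{\text{iso}}$. Since the CJ representation gave us a canonical way to read off Kraus operators (as eigenvectors of $\Xi$ via Eq.~\ref{eq:CJ_decomp}), this existence of a finite set of Kraus operators — and hence a finite-dimensional $R$ — is already in hand, so no subtle limiting argument is needed.
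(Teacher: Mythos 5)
The paper states Theorem~\ref{thm:stinespring} without proof, so there is no in-text argument to compare against; your proposal supplies the standard construction and it is correct. Defining $U_{\text{iso}} = \sum_i A_i \otimes \ket{i}_R$ from a Kraus decomposition, verifying $U_{\text{iso}}^{\dag}U_{\text{iso}} = \identity_A$ via the trace-preserving condition, and recovering $\mathcal{E}$ by tracing out $R$ is exactly the textbook route, and your extension of the isometry to a unitary by padding the input with $A'$ in a fixed pure state and completing the orthonormal image set to a basis of $\hilbert_{BR}$ is sound (the dimension bookkeeping you flag is the only point requiring care, and your fix of enlarging $R$ so that $d_A$ divides $d_B d_R$ handles it). One minor remark: the theorem's second display writes $(\mathcal{E}\otimes\id)(\rho_A\otimes\rho_0)$ on the left while the right-hand side lives on $\hilbert_B$ alone; this is an infelicity of the statement itself rather than of your argument, and your construction proves the intended identity $\mathcal{E}(\rho_A)=\Tr_R\left(U(\rho_A\otimes\rho_0)U^{\dag}\right)$.
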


This relation, along with the representations above allow us to go between various forms of CPTP maps. They can be represented as isometries, unitaries, matrices, or a set of (Kraus) operators. Some have important advantages, such as that CJ matrices are unique, the Natural representation matrix can be applied in a simple way to states, and unitaries and isometries have particular properties (such as invertibility) that can be exploited.

\section{Entropies} \label{sec:entropy}

Entropy is a mathematical tool to quantify an amount of uncertainty. Conversely, entropy can also be used to quantify the amount of information contained in a physical system. The quantity typically used for this purpose is the Shannon or von Neumann entropy. The former applies to classical systems while the latter applies to quantum systems. Historically, the idea of entropy originated from thermodynamics and then later entropy was defined for information theory.

The Shannon and von Neumann entropy have been used in many areas of science. They apply to the situation where a process is repeated many times in exactly the same way. This is called the \emph{independent and identically distributed} (i.i.d.) scenario. Because of this repetition and independence, the Shannon and von Neumann entropies actually characterize the average uncertainty in the system over these repetitions.

It can be useful to characterize uncertainty for a single process without any repetitions. This is called the \emph{one-shot} scenario. In this case, there are classical and quantum generalizations of the Shannon and von Neumann entropies, which we call one-shot entropies. Before discussing these one-shot entropies we derive the Shannon entropy \cite{shannon48} from some basic axioms and define the von Neumann entropy \cite{vonneumann55}.

\subsection{I.I.D.~Entropy} \label{sec:iid_entropy}

Uncertainty is such a useful concept, and so widely used, that we derive entropy from a set of intuitive axioms here to give a motivation for the definition that is used. We would like any good quantifier of uncertainty to satisfy the following intuitive properties.
\begin{enumerate}
\item Uncertainty should only depend on the probabilities of a random variable, not its values. \label{ax:probs}
\item Uncertainty should increase monotonically in the number of outcomes of an experiment if all of the outcomes are equiprobable. \label{ax:mono}
\item Uncertainty is additive. If two systems are independent then the uncertainty of both systems together should be the sum of the uncertainties of each system by itself. \label{ax:add}
\item Uncertainty should be a continuous function of the probabilities of a random variable. \label{ax:smooth} 
\end{enumerate}
The first property means that, for example, the only thing uncertainty should depend on should be the probability that it rains, and not the fact that the value associated to that probability is ``raining.''

The second property means that uncertainty should increase if there are more possible outcomes. For example, an equally balanced six-sided die has less uncertainty than an equally-balanced ten-sided die, just from the fact that there are more possible outcomes for the latter die roll.

The third property means that, for example, the uncertainty about the weather tomorrow and the uncertainty about the outcome of rolling a six-sided die should just be the addition of their individual uncertainties. This property could be changed to use another ``combining'' operation instead of addition (such as multiplication), though this choice can lead to strange behaviour of the uncertainty. For example, if it is very likely that it rains tomorrow and very unlikely that the number on a die rolls a $6$ then the total uncertainty for both events, using multiplication, may be small, while the additive uncertainty would be large.

The fourth property means that if the probability of an event changes slightly, the difference in the uncertainty should be bounded by a small constant dependent on the change in the probability. This property avoids strange behaviour of the uncertainty as a function of the probabilities.

If we take these properties to be axioms for our quantity, then we necessarily reach the following unique definition (up to a constant factor). The following proof is based on the original by Shannon \cite{shannon48} and Exercise 11.2 in \cite{nielsen00}. Another proof can be found in \cite{preskill98}. 

\begin{thm}[Shannon Entropy] A measure of uncertainty, called entropy, of a random variable $X$ with values $x_i, i=\{1,2,\dots,n\}$ and probabilities $p_i$ that satisfy the above axioms must necessarily have the form
\begin{equation}
-c \sum_{i=1}^{n} p_i\log p_i,
\end{equation}
where $c$ is a positive constant.
\end{thm}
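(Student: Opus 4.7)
The plan is to first determine the entropy of uniform distributions, then extend the resulting logarithmic formula to rational probability vectors via a grouping argument, and finally pass to arbitrary real distributions by continuity.

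First I would define $f(n) := H(1/n,\dots,1/n)$, the entropy assigned to a uniform distribution on $n$ outcomes. Two independent uniform experiments on $m$ and $n$ outcomes form a joint system whose probabilities are all equal to $1/(mn)$, so by axiom~1 its entropy equals $f(mn)$; axiom~3 then forces $f(mn) = f(m) + f(n)$, while axiom~2 makes $f$ monotonically non-decreasing. A standard number-theoretic argument pins down $f$ from these two properties. Fix integers $s,t \geq 2$ and a positive integer $m$, and let $n$ be the unique integer with $s^n \leq t^m < s^{n+1}$, so that $n/m \leq (\log t)/(\log s) < (n+1)/m$. Applying $f$ and using $f(s^n) = n f(s)$ together with monotonicity gives $n f(s) \leq m f(t) < (n+1) f(s)$, whence $|f(t)/f(s) - (\log t)/(\log s)| \leq 1/m$; sending $m \to \infty$ yields $f(n) = c \log n$ for a constant $c \geq 0$ (nonnegativity follows from monotonicity, noting that $f(1) = 0$ is forced by $f(1) + f(1) = f(1)$).

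Next I would extend this to rational probability vectors $p_i = n_i/N$ with $\sum_i n_i = N$. The key observation is that a uniform draw from $N$ outcomes can be viewed as a two-stage procedure: first select one of $n$ groups with probabilities $p_i = n_i/N$, then draw uniformly from the $n_i$ elements of the chosen group. Granted a grouping identity of the form $f(N) = H(p_1,\dots,p_n) + \sum_i p_i f(n_i)$, substituting $f(k) = c \log k$ and rearranging immediately gives $H(p_1,\dots,p_n) = -c \sum_i p_i \log p_i$. Since rational probability vectors are dense in the simplex, axiom~4 (continuity) then extends this identity to all real probability distributions, yielding the claimed form.

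The technical hurdle is justifying the grouping identity in the rational step. The two-stage procedure described there is not a pair of independent systems, so axiom~3 as stated does not apply directly. I would bridge this gap either by reinterpreting axiom~3 as a ``compound-choice'' rule in the spirit of Shannon's original formulation, or by a dyadic refinement: writing $p_i = k_i/2^M$, applying axiom~3 to $M$ independent fair coin tosses to express $f(2^M) = M f(2)$, reorganizing those $2^M$ equiprobable outcomes into $n$ groups of sizes $k_i$, and extracting $H(p_1,\dots,p_n)$ by a limiting argument that combines axioms~1,~3, and~4. Filling in this bridge cleanly from the stated axioms is where I expect the main work of the proof to lie.
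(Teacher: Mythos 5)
Your proposal follows essentially the same route as the paper's proof: the sandwiching argument $s^m \leq t^n \leq s^{m+1}$ to pin down the uniform case as $c\log n$, the two-stage grouping decomposition to handle rational probability vectors, and continuity (axiom~4) to pass to general distributions. The technical hurdle you flag --- that the grouping identity $f(N) = H(p_1,\dots,p_n) + \sum_i p_i f(n_i)$ does not follow from additivity over \emph{independent} systems as axiom~3 is literally stated --- is real, but the paper's own proof has exactly the same gap: it invokes axiom~3 for the compound experiment on $W$ whose second stage depends on the first, so your explicit acknowledgement of the issue (and the proposed compound-choice or dyadic-refinement bridge) makes your version, if anything, more careful than the original.
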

\begin{proof}
Let $A(k)$ be a function of uncertainty of a random variable $X$, where all of the probabilities are equal: $p_i=1/k$. By axiom (\ref{ax:probs}) we know that the function $A$ can only depend on $k$.

Now consider a random variable $Y$ with $s^m$ outcomes, where $s$ and $m$ are integers, and all probabilities are equal to $1/s^m$. We can also construct a similar random variable $Z$ with $t^n$ outcomes, where $t$ and $n$ are integers, and all probabilities are equal to $1/t^n$. Then we can always find an $n$ and $m$ such that\footnote{Note that the difference $s^{m+1}-s^m$ can be made arbitrarily large by increasing $m$ since $\frac{\mathrm{d}}{\mathrm{d}m}\left(s^{m+1}-s^m\right) = ms^{m-1}(s-1) \geq 0$. Another way to see this increasing difference is that $\frac{s^{m+1}(s-1)}{s^m(s-1)}=s$, so the gap between $s^m$ and $s^{m+1}$ grows by a factor of $s$ by increasing $m$ by $1$.}
\begin{equation} \label{eq:sandt}
s^m \leq t^n \leq s^{m+1}.
\end{equation}
Taking the logarithm and dividing by $n\log s$ gives
\begin{equation}\label{eq:smoothA}
\frac{m}{n} \leq \frac{\log t}{\log s} \leq \frac{m}{n} + \frac{1}{n} \implies \left| \frac{m}{n} - \frac{\log t}{\log s} \right| \leq \frac{1}{n}.
\end{equation}
Note that $n$ and $m$ can be chosen arbitrarily large and the equation is still satisfied. From axiom (\ref{ax:mono}) we can apply the function $A$ to Eq.~\ref{eq:sandt}:
\begin{equation}\label{eq:Afn}
A(s^m) \leq A(t^n) \leq A(s^{m+1}).
\end{equation}
Note that the random variable $Y$ is equivalent to considering $m$ different choices and then $s$ choices with equal probability (and similarly for $Z$). By the additivity axiom (\ref{ax:add}), this means that we can write Eq.~\ref{eq:Afn} as
\begin{equation}
m A(s) \leq n A(t) \leq (m+1)A(s).
\end{equation}
Dividing by $nA(s)$ and using Eq.~\ref{eq:smoothA} gives
\begin{align}
\frac{m}{n} \leq \frac{A(t)}{A(s)} \leq \frac{m}{n}+\frac{1}{n} \implies \left|\frac{m}{n}-\frac{A(t)}{A(s)}\right| \leq \frac{1}{n} \\
\left|\frac{A(t)}{A(s)}-\frac{\log t}{\log s}\right| \leq \frac{2}{n}.
\end{align}
Since $n$ can be made arbitrarily large, it implies that $A(t)=c\log t$, where $c$ is a constant. By the monotonicity axiom (\ref{ax:mono}), the constant $c$ must be positive.

Now consider a random variable $W$ with $n$ outcomes and probabilities $p_i=N_i/\sum_i N_i$, where $N_i$ are integers (see Fig.~\ref{fig:W}). Let each of the $N_i$ be associated with $N_i$ objects. Assume we do an experiment whose outcomes are described by $W$. We can consider getting outcome $i$ with probability $p_i$ and then uniformly at random picking one of the $N_i$ objects. The uncertainty about which object we get is then given by
\begin{equation}
H(p_1,\dots,p_n) + \sum_i p_i c \log (N_i),
\end{equation}
where $H(p_1,\dots,p_n)$ is the uncertainty in getting outcome $i$ from $W$ and the second term is the uncertainty of uniformly picking the $N_i$ objects.

Equivalently, we can consider getting one of the $\sum_{i=1}^{n}N_i$ objects with equal probability. The uncertainty in this case is $c\log (\sum_i N_i)$. From axiom (\ref{ax:add}) these uncertainties should be the same:
\begin{align}
c\log\left(\sum_i N_i\right) &\overset{(3)}{=} \sum_i p_i c\log (N_i) + H(p_1,\dots,p_n) \\
H(p_1,\dots,p_n) &= c\left( \sum_i p_i \log (N_i) - \sum_i p_i \log\left(\sum_i N_i\right)\right) \nonumber \\
&= - c \sum_i p_i \log p_i, \label{eq:entropy_end}
\end{align}
where we use axiom (\ref{ax:probs}) to write $H$ as a function of just the probabilities. Axiom~(\ref{ax:smooth}) implies that Eq.~\ref{eq:entropy_end} holds even for probability distributions different than $W$ but that are close to $W$. A similar argument can be made to argue that Eq.~\ref{eq:entropy_end} holds for all random variables \cite{shannon48,preskill98,nielsen00}. The constant $c$ is taken to be $1$ for convenience.
\end{proof}

\begin{figure} \centering
\includegraphics[width=0.5\textwidth]{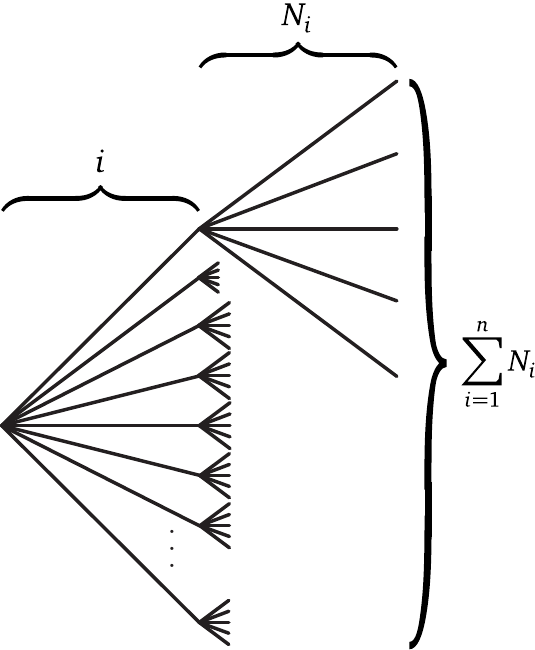}
\caption[A Random Variable, $W$]{The random variable $W$. Either one of the $\sum_{i=1}^n N_i$ items is chosen uniformly at random and associated to its group $N_i$ or item $i$ is chosen with probability $p_i=N_i/\sum_iN_i$.}
\label{fig:W}
\end{figure}

Note that in this derivation we took $n$ to be very large (i.e.~in the limit as $n$ goes to infinity). $n$ characterized the number of independent repetitions of the random variable $Z$. Therefore, the Shannon entropy only applies to the identical and independent distribution (i.i.d.)~limit.

A specific case of the Shannon entropy is for a single bit. 

\begin{defn}[Binary entropy] \label{defn:binary_entropy} Given a random variable $X$ for a single bit, with probability $p = \Pr[X=0]$, then
\begin{equation}
H(X) \equiv h(p) = -p\log p -(1-p) \log (1-p).
\end{equation}
\end{defn}

The quantum analogue of the Shannon entropy, called the von Neumann entropy, takes the eigenvalues of a density operator as probabilities and inputs them into the Shannon entropy. The von Neumann entropy can be thought of as the uncertainty in the outcomes from measuring a quantum state in its eigenbasis. 

\begin{defn}[von Neumann Entropy] \label{defn:vN} Let $\rho_{A}\in S_{\leq}(\hilbert_{A})$ then the von Neumann entropy is defined as
\begin{equation}
H(A)_{\rho} := -\Tr (\rho\log\rho).
\end{equation}
A function acting on a state is defined as the function acting on the state's eigenvalues in the state's spectral decomposition. For example if $\rho$ has spectral decomposition $\rho = \sum_i\lambda_i\kb{i}{i}$ then $\log\rho = \sum_i \log(\lambda_i) \kb{i}{i}$.
The von Neumann entropy of $\rho$ can then be written as
\begin{equation}
H(A)_{\rho} = - \sum_i \lambda_i \log\lambda_i .
\end{equation}
\end{defn}

\begin{defn}[Conditional von Neumann Entropy] \label{defn:cvN} Let $\rho_{AB}\in S_{\leq}(\hilbert_{AB})$. Then the conditional von Neumann entropy is defined as
\begin{equation}
H(A|B)_{\rho} := H(AB)_{\rho} - H(B)_{\rho}.
\end{equation}
\end{defn}

The conditional Shannon entropy can be defined in the same way as the von Neumann entropy. The conditional entropy can be interpreted as the amount that the uncertainty changes for the system $A$ upon learning $B$.

The subscript on the entropy will be dropped if it is clear from the context which state the entropy refers to (i.e.~$H(A)=H(A)_{\rho}$).

A fundamental property of the von Neumann entropy is the data-processing inequality.

\begin{thm}[Data-Processing Inequality (DPI)] Let $\rho_{ABC}\in S_{\leq}(\hilbert_{ABC})$. Then
\begin{equation}\label{eq:DPI}
H(A|BC)_{\rho} \leq H(A|B)_{\rho}. 
\end{equation}
\end{thm}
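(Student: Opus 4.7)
The plan is to recognize that, after unfolding the definition of the conditional von Neumann entropy on both sides, the claimed inequality $H(A|BC) \leq H(A|B)$ is equivalent to strong subadditivity, namely
\begin{equation*}
H(ABC)_{\rho} + H(B)_{\rho} \;\leq\; H(AB)_{\rho} + H(BC)_{\rho}.
\end{equation*}
So the first step is to rewrite the inequality in this form using Definition~\ref{defn:cvN}, which reduces the problem to proving strong subadditivity of the von Neumann entropy for any tripartite state $\rho_{ABC} \in S_{\leq}(\hilbert_{ABC})$ (one may assume $\rho_{ABC}$ is normalized, since both sides of the DPI scale linearly in the trace and the inequality is trivial if $\rho_{ABC} = 0$).

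The second step is to introduce the quantum relative entropy $D(\sigma\|\tau) := \Tr(\sigma \log \sigma) - \Tr(\sigma \log \tau)$ and the mutual information $I(X{:}Y)_{\omega} := D(\omega_{XY} \,\|\, \omega_X \otimes \omega_Y)$. A direct expansion gives $I(X{:}Y) = H(X) + H(Y) - H(XY)$, so the inequality to be shown becomes
\begin{equation*}
I(A{:}BC)_{\rho} \;\geq\; I(A{:}B)_{\rho}.
\end{equation*}
The third step is then to derive this from the monotonicity of relative entropy under CPTP maps applied to the partial trace $\Tr_C$: since $\Tr_C$ is CPTP and maps $\rho_{ABC} \mapsto \rho_{AB}$ and $\rho_A \otimes \rho_{BC} \mapsto \rho_A \otimes \rho_B$, monotonicity yields
\begin{equation*}
D(\rho_{ABC} \,\|\, \rho_A \otimes \rho_{BC}) \;\geq\; D(\rho_{AB} \,\|\, \rho_A \otimes \rho_B),
\end{equation*}
which is exactly the desired statement.

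The main obstacle is establishing the monotonicity of the quantum relative entropy itself, which is in fact equivalent to strong subadditivity and is the only nontrivial analytic input. I would prove it via Lieb's concavity theorem (the joint concavity of $(X,Y) \mapsto \Tr(K^{\dagger} X^{t} K Y^{1-t})$ for $t \in [0,1]$), which implies the joint convexity of $(\sigma,\tau) \mapsto D(\sigma\|\tau)$; combined with the fact that a partial trace can be written as a convex combination of unitary conjugations on an enlarged Hilbert space (e.g.\ via the Stinespring dilation of Theorem~\ref{thm:stinespring} applied to $\Tr_C$, followed by averaging over a group of unitaries acting on the purifying system), joint convexity together with unitary invariance of $D$ gives the monotonicity bound. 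Once monotonicity is in hand, steps one through three assemble into the DPI with no further work.
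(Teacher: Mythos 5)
Your proof is correct, but it follows a genuinely different route from the one this thesis takes. You reduce the DPI to strong subadditivity, recast that as monotonicity of the quantum relative entropy under the partial trace $\Tr_C$, and then obtain monotonicity from joint convexity of relative entropy (via Lieb's concavity theorem) together with the representation of the partial trace as an average of unitary conjugations --- this is the classical Lieb--Ruskai/Lindblad/Uhlmann argument, and all of your intermediate steps (the equivalence $H(A|BC)\leq H(A|B) \Leftrightarrow I(A{:}BC)\geq I(A{:}B)$, the identification of mutual information with a relative entropy, the normalization reduction for $\rho\in S_{\leq}(\hilbert_{ABC})$) check out. The thesis instead deliberately avoids this analytic machinery: it first proves the DPI for the smooth min-entropy (Theorem~\ref{thm:DPI_min}), which follows in a few lines directly from the definition by tracing out $C$ in the operator inequality $\tilde{\rho}_{ABC}\leq 2^{-\lambda}\identity_A\otimes\sigma_{BC}$, and then recovers the von Neumann statement by specializing through the quantum asymptotic equipartition property (Theorem~\ref{thm:qaep}). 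The trade-off is instructive: your route is self-contained at the level of the von Neumann entropy and yields the strictly stronger statement of relative-entropy monotonicity along the way, but concentrates all the difficulty in one hard analytic input (Lieb's concavity theorem); the thesis's route makes the inequality itself nearly trivial and pushes the difficulty into the QAEP limit, which is exactly the point the author wants to make --- that the data-processing property is fundamentally a one-shot statement and the hardness lies only in the specialization to the i.i.d.~quantity.
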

The data-processing inequality means that the uncertainty about a system $A$ cannot decrease if another system $C$ is lost. This inequality actually implies something stronger: that the uncertainty of $A$ cannot decrease under any CPTP map acting on the conditioning system. Since the Stinespring dilation (Thereom~\ref{thm:stinespring}) can represent any CPTP map as a unitary followed by a partial trace. Since the entropy is invariant under unitaries (since unitaries do not change the eigenvalues of a state) and the DPI shows that the uncertainty does not decrease under a partial trace, then for any CPTP map from a system $B$ to $D$ the uncertainty on $A$ cannot decrease:
\begin{equation}
H(A|B) \leq H(A|D).
\end{equation}

The proof of the data-processing inequality is surprisingly non-trivial and it will be discussed in Section~\ref{sec:DPI}. However, if one-shot entropies are considered instead (Section~\ref{sec:OSE}) then the data-processing inequality is straightforwardly proven (Theorem~\ref{thm:DPI_min}).

The data-processing inequality is related to another property called strong subadditivity. Given a state $\rho_{ABC}\in S_{\leq}(\hilbert_{ABC})$, then strong subadditivity is
\begin{equation}\label{eq:SSA}
H(ABC) +H(B) \leq H(AB) +H(BC).
\end{equation}
It is clear from the definition of the conditional von Neumann entropy and Shannon entropy that Eq.~\ref{eq:SSA} is equivalent to Eq.~\ref{eq:DPI} for the Shannon and von Neumann entropies.

Note that all good entropy measures should satisfy the DPI, otherwise they may decrease under CPTP maps (meaning arbitrary information may be gained by just applying maps to an isolated system). However, the same is not true for strong subadditivity. The min- and max-entropy in the next section are examples of entropies that satisfy the DPI but do not satisfy strong subadditivity.

Another important property of the von Neumann entropy is for pure states $\rho_{AB}\in S_{=}(\hilbert_{AB})$: $H(A)=H(B)$. To prove this property, we use the \emph{Schmidt} decomposition, which enables any pure state to be written as $\ket{\Psi}_{AB}=\sum_i \alpha_i\ket{\psi_i}\ket{\phi_i}$, where $\{\ket{\psi_i}\}$ and $\{\ket{\phi_i}\}$ are orthonormal bases for $\hilbert_{A}$ and $\hilbert_{B}$ respectively. The reduced states on $A$ and $B$ are then $\rho_A=\Tr_B\kb{\Psi}{\Psi}=\sum_i |\alpha_i|^2 \kb{\psi_i}{\psi_i}$ and $\rho_B = \Tr_A \kb{\Psi}{\Psi}=\sum_i |\alpha_i|^2 \kb{\phi_i}{\phi_i}$, which means that $\rho_A$ and $\rho_B$ have the same eigenvalues. Since the entropy is only a function of the eigenvalues of the state, then clearly $H(A)=H(B)$.

\subsection{One-Shot Entropies} \label{sec:OSE}

As mentioned previously, the Shannon and von Neumann entropies apply in the i.i.d.~scenario where an experiment is repeated independently and infinitely many times. For the one-shot scenario there are two important entropies, the min- and max-entropy, which we call \emph{one-shot} entropies. They come from a family of entropies called R\'enyi entropies \cite{renyi61}. It turns out that all of R\'enyi entropies are approximately equivalent to the (smooth) min- and max-entropy, so that they characterize all of the R\'enyi entropies \cite{tomamichelthesis}. We will not use these entropies in this thesis, and therefore we only discuss these two representative ones. Also, we only include their quantum definitions; their classical counterparts are defined similarly. For an in-depth discussion of one-shot entropies, see \cite{tomamichelthesis}.

\begin{defn}[Min-Entropy] \label{defn:Hmin} Let $\rho_{AB}\in S_{\leq}(\hilbert_{AB})$ then the conditional min-entropy is defined as
\begin{equation}
H_{\min}(A|B) := \max_{\sigma_B \in S_{\leq}(\hilbert_B)} \sup_{\lambda} \{ \lambda \in \mathbb{R} : \rho_{AB} \leq 2^{-\lambda} \mathbbm{1}_A \otimes \sigma_B\}.
\end{equation}
\end{defn}

The min-entropy of a classical-quantum (CQ) state $\rho_{XB}$ can be interpreted as the amount of independent number of bits that can be distilled from $X$ so that the quantum system $B$ does not have any information about the system $X$ \cite{rennerphd,koenig08}. This is the task of randomness extraction. For more details on how this task can be used in cryptography, see Section~\ref{sec:privacy_amplification2}. 

Another interpretation of the conditional min-entropy of a CQ state $\rho_{XB}$ is as a guessing probability \cite{koenig08}. If the quantum system $B$ undergoes the optimal measurement to try to predict the value of $X$ given access to the system $B$, then the probability of guessing $X$ correctly is given by $2^{-H_{\min}(X|B)}$. 

\begin{defn}[Max-Entropy] \label{defn:Hmax} Let $\rho_{AB}\in S_{\leq}(\hilbert_{AB})$ then the conditional max-entropy is defined as
\begin{equation}
H_{\max}(A|B) := \max_{\sigma_B \in S_{\leq}(\hilbert_B)} \log \left\| \sqrt{\rho_{AB}}\sqrt{\mathbbm{1}_A\otimes \sigma_B} \right\|_1^2.
\end{equation}
\end{defn}

The max-entropy characterizes the amount of entanglement required for a task called state merging \cite{berta09a}. State merging is when there is a tripartite pure state $\rho_{ABC}\in S_{=}(\hilbert_{ABC})$, where Alice and Bob hold systems $A$ and $B$ respectively, and Alice wants to send her state to Bob by only using classical communication (see Fig.~\ref{fig:state_merge}). If Alice and Bob share certain entangled states they can use a protocol called \emph{teleportation} that transfers a quantum state by only using entangled states and classical communication \cite{bennett93}. The amount of entanglement required for this task can then be quantified by the max-entropy.

\begin{figure} \centering
\includegraphics[width=0.6\textwidth]{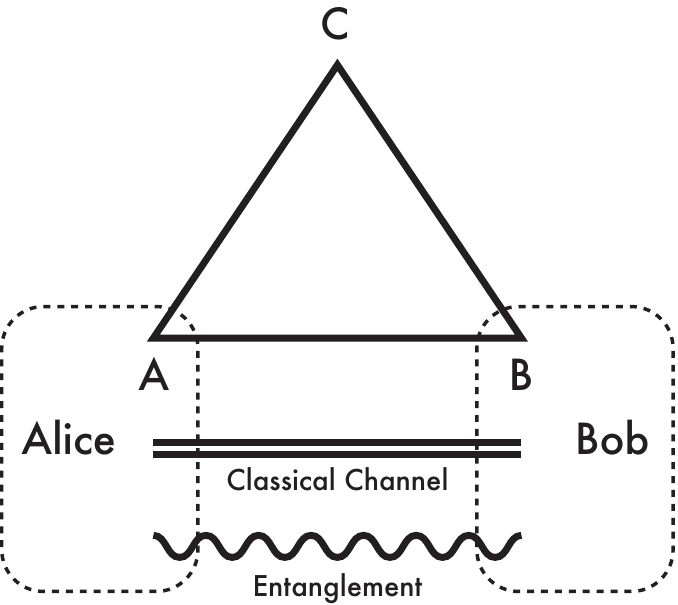}
\caption[State Merging]{State merging. Alice and Bob share a state that is purified with system $C$. Alice wants to send $\rho_A$ to Bob by communicating through the classical channel and by using entanglement shared with him.}
\label{fig:state_merge}
\end{figure}

Given a CQ state, $\rho_{XB}$, another interpretation of the max-entropy is the size of the system that $X$ can be compressed to, such that given access to the quantum system $B$, $X$ can be recovered \cite{renes12}.

The min- and max-entropy also characterize other protocols such as channel coding: the task of trying to reliably send messages through a noisy channel \cite{koenig08,tomamichelthesis}.

One problem with the above definitions is that they do not tolerate any errors in the tasks they characterize. To allow for an error probability, we define \emph{smooth} versions of these quantities. These smooth definitions will also be continuous in the quantum state, while the non-smooth definitions are not continuous \cite{tomamichelthesis}. We use the purified distance (Defn.~\ref{defn:pdist}) for our sense of closeness for the definition of the smooth min- and max-entropy. To specify a region of close states around a fixed state, we define a ball. 

\begin{defn}[$\varepsilon$-Ball] \label{defn:ball} Let $\rho\in S_{\leq}(\hilbert)$ then an $\varepsilon$-Ball around the state $\rho$ is defined as the set
\begin{equation}
\mathcal{B}^{\varepsilon}(\rho) := \{\rho' : \rho'\in S_{\leq}(\hilbert), P(\rho,\rho')\leq\varepsilon\}.
\end{equation}
\end{defn}

We can now define smooth entropies as optimizing the min- and max-entropy over a ball of states that are close to the state of interest. 

\begin{defn}[Smooth Entropies] \label{defn:sHminmax} Let $\rho_{AB}\in S_{\leq}(\hilbert_{AB})$ then the smooth conditional min- and max-entropy are defined as
\begin{align}
H^{\varepsilon}_{\min}(A|B) &:= \max_{\rho' \in \mathcal{B}^{\varepsilon}(\rho)} H_{\min}(A|B)_{\rho'}\\
H^{\varepsilon}_{\max}(A|B) &:= \min_{\rho' \in \mathcal{B}^{\varepsilon}(\rho)} H_{\max}(A|B)_{\rho'}.
\end{align}
\end{defn}

There are many properties of the min- and max-entropy which may be useful \cite{tomamichelthesis}, however for this thesis we will only need a duality of these entropies \cite{tomamichel10c}, an uncertainty relation they obey \cite{tomamichel11a}, and a special case that relates these entropies to the von Neumann entropy \cite{tomamichel08}. 

\begin{thm}[Duality of min- and max-entropy \cite{tomamichel10c}] Given a pure state $\rho_{ABC}\in S_{\leq}(\hilbert_{ABC})$ and $\epsilon\geq 0$ then
\begin{equation}
H_{\min}^{\epsilon}(A|B) = - H_{\max}^{\epsilon}(A|C).
\end{equation}
\end{thm}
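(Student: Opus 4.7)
The plan is to first prove the non-smooth duality $H_{\min}(A|B)_{\rho} = -H_{\max}(A|C)_{\rho}$ for every pure tripartite $\rho_{ABC}$, and then lift the result to the smooth case via Uhlmann's theorem for the purified distance.

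For the non-smooth step, I would rewrite the min-entropy as a semidefinite program. Absorbing $2^{-\lambda}$ into the sub-normalized $\sigma_B$ to form a single positive operator $\omega_B := 2^{-\lambda}\sigma_B$ collapses the nested supremum of Defn.~\ref{defn:Hmin} to
\begin{equation}
2^{-H_{\min}(A|B)} = \min\bigl\{\Tr(\omega_B) : \omega_B \in \mathcal{P}(\hilbert_B),\ \identity_A \otimes \omega_B \geq \rho_{AB}\bigr\}.
\end{equation}
Slater's condition is satisfied (take $\omega_B$ a large multiple of $\identity_B$), so SDP strong duality holds and the dual reads
\begin{equation}
2^{-H_{\min}(A|B)} = \max\bigl\{\Tr(E_{AB}\rho_{AB}) : E_{AB}\geq 0,\ \Tr_A E_{AB} \leq \identity_B\bigr\}.
\end{equation}
Writing $\rho_{AB} = \Tr_C \kb{\Psi}{\Psi}$ for a purification $\ket{\Psi}_{ABC}$, the objective becomes $\bra{\Psi}(E_{AB}\otimes \identity_C)\ket{\Psi}$. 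The key claim is that this maximum equals $\max_{\sigma_C \in S_\leq(\hilbert_C)} \|\sqrt{\rho_{AC}}\sqrt{\identity_A\otimes \sigma_C}\|_1^2 = 2^{H_{\max}(A|C)}$ (see Defn.~\ref{defn:Hmax}); this identification is the crux of the argument and I address it as the main obstacle below.

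For the smooth version, I use the Uhlmann identity $P(\rho_{AB},\tilde\rho_{AB}) = \min_{\ket{\tilde\Psi}} P(\ket{\Psi},\ket{\tilde\Psi})$, where the minimum ranges over purifications $\ket{\tilde\Psi}_{ABC}$ of $\tilde\rho_{AB}$ (enlarging $C$ if necessary). Every $\tilde\rho_{AB} \in \mathcal{B}^\varepsilon(\rho_{AB})$ thus lifts to a pure $\ket{\tilde\Psi}_{ABC}$ with $P(\ket{\tilde\Psi},\ket{\Psi}) \leq \varepsilon$, and by monotonicity of the purified distance under partial trace the reduction $\tilde\rho_{AC} := \Tr_B \kb{\tilde\Psi}{\tilde\Psi}$ lies in $\mathcal{B}^\varepsilon(\rho_{AC})$; the reverse correspondence (lifting from $\tilde\rho_{AC}$ to a pure state on $ABC$ and then reducing to $AB$) is symmetric. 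Applying the non-smooth duality to each pure $\ket{\tilde\Psi}_{ABC}$ then shows that the maximum in Defn.~\ref{defn:sHminmax} defining $H_{\min}^\varepsilon(A|B)$ equals the negative of the minimum defining $H_{\max}^\varepsilon(A|C)$, yielding the claim.

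The main obstacle is the identification of the SDP dual value with the max-entropy expression. Concretely, I would exhibit a two-sided correspondence: given any $E_{AB}$ feasible for the dual SDP, construct a $\sigma_C \in S_\leq(\hilbert_C)$ such that $\|\sqrt{\rho_{AC}}\sqrt{\identity_A\otimes\sigma_C}\|_1^2 \geq \bra{\Psi}(E_{AB}\otimes \identity_C)\ket{\Psi}$, and conversely a feasible $E_{AB}$ from any $\sigma_C$ that achieves at least the fidelity value. Both constructions proceed by using the purification $\ket{\Psi}_{ABC}$ to transport operators between the $B$ and $C$ sides and invoking Uhlmann's theorem to extract the optimal alignment between purifications of $\rho_{AC}$ and $\identity_A\otimes\sigma_C$. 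Dimensional bookkeeping of the purifying systems must be handled carefully, but both entropies are invariant under isometries on the conditioning system, so any extra auxiliary space can be absorbed without loss of generality.
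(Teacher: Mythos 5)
The paper does not prove this theorem; it is stated as imported from the cited reference \cite{tomamichel10c}, so there is no in-paper argument to compare against. Your proposal is, in outline, exactly the standard proof from that reference: the collapse of Defn.~\ref{defn:Hmin} to the SDP $2^{-H_{\min}(A|B)} = \min\{\Tr\omega_B : \omega_B\geq 0,\ \identity_A\otimes\omega_B\geq\rho_{AB}\}$ is correct (normalize $\omega_B$ to recover $\sigma_B$ and set $2^{-\lambda}=\Tr\omega_B$), Slater's condition and the dual you write down are right, and smoothing via Uhlmann's theorem for the purified distance together with monotonicity under partial trace is the established route. You have also correctly located the crux: showing that $\max\{\bra{\Psi}(E_{AB}\otimes\identity_C)\ket{\Psi}\}$ over dual-feasible $E_{AB}$ equals $\max_{\sigma_C}\norm{\sqrt{\rho_{AC}}\sqrt{\identity_A\otimes\sigma_C}}_1^2$ is where essentially all the work lies, and the two-sided construction you sketch (transporting operators through the purification and invoking Uhlmann for the optimal alignment) is the right mechanism; as written it is a plan rather than a proof of that step.

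Two points to watch when filling in the details. First, the theorem as stated allows $\rho_{ABC}\in S_{\leq}$, so every fidelity must be the generalized fidelity of Defn.~\ref{defn:fidelity} and the Uhlmann identity for the purified distance must be the sub-normalized version (typically handled by appending an extra dimension to normalize); your remark about absorbing auxiliary spaces by isometric invariance covers the dimension bookkeeping but you should state explicitly that both $H_{\min}$ and $H_{\max}$ are invariant under isometries on the conditioning system before discarding the enlarged purifying space. Second, in the smoothing step the two inclusions you establish (images of $\mathcal{B}^{\varepsilon}(\rho_{AB})$ under purify-and-reduce landing inside $\mathcal{B}^{\varepsilon}(\rho_{AC})$, and conversely) give the two inequalities $H_{\min}^{\varepsilon}(A|B)\leq -H_{\max}^{\varepsilon}(A|C)$ and $\geq$; make sure the reverse direction really starts from an arbitrary $\tilde\rho_{AC}$ in the ball, purifies it on $ABC$ (enlarging $B$ if needed), and only then applies the non-smooth duality, since the non-smooth identity holds only for pure tripartite states.
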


\begin{thm}[Uncertainty relation for min- and max-entropy \cite{tomamichel11a}] \label{thm:ur} Let $\rho_{ABC}\in S_{\leq}(\hilbert_{ABC})$, $\epsilon\geq 0$, and define two POVMs $F$ and $G$ described by POVM elements $\{F_{x}\}$ and $\{G_{z}\}$ acting on system $A$ giving outcomes $X$ and $Z$, then
\begin{equation}
H_{\min}^{\epsilon}(X|C) + H_{\max}^{\epsilon}(Z|B) \geq \log \frac{1}{c},
\end{equation}
where $c=\max_{x,z}\|\sqrt{F_{x}}\sqrt{G_{z}}\|_{\infty}^2$ is the overlap between the measurements $F$ and $G$.
\end{thm}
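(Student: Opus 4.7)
The plan is to reduce the inequality to a fidelity estimate controlled by $c$, using the duality between smooth min- and max-entropies to trade one side of the target inequality for the other.

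First, I would argue that it suffices to prove the bound when $\rho_{ABC}$ is pure. Any purifying system can be absorbed into $C$, and by the data-processing inequality this can only decrease $H_{\min}^{\epsilon}(X|C)$ while leaving $H_{\max}^{\epsilon}(Z|B)$ untouched, so the reduction strengthens the claim. Call the pure state $\ket{\rho}_{ABC}$.

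Second, I would realize the POVM $F=\{F_x\}$ through a Stinespring isometry $V_F:\hilbert_A \to \hilbert_X \otimes \hilbert_{A'}$ satisfying $V_F^{\dagger}(\kb{x}{x}_X\otimes\identity_{A'})V_F = F_x$, so that $\ket{\tilde{\rho}} := (V_F\otimes\identity_{BC})\ket{\rho}$ is pure on $XA'BC$ and reproduces the $F$-measured CQ state on $XC$ after tracing out $A'$. Since $\ket{\tilde{\rho}}$ is pure, the duality theorem for smooth min/max entropy applied to the tripartition $X \mid A'B \mid C$ gives
\begin{equation}
H_{\min}^{\epsilon}(X|C) = -H_{\max}^{\epsilon}(X|A'B),
\end{equation}
so the target inequality becomes
\begin{equation}
H_{\max}^{\epsilon}(Z|B)_{\sigma} \;\geq\; H_{\max}^{\epsilon}(X|A'B)_{\tilde{\rho}} - \log c,
\end{equation}
where $\sigma$ denotes the state after the $G$-measurement.

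Third, I would pass to the fidelity characterization $2^{H_{\max}(A|B)} = \max_{\tau_B} \|\sqrt{\rho_{AB}}\sqrt{\identity_A\otimes\tau_B}\|_1^2$. Given the $\tau_B$ that optimizes the max-entropy on the left-hand side, the goal is to construct a witness $\tau'_{A'B}$ whose fidelity with $\tilde{\rho}_{XA'B}$ is no worse than $c$ times the fidelity achieved by $\tau_B$ against $\sigma_{ZB}$. The natural construction is to dilate $G$ by its own Stinespring isometry $V_G:\hilbert_A\to\hilbert_Z\otimes\hilbert_{A''}$ and then relate the two resulting pure states on the common system $A$: the amplitude overlaps between the $F$- and $G$-branches are exactly matrix elements of $\sqrt{F_x}\sqrt{G_z}$, whose operator norms are bounded by $\sqrt{c}$ by definition. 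Inserting this bound into the fidelity calculation produces the single multiplicative factor $c$. The smoothing is finally handled by the standard ball technique: each smooth entropy is attained on a state in an $\epsilon$-ball, and the above construction is contractive under the purified distance, so the $\epsilon$-close states on the two sides can be chosen compatibly without any additional loss.

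The hard part will be the third step: quantifying precisely the fidelity loss when the measurement outcome of $G$ is ``reconstructed'' from the dilated $F$-branch. This is the genuinely non-commutative core of the uncertainty relation, and it is exactly where the overlap $c=\max_{x,z}\|\sqrt{F_x}\sqrt{G_z}\|_{\infty}^{2}$ must enter uniformly to control the cross terms between the two incompatible measurements.
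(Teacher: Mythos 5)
First, a point of reference: the thesis does not actually prove Theorem~\ref{thm:ur}; it is imported from \cite{tomamichel11a} without proof, so there is no in-paper argument to compare against. Judged on its own, your outline correctly reproduces the architecture of the known proof: reduce to pure $\rho_{ABC}$ (enlarging $C$ to a purifying system only hurts $H_{\min}^{\epsilon}(X|C)$ by data processing, so the pure case is the worst case), dilate $F$ coherently so the post-measurement state remains pure, apply the duality $H_{\min}^{\epsilon}(X|C)=-H_{\max}^{\epsilon}(X|A'B)$, and then work at the level of the fidelity characterization of $H_{\max}$ with the overlap entering through $\|\sqrt{F_x}\sqrt{G_z}\|_{\infty}$.

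There is, however, a genuine gap, and it sits exactly where you concede the hard part is. The inequality $H_{\max}^{\epsilon}(X|A'B)\leq H_{\max}^{\epsilon}(Z|B)+\log c$ is the entire content of the theorem once the (comparatively easy) duality step is done, and your sketch does not establish it. Moreover, as written, your witness construction points the wrong way: to show $2^{H_{\max}(X|A'B)}=\max_{\tau'_{A'B}}\bigl\|\sqrt{\tilde{\rho}_{XA'B}}\sqrt{\identity_X\otimes\tau'_{A'B}}\bigr\|_1^2\leq c\cdot 2^{H_{\max}(Z|B)}$ you must start from the \emph{optimizer} $\tau'_{A'B}$ of the left-hand quantity, which is handed to you and not of your choosing, and manufacture from it a witness $\tau_B$ that lower-bounds the $Z|B$ side. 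Constructing a witness $\tau'_{A'B}$ from the optimal $\tau_B$, as you propose, only lower-bounds $H_{\max}(X|A'B)$ and therefore yields the reverse of the inequality you need. The heuristic that ``the amplitude overlaps between the $F$- and $G$-branches are exactly matrix elements of $\sqrt{F_x}\sqrt{G_z}$'' also has to be turned into an actual operator inequality controlling a trace norm of a sum over outcome branches by $\sqrt{c}$ times a single-branch trace norm; the cross terms between incompatible measurements do not organize themselves for free, and this is a separate lemma in \cite{tomamichel11a}. Finally, the smoothing is not handled ``without any additional loss'' by fiat: arranging that one $\epsilon$-ball element serves simultaneously for the min-entropy on $XC$ and the max-entropy on $ZB$, and that it can be taken of the correct classical-quantum post-measurement form, is one of the main technical points of the original proof and requires an argument beyond contractivity of the purified distance.
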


This uncertainty relation can be used for cryptography, since it puts a lower bound on the entropy of Alice's measured state $X$ conditioned on an adversary's quantum system. We would like this entropy to be high, which happens when the entropy of Alice's other measurement outcome $Z$ conditioned on another system that Bob controls, $B$, is low. See Section~\ref{sec:current_methods} for how this uncertainty relation can be related to cryptography. 

\begin{thm}[Quantum Asymptotic Equipartition Property \cite{tomamichel08}] \label{thm:qaep}
Let $\rho_{AB}\in S_{=}(\hilbert_{AB})$. Then 
\begin{equation}
\lim_{\epsilon\to 0} \lim_{n\to\infty}\frac{1}{n}H^{\epsilon}_{\min / \max}(A^n|B^n)_{\rho^{\otimes n}} = H(A|B)_{\rho}.
\end{equation}
\end{thm}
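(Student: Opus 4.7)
The plan is to reduce both halves of the statement to a single one via the duality theorem already stated in the excerpt. For any purification $\rho_{ABC}$ of $\rho_{AB}$, duality gives $H_{\max}^{\epsilon}(A^n|B^n)_{\rho^{\otimes n}} = -H_{\min}^{\epsilon}(A^n|C^n)_{\rho^{\otimes n}}$ for each $n$, and on a pure tripartite state $H(A|B)_\rho = -H(A|C)_\rho$ (since $H(A)=H(BC)$ for pure states). It therefore suffices to prove the limit formula for the smooth conditional min-entropy, which I would do from now on.

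To handle the min-entropy I would introduce an auxiliary one-parameter family of Renyi-type conditional entropies $H_\alpha(A|B)_\rho$ (defined via sandwiched Renyi divergences of $\rho_{AB}$ against $\identity_A\otimes\sigma_B$, optimized over $\sigma_B$) with two essential properties: additivity under tensor products, $H_\alpha(A^n|B^n)_{\rho^{\otimes n}} = nH_\alpha(A|B)_\rho$, and continuity in $\alpha$ at $\alpha=1$ with $\lim_{\alpha\to 1}H_\alpha(A|B)=H(A|B)$. The strategy is then a two-sided sandwich: for $\alpha>1$ close to $1$,
\begin{equation}
\tfrac{1}{n}H_{\min}^{\epsilon}(A^n|B^n)_{\rho^{\otimes n}} \le H_\alpha(A|B)_\rho + \frac{K(\alpha)}{\sqrt{n}}\log\tfrac{1}{\epsilon},
\end{equation}
and for $\beta<1$ close to $1$,
\begin{equation}
\tfrac{1}{n}H_{\min}^{\epsilon}(A^n|B^n)_{\rho^{\otimes n}} \ge H_\beta(A|B)_\rho - \frac{K'(\beta)}{\sqrt{n}}\log\tfrac{1}{\epsilon}.
\end{equation}
Taking $n\to\infty$ first kills the $O(1/\sqrt{n})$ corrections, and then taking $\epsilon\to 0$ (together with $\alpha,\beta\to 1$) uses the continuity of $H_\alpha$ at $\alpha=1$ to pinch both sides to $H(A|B)_\rho$.

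The upper bound is the easier half: a dual state $\sigma_B$ that is near-optimal for $H_\alpha(A|B)$ supplies a feasible candidate for the min-entropy optimization, and an operator Markov-type inequality applied to $\rho_{AB}^{\otimes n}$ against $\identity_{A^n}\otimes\sigma_B^{\otimes n}$ produces a nearby state in $\ball{\rho^{\otimes n}}$ realizing the stated bound. The lower bound is the main obstacle: one must exhibit an explicit state in $\ball{\rho^{\otimes n}}$ whose min-entropy meets the target. The natural candidate is the projection of $\rho^{\otimes n}$ onto a quantum analogue of a typical subspace, defined spectrally relative to the operator $\rho_{AB}\bigl(\identity_A\otimes\sigma_B^{-1}\bigr)$ for the optimal $\sigma_B$. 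Showing simultaneously that this projection stays within purified distance $\epsilon$ of $\rho^{\otimes n}$ and that the projected state satisfies an operator inequality of the form $\tilde\rho^{(n)}_{AB}\le 2^{-nH_\beta+O(\sqrt n)}\identity_{A^n}\otimes\sigma_B^{\otimes n}$ requires Chernoff-type concentration for tensor-product operators, which is the most delicate piece of the argument. Once both one-sided bounds are in hand, the rest is bookkeeping about the order of the limits in $\epsilon$ and $n$ and the continuity of the auxiliary Renyi family.
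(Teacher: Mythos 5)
Your architecture---reduce the max-entropy half to the min-entropy half via the duality $H_{\max}^{\epsilon}(A^n|B^n)=-H_{\min}^{\epsilon}(A^n|C^n)$ on a purification (together with $H(A|B)_{\rho}=-H(A|C)_{\rho}$ for pure tripartite states), then sandwich $\tfrac{1}{n}H_{\min}^{\epsilon}$ between conditional R\'enyi entropies with $O(1/\sqrt{n})$ corrections---is the route of the original proof in \cite{tomamichel08}, not the one this thesis takes. The thesis (Section~\ref{sec:DPI}, following \cite{beaudry12a}) proves the converse by bounding the smooth min-entropy by the von Neumann entropy of a nearby state and invoking Fannes continuity, and proves achievability by a chain rule that splits the conditional min-entropy into a difference of \emph{unconditional} entropies, each handled by the elementary unconditional R\'enyi AEP; this avoids the conditional R\'enyi family and the operator-Chernoff/typical-subspace machinery entirely, at the price of giving no rate of convergence. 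Both routes are legitimate, and yours would additionally yield finite-$n$ bounds.

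There is, however, a concrete error: your two displayed inequalities have the R\'enyi parameter on the wrong side of $1$ in each case, and as written both are (in general) false. The conditional R\'enyi entropy is non-increasing in its order, so $H_{\alpha}(A|B)\le H(A|B)\le H_{\beta}(A|B)$ for $\alpha>1>\beta$, with equality only in degenerate cases. Since $\tfrac{1}{n}H_{\min}^{\epsilon}(A^n|B^n)\to H(A|B)$, your claimed upper bound $\tfrac{1}{n}H_{\min}^{\epsilon}\le H_{\alpha}+K(\alpha)\log(1/\epsilon)/\sqrt{n}$ with $\alpha>1$ would force $H(A|B)\le H_{\alpha}(A|B)$ in the limit, a contradiction; the lower bound with $\beta<1$ fails symmetrically. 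The correct pairing is the opposite one: the hard, achievability direction (where your typical-subspace projection and operator concentration live) is
\begin{equation}
\frac{1}{n}H_{\min}^{\epsilon}(A^n|B^n)_{\rho^{\otimes n}} \;\ge\; H_{\alpha}(A|B)_{\rho} - \frac{K(\alpha)}{\sqrt{n}}\log\frac{1}{\epsilon}, \qquad \alpha>1,
\end{equation}
while the converse uses $\beta<1$ or, more simply, is proved directly against $H(A|B)$ via continuity of the von Neumann entropy. With the parameters swapped, and with the additivity $H_{\alpha}(A^n|B^n)_{\rho^{\otimes n}}=nH_{\alpha}(A|B)_{\rho}$ of the $\sigma_B$-optimized quantity actually justified (it holds, but one must show the optimal $\sigma_{B^n}$ may be taken to be a tensor power rather than assume it), your plan goes through.
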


This means that in the limit of having an i.i.d.~quantum state the min- and max-entropy approach the von Neumann entropy. Therefore, the min- and max-entropy are generalizations of the von Neumann entropy to the one-shot scenario.

\section{Mutual Information} \label{sec:mi}

The mutual information quantifies the amount of correlations between two systems. Like entropy, it is a useful quantity in various contexts. We define mutual information using entropy. 

\begin{defn}[Mutual Information] \label{defn:MI} Let $\rho_{AB}\in S_{\leq}(\hilbert_{AB})$ then the mutual information is defined as
\begin{equation}
I(A:B) := H(A) - H(A|B) = H(B) - H(B|A).
\end{equation}
For classical systems the Shannon entropy can be used in the definition instead.
\end{defn}

The classical mutual information quantifies the amount of information that can be sent through a channel per bit (called the channel capacity) \cite{shannon48}. In general, the mutual information quantifies the correlations between the systems $A$ and $B$.

There is also a conditional mutual information, defined similarly to the conditional von Neumann entropy. 

\begin{defn}[Conditional Mutual Information] Let $\rho_{ABC}\in S_{\leq}(\hilbert_{ABC})$ then the conditional mutual information is defined as
\begin{equation}
I(A:B|C) := H(A|C) - H(A|BC).
\end{equation}
\end{defn}

The conditional and non-conditional mutual information also apply to the i.i.d.~setting and recent efforts have tried to generalize these quantities to the one-shot scenario \cite{ciganovic14,berta14}. It is not yet clear if these definitions are good generalizations since they have limited operational meaning. However, they satisfy many mathematical properties that are required of generalizations, such as the QAEP, DPI, and generalizations of properties of the von Neumann entropy.


\chapter{Security Proofs} \label{chap:security_proofs}

\section{Introduction}

Security in quantum cryptography uses several ideas from physics, information theory, and computer science. Here we deconstruct the notion of security for quantum key distribution (QKD) into its component parts and detail the steps required to make a proof. We discuss general methods that can be used to prove security. Other quantum-cryptography security proofs also use some of the same tools presented here.

For some QKD protocols, security can be thought of as stemming from the fact that non-orthogonal quantum states cannot be perfectly distinguished, such as $\ket{0}$ and $\ket{+}$ from the BB84 protocol. This means that if an eavesdropper, Eve, tries to distinguish them, she will introduce errors that Alice and Bob can detect. Either Alice and Bob can see that Eve has tampered with the quantum states and abort the protocol or Eve's interference is low enough that Alice and Bob can both correct any errors they have and remove any possible information Eve may have about their strings.

For QKD protocols that use entanglement, security can be thought of as coming from the monogamy of entanglement: If Alice and Bob share a maximally entangled state, then necessarily Eve cannot have any correlations with Alice or Bob. As long as they can verify that they indeed share highly entangled states (i.e.~states that are close to maximally entangled under some measure) then they can also correct errors and remove any information that Eve has about their measurements or abort if they see that they do not have enough entanglement.

Yet another way to see how QKD could be secure is via the no-cloning principle. Given an unknown quantum state $\rho$ there is no CPTP map that copies $\rho$: $\mathcal{E}(\rho)=\rho\otimes\rho$. To see how cloning quantum states is not possible, assume that there did exist such a map. Consider the input states $\ket{0}$ and $\ket{1}$. These get turned into $\ket{00}$ and $\ket{11}$ respectively by the cloning map $\mathcal{E}$. By linearity, this implies that $(\ket{0}+\ket{1})/\sqrt{2}$ should be mapped to $(\ket{00}+\ket{11})/\sqrt{2}$. However, if we apply the map directly to $(\ket{0}+\ket{1})/\sqrt{2}$ we get $(\ket{0}+\ket{1})(\ket{0}+\ket{1})/2\neq(\ket{00}+\ket{11})/\sqrt{2}$, which is a contradiction with our assumption that such a map existed. So as long as there is some uncertainty in what the state is (from Eve's perspective) then she cannot make (perfect) copies of the states sent through the quantum channel.

As discussed in Chapter~\ref{chap:intro}, security of a QKD protocol is proven for a \emph{model}. A model is a description of the protocol that includes a series of instructions for Alice and Bob to perform the protocol. Models include a characterization of and assumptions about the devices used in the protocol, such as sources, measurements, and Eve's attack. While in Chapter~\ref{chap:intro} the security of various protocols was mentioned, these were only meant as a statement of whether these protocols are secure in principle, i.e.~for at least one model of the protocol. It is an entirely different challenge to prove that a practical model of a protocol is secure.

In this chapter we will discuss general tools without going into the details of how to prove security for practical models. The connection to security for practical models will be discussed in Chapter~\ref{chap:assumptions}. Also, we will focus on finite-dimensional Hilbert spaces and therefore discrete QKD protocols. We will also discuss continuous-variable and device-independent security, but to a lesser extent. This focus is mainly due to the fact that there are some general tools for discrete protocols that work for a variety of protocols, while the current proofs for non-discrete protocols are usually more specialized.

First, we define security in a precise way (Section~\ref{sec:security_definition}). Then we discuss the classical post-processing steps used in QKD and how the results from these other fields can be used to help reduce the security definition to a different kind of problem (Section~\ref{sec:post_processing}). Lastly, we show several methods that are used to prove security by using several reduction techniques (Section~\ref{sec:methods}).

\section{Security Definition} \label{sec:security_definition}

Before we describe how to prove security, it is important to define what we mean by security so we know what we actually want to prove in the first place! Intuitively we want to make sure that Alice and Bob share a key that no eavesdropper has any information about. This definition is too strong as we can only achieve approximate security, but approximate security is adequate for practical purposes. More precisely, we want that an eavesdropper knows nothing about the key Alice and Bob have with very high probability (secrecy). Also, we need to be sure that the protocol generates the same strings for Alice and Bob in the presence of an adversary (correctness). Lastly, we need to ensure that the protocol succeeds with high enough probability when there is no eavesdropper but some noise is present (robustness).\footnote{In computer science the definitions of correctness and robustness are typically different than what is presented here.}

Note that we need all three of these conditions (secrecy, correctness, and robustness) to hold simultaneously, otherwise some protocols would be considered secure that are either not useful or do not fit with our intuitive notion of security. Consider the following three examples.

If a protocol is secret and correct then we consider it secure but it is not robust. In this case we would allow a protocol that always aborts to be considered secure. While this situation may fit in with the notion of security, these kinds of protocols are not useful, so we will also require a certain level of robustness.

If a protocol is correct and robust but not secret, then Eve may have some information about Alice and Bob's key. This protocol is clearly not secure!

Finally, if a protocol is secret and robust but not correct, then Alice and Bob may have secret keys but they are not the same, which defeats the purpose of what QKD is trying to achieve.

In addition to secrecy, correctness, and robustness, we also want to make sure that we can compose a QKD protocol with other protocols. For example, we could use a key from QKD for the one-time pad encryption to send a secure classical message. Then we want to make sure that even if Eve keeps whatever knowledge she has from the QKD protocol in a quantum memory she cannot find out any information about the key no matter what other protocols come afterwards. This notion is called \emph{composable security}. We will discuss how our definition of security ensures that QKD is composable (Section~\ref{sec:composability}).

We now discuss three models of what Eve can and cannot do, since security can be proven under each of these models.

\subsection{Eve's Attacks} \label{sec:eve_attack}

There are three different classes of attacks for Eve that are considered in the literature. In increasing order of power given to Eve, they are: \emph{individual} attacks, \emph{collective} attacks, and \emph{coherent} attacks. The first two attack strategies are considered in order to simplify the analysis, while the third strategy is the most general attack allowed by quantum mechanics. When facing the daunting task of proving security for a QKD protocol a first attempt may be made to prove security against individual attacks before moving on to proving full security under coherent attacks. Also, when a new QKD protocol is proposed it can be helpful to consider some simple individual attacks to see if the protocol is secure at all or if Eve can learn information without introducing a disturbance in the quantum states sent between Alice and Bob.

Individual attacks are the least powerful for Eve: Eve attacks each signal as it is sent from Alice to Bob in the same way (i.e.~individually). Her attack consists of a quantum operation on each signal with some CPTP map jointly with some systems of her own. After her CPTP map, Eve is required to measure her systems, but there is a discrepancy in the literature about which point Eve has to perform this measurement \cite{scarani09}. Some say that this measurement happens after each signal is sent, which corresponds to the situation where Eve does not have a quantum memory, while others say that Eve measures at the time after all the classical post-processing is finished except for privacy amplification.

Collective attacks are when Eve attacks the signals independently as with individual attacks but keeps her own systems in a quantum state and does not have to measure them.

Coherent attacks are the most general: Eve is allowed to do any attack allowed by quantum mechanics to the quantum systems sent between Alice and Bob.

Many security proof methods only prove security against collective attacks. However, there are mathematical tools that can be used to generalize these proofs to prove that a protocol is secure against coherent attacks such as the de Finetti theorem for quantum states or the post-selection technique (see Section \ref{sec:reductions}).

Before giving a definition of security, we have a historical note on what security used to mean in the QKD community.

\subsection{Historical Definition} \label{sec:historical}

Intuitively, security in the context of QKD is to ensure that Eve only has a negligible amount of information about Alice and Bob's key after the protocol. One measure of information used to quantify Eve's information was the \emph{accessible information}. If Alice and Bob share a key, $K$, after the QKD protocol and $Y$ is a random variable that describes the outcome of a measurement Eve applies to her system after the protocol, then the accessible information is defined as the mutual information $I(K:Y)$. Then security was defined as
\begin{equation}\label{eq:accinfo}
I(K:Y) \approx 0,
\end{equation}
for all possible strategies Eve can use to attack the protocol and measurements she can perform on her system. Since mutual information is a measure of correlations between the random variables ($K$ and $Y$ in this case) and the operational interpretation of the mutual information as a quantification of the correlations between two systems (Section~\ref{sec:mi}), it was thought that this definition captures the intuitive meaning of security. 

However, the accessible information was shown to not be secure. Using the accessible information assumes that Eve does a measurement after the QKD protocol that is independent of any other information she could learn through a future protocol that uses the key. Eve could do a measurement that does depend on new information she learns during such future protocols. Indeed, an example was presented in \cite{konig07} that shows that Eq.~\ref{eq:accinfo} can be satisfied and Eve can still gain information about the key. If the key is split into two parts $K=K_1K_2$ and Eve delays the measurement of her system until she finds out the first part of the key $K_1$, then it is possible that $I(K_2:Y')\gg 0$, where $Y'$ is obtained from Eve measuring her system using her knowledge of $K_1$.

This kind of security loophole is a lack of composability (see Section~\ref{sec:composability}), since a part of the key is not secure when composed with the public revealing of another part of the key. We therefore want a security definition that can be composed with arbitrary other protocols and whatever part of the key Alice and Bob keep secret should still remain secure.

Since the discovery of the lack of security of the accessible information \cite{konig07} a new definition has been proposed, which we use here \cite{konig07,rennerphd,portmann14}. The definition that we use has both an operational interpretation that agrees with the intuition we have for security (Eve has negligible information about Alice and Bob's shared key) and is also composable. We first introduce the greater framework in which cryptographic security can be defined in general and then state our definition of security for QKD.

\subsection{Abstract Cryptography} \label{sec:AC}

Throughout this chapter we will consider QKD in the cryptographic framework known as \emph{Abstract Cryptography} (AC) \cite{maurer11}. This framework takes a top-down approach to cryptography, where protocols are abstract black boxes that perform pre-defined actions by taking inputs from, and giving outputs to, various parties, some who are honest and some who are adversarial. Other approaches build up a framework in a bottom-up way by starting to define a computation or communication model \cite{pfitzmann00,canetti01}, but we want to avoid the details here of the individual components of protocols by using the AC framework instead.

The AC framework is helpful to define security in an abstract and precise way. While we will not define the AC framework explicitly here, we will introduce some notions that help to define and understand security. For more details on this framework, see \cite{portmann14,maurer11}.

Two kinds of protocols in AC are the ideal protocol and the real protocol.\footnote{In the AC framework these are usually called real and ideal systems. However, to avoid confusion with quantum systems, we call these entities protocols.} For QKD, the ideal protocol runs a simulation of the real protocol and if the simulated protocol succeeds then Alice and Bob are given newly constructed identical secret keys and Eve gets no information about these new keys (Fig.~\ref{fig:ideal}). If the simulated protocol fails then Alice, Bob, and Eve are notified that the protocol failed. Note that Eve learns whether the protocol succeeded or failed but she never learns anything else.

The real protocol is the model of what actually happens, where Eve is allowed to attack quantum communication between Alice and Bob and can get information about the key Alice and Bob are trying to construct (Fig.~\ref{fig:real}). Note that this is a very general model that encompasses any possible quantum channel Alice and Bob use and any attack strategy by Eve that is allowed by quantum mechanics.

\begin{figure} \centering
\begin{subfigure}[b]{0.6\textwidth}
\includegraphics[width=\textwidth]{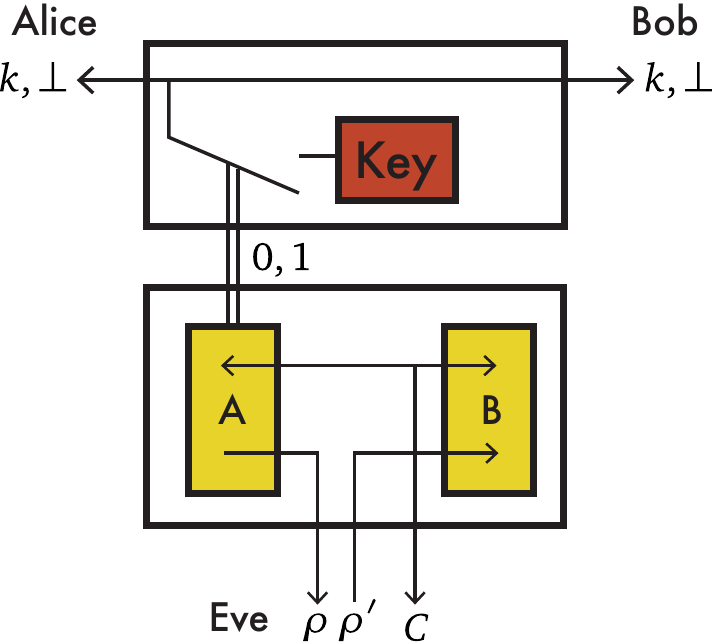}
\caption{The ideal QKD protocol. A simulation of the real protocol is performed. If the simulation succeeds then Alice and Bob get access to a shared secret key. If the simulation fails then Alice and Bob get symbol $\bot$ that indicates a failure.}
\label{fig:ideal}
\end{subfigure} \\\vspace{1cm}
\begin{subfigure}[b]{0.8\textwidth}
\includegraphics[width=\textwidth]{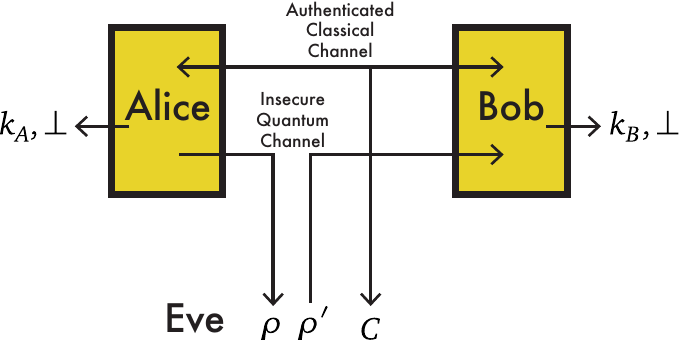}
\caption{The real QKD protocol. Alice and Bob have protocols they perform by interacting with an authenticated classical channel and an insecure quantum channel that Eve can attack. Alice tries to send state $\rho$ to Bob, which Eve may interfere with and send another state $\rho'$ to Bob instead. Eve also gets a copy of the classical communication $C$ sent through the authenticated classical channel. At the end of the protocol Alice and Bob have $k_A$ and $k_B$ respectively or the protocol aborts and they get the symbol $\bot$.}
\label{fig:real}
\end{subfigure}
\caption[The Real and Ideal QKD Protocols]{The real and ideal QKD protocols \cite{portmann14}.}
\end{figure}

Security is defined as the distance between the states shared by Alice, Bob, and Eve from the ideal protocol and the real protocol. To define this distance, we use the notion of a \emph{distinguisher}. A distinguisher in QKD is an agent who has complete control of all inputs and outputs of Alice and Bob in a QKD protocol. The distinguisher may use any strategy (i.e.~choices of inputs and interactions with outputs) to try to distinguish the real protocol from the ideal protocol.

The distinguisher has a \emph{distinguishing advantage} $\varepsilon=2p-1$ if the distinguisher can distinguish between the real and ideal protocol with probability $p$. Note that the distance measure that describes the distinguishing advantage is the trace distance (Defn.~\ref{defn:tdist}) due to its operational interpretation. If two states $\rho$ and $\sigma$ are given to a distinguisher that has to distinguish which state they have, the probability of guessing correctly is given by $1/2+1/2D(\rho,\sigma)$. The amount by which the distinguisher can do better than randomly guessing is the advantage, given by $1/2D(\rho,\sigma)$.

The distinguishing advantage is used as a definition for security since the distinguishing advantage implies that AC protocols can be composed with other protocols and they still remain secure.

\subsection{Composability}\label{sec:composability}

It is important that protocols can be composed with other protocols to form new protocols and the security should not be compromised by this composition. If a protocol can be composed in any way with any other protocol, and the statement of its security still holds, then it is called \emph{universally composable}.

For example, part of the key from QKD could be used to form an authenticated classical channel. It is crucial that the rest of the key that is not used is still secure, even if Eve has kept her state from QKD in a quantum memory and then measures her state using new information she gains from the authentication protocol. It is important that the part of the key that is used for authentication can be used as if it were a secure key, even though it is only approximately secure.

The distinguishing advantage implies that the protocol is universally composable. Formal proofs of the composability of protocols whose security is characterized by the trace distance can be found in \cite{benor05,muller-quade09,maurer11,maurer12,portmann14}.

Composition can be broken up into two scenarios: sequential and parallel composition. Sequential composition is where a protocol uses outputs of a first protocol as inputs to a second protocol (such as the example described above). Parallel composition is where two protocols are run simultaneously and are combined to be considered as one protocol.

Sequential composition can be proven by using the triangle inequality for the distinguishing advantage. If one protocol is secure except with probability $\varepsilon$ then we call it $\varepsilon$-secure. If one protocol is $\varepsilon_1$-secure and another is $\varepsilon_2$-secure then these two protocols together are $(\varepsilon_1+\varepsilon_2)$-secure. Parallel composition comes from a similar argument. For further details, see \cite{maurer11,maurer12,portmann14}.

We can now use the trace distance to define security that is composable. We decompose security into two separate notions: secrecy and correctness \cite{rennerphd,tomamichel12a,hanggi10b}. These simplify the process of proving security by reducing it to concrete statements about Alice's and Bob's strings and Eve's quantum state as opposed to having to deal with the abstract ideal and real protocols.

\subsection{Secrecy}

Secrecy for QKD is the notion that Eve does not have any information about Alice's key. Secrecy is defined as the distance between the shared state of Alice and Eve in the real protocol and ideal protocol (see Section~\ref{sec:AC}).

It is helpful to consider the distance between the states in the ideal protocol and real protocol as being decomposed into two scenarios: one where the protocol aborts and one where the protocol does not abort. Note that when the protocol aborts, Alice's key is trivial, which means that the distance between the real and ideal state in the two protocols is zero, since Eve's state is the same in both protocols. In the ideal protocol, Eve has no information about Alice's state and Alice's state is uniformly random whenever the protocol does not abort: $\tilde{\rho}^{\text{pass}}_{AE}:=\identity_{A}/d_A\otimes\rho_{E}$. This means that the distance between the ideal protocol's state $\tilde{\rho}_{AE}$ and the real protocol's state $\rho_{AE}$ is
\begin{equation}
\left\|\rho_{AE}-\tilde{\rho}_{AE}\right\|_1 \leq p_{\text{abort}} \cdot 0 + (1-p_{\text{abort}})\left\|\rho_{AE}^{\text{pass}}-\tilde{\rho}_{AE}^{\text{pass}}\right\|_1,
\end{equation}
where the latter states are conditioned on not aborting. This leads to the formal definition of secrecy \cite{benor05,konig07,rennerphd}. 

\begin{defn}[$\varepsilon$-secrecy] \label{defn:secrecy}
A protocol is $\varepsilon$-secret if for any state $\rho_{AE}^{\text{pass}}\in S(\hilbert_{AE})$, the state of the shared system between Alice and Eve after a QKD protocol (conditioned on not aborting) satisfies
\begin{equation} \label{eq:secrecy}
(1-p_{\text{abort}}) \; D\left(\rho_{AE}^{\text{pass}},\frac{\identity_{A}}{d_A}\otimes\rho_{E}^{\text{pass}}\right) \leq \varepsilon.
\end{equation}
where $p_{\text{abort}}$ is the probability of aborting the protocol and $d_A$ is the dimension of $\hilbert_A$.
\end{defn}

This definition means that the state after the real protocol is close to the ideal protocol (Fig.~\ref{fig:ideal}), i.e.~the real protocol's state is close to the situation where Alice's string is uniformly random and independent of Eve. Also, since we performed the same protocol inside the ideal protocol, Eve has the same state in the ideal protocol: $\rho_E^{\text{pass}} = \Tr_A(\rho_{AE}^{\text{pass}})$.

Importantly, the secrecy definition does not specify anything about the state $\rho_{AE}^{\text{pass}}$. Therefore, the real or ideal protocols are secret regardless of how Eve tries to attack them. We want to be sure that our security definition ensures that Eve does not have any useful information about the key. It turns out that Defn.~\ref{defn:secrecy} implies that Eve does not get any information (with high probability), which is another reason why we use the trace distance for our security definition \cite{konig07}.

Another way of interpreting the security definition other than the difference between the ideal and real protocol's states is given by the following lemma. If the distinguisher has a distinguishing advantage of $\varepsilon$ then the keys are exactly the same as the ideal keys, except with probability $\varepsilon$.

\begin{lemma}[Lemma 1 of \cite{renner05}, Prop. 2.1.1 in \cite{rennerphd}, Corr. A.7 in \cite{portmann14}]\label{lemma:sec_interpret}
Given two probability distributions $P_X$ and $P'_{X'}$ over the same alphabet, there exists a joint distribution $P_{XX'}$ such that $P_X$ and $P'_{X'}$ are the marginals of $P_{XX'}$ such that
\begin{equation}\label{eq:interp}
\Pr_{(x,x')}[x\neq x'] \leq D(P_X,P'_{X'}).
\end{equation}
\end{lemma}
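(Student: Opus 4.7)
The plan is to construct the joint distribution $P_{XX'}$ explicitly, using the classical \emph{maximal coupling}: I put as much mass as possible on the diagonal $\{(x,x)\}$, and let the remainder be an independent product of the ``excess'' mass. The key quantity is the overlap
\begin{equation}
p \;:=\; \sum_{x} \min\bigl(P_X(x),\, P'_{X'}(x)\bigr).
\end{equation}
The first step I would do is verify the elementary identity $p = 1 - D(P_X, P'_{X'})$. This follows by writing $|a-b|=(a+b)-2\min(a,b)$ in the definition $D(P_X,P'_{X'}) = \tfrac{1}{2}\sum_x |P_X(x)-P'_{X'}(x)|$ and using $\sum_x P_X(x)=\sum_x P'_{X'}(x)=1$.

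Next, I would define $P_{XX'}$ as a two-branch mixture. Assume first $0<p<1$ (the cases $p=0$ and $p=1$ are immediate: if $p=1$ the distributions coincide and the identity coupling works; if $p=0$ then $D=1$ and any coupling suffices). With probability $p$, sample $X=X'$ from the normalized overlap distribution $Q(x):=\min(P_X(x),P'_{X'}(x))/p$. With probability $1-p$, sample $X$ and $X'$ \emph{independently} from the two normalized excess distributions
\begin{equation}
R(x) \;:=\; \frac{P_X(x)-\min(P_X(x),P'_{X'}(x))}{1-p}, \qquad
R'(x') \;:=\; \frac{P'_{X'}(x')-\min(P_X(x'),P'_{X'}(x'))}{1-p}.
\end{equation}
Both $R$ and $R'$ are genuine probability distributions since $\sum_x(P_X(x)-\min(P_X(x),P'_{X'}(x))) = 1-p$ and similarly for $R'$.

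Then I would verify the marginal conditions by direct computation. The $X$-marginal of $P_{XX'}$ evaluated at $x$ equals $p\cdot Q(x) + (1-p)\cdot R(x) = \min(P_X(x),P'_{X'}(x)) + P_X(x) - \min(P_X(x),P'_{X'}(x)) = P_X(x)$, and symmetrically the $X'$-marginal is $P'_{X'}(x)$. Finally, since the mixture produces $X=X'$ whenever it lands in the first branch, we get $\Pr[X=X'] \geq p$, and therefore
\begin{equation}
\Pr_{(x,x')}[x\neq x'] \;\leq\; 1-p \;=\; D(P_X,P'_{X'}),
\end{equation}
as required. There is no serious obstacle here—this is a textbook coupling argument. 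The only delicate point is to handle the degenerate cases $p\in\{0,1\}$ separately so the denominators in $Q$, $R$, $R'$ are well-defined, and to ensure the excess distributions really do restore the correct marginals (which they do by the $a-\min(a,b)$ cancellation).
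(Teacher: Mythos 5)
Your construction is correct: the identity $p=1-D(P_X,P'_{X'})$, the two-branch maximal coupling, and the marginal checks all go through, and the degenerate cases are handled properly. The paper itself omits the proof and defers to the cited references, whose argument is precisely this standard maximal-coupling construction, so your proposal matches the intended proof.
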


A proof of this lemma can be found in \cite{portmann14}.

To see how this lemma implies the second interpretation of the security definition above, consider the following scenario.\footnote{This interpretation is from Christopher Portmann.} Let us assume Eve uses a strategy to measure her system to try to find out $A$ and gets a classical outcome $W$. Let $P_{AW}$ be the distribution of Alice's key and Eve's outcome in the ideal scenario, and $P_{\tilde{A}\tilde{W}}$ be the distribution in the real protocol. Then Lemma~\ref{lemma:sec_interpret} says that we can construct a joint distribution $P_{AW\tilde{A}\tilde{W}}$ with the property of Eq.~\ref{eq:interp}. Now we define the event
\begin{equation}
\Omega := [A=\tilde{A}\text{ and } W=\tilde{W}],
\end{equation}
where the ideal and real protocols have the same output. Lemma~\ref{lemma:sec_interpret} implies that the probability that $\Omega$ does not occur is
\begin{equation}
\Pr[\neg\Omega] \leq D(P_{AW},P_{\tilde{A}\tilde{W}}). 
\end{equation}
Since the ideal case is secure, if the event $\Omega$ happens on a run of the protocol, then the real protocol is also secure. Lemma~\ref{lemma:sec_interpret} together with the definition of security and the fact that the trace distance only decreases under CPTP maps (Lemma.~\ref{lemma:CPTP_dist}, where here the CPTP map is Eve's measurement to obtain $W$ from $E$) implies that
\begin{equation}
\Pr [\Omega] \geq 1-D(P_{AW},P_{\tilde{A}\tilde{W}}) \geq 1 - D\left(\rho_{AE}^{\text{pass}},\frac{\identity_{A}}{d_A}\otimes\rho_{E}\right) \geq 1-\varepsilon.
\end{equation}
This means that the real protocol is completely secure except with probability $\varepsilon$. This gives an operational interpretation to $\varepsilon$: it is the probability of failure for the protocol conditioned on not aborting.

\subsection{Correctness}

Next we have the definition of approximate correctness of a QKD protocol. This definition is straightforwardly motivated since we want to be sure that Alice's and Bob's keys are almost always the same. We just require that the probability of their keys being different is low.

To make this definition we first define Alice's and Bob's keys at the end of the protocol as $K_A$ and $K_B$ respectively. If the protocol succeeds then these keys will represent the strings that Alice and Bob have. If the protocol aborts, then we will write $K_A=\bot$ and $K_B=\bot$ to denote that Alice and Bob know that the protocol aborted.

\begin{defn}[$\varepsilon$-correctness] Let $K_A$ and $K_B$ be the random variables for the strings that Alice and Bob have at the end of the QKD protocol respectively. Then the protocol is $\varepsilon$-correct if
\begin{equation}
\Pr\left[K_A\neq K_B\right] \leq \varepsilon.
\end{equation}
\end{defn}

We can now combine secrecy and correctness to define security. Since we want the protocol to be indistinguishable from a secret and correct protocol, we can combine these two properties in the following way.

\subsection{Security: Combining Secrecy and Correctness}

Security is defined as a protocol that is both correct and secret (see the above sections). The precise definition of security can be somewhat confusing in that it is defined differently throughout the literature. Sometimes security is defined just as secrecy; or sometimes as secrecy, correctness, and robustness. Here, we clearly state security as an operational combination of secrecy and correctness. Robustness will be added as a separate criteria, and while robustness is considered an essential property of a protocol, it is not included in the security definition itself. 

\begin{defn}[$\varepsilon$-security] \label{defn:security}
Let $\rho_{ABE}^{\text{pass}}\in S(\hilbert_{ABE})$ be the state of the shared system between Alice, Bob, and Eve after a QKD protocol, conditioned on not aborting. Then the protocol is $\varepsilon$-secure if under any attack strategy by Eve:
\begin{equation} \label{eq:security}
(1-p_{\text{abort}}) D\left(\rho_{ABE}^{\text{pass}},\rho_{AB}^{\text{sec}}\otimes\rho_{E}^{\text{pass}}\right) \leq \varepsilon,
\end{equation}
where $p_{\text{abort}}$ is the probability of aborting (which is the same for the real and ideal protocols) and $\rho_{AB}^{sec}:=1/2^{|K|}\sum_{k} \kb{k,k}{k,k}$.
\end{defn}

Note that we do not need to define security conditioned on not aborting but instead we can define security as the trace distance between the real protocol's state $\rho_{ABE}$ and $\rho_{AB}^{\text{sec}_2}\otimes\rho_E$, where $\rho_{AB}^{\text{sec}_2}:=(1-p_{\text{abort}})\rho_{AB}^{\text{sec}}+p_{\text{abort}}\kb{\bot,\bot}{\bot,\bot}$. However, as with the definition of secrecy, this definition is equivalent to Eq.~\ref{eq:security} since Alice's and Bob's strings are trivially the same in the real and ideal protocols when they abort (since they get the symbol $\bot$ when the protocol aborts) and Eve's state is also the same in both protocols conditioned on aborting (since she only knows that the protocol has aborted and nothing else).

In addition, the definition of security does not make any assumptions about the state shared by Alice, Bob, and Eve. This means that Eve can do any attack allowed in the ideal and real protocols.

Now we can show the relationship between secrecy, correctness and our definition of security. 

\begin{thm}[$\varepsilon$-security] If a protocol is $\varepsilon_{\text{sec}}$-secret and $\varepsilon_{\text{cor}}$-correct then the protocol is $\varepsilon$-secure, where $\varepsilon=\varepsilon_{\text{sec}}+\varepsilon_{\text{cor}}$.
\end{thm}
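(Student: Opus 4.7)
The plan is to use the triangle inequality for the trace distance with a carefully chosen intermediate state that separates the two failure modes. Let $\mathcal{C}$ be the CPTP map on the classical $AB$ registers that reads Alice's value and overwrites Bob's register with a copy of it (this is well-defined because both registers are classical). Set $\tilde{\rho}_{ABE}^{\text{pass}}:=(\mathcal{C}\otimes\id_E)(\rho_{ABE}^{\text{pass}})$. The triangle inequality then gives
\begin{equation*}
D(\rho_{ABE}^{\text{pass}},\rho_{AB}^{\text{sec}}\otimes\rho_E^{\text{pass}}) \leq D(\rho_{ABE}^{\text{pass}},\tilde{\rho}_{ABE}^{\text{pass}}) + D(\tilde{\rho}_{ABE}^{\text{pass}},\rho_{AB}^{\text{sec}}\otimes\rho_E^{\text{pass}}),
\end{equation*}
and the strategy is to show that, after multiplying by $(1-p_{\text{abort}})$, the first term is bounded by $\varepsilon_{\text{cor}}$ and the second by $\varepsilon_{\text{sec}}$.

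For the first term, note that $\rho_{ABE}^{\text{pass}}$ and $\tilde{\rho}_{ABE}^{\text{pass}}$ coincide on the branch $\{K_A=K_B\}$, so by convexity of the trace distance one obtains $D(\rho_{ABE}^{\text{pass}},\tilde{\rho}_{ABE}^{\text{pass}}) \leq \Pr[K_A\neq K_B\mid\text{pass}]$. Because the protocol sets $K_A=K_B=\bot$ whenever it aborts, the disagreement event is confined to the non-aborting branch, and hence $(1-p_{\text{abort}})\Pr[K_A\neq K_B\mid\text{pass}] = \Pr[K_A\neq K_B] \leq \varepsilon_{\text{cor}}$ by the correctness hypothesis.

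For the second term, observe that both $\tilde{\rho}_{ABE}^{\text{pass}}$ and $\rho_{AB}^{\text{sec}}\otimes\rho_E^{\text{pass}}$ are perfectly classically correlated on $A$ and $B$: the former by construction of $\mathcal{C}$, and the latter because $\rho_{AB}^{\text{sec}}=2^{-|K|}\sum_k \kb{k,k}{k,k}$ is supported on the diagonal. On states of this form the partial trace $\Tr_B$ has a left inverse (the copying isometry $\ket{k}_A\mapsto\ket{k}_A\ket{k}_B$), so the trace distance is preserved:
\begin{equation*}
D(\tilde{\rho}_{ABE}^{\text{pass}},\rho_{AB}^{\text{sec}}\otimes\rho_E^{\text{pass}}) = D\!\left(\rho_{AE}^{\text{pass}},\tfrac{\identity_A}{d_A}\otimes\rho_E^{\text{pass}}\right).
\end{equation*}
Multiplying by $(1-p_{\text{abort}})$ and applying the $\varepsilon_{\text{sec}}$-secrecy hypothesis bounds this by $\varepsilon_{\text{sec}}$. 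Summing the two contributions yields $\varepsilon_{\text{sec}}+\varepsilon_{\text{cor}}=\varepsilon$.

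The only subtlety is bookkeeping: the secrecy hypothesis is phrased conditionally on passing (with an explicit $(1-p_{\text{abort}})$ factor), while correctness is unconditional. These line up cleanly precisely because aborting trivially satisfies $K_A=K_B$, which lets the unconditional probability in correctness absorb the same prefactor. The classicality of the $A$ and $B$ registers is also essential, since it legitimises both the copying map $\mathcal{C}$ and the reversibility step that reduces the second term to exactly the quantity appearing in Definition~\ref{defn:secrecy}.
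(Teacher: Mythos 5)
Your proposal is correct and follows essentially the same route as the paper's proof: your intermediate state $\tilde{\rho}_{ABE}^{\text{pass}}$ is exactly the paper's $\gamma_{ABE}$, the triangle inequality splits the distance into the same two terms, the first is bounded via (strong) convexity of the trace distance by $\Pr[K_A\neq K_B\mid\text{pass}]$, and the second is reduced to the secrecy definition by the reversibility of $\Tr_B$ on states where $B$ is a classical copy of $A$. The bookkeeping remark about aborting forcing $K_A=K_B=\bot$ is the same observation the paper relies on implicitly when combining the two bounds.
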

We include a proof of this theorem here, since this theorem is essential to define security from the definitions of secrecy and correctness for QKD. To see why the sum of the parameters for secrecy and correctness can be used for secrecy, we use the following proof from \cite{portmann14}. The proof follows from the triangle inequality for the trace distance.
\begin{proof}
First, we define $p_{k_A,k_B}$ to be the probability that Alice and Bob get keys $k_A$ and $k_B$ conditioned on the protocol not aborting. Also, we define the quantum state Alice, Bob, and Eve share after the real protocol, $\rho_{ABE}$, which can be written as a CQ state (Defn.~\ref{defn:CQ}):
\begin{equation}\label{eq:ABE_state}
\rho_{ABE} := p_{\text{abort}}\kb{\bot,\bot}{\bot,\bot}\otimes\rho_E^{\bot} + \sum_{k_A,k_B} p_{k_A,k_B}\kb{k_A,k_B}{k_A,k_B}\otimes \rho_{E}^{k_A,k_B}.
\end{equation}
If we define the state
\begin{equation}
\gamma_{ABE} := \frac{1}{1-p_{\text{abort}}}\sum_{k_A,k_B} p_{k_A,k_B}\kb{k_A,k_A}{k_A,k_A}\otimes \rho_{E}^{k_A,k_B},
\end{equation}
where Alice and Bob share the same key and Eve is independent of their keys, then by using the triangle inequality we get
\begin{equation} \label{eq:tri_security}
D\left(\rho_{ABE}^{\text{pass}},\rho_{AB}^{\text{sec}}\otimes\rho_{E}^{\text{pass}}\right) \leq D\left(\rho_{ABE}^{\text{pass}},\gamma_{ABE}\right) +D\left(\gamma_{ABE},\rho_{AB}^{\text{sec}}\otimes\rho_{E}^{\text{pass}}\right).
\end{equation}
Note that we can write $\rho_{ABE}^{\text{pass}}$ using Eq.~\ref{eq:ABE_state} as
\begin{equation}
\rho_{ABE}^{\text{pass}} =\frac{1}{1-p_{\text{abort}}}\sum_{k_A,k_B} p_{k_A,k_B}\kb{k_A,k_B}{k_A,k_B}\otimes \rho_{E}^{k_A,k_B},
\end{equation}
and therefore, by using the strong convexity of the trace distance (Theorem~\ref{thm:trace_convex})
\begin{align}
&D\left(\rho_{ABE}^{\text{pass}},\gamma_{ABE}\right) \\ &\leq \sum_{k_A,k_B}\frac{p_{k_A,k_B}}{1-p_{\text{abort}}}D\left(\kb{k_A,k_B}{k_A,k_B}\otimes \rho_{E}^{k_A,k_B},\kb{k_A,k_A}{k_A,k_A}\otimes \rho_{E}^{k_A,k_B}\right) \\
&=\sum_{k_A\neq k_B} \frac{p_{k_A,k_B}}{1-p_{\text{abort}}} = \frac{1}{1-p_{\text{abort}}}\Pr[K_A\neq K_B]. \label{eq:security_1}
\end{align}
For the other term in Eq.~\ref{eq:tri_security} note that $\gamma_{ABE}$ and $\rho_{AB}^{\text{sec}}\otimes\rho_{E}^{\text{pass}}$ both have the $B$ system as a copy of the $A$ system. Also, we know that $\Tr_B\gamma_{ABE}=\Tr_B\rho_{ABE}^{\text{pass}}$. Using these facts and that the trace distance does not increase under CPTP maps (Lemma~\ref{lemma:CPTP_dist}, in this case the map is the trace over the $B$ system) we get
\begin{align}
D\left(\gamma_{ABE},\rho_{AB}^{\text{sec}}\otimes\rho_{E}^{\text{pass}}\right) &= D\left(\gamma_{AE},\frac{\identity_{A}}{d_{A}}\otimes\rho_{E}^{\text{pass}}\right) \\ &= D\left(\rho_{AE}^{\text{pass}},\frac{\identity_{A}}{d_{A}}\otimes\rho_{E}^{\text{pass}}\right).\label{eq:security_2}
\end{align}
Combining Eq.~\ref{eq:security_1} and Eq.~\ref{eq:security_2} gives us
\begin{align}
&(1-p_{\text{abort}})D\left(\rho_{ABE}^{\text{pass}},\rho_{AB}^{\text{sec}}\otimes\rho_{E}^{\text{pass}}\right) \label{eq:security_eq} \\
&\leq \Pr[K_A\neq K_B] + (1-p_{\text{abort}})D\left(\rho_{AE}^{\text{pass}},\frac{\identity_A}{d_A}\otimes\rho_E^{\text{pass}}\right) \\
&\leq \varepsilon_{\text{cor}}+\varepsilon_{\text{sec}},
\end{align}
which implies that security (i.e.~both secrecy and correctness at the same time, Eq.~\ref{eq:security_eq}) is bounded by $\varepsilon_{\text{sec}}+\varepsilon_{\text{cor}}$.
\end{proof}

Note that through Lemma~\ref{lemma:sec_interpret} we can interpret security in a similar way to secrecy. This means that the security definition Eq.~\ref{eq:security} can be interpreted as Alice's and Bob's keys are the same and independent of Eve, except with probability $\varepsilon$.

Also, sometimes the definition of security is defined as
\begin{equation} \label{eq:sec_wrong}
(1-p_{\text{abort}}) \cdot \min_{\sigma_E} D\left(\rho_{ABE}^{\text{pass}},\frac{\identity_{AB}}{d_{AB}}\otimes\sigma_{E}\right) \leq \varepsilon,
\end{equation}
such as in the published version of \cite{tomamichel12a} or \cite{tomamichelthesis,furrer14b}. However, this definition is only known to be composable in parallel with an extra factor of $2$ (see \cite{portmann14}). Therefore, it is important to use the definition stated above, Defn.~\ref{defn:secrecy}.

\subsection{Robustness} \label{sec:robustness}

As mentioned in the introduction to this section, security is not sufficient for a QKD protocol, since a trivial protocol that outputs empty strings for Alice and Bob is secure. Therefore, we also need robustness to make sure that any protocol we consider is not only secure, but outputs keys of non-trivial size. 

\begin{defn}[$\varepsilon$-robustness \cite{portmann14}] A QKD protocol is $\varepsilon$-robust if the probability of aborting the real protocol when Eve does not attack the protocol is $p_{\text{abort}}^{\text{no Eve}}=\varepsilon$ .
\end{defn}
Note that to determine $p_{\text{abort}}^{\text{no Eve}}$ when Eve does not attack the protocol, a model of the quantum channel between Alice and Bob is required. If they know this model then they can calculate the probability that they will abort by estimating an error rate that is beyond the threshold allowed by the protocol.

Now that we have defined robustness, we discuss the classical post-processing that is performed after the quantum \stage of the QKD protocol in order to use some classical results to simplify the problem of proving security.

\section{Classical Post-Processing} \label{sec:post_processing}

Technically, proving security just entails showing that Eq.~\ref{eq:security} holds. While there may be many ways to do so, we use some standard  techniques that allow the reduction of the problem to one that is more easily proved. For example, security can be reduced to the problem of proving a lower bound on the entropy of Alice conditioned on Eve. These techniques come from the analysis of the classical post-processing performed after the quantum \stage of QKD. These are broken down into (in reverse chronological order): privacy amplification, information reconciliation, and parameter estimation.

In this thesis we focus on discrete variable protocols, where finite-dimensional Hilbert spaces are used, though some of these results apply just to classical strings and therefore are protocol independent and can be applied to continuous-variable protocols as well.

\subsection{Privacy Amplification} \label{sec:privacy_amplification2}

Privacy amplification is the process of removing any residual information that Eve may have about the key after all the other steps in the QKD protocol. This subprotocol can be achieved by using randomness extractors. Randomness extractors are functions that take a source of randomness as input, e.g.~a string with a lower bound on its entropy, as well as a small uniformly random string called a \emph{seed}, and output an almost uniformly random output that is longer than the seed. We are interested in not just extracting randomness but extracting randomness with respect to a quantum adversary. We are also interested in an extractor that is \emph{strong}, where the seed and output string are independent of each other. Together, we want a strong randomness extractor against quantum adversaries, defined here. 

\begin{defn}[Quantum-Proof Strong Randomness Extractor, Defn.~3.2 in \cite{de12}] A $(k,\varepsilon)$-strong quantum-proof randomness extractor, $\text{Ext}$, is a function from $\{0,1\}^n\times\{0,1\}^d$ to $\{0,1\}^m$ if for all CQ states $\rho_{XE}$ with a classical $X\in\{0,1\}^n$ with min-entropy $H_{\min}(X|E)_{\rho}\geq k$ and a uniform seed $Y \in\{0,1\}^d$ we have
\begin{equation}
D\left(\rho_{\text{Ext}(X,Y)YE},\frac{\identity}{2^m}\otimes \rho_{Y}\otimes \rho_{E}\right) \leq \varepsilon.\footnote{Note that $\{0,1\}^n$ is the set of bit strings of length $n$.}
\end{equation}
\end{defn}

There are two main randomness extractors used for privacy amplification in QKD: the leftover hashing lemma \cite{mcinnes87,impagliazzo89a,impagliazzo89b,rennerphd} and Trevisan's extractor \cite{de12,mauerer12}.

\subsubsection{Leftover Hashing}

Informally, the leftover hashing lemma shows how much randomness can be extracted from a classical source that has at least a certain amount of min-entropy. This lemma has also been generalized to the case where there is a quantum system that has correlations with the classical source \cite{tomamichel10a,tomamichelthesis}. For QKD this means we can prove a lower bound on the min-entropy of Alice's string given Eve's quantum system. The set of functions that achieve this randomness extraction is called a two-universal family of hash functions. 

\begin{defn}[$\delta$-almost Two-Universal Hash Functions \cite{carter79}] \label{defn:hash}
Let $\delta>0$ and let $f$ be a function in a family (i.e.~a set) $\mathcal{F}$ with input space $\mathcal{X}$ and output space $\mathcal{Y}$. Then $\mathcal{F}$ is a $\delta$-almost two-universal family of hash functions if
\begin{equation}
\Pr_{f\in \mathcal{F}} \left[ f(x) = f(x') \right] \leq \delta,
\end{equation}
for any $x\neq x' \in \mathcal{X}$. $\mathcal{F}$ is a two-universal family of hash functions if $\delta=1/|\mathcal{Y}|$.
\end{defn}

In addition, a family of two-universal hash functions always exists from $\{0,1\}^n$ (i.e.~the set of strings of $n$ bits) to $\{0,1\}^{\ell}$ for all integers $n$ and $\ell$ \cite{carter79,wegman81,rennerphd}. A family of $\delta$-almost two-universal hash functions always exists from $\mathcal{F}^r$ to $\mathcal{F}$ for $\delta=(r-1)/|\mathcal{F}|$, where $r$ is an integer and $\mathcal{F}$ is a field \cite{tomamichel10a}.

An example of a family of two-universal hash functions is the set $\mathcal{F} = \{f_\alpha\}_{\alpha\in\{0,1\}^n}$ with functions mapping from strings of bits $\{0,1\}^n$ to $\{0,1\}^{\ell}$ by
\begin{equation}
f_\alpha(x) = x\cdot \alpha \mod 2^{\ell},
\end{equation}
where $x\cdot\alpha$ is multiplication in the field $GF(2^n)$ \cite{carter79,tomamichel10a} (see Section~\ref{sec:fields}). To see why this family is two-universal, notice that
\begin{equation} \label{eq:pr_field}
\begin{aligned}
&\Pr_\alpha \left[x\cdot\alpha\mod 2^{\ell} = x'\cdot\alpha\mod 2^{\ell}\right] \\
&= \Pr_\alpha \left[(x-x')\cdot\alpha\mod 2^{\ell} = 0\right].
\end{aligned}
\end{equation}
To interpret this probability we will use the isomorphism between strings and elements of a finite group. A string of bits can be represented as members of $GL(2^{\ell})$ by representing the string modulo $2^{\ell}$. Eq.~\ref{eq:pr_field} implicitly contains the isomorphism from strings of length $n$ to the field $GL(2^n)$ in order to perform the multiplication $(x-x')\cdot\alpha$.

Let us now consider the outcome of the multiplication $(x-x')\cdot\alpha$ for all possible values of $\alpha$. If we let $a$ be a non-zero element of $GL(2^n)$ then we can either write $\alpha=a^j$ or $\alpha=0$ and write $(x-x')=a^k$ for $j,k\in\{0,\dots,2^n-2\}$. Then by varying $j$ from $0$ to $2^n-2$ and including $\alpha=0$ the set of results of the multiplication $(x-x')\cdot\alpha$ is $\{a^k,a^{k+1},\dots,a^{2^n-2},1,a,\dots,a^{k-1},0\}$, which is just a permutation of the elements of the field. This fact shows that the mapping $\alpha\mapsto (x-x')\cdot\alpha$ is bijective.\footnote{A bijection is a function where every input of the function has a unique output and every output of the function has a unique input.} Now to take $(x-x')\cdot \alpha$ modulo $2^{\ell}$ we apply an isomorphism from $GL(2^n)$ to the integer set $\{0,1,\cdots,2^n\}$. Since the mapping on $\alpha$ is bijective, each of these integers appears once, and therefore taking them modulo $2^{\ell}$ will mean each value $\{0,\cdots,2^{\ell}-1\}$ will appear with equal probability since $2^{\ell}$ divides $2^n$:
\begin{equation}
\Pr_\alpha \left[(x-x')\cdot\alpha\mod 2^{\ell} = 0\right] = \frac{1}{2^{\ell}},
\end{equation}
and hence this family is two-universal.

An example of a family of $\delta$-almost two-universal hash functions is the set $\mathcal{F}=\{f_\alpha\}_{\alpha\in\mathds{F}}$ for any field $\mathds{F}$ (see Section~\ref{sec:fields}), where $f_{\alpha}$ maps from $\mathds{F}^r$ to $\mathds{F}$ by
\begin{equation}
f_\alpha(x_1,\dots,x_r) = \sum_{i=1}^rx_i\alpha^{r-i},
\end{equation}
where $x=(x_1,\dots,x_r)$ \cite{tomamichel10a}. This family is $\delta$-almost two-universal for $\delta=(r-1)/|\mathds{F}|$ since
\begin{align}
\Pr_\alpha\left[\sum_{i=1}^r x_i\alpha^{r-i} = \sum_{i=1}^rx'_i\alpha^{r-i}\right] &= \Pr_\alpha\left[\sum_{i=1}^r(x_i-x'_i)\alpha^{r-1}=0\right] \\
&\leq\frac{r-1}{|\mathds{F}|},
\end{align}
where the last step comes from the fact that a polynomial (in this case, in $\alpha$) of order $r-1$ has at most $r-1$ roots, and $\alpha$ is chosen uniformly at random out of the elements of $\mathds{F}$.

The motivation for considering $\delta$-almost two-universal functions as well as two-universal ones is so that we can minimize the amount of randomness necessary for Alice and Bob to implement the hashing. Alice needs to have enough uniform randomness to pick the hash function from the family that she then applies to her string. This randomness can be difficult to obtain and therefore we want to minimize the amount of uniform randomness needed in the protocol.

The amount of randomness required to choose the function $f$ from a family of two-universal hash functions (if $n$ is the length of the input) is $O(n)$ \cite{carter79}, while for $\delta$-almost families the amount of randomness is $O(\ell)$ \cite{stinson94}. While the analysis can be more complicated with $\delta$-almost families, they can reduce the amount of randomness (and communication) needed in a run of a QKD protocol. Also, we will use hash functions in the information reconciliation step (see Section~\ref{sec:error_correction2}), which will minimize the amount of randomness and communication required there as well.

With the definition of a family of two-universal hash functions we can now present the leftover hashing lemma. 

\begin{lemma}[Leftover Hashing, Corollary 5.6.1 in \cite{rennerphd}] \label{lemma:loh}
Let $K\in\mathcal{K}$ be a random variable, $E$ be a quantum system, and let $\mathcal{F}$ be a two-universal family of hash functions from $\mathcal{K}\in\{0,1\}^n$ to $\mathcal{K'}\in\{0,1\}^{\ell}$. If we define the states
\begin{equation}
\rho_{K'}^{f} = \kb{f(K)}{f(K)}, \quad \rho_{K'EF} := \sum_{f\in\mathcal{F}} p(f) \rho_{K'E}^{f} \otimes \ket{f}_F\bra{f},
\end{equation}
then
\begin{equation} \label{eq:loh}
D\left( \rho_{K'EF}, \frac{\identity_{K'}}{d_{K'}}\otimes\rho_{EF} \right) \leq \varepsilon  + 2^{-\frac{1}{2}(H_{\min}^{\varepsilon}(K|E)-\ell)-1}.
\end{equation}
\end{lemma}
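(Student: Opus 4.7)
The plan has three components: smooth to reduce to the non-smooth min-entropy, convert the trace distance into a weighted Hilbert--Schmidt quantity, and evaluate that quantity using the two-universal property of $\mathcal{F}$.

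First I would smooth. By Defn.~\ref{defn:sHminmax}, pick $\tilde{\rho}_{KE}\in\mathcal{B}^{\varepsilon}(\rho_{KE})$ attaining the supremum in $H^{\varepsilon}_{\min}(K|E)_{\rho}$, together with an optimizer $\tilde{\sigma}_E\in S_{\leq}(\hilbert_E)$ from Defn.~\ref{defn:Hmin} satisfying
\begin{equation}
\tilde{\rho}_{KE}\leq 2^{-H^{\varepsilon}_{\min}(K|E)_{\rho}}\,\identity_K\otimes\tilde{\sigma}_E.
\end{equation}
The randomized hashing procedure (including appending the register $F$ that records $f$) is a CPTP map acting on $K$, and the trace distance is upper bounded by the purified distance, so the triangle inequality yields
\begin{equation}
D\!\left(\rho_{K'EF},\tfrac{\identity_{K'}}{d_{K'}}\otimes\rho_{EF}\right)\leq \varepsilon + D\!\left(\tilde{\rho}_{K'EF},\tfrac{\identity_{K'}}{d_{K'}}\otimes\tilde{\rho}_{EF}\right).
\end{equation}
It remains to bound the second term by $2^{-\frac{1}{2}(H^{\varepsilon}_{\min}(K|E)_{\rho}-\ell)-1}$.

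Next I would convert this trace distance into a weighted Hilbert--Schmidt quantity. A standard Cauchy--Schwarz bound (using that $K'$ is classical of dimension $d_{K'}$ and $\Tr[\tilde{\sigma}_E]\leq 1$) gives, after averaging over $f$ with weights $p(f)$,
\begin{equation}
\bigl\|\tilde{\rho}_{K'EF}-\tfrac{\identity_{K'}}{d_{K'}}\otimes\tilde{\rho}_{EF}\bigr\|_1 \leq \sqrt{d_{K'}}\,\sqrt{\mathbb{E}_{f}\bigl\|\tilde{\sigma}_E^{-1/4}\bigl(\tilde{\rho}_{K'E\mid f}-\tfrac{\identity_{K'}}{d_{K'}}\otimes\tilde{\rho}_{E}\bigr)\tilde{\sigma}_E^{-1/4}\bigr\|_2^2},
\end{equation}
where $\tilde{\sigma}_E^{-1/4}$ is interpreted as a generalized inverse on the image of $\tilde{\sigma}_E$ (the operator inequality from the smoothing step places the relevant operator in that image). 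Expanding the square using the CQ decomposition $\tilde{\rho}_{KE}=\sum_k\kb{k}{k}\otimes\tilde{\rho}_E^k$ unfolds into a double sum over pairs $(k,k')$ with coefficient $\Pr_f[f(k)=f(k')]-1/d_{K'}$ multiplying $\Tr[\tilde{\sigma}_E^{-1/2}\tilde{\rho}_E^k\tilde{\sigma}_E^{-1/2}\tilde{\rho}_E^{k'}]$. By the two-universal property (Defn.~\ref{defn:hash}), this coefficient is non-positive for $k\neq k'$, so the subtraction cancels all off-diagonal contributions and only the $k=k'$ diagonal survives.

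The diagonal sum is bounded using the operator inequality from the smoothing step: since $X\geq 0$ and $Y\leq Z$ imply $\Tr[XY]\leq\Tr[XZ]$, I would obtain
\begin{equation}
\sum_k\Tr\!\left[\tilde{\rho}_E^k\cdot\tilde{\sigma}_E^{-1/2}\tilde{\rho}_E^k\tilde{\sigma}_E^{-1/2}\right]\leq 2^{-H^{\varepsilon}_{\min}(K|E)_{\rho}}\sum_k\Tr\!\left[\tilde{\sigma}_E\cdot\tilde{\sigma}_E^{-1/2}\tilde{\rho}_E^k\tilde{\sigma}_E^{-1/2}\right]=2^{-H^{\varepsilon}_{\min}(K|E)_{\rho}}\Tr\tilde{\rho}_{KE}\leq 2^{-H^{\varepsilon}_{\min}(K|E)_{\rho}}.
\end{equation}
Feeding this through the $\sqrt{d_{K'}}=2^{\ell/2}$ conversion factor and using $D=\tfrac{1}{2}\|\cdot\|_1$ yields exactly $2^{-\frac{1}{2}(H^{\varepsilon}_{\min}(K|E)_{\rho}-\ell)-1}$, which combined with the $\varepsilon$ from smoothing proves the claim. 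The main obstacle I expect is the cancellation of off-diagonal terms in the Hilbert--Schmidt expansion: this is the one point where two-universality (rather than any weaker structural property of $\mathcal{F}$) is essential, because only two-universality bounds $\Pr_f[f(k)=f(k')]$ uniformly by $1/d_{K'}$ so that it is absorbed by the subtracted uniform piece. A secondary subtlety is the careful interpretation of $\tilde{\sigma}_E^{-1/4}$ as a generalized inverse on the image of $\tilde{\sigma}_E$, which is justified by the smoothing-step operator inequality that constrains the support of $\tilde{\rho}_{KE}$.
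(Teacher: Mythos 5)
The paper does not actually prove this lemma --- it is imported verbatim from Renner's thesis (Corollary 5.6.1) --- so there is no in-paper proof to compare against. Your reconstruction is the standard (and essentially the original) argument: smooth, pass to a $\tilde{\sigma}_E$-weighted Hilbert--Schmidt norm via Cauchy--Schwarz with the conversion factor $\sqrt{\Tr[\identity_{K'}\otimes\tilde{\sigma}_E]}\leq\sqrt{d_{K'}}$, expand into the pair sum with coefficients $\Pr_f[f(k)=f(j)]-1/d_{K'}$, kill the off-diagonal terms by two-universality (noting, as you should state explicitly, that each $\Tr[\tilde{\sigma}_E^{-1/2}\tilde{\rho}_E^k\tilde{\sigma}_E^{-1/2}\tilde{\rho}_E^{j}]$ is non-negative so that dropping terms with non-positive coefficients is legitimate), and bound the surviving diagonal by $2^{-H_{\min}}$ using the defining operator inequality. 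All of that is correct and identifies exactly the right role of two-universality.

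The one step that does not go through as written is the smoothing. Replacing $\rho$ by $\tilde{\rho}$ perturbs \emph{both} arguments of the trace distance --- $\rho_{K'EF}\to\tilde{\rho}_{K'EF}$ and $\rho_{EF}\to\tilde{\rho}_{EF}$ --- so the triangle inequality gives
\begin{equation}
D\!\left(\rho_{K'EF},\tfrac{\identity_{K'}}{d_{K'}}\otimes\rho_{EF}\right)\leq D(\rho_{K'EF},\tilde{\rho}_{K'EF})+D\!\left(\tilde{\rho}_{K'EF},\tfrac{\identity_{K'}}{d_{K'}}\otimes\tilde{\rho}_{EF}\right)+D(\tilde{\rho}_{EF},\rho_{EF}),
\end{equation}
i.e.\ two $\varepsilon$-sized error terms rather than one, yielding $2\varepsilon$ in place of the single $\varepsilon$ you claim. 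This is a known constant-level discrepancy between different statements of the corollary in the literature (it depends on the choice of smoothing ball and normalization conventions), so it does not affect the substance of your argument, but as presented the single triangle inequality does not deliver the constant in Eq.~\ref{eq:loh} and you should either account for both error terms or justify why one of them vanishes.
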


This lemma can also be stated in a similar form using $\delta$-almost universal hashing functions \cite{tomamichel10a}.

To understand how this lemma is useful, consider that Eve gets access to the function $f$ that Alice and Bob use for hashing since they communicate $f$ through the authenticated classical channel. This means that Eve's state is her quantum system from before privacy amplification plus a description of the function $f$. If we compare Eq.~\ref{eq:loh} to Eq.~\ref{eq:secrecy}, we see that Eve's system in the definition of secrecy $E$ is the system $EF$ for leftover hashing. Also, Alice's system here is $K'$, while in the secrecy Alice's system was written as $A$.

We will only need the leftover hashing lemma when the protocol does not abort and therefore we are implicitly conditioning the states in Lemma~\ref{lemma:loh} on the event that the protocol has not aborted. This means that instead of trying to bound the trace distance to prove secrecy, we can now try to bound the min-entropy $H_{\min}^{\varepsilon}(K|E)$. If we take the $\log$ of Eq.~\ref{eq:loh} then we can rewrite it as
\begin{equation}
-2\log \left(D\left( \rho_{K'EF}, \frac{\identity_{K'}}{d_{K'}}\otimes\rho_{EF} \right)\right) + 2\log \varepsilon -2 -2\ell \geq H_{\min}^{\varepsilon}(K|E).
\end{equation}
This means that to upper bound the trace distance, we can instead try to lower bound the smooth min-entropy of Alice's state conditioned on Eve's state.

The leftover hashing lemma is also optimal, in the sense that very little randomness and communication is necessary and it gives an exponentially tight bound on the trace distance for secrecy by the min-entropy \cite{rennerphd}. This exponential bound is the kind of scaling that is necessary for efficient QKD. See Section~\ref{sec:tuning} for more details.

\subsubsection{Trevisan's Extractor}

Another way of relating the trace distance to the min-entropy is by using Trevisan's extractor. This extractor achieves the same goal as what the leftover hashing lemma accomplished: by using a small amount of randomness to choose a function from a family of two-universal hash functions the secrecy trace distance could be upper bounded.

Trevisan's extractor is a classical randomness extractor \cite{trevisan01a} that is also a quantum-proof randomness extractor \cite{de12,mauerer12}. Similarly to $\delta$-almost universal hashing, this extractor requires $O(\ell)$ bits of communication (see Defn.~\ref{defn:bigO}). However, it requires a seed of size $O(\log^2(n/\varepsilon)\log\ell)$ as apposed to a seed of size $O(\ell)$ as in leftover hashing. Therefore Trevisan's extractor is more efficient in the amount of randomness necessary compared to leftover hashing.

The details of the function used to implement this extraction can be found in \cite{trevisan01a,de12}. Trevisan's extractor is particularly useful for proving security when assuming that Eve has a limited memory at her disposal (called the \emph{bounded storage model}) \cite{maurer92,de10}.

\subsection{Information Reconciliation}\label{sec:error_correction2}

By using the leftover hashing lemma (Lemma~\ref{lemma:loh}) or Trevisan's extractor the problem of proving a QKD protocol is secret (Defn.~\ref{defn:secrecy}) has been turned into the problem of lower bounding the conditional smooth min-entropy, $H_{\min}^{\varepsilon}(K|E)$ of a classical string, $K$, conditioned on Eve's quantum state, $\rho_E$. But we also need to be able to make sure that the protocol is correct, which can be accomplished by using an error correcting code to correct any errors between Alice's and Bob's strings. These errors can be due to Eve, noise in the quantum channel, and/or devices used in the protocol.

The task of classical error correction is to correct errors in a string (for example, the communication from a noisy channel), while classical information reconciliation is to turn two strings with correlations into two strings that are the same by possibly changing both of them. However, it is usually easier to consider information reconciliation in the special case of error correction, where Bob corrects his string to make it the same as Alice's (which is called direct reconciliation). Alice and Bob can also do reverse reconciliation where Alice corrects her string to be the same as Bob's. We consider direct reconciliation here for the simplicity of the presentation.

Consider the following scenario at this point in the protocol. Alice has a string $K_A$ and Bob has a string $K_B$ that may be different from $K_A$, while Eve has a quantum state $\rho_E$ that may have correlations with $K_A$ and $K_B$.\footnote{In the previous sections we have used $K_A$ and $K_B$ to denote the keys of Alice and Bob that may include the aborting outcome $\bot$. However, in this section we will consider $K_A$ and $K_B$ to be conditioned on not aborting given Alice and Bob's results in parameter estimation.} Alice wants to send some function of her key to Bob so that Bob can use this information and $K_B$ to reconstruct $K_A$.

What is known from parameter estimation is an estimate of the error rate and an upper bound on the smooth max-entropy of Alice's string conditioned on Bob's (see Section~\ref{sec:parameter_estimation2}). If these things are known then the only thing that Bob does not know is where his errors are in his string. Explicit error correcting codes define what communication is necessary so that Bob can find out where his errors are and correct them.

Two examples of explicit error correcting codes are low-density parity-check (LDPC) codes \cite{gallager63} and polar codes \cite{arikan08}. These codes provide an important advantage over other codes in that they are computationally efficient, achieving speeds that can be orders of magnitude faster compared to other codes. In certain cases, polar codes perform better than LDPC codes \cite{jouguet12}. Also, both codes only require communication in one direction and with one message, while other codes can require communication back and forth over many rounds. There are explicit codes in the notes \cite{steane06} or the books \cite{hamming80,macwilliams77,jones79,hill86}.

Both LDPC codes and Polar codes are linear block codes, which mean that the message that Alice needs to send to Bob in order for him to correct his errors is given by the multiplication of Alice's string with a matrix.

LDPC codes use the parities of small sets of bits. Alice can compute the parities of small subsets of her string and send them to Bob. There are several algorithms available for Bob to use these parities with his string to find out where his errors are.

Polar codes use a particular matrix to be applied to Alice's string that can be made in a recursive way. For example, if Alice's string has a length that is a power of $2$, her matrix is constructed by using
\begin{equation}
F = \begin{pmatrix}
1 & 0 \\
1 & 1
\end{pmatrix},
\end{equation}
to get the matrix $F^{\otimes n} = \overbrace{F\otimes F \otimes \cdots \otimes F}^{n \text{ times}}$. Other matrices can be similarly constructed if Alice's string is not a power of $2$.

Alternatively, it is not necessary for Alice and Bob to estimate the number of errors in parameter estimation for the error correction procedure. They can do their estimation before the QKD protocol by running a short version of the quantum \stage of the protocol. Bob can communicate his measurement results to Alice through a classical channel that is not necessarily authenticated. Alice and Bob can then estimate how many errors they will have when they run the actual QKD protocol. This method has the advantage that Alice and Bob can choose which error correcting code they will use for the protocol that is optimized for the number of errors they expect.

In the case where there is no eavesdropper and Alice and Bob estimate their errors before the QKD protocol, a good estimate can be found for the number of errors Alice and Bob will have when they run the QKD protocol due to noise in the quantum channel and their devices. If there is an eavesdropper then their estimated error rate may be wrong and therefore Alice and Bob will need to check to see if their error correction succeeds or fails during the QKD protocol. Note that Bob does not have access to Alice's system, so neither Alice or Bob know if error correction succeeded or not. We can use two-universal hash functions again (Defn.~\ref{defn:hash}) for this checking procedure.

Alice can (uniformly at random) choose a two-universal hash function from a family of such functions and apply it to her key. Alice then sends the function $f_{\text{cor}}$ and the evaluation of the function $f_{\text{cor}}(K_A)$ to Bob, who computes the function on his key $f_{\text{cor}}(K_B)$. If the hash values are equal, then with high probability Alice's and Bob's keys are the same. Due to the defining property (Defn.~\ref{defn:hash}) of families of hash functions, it is clear that the QKD protocol is $\varepsilon_{\text{cor}}$-correct if two-universal hash functions are used with an output space of $2^{-\lceil \log(1/\varepsilon_{\text{cor}}) \rceil }$, since
\begin{equation}
\Pr\left[ f_{\text{cor}}(K_A)= f_{\text{cor}}(K_B) \mid K_A \neq K_B\right] \leq 2^{-\lceil \log(1/\varepsilon_{\text{cor}}) \rceil } \leq \varepsilon_{\text{cor}}
\end{equation}
which implies that
\begin{align}
&\underbrace{\Pr\left[f_{\text{cor}}(K_A)= f_{\text{cor}}(K_B) \mid K_A \neq K_B\right]}_{\leq \varepsilon_{\text{cor}}} \underbrace{\Pr\left[K_A \neq K_B\right]}_{\leq 1} \\
&= \Pr\left[ K_A \neq K_B \mid f_{\text{cor}}(K_A)= f_{\text{cor}}(K_B)\right] \underbrace{\Pr\left[f_{\text{cor}}(K_A)= f_{\text{cor}}(K_B)\right]}_{=1},
\end{align}
where we use the fact that the protocol aborts when $ f_{\text{cor}}(K_A) \neq f_{\text{cor}}(K_B)$. Therefore we have
\begin{equation}
\Pr\left[ K_A \neq K_B \mid  f_{\text{cor}}(K_A)= f_{\text{cor}}(K_B)\right] \leq \varepsilon_{\text{cor}},
\end{equation}
which means that Alice's and Bob's strings are the same after error correction if their hash values agree, except with probability $\varepsilon_{\text{cor}}$.

For security we need that the keys that are put through the hash function in privacy amplification are correct. If the keys $K_A$ and $K_B$ after information reconciliation are the same (which happens with probability at least $1-\varepsilon_{\text{cor}}$) then their hashes are guaranteed to be the same, which implies that the protocol is $\varepsilon_{\text{cor}}$ even after privacy amplification:
\begin{equation}
\Pr\left[ f_{\text{pa}}(K_A) \neq f_{\text{pa}}(K_B) \right] \leq \Pr [K_A \neq K_B] \leq \varepsilon_{\text{cor}},
\end{equation}
where $f_{\text{pa}}$ is the hash function applied in privacy amplification.

Note that this checking procedure guarantees that the protocol is $\varepsilon_{\text{cor}}$-correct without needing to make any assumptions about the error rate or the error correcting code. Alice and Bob can therefore employ any error correcting code and can check their errors before the protocol, even without the use of an authenticated channel.

It is important to know how much information has been leaked to Eve during the error correcting code. Typically, all of the bits of communication sent from Alice to Bob in the error correction protocol are considered to be leaked bits of information to Eve. The amount of communication will depend on the particular error correcting code used. The fundamental limit on the minimal amount of communication necessary for finite-key QKD was recently analyzed in \cite{tomamichel14a}. There are also upper bounds on the amount of leaked information to Eve under various assumptions in \cite{renes12,rennerphd,renner05b,scarani08a,scarani08b}.

The communication that leaks information to Eve can be accounted for with privacy amplification by removing the classical information from Eve's system before error correction by using a chain rule for the min-entropy. If $C$ is the classical communication about the key that Eve learns from error correction, then \cite{tomamichel12a}
\begin{equation} \label{eq:classical_chain}
H_{\min}^{\varepsilon}(K_A|EC) \geq H_{\min}^{\varepsilon}(K_A|E)-\log|C|,
\end{equation}
where $|C|$ is the number of strings that are the same length as $C$. This means that if a lower bound on $H_{\min}^{\varepsilon}(K_A|E)$ can be shown then $H_{\min}^{\varepsilon}(K_A|EC)$ will also be lower bounded and therefore the protocol can be proven secure.

We have reduced proving correctness to estimating the number of errors, either through parameter estimation in the QKD protocol or by doing an estimation procedure before the protocol. It still remains to show that $H_{\min}^{\varepsilon}(K_A|E)$ is lower bounded so that the protocol is approximately secret. For example, this proof can be done by bounding the max-entropy (see Section~\ref{sec:device_dependent}). The max-entropy can be estimated from the number of errors between Alice's and Bob's string, which is one of the possible goals for parameter estimation.

\subsection{Parameter Estimation} \label{sec:parameter_estimation2}

After the quantum \stage of the QKD protocol, Alice and Bob have to estimate the error rate between their strings $K_A$ and $K_B$. This rate will upper bound the smooth max-entropy. If the error correcting code is checked by using hashing, then it is not necessary to estimate the error rate or max-entropy during parameter estimation for the information reconciliation step. However, as we will see in Section~\ref{sec:methods}, an estimate of the max-entropy of Alice's string conditioned on Bob's string can be used to prove a lower bound on the min-entropy of Alice's string conditioned on Eve's state, which proves that the protocol is secret (see Section~\ref{sec:privacy_amplification2}).

Parameter estimation can be dependent on what kinds of assumptions are made in the model of the protocol. These assumptions will be discussed in Chapter~\ref{chap:assumptions}. However, parameter estimation can be performed for many protocols independently of these assumptions. We break down its discussion into two scenarios: the finite-key and infinite-key scenario. The infinite-key scenario is just the limit as the number of signals goes to infinity (see Section~\ref{sec:classes}).

One way to perform parameter estimation is for Alice to send a uniformly random subset of her string to Bob along with the positions that describe her subset. Bob will compare this subset with the same subset of his string and announce the ratio of the number of errors between the subsets and the length of the subset. If this ratio is above a threshold, $\lambda_{\max}$, they will abort the protocol and otherwise they will continue.

Another way to perform parameter estimation is to do it simultaneously with information reconciliation. One such protocol is the cascade protocol \cite{brassard94}. This protocol compares the parity of small sets of bits to see if they are the same or not. If the parities are different then Alice and Bob will do an error-correcting procedure on this set of bits. Alice and Bob repeat the checking of several parities for different randomly-chosen sets of bits to correct their errors. By checking these parities, Alice and Bob can also estimate the number of errors between their strings. The cascade protocol is less efficient than the information reconciliation protocols from Section~\ref{sec:error_correction2}, so we do not consider it here. Instead, we focus on parameter estimation that is done completely prior to information reconciliation.

\subsubsection{Finite-Key Parameter Estimation}

Bob can apply one of several bounds to estimate the total error between Alice's and Bob's strings using the subset that Alice communicates. The tightest of these for our purposes is due to Serfling \cite{serfling74}. Serfling's inequality is an improvement on a bound by Hoeffding \cite{hoeffding63}, which is related to bounds by Chernoff \cite{chernoff52}. For our purposes Serfling's bound can be stated as follows. 

\begin{lemma}[Serfling's Inequality \cite{serfling74}] \label{lemma:serfling}
Given a set of random variables $K_i$ with values $k_i\in\{0,1\}$, where $i\in[N]$, we define the average as $K:=1/N\sum_{i=1}^N K_i$. If a sample (without replacement) of size $n$ out of $\{K_i\}_i$ is taken with values $x_j$, where $j\in[n]$, then its average is defined as $X:=1/n\sum_{j=1}^n X_j$. Let $k=N-n$ and $0 \leq \beta \leq 1$. Then
\begin{equation}\label{eq:serfling}
\Pr\left[X \geq K + \beta \right] \leq e^{-\frac{2\beta^2 nN}{k+1}}.
\end{equation}
\end{lemma}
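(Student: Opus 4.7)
The plan is to prove Eq.~\ref{eq:serfling} by combining a Chernoff-type bound with a martingale analysis of the sampling-without-replacement process. First, by Markov's inequality applied to an exponential, for any $t > 0$ we have
\begin{equation}
\Pr[X \geq K + \beta] \;\leq\; e^{-tn\beta}\; E\bigl[e^{tn(X - K)}\bigr],
\end{equation}
reducing the tail bound to controlling the moment generating function of the centered sample sum and then optimizing over $t > 0$.

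The central difficulty is that the drawn values $X_1,\dots,X_n$ are strongly dependent: each draw depletes the remaining population, so the standard Chernoff bound for independent variables does not apply directly. My plan is to represent the sample as the first $n$ entries of a uniformly random permutation of $(K_1,\dots,K_N)$ and build a martingale from the running ``remaining-population mean,''
\begin{equation}
Z_j \;:=\; \frac{NK \;-\; \sum_{i=1}^{j} X_i}{N-j}, \qquad j = 0,1,\dots,N-1,
\end{equation}
which is a martingale with respect to the filtration generated by the first $j$ draws: conditional on the past, the next draw has conditional mean $Z_{j-1}$. Because each $K_i \in \{0,1\}$, the martingale increments are almost surely bounded, with the bound depending on $j$ through the shrinking denominator $N-j$.

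A short computation yields $Z_n - Z_0 = -n(X-K)/(N-n)$, so the target event $X \geq K+\beta$ corresponds exactly to a one-sided deviation of $Z_n$ from its initial value $Z_0 = K$ by the amount $n\beta/(N-n)$. I would then apply the Azuma--Hoeffding inequality to this martingale, producing an exponential tail bound whose exponent is determined by the sum of squared increment bounds. The main technical obstacle will be tracking these bounds sharply: the naive uniform estimate $|Z_j - Z_{j-1}| \leq 1/(N-n)$ recovers only the weaker with-replacement rate, whereas the correct $j$-dependent bounds, when squared and summed, telescope against the rescaling factor $(N-n)^2$ that translates a deviation of $Z_n$ back into a deviation of $X$. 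It is this improved telescoping that produces the characteristic factor $N/(k+1)$ in the exponent, giving exactly Serfling's sharpening of the Hoeffding bound. Once this bookkeeping is carried out and $t$ is optimized, Eq.~\ref{eq:serfling} follows.
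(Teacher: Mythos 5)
The paper itself offers no proof of this lemma --- it is imported directly from Serfling's 1974 paper --- so your proposal can only be judged on its own terms, and on those terms the plan is sound. Your $Z_j$ is indeed a martingale (conditional on the first $j$ draws, $X_{j+1}$ is uniform on the remaining population, whose mean is $Z_j$), the identity $Z_n - Z_0 = -n(X-K)/(N-n)$ checks out, and conditionally on the past the increment $Z_j - Z_{j-1} = (Z_{j-1}-X_j)/(N-j)$ ranges over an interval of length exactly $1/(N-j)$, so Azuma--Hoeffding applies with $c_j = 1/(N-j)$. This is the modern martingale route to Serfling's bound; Serfling's own argument instead runs a recursion directly on the moment generating function of the partial sums, using the same martingale structure but applying Hoeffding's lemma at each step with a geometrically shrinking multiplier. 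The two are close cousins and give the same constant, and your version is arguably cleaner to present.

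The one place your sketch is too optimistic is the final ``telescoping.'' What you actually need is
\begin{equation}
\sum_{j=1}^{n} \frac{1}{(N-j)^2} \;=\; \sum_{m=k}^{N-1}\frac{1}{m^2} \;\leq\; \frac{n(k+1)}{Nk^2},
\end{equation}
since the Azuma exponent is $2(n\beta/k)^2\big/\sum_j c_j^2$ and the target exponent is $2\beta^2 nN/(k+1)$. This inequality is true --- it holds with equality at $N=k+1$ and propagates by induction on $N$ because $1/N^2 \leq (k+1)/\bigl(kN(N+1)\bigr)$ whenever $k\leq N$ --- but it does \emph{not} follow from the standard telescoping $1/m^2 \leq 1/\bigl(m(m-1)\bigr)$, which only yields $n/\bigl((k-1)(N-1)\bigr)$, and that quantity is strictly larger than $n(k+1)/(Nk^2)$ for every $N>k\geq 2$. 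A routine telescoping estimate therefore lands on a constant strictly worse than Serfling's, and you must supply the sharper inductive comparison to close the gap. With that inequality in hand the proof is complete; you should also dispose of the degenerate case $n=N$ separately, since your $Z_n$ is then $0/0$ while the claim is trivial because $X=K$ almost surely.
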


This inequality means that the probability that the sample average is bigger than the total average is exponentially small in the sample size. The weaker bound by Hoeffding \cite{hoeffding63} is sometimes used for simplicity, which changes the upper bound in Eq.~\ref{eq:serfling} to $e^{-2\beta^2n}$.

Now we want to use this bound to show how a sample of size $k$ communicated from Alice to Bob can put a bound on the probability that the error ratio in the remaining $n$ bits ($\Lambda_n$) is larger than the observed error ratio in the sampled $k$ bits ($\Lambda_k$). This probability is conditioned on the error ratio being lower than a certain threshold. Formally, we want an upper bound to:
\begin{equation}
\Pr\left[\Lambda_{n} \geq \Lambda_k + \gamma \mid \Lambda_k \leq \lambda_{\max} \right],
\end{equation}
where $\gamma$ is a small constant. Formally these error ratios are defined as $\Lambda_n:=\frac{1}{n}|K_A^n\oplus K_B^n|$ and $\Lambda_k:=\frac{1}{k}|K_A^k\oplus K_B^k|$, where  Alice's key is split into the set of $k$ bits and $n$ bits $K_A= K_A^k K_A^n$; and $|K_A^n\oplus K_B^n|$ is the Hamming weight of the string $K_A^n\oplus K_B^n$.\footnote{The Hamming weight of a binary string $X=(X_1,X_2,\dots,X_n)$ is defined as $|X|:=\bigoplus_i X_i$, i.e.~the number of $1$'s in $X$.\label{footnote:hamming}} Bob's key is divided along the same partition of $k$ and $n$ bits.

Note that since the $k$ bits will be communicated they should be sampled without replacement, which is in accordance with Lemma~\ref{lemma:serfling}. The following bound on this probability is from \cite{tomamichel12a}.

First, from Bayes' theorem we can write
\begin{equation}\label{eq:bayes_prob}
\Pr\left[\Lambda_{n} \geq \Lambda_k + \gamma \mid \Lambda_k \leq \lambda_{\max} \right] \leq \frac{\Pr \left[ \Lambda_{n} \geq \Lambda_k + \gamma \right]}{\Pr\left[\Lambda_{k}\leq\lambda_{\max}\right]}.
\end{equation}
If we define the ratio $\nu=k/N$ then we can write the total error rate as:
\begin{equation}
\Lambda = \nu \Lambda_k + (1-\nu) \Lambda_n,
\end{equation}
where $\Lambda:=\frac{1}{N}|K_A\oplus K_B|$ is the error ratio between Alice's and Bob's complete strings. Now we can bound
\begin{align}
\Pr\left[ \Lambda_n \geq \Lambda_k + \gamma \right] &= \Pr\left[ \nu \Lambda_n \geq \nu \Lambda_k + \nu\gamma \right] \\
&= \Pr\left[ \Lambda_n \geq \nu\Lambda_k + (1-\nu)\Lambda_n +\nu\gamma \right] \\
&= \Pr\left[ \Lambda_n \geq \Lambda + \nu\gamma \right] \\
&\leq e^{-2\frac{k^2n}{(k+1)N}\gamma^2},
\end{align}
where in the last line we apply Serfling's inequality (Lemma~\ref{lemma:serfling}) and we use the definition that $\nu=k/N$. Eq.~\ref{eq:bayes_prob} can be written as
\begin{equation} \label{eq:midstat}
\Pr\left[\Lambda_{n} \geq \Lambda_k + \gamma \mid \Lambda_k \leq \lambda_{\max} \right] \leq \frac{e^{-2\frac{k^2n}{(k+1)N}\gamma^2}}{\Pr\left[\Lambda_{k}\leq\lambda_{\max}\right]}.
\end{equation}
This inequality means that the probability that the error ratio on the rest of the key $K^n$ is larger than the error ratio on the smaller sample $K^k$ plus a small amount $\gamma$, given that the protocol has an upper bound on the error rate on the sample $k$. However, what we really want is to upper bound the max-entropy to show that the protocol is secret, as we will show later (Section~\ref{sec:current_methods}).

We can use Eq.~\ref{eq:midstat} to show an upper bound on the max-entropy \cite{tomamichel12a}, since from the definition of the max-entropy (Defn.~\ref{defn:Hmax}) for classical random variables, the max-entropy is just the size of the support of the random variable (see Eq.~\ref{eq:Hmax_classical} below). The problem we have at this point is that we only have a probabilistic bound on the number of errors, Eq.~\ref{eq:midstat}, and we need instead a fixed upper bound.

To get to a fixed bound on the number of errors, consider the probability distribution
\begin{equation}
P_{K_AK_B\Lambda_k}(k_A,k_B,\lambda_k):=\Pr[K_A=k_a,K_B=k_B,\Lambda_k=\lambda_k | \Lambda_k\leq\lambda_{\max}].
\end{equation}
We can define another probability distribution
\begin{equation}
Q_{K_AK_B\Lambda_k}(k_a,k_b,\lambda_k):= \begin{cases}
\frac{P_{K_AK_B\Lambda_k}(k_A,k_B,\lambda_k)}{\Pr[\Lambda_n < \Lambda_k+\gamma | \Lambda_k\leq\lambda_{\max}]} & \text{if } \lambda_n < \lambda_k + \gamma \\
0 & \text{otherwise}
\end{cases}.
\end{equation}
We construct this distribution because under the distribution $Q$ we know that $\Lambda_n < \Lambda_k +\gamma \leq \lambda_{\max}+\gamma$ with probability $1$. This means that the number of errors on the $n$ key bits, $W:=n\Lambda_n$, satisfies
\begin{equation}
W \leq \lfloor n(\lambda_{\max}+\gamma) \rfloor.
\end{equation}
To bound the max-entropy, we need that $P$ and $Q$ are close with respect to the purified distance (Defn.~\ref{defn:pdist}), which is true since the fidelity is bounded using Eq.~\ref{eq:midstat}:
\begin{align}
F(P,Q) &= \sum_{k_A,k_B,\lambda_k}\sqrt{P(k_A,k_B,\lambda_k)Q(k_A,k_B,\lambda_k)} \\
&= \sum_{\substack{k_A,k_B,\lambda_k\\ \lambda_n < \lambda_k + \gamma}}\frac{P(k_A,k_B,\lambda_k)}{\sqrt{\Pr\left[\Lambda_n < \Lambda_k+\gamma | \Lambda_k\leq\lambda_{\max}\right]}} \\
&= \sqrt{\Pr\left[\Lambda_n < \Lambda_k+\gamma | \Lambda_k\leq\lambda_{\max}\right]}.
\end{align}
Now we can use the definition of the conditional max-entropy for classical probability distributions:
\begin{equation}\label{eq:Hmax_classical}
H_{\max}(X|Y)_P = \max_{y\in\mathcal{Y}} \log \left|\supp P_{X|Y=y}\right|,
\end{equation}
where $\mathcal{Y}$ is the set of possible values for the distribution $P_Y$ and $P_{XY}$ is a probability distribution with marginal distribution $P_Y$. This implies that
\begin{equation}\label{eq:Hmax_bound1}
H_{\max}^{\varepsilon}(K_A|K_B)_{P} \leq H_{\max}(K_A|K_B)_{Q} \leq \log \sum_{w=0}^{\lfloor n(\lambda_{\max}+\gamma)\rfloor} \binom{n}{w},
\end{equation}
where $\varepsilon:=e^{-\frac{k^2n}{(k+1)N}\gamma^2}/\sqrt{\Pr[\Lambda_k\leq \lambda_{\max}]}$. In the first inequality we used the definition of the smooth max-entropy (Defn.~\ref{defn:sHminmax}). In the second inequality we used the definition of the max-entropy for classical distributions, Eq.~\ref{eq:Hmax_classical}. Since the distribution $Q$ only has support for strings with $\lambda_n < \lambda_k+\gamma \leq \lambda_{\max}+\gamma$ we just count how many strings of length $n$ that have less than $\lambda_{\max}+\gamma$ errors.

We can end by using a technical result from Theorem 1.4.5 of \cite{lint99} which gives the upper bound
\begin{equation}\label{eq:Hmax_bound2}
 \log \sum_{w=0}^{\lfloor n(\lambda_{\max}+\gamma)\rfloor} \binom{n}{w} \leq n h(\lambda_{\max}+\gamma),
\end{equation}
where $h(\cdot)$ is the binary entropy function (Defn.~\ref{defn:binary_entropy}). By combining Eq.~\ref{eq:Hmax_bound1} and Eq.~\ref{eq:Hmax_bound2} we get an upper bound for the max-entropy:
\begin{equation} \label{eq:Hmax_pe}
H_{\max}^{\varepsilon}(K_A|K_B)_{P} \leq  n h(\lambda_{\max}+\gamma).
\end{equation}

Note that the number of random bits needed to choose $k$ elements from $N$ elements is given by $\left\lceil\log \binom{N}{k}\right\rceil$ since there are $\binom{N}{k}$ numbers of ways to do this. Therefore a string of $\left\lceil\log\binom{N}{k}\right\rceil$ bits of uniform randomness is needed to choose the set of $k$ measurement outcomes (or basis-sifted measurement outcomes) that should be communicated for parameter estimation.

If the size of the subset does not need to be fixed, then by picking each measurement outcome with probability $k/N$, the subset will approximately (and on expectation) be of size $k$. The number of bits of uniform randomness that are required in this case are $\left\lceil N h(k/N)\right\rceil$, where $h(\cdot)$ is the binary entropy function. Note that $\log\binom{N}{k} \leq N h(k/N)$ (which can be proved using Sterling's approximation) and so less randomness is needed by using the previous picking method. However, the difference between these methods is negligible for large $N$, which is a consequence of the method of types described in the next section.

Now we have shown an upper bound to the max-entropy, Eq.~\ref{eq:Hmax_pe}, which can be used to show that the QKD protocol is correct (see Section~\ref{sec:error_correction2}). What still remains is to lower bound the min-entropy in order to show that the protocol is secret (see Section~\ref{sec:privacy_amplification2}). The bound on the min-entropy is more dependent on the type of protocol than the bound on the max-entropy. Therefore, we discuss how this bound can be done in various scenarios in Section~\ref{sec:methods}. One of these methods (see Section~\ref{sec:device_dependent}) will relate the max-entropy to the min-entropy to show that the protocol is secret.

\subsubsection{Infinite-Key Parameter Estimation} \label{sec:ik_pe}

The finite-key parameter estimation estimation inequality (Eq.~\ref{eq:Hmax_pe}) can be taken in the limit of an infinite number of signals sent between Alice and Bob. In this limit, we can use the Quantum Asymptotic Equipartition Property (QAEP) (Theorem~\ref{thm:qaep}) to transform Eq.~\ref{eq:Hmax_pe} into Eq.~\ref{eq:H_pe}. However, we can also prove this result directly, without the need of the max-entropy or the QAEP. We include this proof in this section for completeness.

In the limit as the key has infinite length, the max-entropy approaches the von Neumann entropy, due to the QAEP. So in this case we only need to show an upper bound on $H(K_A|K_B)$. Since $K_A$ and $K_B$ are strings, $H(K_A|K_B)$ is the Shannon entropy. This entropy can be bounded by using the method of types \cite{csiszar98}. The method of types puts an upper bound on the entropy of $H(K_A|K_B)$ by the binary entropy function (Defn.~\ref{defn:binary_entropy}). 

\begin{lemma}[Error rate and entropy] \label{lemma:mot} Given two classical strings $K_A$ and $K_B$ then
\begin{equation}\label{eq:H_pe}
h(q) \geq H(K_A|K_B),
\end{equation}
where $q$ is the error rate between $K_A$ and $K_B$ in the limit as the size, $n$, of the strings goes to infinity. The error rate is defined as
\begin{equation}
q:=\lim_{n\to\infty}q_n := \lim_{n\to\infty}\frac{|K_A^n\oplus K_B^n|}{n},
\end{equation}
where $K_A^n$ and $K_B^n$ are the first $n$ bits of $K_A$ and $K_B$ respectively, and $|\cdot |$ denotes the Hamming weight (see Footnote~\ref{footnote:hamming}).
\end{lemma}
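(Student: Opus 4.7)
The plan is to reduce the conditional entropy of $K_A$ given $K_B$ to an entropy of the error pattern, and then bound the latter by a combinatorial counting argument (the method of types), exactly in parallel with the finite-key bound Eq.~\ref{eq:Hmax_pe} but now using the von Neumann/Shannon entropy in place of the smooth max-entropy. As in the finite-key statement, $H(K_A|K_B)$ here denotes the per-signal conditional Shannon entropy in the asymptotic regime (i.e.~$\lim_{n\to\infty}\frac{1}{n}H(K_A^n|K_B^n)$), so the inequality to establish is really a rate inequality.

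First I would introduce the error string $E^n := K_A^n \oplus K_B^n$, which by construction has Hamming weight $|E^n| = n q_n$. Since $K_A^n$ is determined by $(K_B^n, E^n)$ via $K_A^n = K_B^n \oplus E^n$, an application of the chain rule gives
\begin{equation}
H(K_A^n|K_B^n) \;\leq\; H(K_A^n, E^n|K_B^n) \;=\; H(E^n|K_B^n) + \underbrace{H(K_A^n|K_B^n, E^n)}_{=0} \;\leq\; H(E^n).
\end{equation}
Thus it suffices to bound $H(E^n)$.

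Next I would bound $H(E^n)$ by the logarithm of the size of its effective support. Since the convergence $q_n \to q$ is assumed, for any $\delta > 0$ and all sufficiently large $n$ the distribution of $E^n$ is (up to an arbitrarily small probability of an atypical event) supported on binary strings of length $n$ with Hamming weight at most $\lfloor n(q+\delta) \rfloor$. Using that a random variable supported on a set of cardinality $M$ has Shannon entropy at most $\log M$, we obtain
\begin{equation}
H(E^n) \;\leq\; \log \sum_{w=0}^{\lfloor n(q+\delta)\rfloor} \binom{n}{w} \;\leq\; n\, h(q+\delta),
\end{equation}
where the last step is the same combinatorial inequality from Theorem~1.4.5 of \cite{lint99} used in the finite-key proof (Eq.~\ref{eq:Hmax_bound2}). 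Dividing by $n$, taking $n \to \infty$, and then letting $\delta \to 0$ using the continuity of the binary entropy function $h$ yields $h(q) \geq H(K_A|K_B)$.

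The main obstacle is the second step: bounding $H(E^n)$ by $\log$ of the type-class size requires ruling out distributions of $E^n$ whose Hamming weight realizations are heavily spread, so some form of concentration (asymptotic typicality) is needed for the non-i.i.d.~case. The cleanest way is to invoke either the method of types (Csisz\'ar~\cite{csiszar98}, already cited) or simply to derive it as the infinite-key limit of Eq.~\ref{eq:Hmax_pe} via the Quantum Asymptotic Equipartition Property (Theorem~\ref{thm:qaep}), which reduces $H_{\max}^{\varepsilon}/n \to H$ and $\gamma \to 0$ in one stroke. The direct proof above is preferable as a self-contained classical argument; the QAEP route is a convenient shortcut and is what the preceding paragraph in the text already anticipates.
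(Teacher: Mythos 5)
Your proposal is correct and follows essentially the same route as the paper's proof: both reduce $H(K_A|K_B)$ to the entropy of the error string $K_A\oplus K_B$ (your chain-rule step is the same computation as the paper's $H(K_A\oplus K_B)\geq H(K_A\oplus K_B|K_B)=H(K_A|K_B)$) and then bound that entropy by the logarithm of the number of strings of the observed weight via the method of types. The only cosmetic difference is that you bound the cumulative count $\sum_{w\leq n(q+\delta)}\binom{n}{w}$ with a $\delta$-slack and let $\delta\to 0$, whereas the paper bounds the single type class $|T^n_{q_n}|$ directly and takes $n\to\infty$.
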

\begin{proof}
First, we prove that $H(K_A|K_B) \leq H(K_A\oplus K_B)$ from the definition of the conditional entropy:
\begin{align}
H(K_A\oplus K_B) &\geq H(K_A\oplus K_B|K_B)\\
&=\sum_{k_B} p(k_B) H(K_A\oplus k_B|K_B=k_B) \\
&= \sum_{k_B} p(k_B) H(K_A|K_B=k_b) \\
&= H(K_A|K_B), \label{eq:plus_cond}
\end{align}
where the first line comes from the data-processing inequality, and the third line comes from the fact that $K_A\oplus k_B$ has the same uncertainty as $K_A$ if $k_B$ is known.

Next, the method of types \cite{csiszar98} gives the following upper and lower bounds to the number of strings of length $n$ with error rate $q_n$, denoted as $T_{q_n}^n$:
\begin{equation}\label{eq:tbound}
\frac{2^{nh(q_{n})}}{n+1}\leq |T^{n}_{q_{n}}|\leq 2^{nh(q_{n})}.
\end{equation}
Taking the $\log\equiv\log_2$ of both sides and dividing by $n$, then taking the limit as $n\to\infty$ for the LHS gives:
\begin{align}
h(q) &= \lim_{n\to\infty} h(q_{n}) \leq  \lim_{n\to\infty} \left( \frac{\log(n+1)}{n} + \frac{\log|T^{n}_{q_{n}}|}{n} \right) \\
&= \lim_{n\to\infty}\frac{\log|T^{n}_{q_{n}}|}{n}.
\end{align}
For the RHS we get:
\begin{equation}
\lim_{n\to\infty} \frac{\log|T^{n}_{q_{n}}|}{n} \leq h(q).
\end{equation}
Combining the two bounds, we have
\begin{equation}
h(q) = \lim_{n\to\infty} \frac{\log|T^{n}_{q_{n}}|}{n}.
\end{equation}
Note that a uniform distribution $U$ over a set with $n$ elements has entropy
\begin{equation}
H(U) = -\sum_i \frac{1}{n}\log\frac{1}{n} = -\frac{n}{n}\log\frac{1}{n}=\log n.
\end{equation}
Now recall that $|T^{n}_{q_{n}}|$ is the size of the set of the number of strings of length $n$ with error rate $q_n$. This means that $\log |T^n_{q_n}|=H(U_{q_n}^n)$, where $H(U_{q_n}^n)$ is the entropy of a uniform distribution on the support over all strings that have length $n$ and error rate $q_n$. Therefore, we have that $H(K_A^n\oplus K_B^n) \leq H(U_{q_n}^n)$, since the maximum entropy occurs for a uniform distribution. Dividing this inequality by $n$ and taking the limit as $n\to\infty$, using $\log |T^n_{q_n}|=H(U_{q_n}^n)$ and Eq.~\ref{eq:plus_cond} gives the result:
\begin{equation}
H(K_A|K_B) \leq \lim_{n\to\infty}\frac{H(K_A^n\oplus K_B^n)}{n} \leq \lim_{n\to\infty} \frac{\log |T^n_{q_n}|}{n} = h(q),
\end{equation}
where we define the entropy in the asymptotic limit as $H(K_A \oplus K_B) := \lim_{n\to\infty}1/n \; H(K_A^n\oplus K_B^n)$.
\end{proof}

Now all that is left is to estimate the error rate $q$. This estimation can be done perfectly in the infinite-key limit, since Alice can tell Bob a small fraction of her infinitely-long string, which will also be infinitely-long. Bob then checks to see what their error rate is. Since their keys are infinitely long, they can get a perfect estimate on their error rate from Serfling's inequality (Lemma~\ref{lemma:serfling}). Alice and Bob can also estimate any other statistical quantity of their strings in this scenario since they have infinitely longs strings.

\subsection{Tuning Parameters}\label{sec:tuning}

In this chapter so far we have defined security and used the classical post-processing steps to reduce the problem of proving security via the trace distance between the states in the ideal protocol and the real protocol to a lower bound on the min-entropy of Alice's string conditioned on Eve's state. We have also found we can upper bound the max-entropy of Alice's string conditioned on Bob's string using the number of errors of a random subset of their strings, which can be used for information reconciliation and can also put a bound on the min-entropy (see Section~\ref{sec:device_dependent} below). In the infinite-key limit these entropies are the von Neumann entropy of Alice's string conditioned on Eve's state and the Shannon entropy of Alice's string conditioned on Bob's string respectively. For each of the post-processing steps there are several parameters that can be varied.

In privacy amplification using the leftover hashing lemma there is the size of the string output from the hash function $\ell$ and the failure probability $\varepsilon_{\text{pa}}$. In information reconciliation there is the failure probability of correcting the errors $\varepsilon_{\text{ir}}$. In parameter estimation there is the size of the sample $k$, the number of bits of Alice's and Bob's strings $N$, and the parameter $\gamma$. Depending on which family of hash functions are used; the explicit protocols used for privacy amplification and information reconciliation; and the parameters in parameter estimation, different bounds can be achieved for the security of the protocol.

One of the challenges of proving security for a QKD protocol is to analyze exactly what the bound is for the security and robustness. Since these bounds correspond to the failure probably of the protocol to be secure and robust, it is important to make sure that these are small enough. Typically, these should be small enough to be comparable to the failure probability of the devices used in the protocol, for example, of the order $10^{-20}$ \cite{renner12}. Other security proofs use less stringent security parameters, such as $10^{-10}$ or $10^{-14}$ \cite{tomamichel12a}.\footnote{For comparison, the probability that a person is struck by lightning is of the order of $10^{-6}$ \cite{BBC}, and the probability of winning the top prize of the EuroMillions lottery is of the order of $10^{-9}$ \cite{euromillions}.}

If the security parameter scales exponentially in terms of the number of signals sent (i.e.~it is of the form $2^{-cn}$ for a constant $c$) then numbers of the order of $10^{-6}-10^{-14}$ can be achieved. This scaling makes QKD efficiently scalable, so in order to increase the security parameter by an order of magnitude it only requires a linear increase in the number of signals sent.

In addition to tuning the security parameters the error threshold must also be decided. Recall that if Alice and Bob see an error ratio or error rate beyond a certain threshold they should abort the protocol. This value is calculated as the highest error rate such that there is still a positive lower bound to the number of bits of key that can be extracted using privacy amplification. The calculation of the error threshold is dependent on the particular protocol and its security proof.

\section{Security Proof Methods} \label{sec:methods}

There are many different ways to prove secrecy in QKD. While methods started as specific techniques that were restricted to specific protocols, more general techniques exist today. However, the various techniques of proving secrecy in QKD are still highly dependant on the structure of the protocol and what kind of assumptions are made. The resulting security is then dependent on these assumptions. That is, if an experimentalist would like to use a security proof for a given experimental setup, they should be able to justify the assumptions that are made in the security proof. If they cannot be justified, then it leaves a security loophole: an attacker may exploit the devices or sub-protocols that do not behave according to the assumption made and break the security of the protocol. These kinds of attacks are called \emph{side-channel attacks}. We will examine these in Chapter~\ref{chap:assumptions}.

Therefore, it is important to keep in mind that security is proved under certain assumptions. These assumptions can be grouped into what we call a \emph{model} for the protocol. Many of the techniques for proving security apply to various models and so we list various classes that help identify which techniques apply to which models (Section~\ref{sec:classes}). Note that almost all of the the classes of protocols listed below can use the classical post-processing steps outlined above in order to prove security because the classical post-processing usually does not require any information about where the classical data comes from.

When security proofs are presented in the literature, often there is a plot of a lower bound on the key rate that accompanies the proof. The key rate is the ratio of the number of bits of secure key that are extracted per signal sent. Plots are usually of the log of the key rate versus the error rate since the log of the key rate typically follows a linear dependence followed by an exponential drop off as the error rate increases.

In the finite-key regime the number of bits of secure key is plotted against the number of signals sent with a fixed error rate instead. This key rate asymptotically approaches the infinite-key regime's key rate as the number of signals becomes very large.

The lower bound on the key rate is a measure of how good a protocol is compared to others and ideally this bound is made as high as possible. The maximum for discrete protocols is upper bounded by the maximum amount of information that can be measured from the sent quantum states, called the Holevo bound \cite{holevo73}. There have also been investigations into the upper bounds of various key rates by analyzing particular attacks on protocols that Eve could do and plotting the resulting key rate as a function of the error ratio due to the attack. For example, there are upper bounds to the DPS \cite{gomez-sousa09} and COW protocols \cite{branciard08a}, as well as BB84 with different kinds of assumptions \cite{moroder06,moroder06a,curty09a}.

Assumptions are also important for the interpretation of the upper and lower bounds on key rates. While comparing different security proofs it can be misleading to only compare their rates, as there may be a tradeoff between how many assumptions are made and the key rate. If many assumptions are made, then the key rate may be high but if less assumptions are made, the key rate may be lower.

In order to clarify which assumptions are being made, we first list various properties of protocols, which we call protocol classes, in order to distinguish which proof techniques apply to which scenarios.

\subsection{QKD Protocol Classes} \label{sec:classes}

One model class is whether security is proven in the \emph{device-dependent} scenario or the \emph{device-independent} scenario. The device-dependent scenario assumes that devices are characterized. For example, a measurement device may be described by a known set of POVM elements, or a source may output states of a particular form. On the other hand, the device-independent scenario does not make assumptions about the structure of the measurement devices. There is even another regime in-between these two in which some devices are characterized and some are not characterized. We call this scenario the \emph{partially-device-independent} scenario.

Another class distinction is whether the protocol is run to produce an \emph{infinite key} or a \emph{finite key}. Sometimes a QKD protocol may be considered in the asymptotic case, where the protocol is run for an infinite time in order to produce an infinitely-long key. While this is not a practical assumption, it is helpful to consider it for several reasons. First, the asymptotic scenario usually simplifies the analysis, which makes it easier to show a protocol is at least secure in principle. Second, it can be helpful to compare the asymptotic behaviour of various protocols to one another to see which is most efficient in the error rate. However, for a protocol to be secure for practical purposes it is important to consider the finite-key regime.

Yet another distinction are the kinds of states which are used in the protocol, such as qubits, distributed phases, and continuous variables. The first two are described in finite-dimensional Hilbert spaces, while the third uses infinite-dimensional Hilbert spaces. For distributed phase protocols, a large global state that cannot be decomposed into qubits is sent from Alice to Bob. For example, information that Alice is trying to send to Bob can be encoded in the relative phase between a sequence of pulses. Continuous-variable protocols use squeezed or coherent states of light. Note that this distinction will be used to classify how the protocol, in principle, should be implemented and not whether the states are actually assumed to be implemented as intended. This assumption will be further discussed in Chapter~\ref{chap:assumptions}.

QKD protocols can also be broken down into protocols with a basis choice and those without one. A basis choice refers to whether the measurements and prepared states are decomposed into different bases or not. For example, a measurement device may not be passive, but it requires a random input to pick a basis for each measurement it performs (i.e.~it is active).

Protocols may have one of two structures: \emph{entanglement based} or \emph{prepare and measure} (P\&M). Entanglement based protocols involve the preparation of entangled states usually by an untrusted source, such as Eve, and Alice and Bob both do measurements on that state. A P\&M protocol is one where states are prepared by Alice, she sends them through an insecure quantum channel, and the state is measured by Bob. There are other protocols that do not follow this structure, though we do not consider them here. See Section~\ref{sec:two_protocols} for two examples.

Finally, Eve may attack the protocol either individually, collectively, or coherently (see Section~\ref{sec:eve_attack}).

In summary there are seven classes we consider: the device class (dependent, independent, or partially independent), the key class (infinite or finite), the state class (qubits (or another finite-dimensional Hilbert space), distributed phase, and continuous variable), the basis class (basis choice or no basis choice), the measurement class (active or passive), the type class (entanglement based or P\&M), and the attack class (individual, collective, or coherent). Note that three of the classes (device, key, and attack) are dependent on the assumptions made, while the other four (state, basis, measurement, and type) refer to a protocol's structure.

We now divide the proof methods into the device-dependent and device-independent scenarios. The partially device-dependent scenario will be discussed with the device-dependent scenario.

\subsection{The Device-Dependent Scenario} \label{sec:device_dependent}

There are many different techniques used to prove security. Some only apply to a specific protocol, while other techniques are more generic. The first security proofs of QKD proved that the accessible information between Alice's key and Eve's information was small (see Section~\ref{sec:historical}). However, since this is not a definition that is composable, we will only glance over the historical techniques that have been used to prove this kind of security.

\subsubsection{Historical Methods}

Many of the first proofs of QKD, which were for the BB84 protocol, exploited the specific structure of the states used in the protocol \cite{lo99,shor00,mayers01}. The idea behind the proof of \cite{lo99} was to use quantum error correcting codes on the states sent from Alice to Bob. This proof was simplified in \cite{shor00} to show that the quantum error correcting code does not need to be implemented, since the error correcting code commutes with Bob's measurement. Instead, he can use a classical error correcting code after his measurement. The proof of \cite{mayers01} is quite involved, so we omit a discussion of its method here.

These early proofs assumed the exact structure of the states and measurements performed (or quantum error correction, in the case of \cite{lo99}). With more recent techniques, we can prove universally composable security and also relax the kinds of strict assumptions that were made in these early proofs.

\subsubsection{Current Methods} \label{sec:current_methods}

A more recent proof technique is due to Devetak and Winter \cite{devetak05}. This proof technique applies to the infinite-key regime and for the case where Eve is restricted to collective attacks. The Devetak-Winter technique gives an explicit expression for a lower bound on the key rate, $r$, given outcomes from Alice's and Bob's raw keys $K_A$ and $K_B$, and Eve's system before measuring, $E$. The bound on the rate is usually written as
\begin{equation} \label{eq:DW1}
r \geq I(K_A:K_B) - \chi(K_A:E),
\end{equation}
where $\chi(K_A:E):=H(E) - \sum_{k_A}p(k_A)H(E|K_A=k_A)$ (with $p(k_A)$ is the probability of Alice getting key $k_A$) is the Holevo quantity. The Holevo quantity is really just the mutual information of the CQ state shared between Alice and Eve, since $\sum_{k_A}p(k_A)H(E|K_A=k_A)=H(E|K_A)$ and $H(E)-H(E|K_A) = I(K_A:E)$.

One way to prove security has been to exploit the explicit form of the protocol. For example, the entropy involving Eve's system in the Holevo quantity can be reduced to quantities that only contain Alice and Bob's quantum states or their measurement outcomes.

If the state shared between Alice, Bob, and Eve is pure, then Eve has more power than if their state was mixed. This fact is due to the data-processing inequality. Since the partial trace is a CPTP map, Eve has more information if her system is a purification of Alice's and Bob's systems instead of an extension of their state that is not pure (see \cite{tomamichelthesis} for the formal definition of an extension).

Therefore, without loss of generality we can say that the shared state before measuring is $\rho_{ABE}=\kb{\Psi}{\Psi}$, which implies that $H(AB)=H(E)$ (see Section~\ref{sec:iid_entropy}). The second term in $\chi$ can be estimated in a similar way if Alice's measurement is a rank one POVM. If this is the case, then the state between Bob and Eve conditioned on Alice's measurement outcome but before Bob and Eve measure is
\begin{equation}
\rho_{BE}^{k_A} = \frac{1}{\Pr[K_A]} \Tr_{A}(F_{A}^{k_A}\kb{\Psi}{\Psi}),
\end{equation}
which is pure. To see that this state is pure, first note that if a normalized state, $\sigma=\kb{\phi}{\phi}$ is pure then $\Tr(\sigma^2)=|\bk{\phi}{\phi}|^2=1$. Since $F_A^{k_A}$ is rank one, we can write it as $\kb{\phi^{k_A}}{\phi^{k_A}}$. Using the cyclicity of the trace \cite{nielsen00}, we get
\begin{align}
\Tr((\rho_{BE}^{k_A})^2) &= \frac{1}{\Pr[K_A]^2} \Tr \left(\Tr_{A} \left(\kb{\phi^{k_A}}{\phi^{k_A}}\kb{\Psi}{\Psi}\right)^2\right) \\
&= \frac{1}{\Pr[K_A]^2} \Tr \left((\bk{\Psi}{\phi^{k_A}}\bk{\phi^{k_A}}{\Psi})^2\right) \\
&= \frac{1}{\Pr[K_A]^2} (\bk{\Psi}{\phi^{k_A}}\bk{\phi^{k_A}}{\Psi})^2 = 1.
\end{align}
Since the state is pure, we can use the same trick as with the first term of $\chi$ to get $H(E|K_A)=H(B|K_A).$ Now the bound on the key rate can be written entirely with entropies involving Alice's and Bob's systems.

Another way to use the Devetak-Winter rate, Eq.~\ref{eq:DW1}, is to not write it in term of a difference of mutual informations, but instead write it as
\begin{equation}
r \geq H(K_A|E) -H(K_A|K_B).
\end{equation}
In this form, the bound on the rate has an intuitive interpretation: the amount of key Alice and Bob can get is just the difference between the amount of uncertainty that Eve has about Alice's key and the amount of uncertainty Bob has about Alice's key. If Eve has more uncertainty than Bob then the rate may be positive, but if Eve has more information than Bob then the rate cannot be positive.

Using the method of types (see Section~\ref{sec:ik_pe}) we can upper bound $H(K_A|K_B)$ using the binary entropy function of the error rate, $h(q)$.

Now we need to lower bound $H(K_A|E)$, which can be accomplished in a number of ways. If the state structure is assumed (e.g.~if qubits are assumed to be used) then the symmetry in the given protocol can be exploited to bound $H(K_A|E)$. See \cite{ferenczi13,rennerphd} for a detailed description of how symmetry can be used to prove security. If the dimensions of the states are assumed to be low then a brute-force search could be done through the Hilbert space to see which state gives Eve the most information that is compatible with a given error rate.

If there is no assumption made about the structure of the states used in the protocol, then there is another technique: the uncertainty relation for entropies (Theorem~\ref{thm:ur}). This uncertainty relation applies to the min- and max-entropy and therefore is relevant for the finite-key scenario. Using the QAEP this uncertainty relation can be used for the infinite-key scenario as well \cite{berta09}. This uncertainty relation is restricted to the case of entanglement-based protocols with two basis choices where one basis is used for the key, $X$, and one is used for parameter estimation, $Z$. This uncertainty relation puts a lower bound on the min-entropy of Alice's string conditioned on Eve's state:
\begin{equation}
H_{\min}^{\epsilon}(K_A^X|E) \geq \log \frac{n}{c} - H_{\max}^{\epsilon}(K_A^Z|B),
\end{equation}
where $c=\max_{x,z}\|\sqrt{F_{x}}\sqrt{G_{z}}\|_{\infty}^2$ is the \emph{overlap} between two measurements $F$ and $G$ that Alice could perform on her system, $n$ is the number of signals sent and measured by Bob, $B$ is Bob's system before he measures, $E$ is Eve's system, and $K_A^X$ and $K_A^Z$ are Alice's outcomes to these measurements. The lower bound can be simplified with the data-processing inequality by using the fact that Bob's measurement is in the same basis as Alice's: $H_{\max}^{\epsilon}(K_A^Z|B) \leq H_{\max}^{\epsilon}(K_A^Z|K_B^Z)$. Then Alice and Bob need to estimate this quantity in parameter estimation.

In order to use this uncertainty relation they need to have some assumptions about the measurements used in the protocol, namely that the overlap is known and each measurement is done independently (see Section~\ref{sec:measurements}).

To see how the uncertainty relation can be used to prove security, it is useful to consider two thought experiments (sometimes also called by the German term \emph{gedankenexperiment}). The actual experiment has one basis chosen with probability $p_x$ and the other with probability $p_z=1-p_x$. The thought experiments are the same as the actual protocol, but while choosing the bases in the same way, it turns out that all of the measurements happen to be in the $Z$ basis or all in the $X$ basis. We call these thought experiments the $Z$-basis thought experiment and the $X$-basis thought experiment respectively. Let Alice's and Bob's strings from the $Z$-basis thought experiment be $K_A^Z$ and $K_B^Z$ respectively, while in the $X$-basis thought experiment they are $K_A^X$ and $K_B^X$.

Recall that in parameter estimation Alice and Bob will communicate a subset of their strings (denoted with the size of this subset, $k$), from which they can estimate the max-entropy of their complete strings (denoted with $N=n+k$, where $n$ is the size of the string that is not communicated). Alice and Bob can estimate the max-entropy of the $Z$-basis thought experiment, $H_{\max}^{\epsilon}(K_A^Z|K_B^Z)$, using their communication of the subset $k$ of their strings from the actual experiment (as in Section~\ref{sec:parameter_estimation2}) since these signals were measured in the same basis. Then the uncertainty relation using this max-entropy puts a bound on $H_{\min}^{\epsilon}(K_A^X|E)$ for the $X$-basis thought experiment.

In the actual experiment Alice has used a fraction $\nu=k/N$ of her string for parameter estimation so she really wants a lower bound on $H_{\min}^{\epsilon}((K_A^X)_n |E)$ from the actual experiment for the $n$ bits she has kept to construct her key. There is a generalization of the data processing inequality that relates this min-entropy to the min-entropy of the second thought experiment (Theorem~5.7 in \cite{tomamichelthesis}) that gives us
\begin{equation}
H_{\min}^{\epsilon}((K_A^X)_n |E) \geq H_{\min}^{\epsilon}(K_A^X|E).
\end{equation}
This means that in the actual protocol $H_{\min}^{\epsilon}((K_A^X)_n|E)$ is lower bounded.

Note that Eve's system also contains the classical information that is communicated through the authenticated classical channel, which also needs to be taken into account in the security proof (see \cite{tomamichel12a} for an example).

In the infinite-key case Alice and Bob can get perfect statistics about their strings, and can therefore estimate $H(K_A^Z|K_B^Z)$ perfectly using the method of types (see Lemma~\ref{lemma:mot}).

The uncertainty relation has been used for security proofs of the BB84 protocol and two two-way protocols (see Section~\ref{sec:two_protocols}) \cite{tomamichel12a,beaudry13}. The uncertainty relation also has a continuous-variable version that can be used to prove security for CV QKD protocols \cite{furrer11,furrer12,berta13,furrer14a,furrer14b}. In addition, the uncertainty relation can be applied to P\&M protocols as well as entanglement-based protocols by showing an equivalence between them (see Section~\ref{sec:connection} and Section~\ref{sec:two_protocols}).

Most other techniques used to prove security of QKD to date exploit the structure of the states and/or measurements used in the protocol. As examples of security proofs that use these descriptions, the proofs of the B92 protocol \cite{tamaki03,tamaki04,koashi04,tamaki09}, many of the early proofs of the BB84 protocol \cite{lo99,mayers96,mayers01,shor00,koashi03,gottesman04,renner05,renner05a,kraus05}, and the single-photon security proofs of the DPS and COW protocol \cite{wen09,walenta13}. This assumption about the state structure makes it difficult to discuss a general strategy and so we omit the discussion of these kinds of techniques.

\subsection{Reductions} \label{sec:reductions}

Reductions in QKD protocols simplify the problem of proving security against any possible attack by an eavesdropper (i.e.~coherent attacks) to a reduced class of attacks, e.g.~collective attacks. These reductions require assumptions about the structure of the protocol.

There are two known reductions that reduce coherent attacks to collective attacks: the exponential de Finetti theorem of Renner \cite{renner07,rennerphd} and the post-selection technique \cite{christandl09a,renner10}. However, the exponential de Finetti theorem is less efficient than the post-selection technique except for the infinite-key regime, where they produce the same results. Therefore, we will focus primarily on the post-selection technique.

These reductions apply to entanglement-based protocols. They assume that the quantum states in the protocol act on a fixed Hilbert space, $\hilbert_Q$, and that the protocol is permutation invariant.\footnote{While the quantum \stage of the protocol and parameter estimation need to be permutation invariant to use these reductions, we will show that information reconciliation and privacy amplification do not need to be permutation invariant.} The first assumption means that each signal sent from Alice to Bob acts on $\hilbert_Q$ so that the total Hilbert space for the whole run of the protocol with $n$ signals is $\hilbert_Q^{\otimes n}$. This means that we are also assuming that Eve is restricted to sending Alice and Bob joint states in $\hilbert_Q$ for each signal. The second assumption, that the protocol is permutation invariant, means that for any permutation $\Pi$ of the input states of the protocol there exists a CPTP map $G_{\Pi}$ such that $G_{\Pi}\circ \mathcal{E} \circ \Pi = \mathcal{E}$ for the CPTP map $\mathcal{E}$ that represents the QKD protocol. A permutation on $\hilbert_Q^{\otimes n}$ is defined by its action on pure tensor product states:
\begin{equation}
\Pi\; \ket{\phi_1}\otimes\ket{\phi_2}\otimes\cdots\otimes\ket{\phi_n} = \ket{\phi_{\Pi^{-1}(1)}}\otimes\ket{\phi_{\Pi^{-1}(2)}}\otimes\cdots\otimes\ket{\phi_{\Pi^{-1}(n)}},
\end{equation}
where $\Pi^{-1}$ is the inverse of the permutation. The map $G_\Pi$ can be thought of as undoing the permutation on the output of the protocol in order to make sure that the outputs of $\mathcal{E}$ and $\mathcal{E}\circ \Pi$ are the same.

The de Finetti theorem \cite{renner07,rennerphd} relates states to approximate de Finetti states. de Finetti states are convex combinations of product states $\sigma_Q^{\otimes n}$ defined as:
\begin{equation}
\int \sigma_Q^{\otimes n} d\sigma_Q,
\end{equation}
where $d\sigma_Q$ is a measure over the set of density operators on $\hilbert_Q$. This measure can be thought of as a probability distribution over quantum states. The de Finetti state can be interpreted as the situation of picking a state according to the measure $d\sigma_Q$ and then the probability of getting a state in an $\varepsilon$-Ball defined by a distance measure between quantum states (Defn.~\ref{defn:ball}) is the same for all such balls with the same radius \cite{renner10}. The norm used to define distance in this case is the Hilbert-Schmidt norm (Defn.~\ref{defn:HSnorm}). We will now focus on the post-selection technique, instead of the de Finetti theorem. To see how the de Finetti theorem can be used for QKD, see \cite{renner07,rennerphd}.

\subsubsection{The Post-Selection Technique}

The post-selection technique is so named because a permutation invariant state can be extracted from a fixed state by post-selecting on a particular measurement \cite{christandl09a}. This situation is used in the proof technique but we do not discuss the proof of the post-selection technique here. In this section we will outline what the post-selection technique is and how it can be used in quantum cryptography.

Note that we can write the security criterion of a QKD protocol as a map acting on the initial shared state between Alice, Bob, and Eve. If we combine Alice's and Bob's systems into $AB=Q$ and have $\mathcal{E}$ and $\mathcal{F}$ be the maps representing the real protocol and the ideal protocol respectively, then the security definition (Defn.~\ref{defn:security}) can be written as
\begin{equation}
\Delta(\mathcal{E},\mathcal{F})_{\rho_{Q^n}} := \| \mathcal{E}\otimes\id(\rho_{Q^nE})-\mathcal{F}\otimes\id (\rho_{Q^nE})\|_{1} \leq \varepsilon,
\end{equation}
where $\rho_{Q^n}$ is the state of the protocol, $n$ is the number of signals sent in the protocol, $\rho_{Q^nE}$ is a purification of $\rho_{Q^n}$, and $E$ is Eve's system before the classical post-processing.

Now we can state the post-selection theorem as it applies to QKD. 

\begin{thm}[Post-selection theorem for QKD, Lemma~4 in \cite{renner10}] \label{thm:post_selection} Let $\mathcal{E}$ and $\mathcal{F}$ be any permutation invariant CPTP maps. Then for any $\rho=\rho_{Q^n}$
\begin{equation} \label{eq:PS}
\Delta(\mathcal{E},\mathcal{F})_{\rho} \leq (n+1)^{d_Q^2-1}\Delta(\mathcal{E},\mathcal{F})_{\tau},
\end{equation}
where $d_Q$ is the dimension of $\hilbert_Q$ and $\tau\equiv\tau_{Q^n}\in S_{=}(\hilbert_Q^{\otimes n})$ is the de Finetti state for $\hilbert_Q^{\otimes n}$.
\end{thm}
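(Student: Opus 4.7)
The plan is to reduce the bound to a post-selection argument: I want to show that any state $\rho_{Q^n E}$ appearing in the definition of $\Delta$ can be obtained by post-selecting a measurement outcome on a coherent version of the de Finetti state $\tau$, with success probability at least $(n+1)^{-(d_Q^2-1)}$. Once this is established, the inequality follows because the channels $\mathcal{E}\otimes\id$ and $\mathcal{F}\otimes\id$ act only on $Q^n$ and therefore commute with the post-selection, while the factor $1/p$ for renormalizing the post-selection provides exactly the claimed overhead.

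First, I would use convexity of the trace norm together with the permutation invariance of $\mathcal{E}$ and $\mathcal{F}$ to restrict attention to pure inputs $\ket{\psi}_{Q^n E}$ whose marginal on $Q^n$ is permutation invariant: symmetrizing $\rho$ cannot decrease $\Delta$ because the correcting maps $G_\Pi$ are CPTP, and pure states are extremal. By the standard purification lemma for symmetric states, I can further assume that $\ket{\psi}$ lies in the symmetric subspace of $(Q \otimes E')^{\otimes n}$ with $E' \cong Q$, by enlarging the purifying system to $E'^{\otimes n}$ if necessary. This is the step where the combined dimension $d_Q^2$ (rather than $d_Q$) enters.

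Next, the key ingredient is the dimension formula
\begin{equation*}
\dim \mathrm{Sym}^n(Q \otimes E') = \binom{n + d_Q^2 - 1}{n} \leq (n+1)^{d_Q^2 - 1},
\end{equation*}
together with the fact that the pure product states $\ket{\phi}^{\otimes n}$, with $\ket{\phi}$ ranging over unit vectors of $Q \otimes E'$, form an overcomplete basis of this subspace. From this overcompleteness one constructs a coherent de Finetti state $\tau_{\mathrm{coh}} = \int \kb{\phi}{\phi}^{\otimes n} \otimes \kb{\phi}{\phi}_{\mathrm{aux}}^{\otimes n}\, d\mu(\phi)$ on $(Q \otimes E')^{\otimes n}$ tensored with an auxiliary register, together with a POVM element $P$ acting only on the auxiliary register, such that
\begin{equation*}
\kb{\psi}{\psi}_{Q^n E'^n} = p^{-1}\, \Tr_{\mathrm{aux}}\!\bigl[(\identity \otimes P)\,\tau_{\mathrm{coh}}\bigr], \qquad p \geq (n+1)^{-(d_Q^2-1)}.
\end{equation*}

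Substituting this decomposition into $\Delta(\mathcal{E},\mathcal{F})_\rho$ and using that $\mathcal{E}\otimes\id$ and $\mathcal{F}\otimes\id$ commute with both the partial trace over the auxiliary register and the multiplication by $\identity\otimes P$, I obtain $\Delta(\mathcal{E},\mathcal{F})_\rho \leq p^{-1}\,\Delta(\mathcal{E},\mathcal{F})_\tau \leq (n+1)^{d_Q^2-1}\,\Delta(\mathcal{E},\mathcal{F})_\tau$; the extra $E'^n$ register is absorbed into the purifying system of $\tau$ appearing in the definition of $\Delta$. The main obstacle is the explicit construction of the POVM $P$ and the lower bound on the success probability $p$: this is the technical heart of the argument and relies on carefully exploiting the resolution of identity on the symmetric subspace by coherent product states, so that the inverse of the dimension of $\mathrm{Sym}^n(Q\otimes E')$ appears as a lower bound rather than something worse.
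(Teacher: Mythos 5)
Your overall strategy follows the structure of the actual proof in \cite{christandl09a,renner10} (note the paper itself only cites this result and never proves it): reduce to a purification lying in the symmetric subspace of $(\hilbert_Q\otimes\hilbert_{Q'})^{\otimes n}$ with $\hilbert_{Q'}\cong\hilbert_Q$, use $\dim\mathrm{Sym}^n\leq(n+1)^{d_Q^2-1}$, recover the input by post-selection from the de Finetti state, and finish by monotonicity of the trace norm. The dimension count and the identification of $d_Q^2$ are correct. However, the step you yourself identify as the technical heart fails as written. Your state $\tau_{\mathrm{coh}}=\int\kb{\phi}{\phi}^{\otimes n}\otimes\kb{\phi}{\phi}^{\otimes n}_{\mathrm{aux}}\,d\mu(\phi)$ is a \emph{separable} extension of $\tau$, not a purification. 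Applying $\Tr_{\mathrm{aux}}[(\identity\otimes P)\,\cdot\,]$ to it yields $\int w(\phi)\,\kb{\phi}{\phi}^{\otimes n}\,d\mu(\phi)$ with scalar weights $w(\phi)=\Tr[P\kb{\phi}{\phi}^{\otimes n}]\geq 0$, i.e.\ always a mixture of product states; a generic pure state $\ket{\psi}$ in the symmetric subspace is entangled across the $n$ copies and cannot be obtained this way for any choice of $P$. The correct route is the operator inequality: by Schur's lemma $\tau=\Pi_{\mathrm{sym}}/\dim\mathrm{Sym}^n$, hence $\kb{\psi}{\psi}\leq\Pi_{\mathrm{sym}}\leq(n+1)^{d_Q^2-1}\tau$, and this dominance lets you write $\kb{\psi}{\psi}=(n+1)^{d_Q^2-1}(\identity\otimes M)\kb{\Psi_\tau}{\Psi_\tau}(\identity\otimes M^{\dag})$ for a genuine purification $\ket{\Psi_\tau}$ of $\tau$ and a contraction $M$ acting on the purifying system. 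With that replacement, the commutation and trace-norm monotonicity argument you give at the end does go through.

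A second, smaller gap is the symmetrization step. Averaging the input over permutations can only \emph{decrease} $\Delta$, by convexity of the trace norm, which is the wrong direction for your reduction. To obtain $\Delta(\mathcal{E},\mathcal{F})_{\rho}\leq\Delta(\mathcal{E},\mathcal{F})_{\tilde\rho}$ with $\tilde\rho_{Q^n}$ permutation invariant one uses the coherent symmetrization $\ket{\tilde\psi}=\tfrac{1}{\sqrt{n!}}\sum_{\pi}(\pi\otimes\identity_E)\ket{\psi}\otimes\ket{\pi}_{E''}$ with an auxiliary permutation register, followed by a pinching on $E''$ and the permutation covariance of $\mathcal{E}-\mathcal{F}$. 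These are both repairable, but as stated the proposal does not yet constitute a proof.
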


This theorem implies that instead of considering general states in the protocol, $\rho$, we can consider the de Finetti state $\tau$. Note the state $\tau$ is a fixed state. Using this theorem adds a factor of $(n+1)^{d_Q^2-1}$ to the security parameter. However, the security parameter is usually exponentially dependent on the number of signals (i.e.~$\varepsilon\sim 2^{-cn}$ for a constant $c$, see Section~\ref{sec:tuning}). This means that the polynomial factor does not change the security by much, since a logarithmic decrease (in the number of signals, $n$) in the final key length during privacy amplification can restore the same level of security as what would be possible without using this technique.

The post-selection theorem can be shown to imply that Eve gets virtually no advantage to attacking permutation invariant protocols using coherent attacks instead of collective attacks. It can be much easier to prove security of a QKD protocol by assuming that an i.i.d.~state is used (of the form $\sigma^{\otimes n}$), which is the case for collective attacks. In particular, security can usually be proved \emph{for all} i.i.d.~states, of which each state is in a fixed Hilbert space, $\hilbert_Q$. This kind of proof implies that any convex combination of i.i.d.~states must also be secure and therefore the de Finetti state must be secure.

Note that for product states, Eve will hold a purification of each subsystem independently. However, the post-selection theorem applies to the purification of the de Finetti state $\tau$, not to the purification of each subsystem independently. In \cite{christandl09a} the authors show that the purifying system of the de Finetti state $\tau$ has a dimension that is polynomial in $n$ (specifically, $(n+1)^{d_Q^2-1}$), which means that by doing polynomially more privacy amplification this extra information may be removed from Eve. Therefore to apply the post-selection technique together with the removal of the information Eve gets from her purification of the de Finetti state, $2(d_Q^2-1)\log(n+1)$ bits need to be removed in privacy amplification.

The post-selection technique can be used in continuous-variable QKD as well \cite{leverrier13}, though an analysis of this application is beyond the scope of this thesis.

\subsubsection{Post-Selection Example}

As an example of an application of the post-selection technique, consider the BB84 protocol in its entanglement-based implementation (see Section~\ref{sec:connection}). First, we decompose the protocol into two parts. The first part of the protocol needs to be permutation invariant, while we show that the second part of the protocol does not necessarily need to be permutation invariant. Consider the quantum \stage, sifting, and parameter estimation together as the first half of the protocol, $\mathcal{E}_1:=\text{PE}\circ\text{Sift}\circ F$, where $F$ is the quantum measurement. Then information reconciliation and privacy amplification will be the sub-protocol $\mathcal{E}_2:=\text{PA}\circ\text{IR}$ that follows $\mathcal{E}_1$.

To show that the BB84 protocol is permutation invariant, first consider $\mathcal{E}_1$. We need to show that there exists a CPTP map $G_\Pi$ such that $G_\Pi\circ\mathcal{E}_1\circ\Pi=\mathcal{E}_1$ for any permutation $\Pi$. Consider $\mathcal{E}_1\circ\Pi$ for a fixed permutation $\Pi$. Assume that Alice and Bob apply the permutation $\Pi$ to their systems and then measure their states in this permuted order. The permutation $\Pi$ will not need to be applied in an implementation of the protocol. We will only assume that Alice and Bob apply this permutation to argue that $\mathcal{E}_1$ is permutation invariant.

If we assume that the measurements on the quantum states of Alice and Bob are memoryless and identical (therefore their POVM elements are of the form $F^{\otimes M}$, where $F$ is a measurement on an individual signal and $M$ is the number of signals sent) then the permutation of the quantum states commutes with their measurements.

The sifting step removes bits from Alice's and Bob's strings where they measured in different bases. The sifting also removes bits of Alice's string where Bob did not get a measurement outcome. The sifting commutes with the permutation since it removes bits independent of their position in Alice's and Bob's strings.

We now know that
\begin{equation}\label{eq:permute}
\text{PE}\circ\text{Sift}\circ F^{\otimes n}\circ \Pi \equiv \text{PE}\circ \Pi\circ \text{Sift}\circ F^{\otimes n},
\end{equation}
for any permutation $\Pi$. We now need to argue that the permutation commutes with parameter estimation.

Note that parameter estimation is just the choice of a random subset of Alice's and Bob's strings that are communicated through the authenticated classical channel and removed from their strings (as well as an estimation procedure based on this communication). This means we can decompose parameter estimation into three parts: a choice of a random subset, the removal of the subset, and the estimation. Formally, we have the decomposition
\begin{equation}
\text{PE} \equiv \text{Estimation} \circ \text{Removal} \circ \text{Subset}.
\end{equation}

The choice of a random subset of Alice's and Bob's strings is equivalent to first applying a random permutation to Alice's and Bob's strings followed by the choice of the first $k$ bits of the string for the sample and then the inverse of the permutation. However, the communication of the positions of Alice's string will be different in $\text{PE}$ compared to $\text{PE}\circ\Pi$ since the positions of the bits are permuted. However, a classical transformation can be applied that undoes the permutation on the positions that are communicated. Formally, if the positions communicated in $\text{PE}$ are elements of a set $\{v_1,v_2,\dots,v_k\}$ then in $\text{PE}\circ\Pi$ the positions communicated are $\{\Pi(v_1),\Pi(v_2),\dots,\Pi(v_k)\}$. By applying the inverse permutation to each position, the original communication of $\text{PE}$ can be recovered. Therefore, the choice of random subset is permutation invariant.

The removal procedure is the removal of the randomly chosen subset from Alice's and Bob's strings, which is accomplished by communication of the subset from Alice to Bob (or vice versa). The bits that are removed are the same whether a permutation would be applied to Alice's and Bob's strings or not. Therefore, the removal procedure is permutation invariant.

The estimation procedure uses the communicated subset to do estimation. The estimation is also independent of the ordering of Alice's and Bob's strings, and therefore is permutation invariant.

In summary, the parameter estimation step is permutation invariant: $G_{\Pi}\circ\text{PE}\circ \Pi=\text{PE}$, where $G_{\Pi}$ is the inverse of the permutation $\Pi$ applied to Alice's and Bob's strings as well as the inverse permutation applied to each position communicated. Combining this fact with Eq.~\ref{eq:permute} means that the first half to the protocol, $\mathcal{E}_1$, is permutation invariant, under the assumption that the measurements are of the form $F^{\otimes M}$ and that parameter estimation chooses a random subset of Alice's and Bob's strings.

The above argument gives some insight as to why a random subset is chosen for parameter estimation; the random subset makes the protocol permutation invariant. If instead of a random subset a fixed subset was chosen, then $\mathcal{E}_1$ would not be permutation invariant.

We now focus on showing under which privacy amplification and information reconciliation protocols the post-selection theorem applies. Assume that information reconciliation and privacy amplification are permutation invariant, which defines a sub-protocol $\mathcal{E}'_2$. If we assume that the output state shared by Alice and Eve after $\mathcal{E}'_2$ is invariant under a permutation of the states input to $\mathcal{E}_1$ then $\mathcal{E}'_2\circ\mathcal{E}_1$ is permutation invariant because the map $G_{\Pi}$ that changes the communicated positions in parameter estimation commutes with $\mathcal{E}'_2$. Then we can apply the post-selection theorem to $\mathcal{E}'_2\circ\mathcal{E}_1$. If the protocol is secure, then Eq.~\ref{eq:PS} holds for the protocol $\mathcal{E}=\mathcal{E}'_2\circ\mathcal{E}_1$.

We want to show that we can replace permutation invariant information reconciliation and privacy amplification with non-permutation invariant information reconciliation and privacy amplification with a small cost to the security parameter of the protocol.

An example of a permutation invariant privacy amplification protocol is the one using hash functions described in Section~\ref{sec:privacy_amplification2} that goes with the leftover hashing lemma, Lemma~\ref{lemma:loh}. Recall that in the privacy amplification procedure a random hash function from a family of hash functions is selected by Alice which is then communicated. Alice and Bob then apply the hash function to their strings. If the family of hash functions $\mathcal{F}$ is taken to be the set of all linear functions from $\{0,1\}^n$ to $\{0,1\}^\ell$, then for every permutation $\Pi$ and string $K_A\in \{0,1\}^n$, there exists a unique pairing of every function $\tilde{f}\in\mathcal{F}$ to a function $f\in\mathcal{F}$ such that $\tilde{f}(K_A) = K'$ and $f(\Pi K_A \Pi) = K'$. Since each function is chosen with equal probability, the state shared by Alice and Eve with a permutation is the same as if a permutation was not applied.

The communication of the hash function in $\text{PA}\circ\Pi$ can be made the same as $\text{PA}$ by relabeling the hash functions. Since a pairing exists between the functions $f$ of $\text{PA}$ and the functions $\tilde{f}$ of $\text{PA}\circ\Pi$, a map can be applied to $\text{PA}\circ\Pi$ that relabels the function $\tilde{f}$ as $f$ if $\tilde{f}$ is communicated by Alice to Bob. After the relabeling of the communication and since the state shared by Alice and Eve is the same if the permutation was applied or not, there exists a permutation invariant privacy amplification protocol that commutes with the communication relabelling of $G_{\Pi}$.

Since we know that the protocol $\mathcal{E}'_2\circ\mathcal{E}_1$ is secure, this implies that the min-entropy must be at least a certain amount, otherwise there is no privacy amplification protocol that could succeed with at least probability $\varepsilon$. Theorem~8.2 in \cite{tomamichelthesis} says that if there is security at least $\varepsilon$ then the min-entropy before privacy amplification should be at least the size of the output string, $\ell'$:
\begin{equation} \label{eq:post_select_min}
\ell' \leq H_{\min}^{\sqrt{2\varepsilon-\varepsilon^2}}(K_A|E)_{\rho}.
\end{equation}
Since this bound is guaranteed, then we can apply another privacy amplification procedure (such as leftover hashing) that is not permutation invariant using the fact that this min-entropy is at least $\ell'$. By applying the leftover hashing lemma (Lemma~\ref{lemma:loh}) using Eq.~\ref{eq:post_select_min} the security statement is now
\begin{equation}\label{eq:post_security1}
D\left( \rho_{K'EF}, \frac{\identity_{K'}}{d_{K'}}\otimes\rho_{EF} \right) \leq \sqrt{2\varepsilon-\varepsilon^2}  + 2^{-\frac{1}{2}(\ell'-\ell)-1},
\end{equation}
where $\ell$ is the size of the output length of the string from the privacy amplification hash function.

A similar argument can be used for information reconciliation as with privacy amplification to show that we do not need a permutation invariant information reconciliation protocol. As in Section~\ref{sec:error_correction2}, there are two possible non-permutation invariant information reconciliation protocols that we can use.

The first of the two information reconciliation protocols must be universal so that it corrects errors for almost all strings that Alice and Bob could have. For example, an error correcting code exists in \cite{renes12} that corrects all errors with probability at least $1-\varepsilon_{\text{c}}$, so the amount of communication necessary for Alice to send Bob to achieve this probability is at least
\begin{equation}
\mathcal{C}^{\varepsilon_{\text{c}}} \geq H_{\max}^{\sqrt{2\varepsilon_{\text{c}}-\varepsilon_{\text{c}}}}(K_A|K_B),
\end{equation}
\cite{renes12} (Theorem 8.1 of \cite{tomamichelthesis}, also see \cite{renner05b}). With this bound on the max-entropy, we can apply another error correcting code instead that is not permutation invariant, such as the same one we have already used. This gives a bound on the amount of communication of
\begin{equation}
\mathcal{C}^{\varepsilon'} \leq \mathcal{C}^{\varepsilon_{\text{c}}} +2\log \frac{1}{\varepsilon_2}+4,
\end{equation}
where $\varepsilon' = \sqrt{2\varepsilon_{\text{c}}-\varepsilon_{\text{c}}} + \varepsilon_2$ is the upper bound on the failure probability of the error correcting code \cite{renes12}. The specific security statement can be calculated using the amount of communication $\mathcal{C}^{\varepsilon'}$ and the probability that the error correction succeeds, $\varepsilon'$, by combining the chain rule Eq.~\ref{eq:classical_chain} and Eq.~\ref{eq:post_security1}:
\begin{equation}
\Delta(\mathcal{E},\mathcal{F})_{\tau} \leq \sqrt{2\varepsilon-\varepsilon^2}  + 2^{-\frac{1}{2}(\ell' - \mathcal{C}^{\varepsilon'}-\ell)-1}+\varepsilon'.
\end{equation}
Using the post-selection theorem, the security parameter for all states $\rho$ is
\begin{equation}
\Delta (\mathcal{E},\mathcal{F})_{\rho} \leq (n+1)^{d_Q^2-1}\left(\sqrt{2\varepsilon-\varepsilon^2}  + 2^{-\frac{1}{2}(\ell' - \mathcal{C}^{\varepsilon'}-\ell)-1}+\varepsilon'\right).
\end{equation}

The second type of non-permutation invariant information reconciliation protocol is an error correcting code followed by a checking procedure as explained in Section~\ref{sec:error_correction2}. The checking procedure guarantees that we have corrected all of the errors with probability $1-\varepsilon_{\text{cor}}$ with $\lceil\log(1/\varepsilon_{\text{cor}})\rceil$ bits of communication and may increase the probability that the protocol aborts. The aborting probability, and hence the robustness, will depend on the particular choice of error correcting code.

The length of the communication in error correction, $\lceil\log(1/\varepsilon_{\text{cor}})\rceil$, should be taken into account in the privacy amplification analysis by using the chain rule Eq.~\ref{eq:classical_chain}. Using Eq.~\ref{eq:post_security1} the security parameter is
\begin{equation}
\Delta(\mathcal{E},\mathcal{F})_{\tau} \leq \sqrt{2\varepsilon-\varepsilon^2}  + 2^{-\frac{1}{2}(\ell' - \lceil\log(1/\varepsilon_{\text{cor}})\rceil -\ell)-1}+\varepsilon_{\text{cor}}.
\end{equation}
Combining this with the post-selection theorem, the security parameter for all states $\rho$ is
\begin{equation}
\Delta (\mathcal{E},\mathcal{F})_{\rho} \leq (n+1)^{d_Q^2-1}\left(\sqrt{2\varepsilon-\varepsilon^2}  + 2^{-\frac{1}{2}(\ell' - \lceil\log(1/\varepsilon_{\text{cor}})\rceil -\ell)-1}+\varepsilon_{\text{cor}}\right).
\end{equation}

For another example of using the post-selection technique to prove security of a QKD protocol, see \cite{sheridan10}.

A further reduction may be applied to a security proof that assumes a product state $\sigma^{\otimes n}$ by using representation theory and symmetries in the protocol. For example, the BB84 protocol is invariant under permutations of the states $\{\ket{0},\ket{1},\ket{+},\ket{-}\}$ to $\{\ket{+},\ket{-},\ket{1},\ket{0}\}$.\footnote{For those familiar with the Bloch sphere, this symmetry is just a rotation by $\pi/2$ in the $X-Z$ plane \cite{nielsen00}.} These kinds of symmetries imply that $\sigma_Q$ should be of a simple form that can either be completely fixed by the parameters in the protocol (such as the error rate) or only depend upon a few free parameters \cite{rennerphd}. If there are free parameters in $\sigma_Q$ then a minimization over the free parameters of the key rate can then be performed.

\subsection{Entanglement-Based and P\&M Connection}\label{sec:connection}

It can be useful to connect a P\&M protocol with an entanglement based one, since some proof techniques require an entanglement based protocol (such as the uncertainty relation in Section~\ref{sec:device_dependent}). The connection works by transforming the P\&M protocol to an entanglement-based protocol that gives more power to Eve.

In Section~\ref{sec:discrete_protocols} it was shown that the BB84 P\&M protocol can be related to the Ekert entanglement-based protocol, but this connection was under the assumption that the protocols were ideal. However, this assumption can be relaxed to the assumption that the preparation of states are qubits (see Section~\ref{sec:sources}). No assumptions need to be made about the measurements or other components in the protocol to make this connection.

Alternatively, Alice can just prepare a bipartite state (which ideally would be maximally entangled) and measure half of it. Depending on her measurement outcome, she will infer which quantum state she is sending to Bob. The protocol is then clearly entanglement-based, except Alice is preparing the state instead of Eve. Some proof techniques that characterize the states or dimensions of the protocol can assume that Alice's prepared state is known, which may aid in proving security. If this assumption is not made then this protocol is more pessimistic if it assumes that the bipartite state is prepared by Eve instead of Alice. If security is proved in the scenario where Eve prepares the state then it implies security for the protocol where Alice prepares the state.

For different protocols it may be necessary to have more assumptions about the P\&M protocol to transform it into an entanglement-based one, though this transformation will depend upon the security proof technique and structure of the protocol. See \cite{beaudry13} for an example.

\subsection{The Device-Independent Scenario} \label{sec:DI}

The proof methods used in the device-independent scenario are different than those used in the device-dependent scenario. These kinds of protocols do not rely on the structure of the states or devices, they just try to establish that Alice and Bob have strong correlations between their states. Proving that strong correlations exist is a more challenging task since there is no symmetry that can be exploited in the protocol's states, sources, or measurements. Intuitively, if these correlations are strong enough, by the monogamy of entanglement Eve cannot have strong correlations with either Alice or Bob. The strength of the correlations is usually measured using the CHSH inequality \cite{clauser69}, though other inequalities have also been considered \cite{H13}. For more information about entangled states and strongly correlated quantum systems, see the recent review \cite{brunner14}.

\subsubsection{The CHSH Inequality} \label{sec:CHSH}

One way of determining if strong correlations are shared between Alice and Bob is to use the Clauser-Horne-Shimony-Holt (CHSH) inequality \cite{clauser69}. This inequality is a particular example of a Bell inequality \cite{bell64}. The CHSH inequality, when violated (i.e.~when the inequality is false), indicates that the bipartite states involved must be correlated in a way that cannot be explained by using what is called a \emph{local hidden variable} theory. This is a theory where there is a variable that describes properties of each particle locally. While classical systems can be described using a local hidden variable theory, there exist quantum states that cannot be described in this way.

Also, there is a maximum violation that the inequality can reach by quantum states. In particular, the higher the violation is, the more correlated the states are. Just like we used an error rate in the device-dependent scenario, we can use an estimate of the amount of violation to quantify how much privacy amplification and information reconciliation is necessary. Since these steps only depend on classical strings, proving security in the device-independent setting can also be reduced to putting bounds on the relevant min- and max-entropies.

The experiment in which the CHSH inequality applies involves two space-like separated measurement devices\footnote{Two devices are space-like separated if they are outside each other's light cones, so that performing a measurement in each device cannot send signals to the other.} with two binary inputs and two binary outputs (see Fig.~\ref{fig:CHSH_experiment}). It has been shown that quantum states can violate the CHSH inequality \cite{freedman72,aspect81a,aspect82,aspect82a,weihs98,tittel98,rowe01a,pomarico11a,stuart12a,giustina13,christensen13}.

The CHSH experiment is usually presented either through expectation values of observables or as a game \cite{cleve04}. While these are both equivalent presentations, they may be helpful to understand how the CHSH inequality works depending whether one approaches the problem from a physics or computer science/mathematical point of view.

\begin{itemize}
\item CHSH: Expectation Values.

Alice and Bob each have a measurement device and are allowed to input bits $x$ and $y$ respectively to get outcomes $a$ and $b$. We do not need to characterize the states that they share and input into the measurement devices, we will only care about the expectation values for the two possible measurements they perform. If we define the observables that Alice and Bob measure as $F_x$ and $F_y$ respectively with eigenvalues $\{1,-1\}$ then we can define the product of the expectation values as
\begin{equation}
E(x,y):=\left<F_x\cdot F_y\right>.
\end{equation}
This notation allows us to state the CHSH inequality as
\begin{equation}\label{eq:CHSH_exp}
\left|E(0,0)+E(0,1)+E(1,0)-E(1,1))\right| \leq 2,
\end{equation}
where the upper bound of $2$ refers to what is possible by local hidden variable theories. The maximum allowable quantum bound is $2\sqrt{2}$ \cite{tsirelson80}.

\begin{figure} \centering
\includegraphics[width=\textwidth]{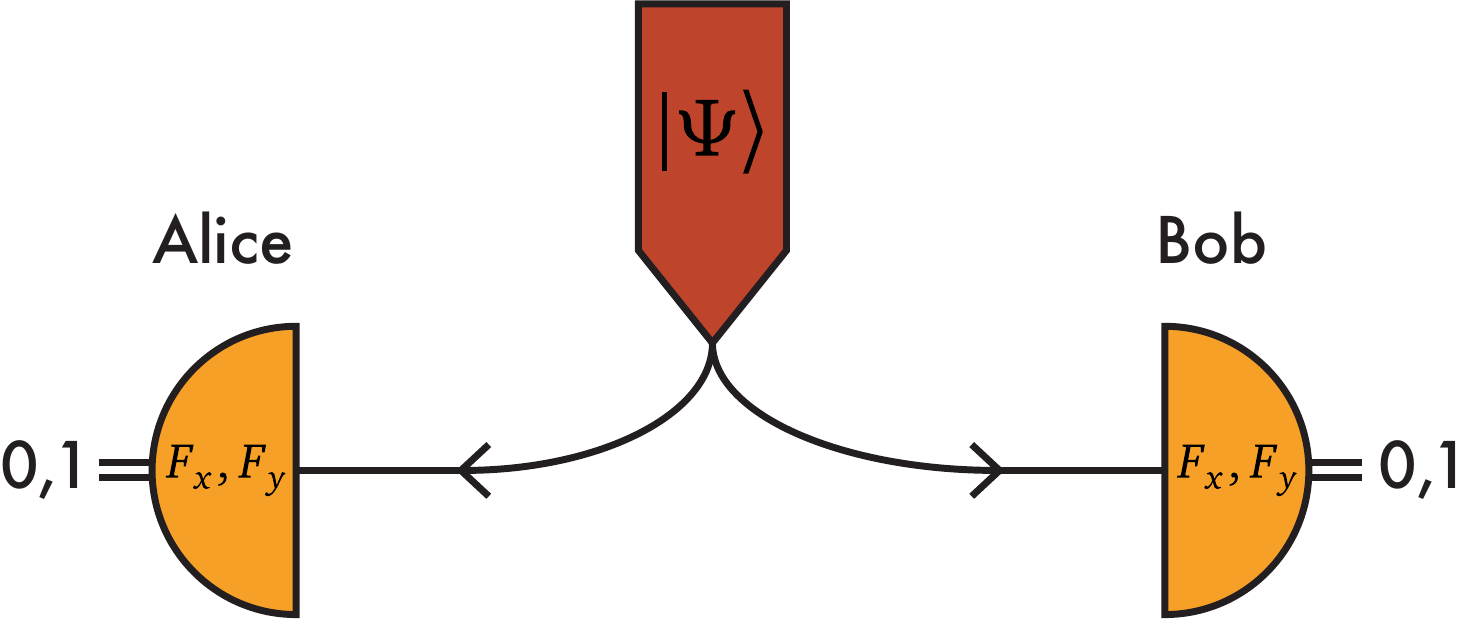}
\caption[The CHSH Experiment]{The CHSH experiment. Alice and Bob measure a bipartite state $\ket{\Psi}$ by choosing the set of POVM elements $\{F_x\}$ or $\{F_y\}$ uniformly at random. Alice and Bob get outcome $0$ or outcome $1$, which they can use to check the CHSH inequality, Eq.~\ref{eq:CHSH_exp}.}
\label{fig:CHSH_experiment}
\end{figure}

\item CHSH: Game.

Alice and Bob each receive uniformly random binary inputs from a referee and have to send binary outputs back to the referee (see Fig.~\ref{fig:CHSH_game}). Alice's input is labelled as $x$ and Bob's input is labelled as $y$, while their outputs are labelled as $a$ and $b$ respectively. Alice and Bob can discuss a strategy before starting the game but then they are separated and they cannot communicate during the game. The goal for Alice and Bob is to have $a\oplus b = x \wedge y$, that is, the binary sum of the inputs should equal the logical AND of their outputs. If their strategy is to share a joint physical state that has correlations that will give rise to the conditional probability distribution $P_{AB|XY}(ab|xy)$ then their probability of winning can be stated as 
\begin{equation}
\Pr [\text{win}] = \sum_{\substack{ xyab \\ a\oplus b = x \wedge y}} P_{XY}(xy)P_{AB|XY}(ab|xy).
\end{equation}
The maximum achievable success probability for this game where Alice and Bob only use classical states is $P_{\text{win}}\leq 3/4=0.75$. It was shown by Tsirelson \cite{tsirelson80} that the maximum success probability where Alice and Bob use quantum states is $\Pr[\text{win}] \leq \cos^2(\pi/8) \approx 0.85$.

\begin{figure} \centering
\includegraphics[width=0.8\textwidth]{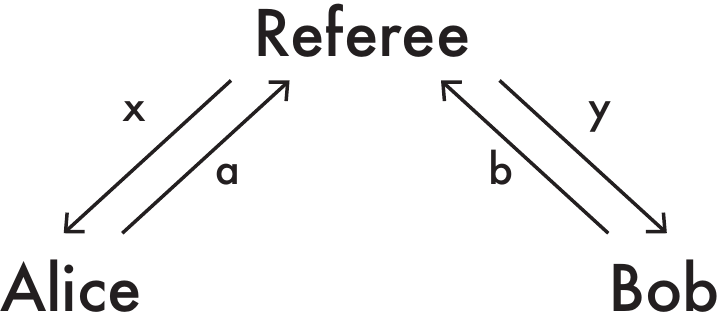}
\caption[The CHSH Game]{The CHSH game. Alice and Bob get two uniformly random bits ($x$ and $y$ respectively) from a referee. Alice and Bob then have to send bits $a$ and $b$ back to the referee such that $a\oplus b = x \wedge y$.}
\label{fig:CHSH_game}
\end{figure}

For further information about the CHSH game and other related games, see the review \cite{buhrman10}.
\end{itemize}

In a device-independent QKD protocol Alice and Bob will have measurement devices that take inputs (which may, for example, ideally pick a basis). Typically Alice and Bob will input uniformly random binary inputs into their measurements. Then Alice and Bob can estimate the number of outputs that satisfy the CHSH condition $a\oplus b = x\wedge y$, called the CHSH violation. They share a subset of their outcomes and can use Serfling's inequality (Lemma~\ref{lemma:serfling}) to bound the total CHSH violation over the remainder of their measurement outcomes.

In order to use the CHSH value to prove security, the following inequality was used in a security proof for a device-independent QKD protocol \cite{vazirani12}. 

\begin{lemma}[CHSH inequality, Eq.~A.10 in \cite{pironio10a}] \label{lemma:CHSH_ineq} Given a conditional probability distribution $q(a,b|x,y)$, a CHSH violation $I$, define $c_{xy}:=\{-1 \text{ if } (x,y)=(1,1),  1 \text{ otherwise}\}$, $d_{ab}:=\{1 \text{ if }  a=b, -1 \text{ if } a\neq b \}$, and the distribution
\begin{equation} \label{eq:max_prob}
q^*(a,b|x,y) = \underset{\overbrace{\begin{array}{c}
\sum_{a,b,x,y}d_{ab}c_{xy}q(a,b|x,y)=I \\
q(a,b|x,y)\geq 0 \\
\sum_{a,b}q(a,b|x,y)=1 \\
\sum_a q(a,b|x,y) = q(b,y) \\
\sum_b q(a,b|x,y) = q(a,x)
\end{array}}}{\max_{a,b}} q(a,b|x,y),
\end{equation}
then the following inequality holds
\begin{equation}\label{eq:CHSH_bound}
q^*(a,b|x,y) \leq \frac{3}{2}-\frac{I}{4}.
\end{equation}
\end{lemma}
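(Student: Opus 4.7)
The plan is to reduce the claim to a bound on the (finite) set of extremal no-signalling boxes, where it can be verified by a direct case check, and then to propagate the bound to arbitrary $q$ by convex combination. The constraints appearing in Eq.~\ref{eq:max_prob}---positivity, per-input normalization, the no-signalling marginals $\sum_a q(a,b|x,y)=q(b,y)$ and $\sum_b q(a,b|x,y)=q(a,x)$, and the CHSH equality $\sum_{a,b,x,y}d_{ab}c_{xy}q(a,b|x,y)=I$---cut out a bounded polytope in the space of conditional distributions. Both the objective $q(a,b|x,y)$ and the CHSH expression are linear in $q$, so it suffices to prove the inequality at the vertices of the no-signalling polytope. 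By the symmetry of the setup under bit-flips on $a,b,x,y$ (which, on $d_{ab}c_{xy}$, at worst flip a global sign of $I$ and permute the extremal boxes), I may restrict attention without loss of generality to bounding the entry $q(0,0|0,0)$.

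Next I invoke the known characterization of the no-signalling polytope in the $2{\times}2{\times}2$ scenario: it has exactly $24$ extremal points, namely the $16$ local deterministic distributions $\pi^{(a(\cdot),b(\cdot))}(a,b|x,y)=\delta_{a,a(x)}\delta_{b,b(y)}$ with $a(\cdot),b(\cdot):\{0,1\}\to\{0,1\}$, and the $8$ PR-type boxes indexed by bits $(\alpha,\beta,\gamma)$, characterized by $\pi^{(\alpha,\beta,\gamma)}(a,b|x,y)=\tfrac{1}{2}$ exactly when $a\oplus b=(x\oplus\alpha)(y\oplus\beta)\oplus\gamma$ and $0$ otherwise. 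Writing any feasible $q$ as $\sum_i\lambda_i\pi_i$ with $\lambda_i\ge 0$ and $\sum_i\lambda_i=1$, one has $q(0,0|0,0)=\sum_i\lambda_i\pi_i(0,0|0,0)$ and $I(q)=\sum_i\lambda_iI(\pi_i)$.

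The third step is to verify $\pi(0,0|0,0)\le \tfrac{3}{2}-I(\pi)/4$ for every extremal $\pi$. For a local deterministic box, $\pi(0,0|0,0)\in\{0,1\}$, and a short computation based on the parity identity $\bigoplus_{x,y}(a(x)\oplus b(y))=0$ forces $I(\pi)\in\{-2,2\}$, so $\tfrac{3}{2}-I(\pi)/4\in\{1,2\}$; the bound holds, with equality for the box $a(x)=b(y)=0$, which attains $I=2$ and $\pi(0,0|0,0)=1$. For a PR-type box, $\pi(0,0|0,0)\in\{0,\tfrac{1}{2}\}$, and enumerating the eight triples $(\alpha,\beta,\gamma)$ shows that $I(\pi)\in\{-4,0,4\}$, giving $\tfrac{3}{2}-I(\pi)/4\in\{\tfrac{1}{2},\tfrac{3}{2},\tfrac{5}{2}\}$; a per-case check (for example, the standard box $\alpha=\beta=\gamma=0$ attains equality with $\pi(0,0|0,0)=\tfrac{1}{2}$ and $I=4$) shows the bound holds in every case. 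Averaging the inequality against the weights $\lambda_i$ then yields
\begin{equation*}
q(0,0|0,0)=\sum_i\lambda_i\pi_i(0,0|0,0)\le \sum_i\lambda_i\!\left(\tfrac{3}{2}-I(\pi_i)/4\right)=\tfrac{3}{2}-I(q)/4,
\end{equation*}
which is Eq.~\ref{eq:CHSH_bound}.

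The main obstacle is the invocation of the vertex enumeration of the no-signalling polytope in the CHSH scenario (Barrett--Linden--Massar--Pironio--Popescu--Roberts): this is the one ingredient that is not elementary, though it is by now completely standard. An alternative strategy would be to prove Eq.~\ref{eq:CHSH_bound} directly via LP duality, by exhibiting explicit nonnegative multipliers of the normalization, no-signalling, and CHSH constraints whose linear combination dominates the coordinate $q(a^{*},b^{*}|x^{*},y^{*})$ with right-hand side $\tfrac{3}{2}-I/4$. This can be made to work but requires a case split according to the sign of $d_{a^{*}b^{*}}c_{x^{*}y^{*}}$ and the construction of dual certificates in each case, and is more tedious than the convex-decomposition route sketched above, which also makes transparent that the bound is tight precisely at the standard PR box---the extreme relevant for the device-independent security argument that follows.
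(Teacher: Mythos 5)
The paper does not actually prove this lemma: it imports it verbatim from \cite{pironio10a} (Eq.~A.10 there) and only discusses its interpretation, so there is no in-text proof to compare against. Your argument is correct and is essentially the standard way such guessing-probability bounds are established in the device-independent literature: the feasible set is a face of the no-signalling polytope, the objective and the CHSH functional are linear, so it suffices to check the $24$ vertices (16 local deterministic boxes with $I\in\{-2,2\}$, 8 PR-type boxes with $I\in\{-4,0,4\}$), and the inequality $\pi(a,b|x,y)\le \tfrac{3}{2}-I(\pi)/4$ indeed holds at each with the stated equality cases; averaging over the convex decomposition finishes it. Two small remarks. First, your WLOG reduction to the entry $q(0,0|0,0)$ is both dispensable and slightly misstated: relabelling an input does not merely flip the sign of $I$, it moves the minus sign to a different correlator, i.e.~it maps the CHSH functional to one of its eight symmetry-related variants. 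You do not need this step at all, since your vertex check is uniform in the entry --- every entry of a deterministic box is in $\{0,1\}$ while $3/2-I/4\ge 1$, and every entry of a PR-type box is in $\{0,\tfrac12\}$ while $3/2-I/4\ge\tfrac12$ --- so I would simply drop the symmetry reduction. Second, the one non-elementary ingredient, the vertex enumeration of the $2\times 2\times 2\times 2$ no-signalling polytope, is correctly attributed and is standard; if you wanted a fully self-contained proof you could instead carry out the LP-duality route you sketch, which is what makes the paper's remark (that the maximization ranges over supra-quantum no-signalling distributions, so the bound may be pessimistic for quantum states) transparent as well.
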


Note that the maximum in Eq.~\ref{eq:max_prob} is over probability distributions that result in the CHSH violation observed and that are \emph{non-signalling}. Technically, the non-signalling condition is just the last two conditions in Eq.~\ref{eq:max_prob}, and it means that Alice's choice of input should not influence Bob's measurement outcome and vice-versa. This fact is due to relativity: if Alice's and Bob's measurement devices are space-like separated, then they cannot influence each other. The maximum is taken over these probability distributions in order to show that the upper bound in Eq.~\ref{eq:CHSH_bound} applies regardless of which probability distribution Alice and Bob actually have from their quantum states.

Also note that Eq.~\ref{eq:max_prob} maximizes over probability distributions that are not necessarily allowed by quantum mechanics. For example, they may come from distributions that satisfy the no-signalling conditions and have CHSH value $I$, but can win the CHSH game by more than $85\%$. Therefore, the bound Eq.~\ref{eq:CHSH_bound} may be too pessimistic, since it allows for distributions that may never occur from Alice's and Bob's measurements. However, Eq.~\ref{eq:CHSH_bound} does show a bound even for distributions that come from measurements on quantum systems. It is not trivial to relate the optimal probability distribution $q^*$ to a distribution that comes from quantum systems.

In addition, Eq.~\ref{eq:max_prob} needs to be related to the min-entropy to prove secrecy, which is beyond the scope of this thesis but more information can be found in \cite{pironio10a,vazirani12}.

There are several different ideas that come together to prove security in this setting \cite{acin07,vazirani12}. However, \cite{vazirani12} is currently the only protocol that is robust against noise and losses, and therefore there are no general techniques to date.

\section{Summary}

To conclude this chapter, we review each of the techniques discussed and under which QKD classes (Section~\ref{sec:classes}) they apply to. We also list any classical uniform randomness and/or any communication necessary for the classical post-processing steps (Fig.~\ref{fig:table}).

The privacy amplification step of the classical post-processing reduces the problem of proving secrecy of a QKD protocol to bounding the min-entropy of Alice's string conditioned on Eve's state. Information reconciliation reduces proving correctness of the protocol to performing an error correcting code followed by hashing to check that Alice's and Bob's strings are the same with high probability. Parameter estimation gives a way to estimate the number of errors between Alice's and Bob's strings as well as the max-entropy of Alice's string conditioned on Bob's string using the error rate (or CHSH violation) by having Alice and Bob communicate a small fraction of their strings.

We discussed several reductions, such as reducing the Hilbert space of the quantum signals to a small fixed Hilbert space (the post-selection technique) and relating P\&M protocols to entanglement based ones. Two methods of proving security are the Devetak-Winter rate in the infinite-key limit and an uncertainty relation, which bounds the min-entropy by the max-entropy.

The details of the assumptions needed to apply security proofs to implementations will be analyzed in detail in Chapter~\ref{chap:assumptions}.

\begin{sidewaysfigure}
    \begin{tabular}{c}
    \begin{tabular}{ | c | c | c |} \hline
    Task & Randomness & Communication \\ \hline
    Leftover Hashing & $O(n)$ or $O(l)$ & $O(n)$ or $O(l)$ \\ \hline
    Trevisan's Extractor & $O(\log^2(n/\varepsilon)\log(l))$ & $O(l)$ \\ \hline
    Information Reconciliation & $\text{rand}+\lceil \log (1/\varepsilon_{\text{cor}}) \rceil$\footnote{$\text{rand}$ is the amount of randomness that is communicated during the error correcting procedure, which does not give any information to Eve about the key.} & $\text{leak}+\text{rand}+2\lceil \log(1/\varepsilon_{\text{cor}}) \rceil$\footnote{$\text{leak}$ is any communication Alice sends to Bob that is correlated to her key.} \\ \hline
    Parameter Estimation & $O\left(\log\binom{N}{k}\right)$ & $O(k)$ \\ \hline
    \end{tabular} \vspace{0.5cm} \\ 
    \begin{tabular}{ | c  | c | c |} \hline
    Task & Classes & Assumptions \\ \hline
    Devetak-Winter Rate & \begin{tabular}{c} Infinite Key \\ Basis Choice \end{tabular} & - \\ \hline
    Uncertainty Relation & \begin{tabular}{c} Basis Choice \\ Entanglement Based \end{tabular} & Overlap $c$ \\ \hline
    Post-Selection Technique & Entanglement Based & \begin{tabular}{c} Hilbert Space Known \\ Permutation Invariance \end{tabular} \\ \hline
    Entanglement-Based and P\&M Equivalence & - & Qubits or Entangled States Prepared \\ \hline 
    \end{tabular}
    \end{tabular}
   \caption[QKD Subprotocols and Techniques Summary]{List of subprotocols and techniques used to prove security for QKD protocols. The length of Alice's and Bob's strings at each stage of the protocol is $N$ after any basis sifting, $n$ after parameter estimation, and $l$ after privacy amplification. The randomness and communication needed in the above tasks are the number of bits required. The assumptions are those that are specifically required to use the technique. All methods in the second table are used for device-dependent QKD security proofs.} \label{fig:table} 
  \end{sidewaysfigure}


\chapter{Assumptions} \label{chap:assumptions}

\section{Introduction} \label{sec:intro_assumptions}

In this chapter we propose a framework that can be used to state assumptions in quantum key distribution and quantum cryptography in a clear and concise way. We provide a detailed list of the kinds of assumptions that are made in order to prove the security of QKD protocols and to connect the models under which security is proven with their implementations. Many of the assumptions in this chapter were previously mentioned in \cite{scarani09b}.

Recall that in Chapter~\ref{chap:intro} we introduced several descriptions of protocols that were implemented in an idealized setting, which we called perfect models (Sections~\ref{sec:discrete_protocols},\ref{sec:continuous_protocols}, and \ref{sec:DIprotocols}). However, there are several gaps between the perfect models of QKD protocols and their actual experimental realizations. Any deviation from the model under which security is proven may leak information to Eve or allow Eve to exploit the devices to gain information. This leakage of information compromises the security of the protocol and could even make the protocol entirely insecure! Therefore, the assumptions made are crucially important to the security of the protocol. It is not enough to prove security for an idealized model; the model must also accurately describe an implementation, otherwise the implementation may not be secure.

Whenever a model and the implementation disagree then Eve may employ \emph{side-channel attacks}: Eve may attack the implementation in a way that is not accounted for in the model.

We begin this chapter with a summary of the ways in which perfect models differ from implementations of QKD protocols.

\begin{itemize}
\item {\bf Lab Isolation.} The models assume that Alice's and Bob's devices are completely isolated so that Eve cannot interact with them in any way. However, since Alice and Bob need to input states and receive states from the quantum channel, they need to have some interface with the channel. If Eve can probe Alice's and Bob's devices through this interface then she may learn something about the measurement outcomes or prepared states.

\item {\bf Source states.} The perfect models in Chapter~\ref{chap:intro} assume the preparation of an exact state. In practice, however, states can only be prepared approximately. The actual prepared state may differ from the intended state in two ways. Either the prepared state is in the same Hilbert space but is not the intended state or the prepared state may be a superposition or mixture of the intended state with other states in other Hilbert spaces. The prepared state may also have a combination of these kinds of errors.

\item {\bf Measurements.} Similarly to source states, when measurements are performed, they may not perform the exact POVM elements that are intended. In addition, they may also measure states outside of the Hilbert space the protocol is designed to measure in. Since the measurements may react to states outside of the intended Hilbert space, Eve can modify the states in the quantum channel to exploit the full Hilbert space available to her.

Measurements may also give outcomes that are non-existent in the perfect protocol. For example, a measurement result could be output when there is no received signal. Conversely, there may be losses in the measurement device or in the quantum channel that result in no measurement outcome when a state was prepared.

\item {\bf Device calibration.} Something that is not considered in the perfect models is that the physical devices may need to be calibrated between Alice and Bob. For example, Alice and Bob may need to establish a shared reference frame before the QKD protocol, which may give Eve information about Alice's and Bob's devices.

\item {\bf Timing.} In addition to the device calibration, Alice and Bob also need to agree on the timing of signals. For example, in a P\&M protocol when Alice sends states to Bob through the quantum channel, Bob needs to know which measurement results correspond to which sent states. Therefore, Alice and Bob also need to fix a timing so that sent states are associated with the correct measurement outcomes. In addition, Bob's measurements are not performed instantaneously. His measurement has a finite measurement time, which Eve may exploit. Also, some measurements have a ``dead time'' where the measurement device will not respond to incoming signals (see Section~\ref{sec:threshold}).

\item {\bf Classical post-processing.} In the classical post-processing steps the estimation of the amount of information that Eve has from the quantum \stage of the protocol from the communication sent in the authenticated public classical channel should be quantified. If this estimation does not incorporate deviations from the model used to prove security then the estimation may be inaccurate, resulting in leaking more information to Eve than what the security proof accounts for. Also, randomness is used for many parts of the protocol. This randomness should be true randomness (see Section~\ref{sec:qc_protocols}), otherwise Eve may be able to make predictions about certain parts of the protocol.
\end{itemize}

To begin this chapter, we discuss the use of the term ``unconditional'' security (Section~\ref{sec:unconditional_security}). Then we classify assumptions into four categories (Section~\ref{sec:assumption_classification}).

After these preliminary sections, we discuss assumptions in quantum cryptography and quantum key distribution that are general (Section~\ref{sec:universal_assumptions}), which includes the foundations of physics (Section~\ref{sec:foundational_assumptions}), the isolation of Alice's and Bob's labs (Section~\ref{sec:isolation}), and the calibration of Alice's and Bob's devices (Section~\ref{sec:calibration}). Next, we introduce several physical devices and how they ideally behave (Section~\ref{sec:devices}). As an example of implementations we consider two implementations of the BB84 protocol (Section~\ref{sec:BB84_implementations}).

Lastly, we discuss assumptions about sources (Section~\ref{sec:sources}), measurements (Section~\ref{sec:measurements}), and classical post-processing (Section~\ref{sec:post-processing}).

\section{``Unconditional'' Security} \label{sec:unconditional_security}

Before discussing the assumptions made in QKD and quantum cryptography, we discuss the term ``unconditional security,'' which is used in the literature to imply that a protocol is secure against general (coherent) attacks by Eve (see Section~\ref{sec:classes}) \cite{scarani09b}. However, the term ``unconditional'' implies that the security is not conditioned on any assumptions or only relies on the fundamental assumption that quantum mechanics is complete (see Section~\ref{sec:foundational_assumptions}). Clearly protocols are not ``unconditionally'' secure: there are many assumptions made about each of the components used in the protocol. As was pointed out in \cite{scarani09b} the only part that has no conditions is what we assume about Eve's attack structure; we do make assumptions about Alice's and Bob's devices and subprotocols.

There are other terms that do not have this confusion about assumptions. One is just ``security,'' since security proofs always come with assumptions. Another term is ``information-theoretic security,'' which implies that security is proven using information theory, such as the security criteria in Section~\ref{sec:security_definition}.

Note that there are efforts to prove security under certain assumptions about Eve's attack, such as if Eve's memory is bounded \cite{damgard05}. There are also security proofs that try to prove that a quantum protocol is secure against adversaries that can do attacks in a theory more general than quantum mechanics \cite{barrett05b,acin06b,masanes09a,hanggi10a,hanggi10b,hanggi10c}.

\section{Assumption Classes}\label{sec:assumption_classification}

We decompose assumptions into four classes. The classification of assumptions we present can be used to discern how justified assumptions are and whether Eve can get an advantage from such assumptions.

First, an assumption may be \emph{fundamental}, which means that the assumption is assumed without any experimental verification. This assumption can be justified if it depends on foundational principles that are supported by our current understanding of physics, such as that information cannot travel faster than the speed of light or that quantum mechanics is a correct theory (see Section~\ref{sec:foundational_assumptions}). Fundamental assumptions may be unjustified if they are not even approximately correct. For example, it may be assumed that qubits are measured by Bob. If Eve is assumed to be able to do anything allowed by quantum mechanics and Bob does not check to see if he is getting qubits, then the assumption that Bob receives a qubit from her is unjustified and fundamental.

Second, there are \emph{calibrated} assumptions, which are approximately correct but cannot be guaranteed by an experiment. A device may approximate a model for the device, which can be checked with experiments but the experiments do not guarantee that this model will hold exactly in an implementation. For example, a measurement device may be constructed to approximate a particular POVM. The device may be tested to check that it approximately implements the desired POVM. However, if it is assumed that the device implements the model POVM then Eve may get an advantage from the deviation of the model from the implementation, even if the model is approximately correct.

Third, an assumption may be \emph{verifiable}, which means that the assumption can be verified experimentally or a theoretical analysis implies that Eve cannot gain any advantage (or the amount of the advantage is known) due to the model deviating from the implementation. For example, it may be assumed that measurements satisfy a particular property that can be experimentally verified before the protocol begins. Verifiable assumptions may also be about the structure of the protocol. For example, a measurement may be assumed to have two basis choices. If the protocol is implemented with this construction then this assumption is justified.

Fourth, there may be assumptions that can be justified by changing the implementation, such as adding a quantum device or modifying the classical post-processing, which we call \emph{satisfiable}. The modification of the implementation may lead to the need for more assumptions about additional devices or modifications. For example, it may be assumed that Eve does not send light into Bob's measurement device that is beyond a certain intensity. This can be a satisfiable assumption if Bob monitors the intensity of the incoming light, which requires the addition of an intensity monitor. Further assumptions may be necessary about the intensity monitor, which may not be justified.

The completely justifiable assumption classes that can be justified are verifiable assumptions, satisfiable assumptions that either require no further assumptions or assumptions that are justified, and some fundamental assumptions. Fundamental assumptions are either justified by the underlying physical theory or they are completely unjustified and are only made so that the model where the security proof applies is the same as the implementation, regardless of whether the implementation satisfies the assumption or not. Calibrated assumptions may be approximately justified, since the devices are approximately the same as their intended model. However, since Eve can exploit any deviation of Alice's and Bob's devices, it is not clear \emph{a priori} how much of an advantage Eve gets from a calibrated device that deviates from its model. This ambiguity makes the justification of calibrated assumptions unclear and the assumptions deserve further analysis to determine the extent of Eve's advantage. Satisfiable assumptions are justified by a modification of the protocol. However, the addition of other devices or modifications of the existing protocol usually requires further assumptions. Note that assumption classes other than satisfiable assumptions may be justified but they do not require a modification of the protocol.

We will use the four classes (fundamental, calibrated, verifiable, and satisfiable) to classify the assumptions in this chapter. We begin the detailed discussion of assumptions with universal assumptions that are applicable to almost all quantum-cryptography and quantum-key-distribution protocols.

\section{Universal Assumptions} \label{sec:universal_assumptions}

There are several basic assumptions that are made for almost all quantum-cryptography protocols. Here we outline foundational assumptions about the underlying physical theory used to define models of the protocols, the isolation of Alice's and Bob's devices from any eavesdropper, and the calibration of Alice's and Bob's devices before performing a protocol.

\subsection{Foundational Assumptions} \label{sec:foundational_assumptions}

Security of a quantum-cryptography protocol is usually proven with an adversary or dishonest party who is able to use any possible attack allowed by quantum physics. However, this assumption implicitly assumes that quantum physics is \emph{complete}.

A complete theory is one in which the predictions it makes about what is observable are the most accurate predictions possible by an experiment. Therefore, quantum mechanics is complete if it can make the best predictions about all possible measurement outcomes. This assumption implies that an adversary cannot get any more information about Alice's and Bob's keys in a QKD protocol than what is possible by quantum mechanics.

It was shown that instead of directly assuming that quantum mechanics is complete, two other assumptions can be made: that the theory is \emph{correct} and that \emph{free randomness} exists\cite{colbeck11,colbeck12a,colbeck12b}.

A correct theory is one that makes accurate predictions about what is observable. Quantum mechanics is correct if the predictions it makes about measurement outcomes are accurate. The assumption that free randomness exists is that measurement choices (such as a basis choice) can be chosen independently of the measurement device itself.

Therefore, a fundamental assumption we make for the security of QKD is that quantum mechanics is correct and free randomness exists, since these imply that quantum mechanics is complete.

There are other models for the underlying physical theory that are used instead of quantum mechanics, for example, that a generalized probabilistic theory describes physical reality \cite{barrett05b,acin06b,masanes09a,hanggi10a,hanggi10b,hanggi10c}.

\subsection{Isolation of Labs} \label{sec:isolation}

Alice's and Bob's devices should be completely isolated from Eve. If Eve is able to get information from their devices directly then the protocol may be completely compromised. For example, in a P\&M protocol, if Eve learns all of the measurement outcomes from Bob or knows what states were prepared by Alice in a P\&M protocol then the protocol is completely insecure.

There are a few known attacks of this type. For example, if Alice and Bob do the phase implementation of the BB84 protocol (see Section~\ref{sec:phase_BB84}) then Eve can send states into Alice's source via the quantum channel and learn the setting of Alice's phase modulator \cite{ribordy98,scarani09b}. Therefore, for this attack on the BB84 protocol, the assumption that Alice's lab is isolated is a satisfiable assumption, since Alice can monitor the intensity of incoming light from the quantum channel. If Alice detects incoming light then Alice and Bob would abort the protocol.

Another example of an attack against lab isolation is in any protocol that uses threshold detectors for a measurement (see Section~\ref{sec:threshold}). When threshold detectors recover after a detection they can emit light which can leak out into the quantum channel. Eve can then collect this light and potentially learn which threshold detector clicked \cite{kurtsiefer01,scarani09b}. In this case, the isolation of Bob's lab is a satisfiable assumption, since Bob can put a barrier between his measurement device and the quantum channel while his threshold detectors are recovering, so that any light would be blocked from leaking outside of his lab during his detector's recovery.

Yet another attack that violates lab isolation is for two-way QKD protocols, where the two quantum channels (as in Figs.~\ref{fig:LM05} and \ref{fig:SDC}) are actually the same quantum channel used in two directions. In this case, Alice is both sending and receiving states from the same quantum channel and therefore requires an open interface with the quantum channel. This interface allows Eve to send states into Alice's lab to potentially determine how Alice prepared her states or what her measurement basis choice is.

In general, the assumption that Alice's and Bob's labs are isolated is a fundamental assumption, because we assume that Eve cannot break into Alice or Bob's lab and steal their measurement outcomes.\footnote{This \href{http://xkcd.com/538/}{comic} (http://xkcd.com/538/) captures this idea.}

\subsection{Device Calibration} \label{sec:calibration}

There are two kinds of calibration that Alice and Bob can do before a quantum-cryptography protocol. First, Alice and Bob can calibrate their own devices so that they are working as they are intended. Second, Alice and Bob may need to perform a joint calibration that requires classical or quantum communication. The first kind of calibration can be done inside Alice's and Bob's isolated labs, and therefore under the assumption that their labs are isolated, no further assumptions are necessary about the calibration procedure. However, the second kind of calibration requires an interaction between Alice and Bob that Eve may interfere with. The calibration may leak information to Eve through Alice and Bob's communication and further assumptions may be necessary.

As an example of the second kind of calibration in P\&M protocols, Alice would like to prepare states such that Bob's measurement can distinguish them. Before the protocol starts, it is important that Alice and Bob calibrate their devices to optimize the correlations between Alice's sent states and Bob's measurement outcomes. In an entanglement based protocol, it is also important to calibrate both measurement devices so that Alice's and Bob's measurement results are as correlated as possible. For the polarization implementation of BB84 (see Section~\ref{sec:polarization_BB84}), what is defined as horizontal polarization for Alice is relative to a particular reference frame. Therefore, Bob needs to calibrate his measurement so that he shares the same reference frame as Alice.

The reference frame calibration procedure can be done before the QKD protocol. Alice can continually rotate her reference frame while sending many states to Bob and classically communicate through an authenticated channel which states she is sending. Bob can communicate his measurement outcomes to Alice. If Alice and Bob repeat this procedure for different angles then they can share approximately the same reference frame that will maximize their correlations for the run of the QKD protocol.

In addition to calibrating their reference frames, Alice and Bob need to agree on a timing of their signals so that Bob knows which states sent from Alice correspond to which measurement results. Note that Alice and Bob cannot just infer this correspondence from the order of the measurement outcomes and sent states during a run of the QKD protocol since some states may be lost between Alice and Bob due to losses (or Eve). Since the signals will be sent in rapid succession, it is important that Alice and Bob have accurate clocks so that they know which sent states correspond to which measurement outcomes during the protocol. Alice and Bob can synchronize their clocks by using a trusted third party. Alternatively, there are classical protocols that can be used to synchronize clocks without the need of a third party. Once their clocks are synchronized, Alice and Bob can also test to see how long it takes for Alice's states to reach Bob. Then, throughout the QKD protocol Alice can communicate through the classical authenticated channel to Bob when she sent her states so that Bob knows which measurement outcomes correspond to which of Alice's states.

The assumptions required for the model to match the implementation are dependent on the calibration method. For example, it is important that Alice communicates to Bob during the protocol only through the authenticated classical channel for timing calibration, otherwise Eve may send incorrect timing information to Bob, which could give her an advantage \cite{jain11}. It may also be necessary to make a fundamental assumption that a third party is trustworthy to synchronize their clocks. 

Reference frame calibration may be avoided if a QKD protocol is used that does not need this calibration \cite{sheridan10,liang14}.

\section{Devices for Quantum-Cryptography Implementations} \label{sec:devices}

The universal assumptions that apply to most quantum-cryptography and QKD protocols have now been discussed. Now we go into the details of specific devices used in QKD. Afterward, we present two examples of implementations of the BB84 protocol that use these devices (Section~\ref{sec:BB84_implementations}), followed by assumptions about the devices used in QKD and quantum-cryptography protocols.

We will not go into the full details of the physics that describe the devices used for quantum cryptography, though this is an interesting endeavour in its own right. Instead we will describe these devices with their ideal descriptions and how they can be modelled. In later sections we will describe how they may deviate from these models, which has consequences for the assumptions made in QKD protocols. Further details on how these optical devices work can be found in a quantum optics book, such as \cite{loudon00}.

We focus on devices used in discrete-variable and device-independent protocols, such as attenuated lasers (Section~\ref{sec:weak_laser}), parametric down-conversion (Section~\ref{sec:PDC}), beamsplitters (Section~\ref{sec:BS}), threshold detectors (Section~\ref{sec:threshold}), and Mach-Zehnder interferometers (Section~\ref{sec:MZ}).

\subsection{Weak Laser} \label{sec:weak_laser}

Ideally we would like a source of single particles to encode the states used in discrete P\&M QKD protocols. Typically photons are used since they can be easily transmitted either through fibre-optic cables or through free space (e.g.~the atmosphere or space). However, current technology does not allow single photons to be produced on demand. Usually coherent states are used instead (see Eq.~\ref{eq:coherent_state}).

One source of photons is a laser that produces coherent states. Coherent states are an approximation of the state a laser produces. This approximation requires the power given to the laser to be well over a certain threshold and requires a laser designed to produce single modes (i.e.~a single frequency of light) \cite{loudon00}. The phase of the produced states may also give information to Eve and should be taken into account (see Section~\ref{sec:phase_coherence}).

A laser can be given power for a short time to produce coherent states localized in a small spatial region followed by an attenuator (i.e.~a device that reduces the light's power). After the attenuator the state will be a coherent state with a low average photon number and a short spatial (or equivalently, temporal) width \cite{rosenberg07}. The spacial width is the wave function's spatial degree of freedom. The probability of measuring the photon at a particular time after its production is approximately distributed according to a Gaussian distribution \cite{loudon00}.

The values of the average photon number used for QKD are typically less than one photon per pulse \cite{scarani09}.

\subsection{Parametric Down-Conversion} \label{sec:PDC}

Another way to produce photons is to use a process called \emph{parametric down-conversion} (PDC). This process is performed by shining a laser continuously at a particular type of non-linear crystal. This crystal takes one state of light in a single mode (i.e.~a single frequency) and decomposes it into two states, each with half the frequency of the initial state. They also spread out in two spatial directions such that momentum is conserved. To conserve photon number, the average photon number of the initial pulse will be split such that the sum of the average photon numbers of each output pulse is equal to the average photon number of the initial laser light. While most of the laser light goes through the crystal without interacting with it, sometimes the state will be split into these two pulses. The two outputs from the crystal are called the \emph{signal} and the \emph{idler}.

These two output pulses can be used as a source for entangled photons. The two pulses will have orthogonal polarization (which we denote with $H$ and $V$, see Section~\ref{sec:polarization_BB84}) and are spatially distributed in two intersecting circles. At the intersection of these two circles the polarization of the output is ambiguous. Along these spacial modes the output is the maximally entangled pure state
\begin{equation}
\ket{\psi}=\frac{\ket{\alpha}\ket{-\alpha}+\ket{-\alpha}\ket{\alpha}}{\sqrt{2}},
\end{equation}
where $\ket{\psi}$ is in the Hilbert space of the polarization in the two spatial modes. Therefore, PDC can be used as a source of entangled bipartite states.

To produce single states from this process, a measurement device can be placed before the spatial location of the idler and whenever the measurement reveals that there is a signal then it is known that a signal state is present (see Section~\ref{sec:threshold} for the details of this measurement device). This kind of source, where a measurement indicates when a state is prepared, is called a \emph{heralded} source.

There are other sources other than weak laser pulses and parametric down conversion, such as Nitrogen vacancies in diamond and quantum dots (see \cite{albrecht14} and \cite{he13} for recent experiments that use these sources).

\subsection{Beamsplitters}\label{sec:BS}

A beamsplitter is a simple optical device that takes two input modes and has two output modes (see Fig.~\ref{fig:beamsplitter}). A beamsplitter can be modelled as a matrix acting on the creation operators for the two input modes:
\begin{equation} \label{eq:BS}
\begin{pmatrix}
T & R \\
R & T
\end{pmatrix}
\begin{pmatrix}
\hat{a}_1^{\dag} \\
\hat{a}_2^{\dag}
\end{pmatrix} =
\begin{pmatrix}
\hat{a}_3^{\dag} \\
\hat{a}_4^{\dag}
\end{pmatrix},
\end{equation}
where $T$ and $R$ are the transmissivity and reflectivity of the beamsplitter respectively. They satisfy $|R|^2+|T|^2=1$ and $RT^*+TR^*=0$ \cite{loudon00}.

\begin{figure} \centering
\includegraphics[width=0.4\textwidth]{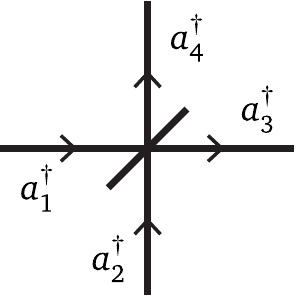}
\caption[A Beamsplitter]{A beamsplitter. It takes input modes $a_1^{\dag}$ and $a_2^{\dag}$ to output modes $a_3^{\dag}$ and $a_4^{\dag}$ according to Eq.~\ref{eq:BS}.}
\label{fig:beamsplitter}
\end{figure}

A particular example of a beamsplitter is a 50:50 beamsplitter, where $R=i/\sqrt{2}$ and $T=1/\sqrt{2}$.

Consider a single photon in a single mode input into arm $1$ to a 50:50 beamsplitter while the input into arm $2$ is the vacuum. In this case the output is given by $(\ket{1}_3+i\ket{1}_4)/\sqrt{2}$. This means that the photon is in a superposition of being transmitted through the beamsplitter or was reflected by the beamsplitter and acquiring a phase of $\pi/2$ (since $e^{i\pi/2}=i$).

Another example is if a coherent state, $\ket{\alpha}$, is input to a 50:50 beamsplitter in arm $1$ and the vacuum is input to arm $2$. In this case the output is
\begin{align}
e^{-\frac{|\alpha|^2}{2}}\sum_{n=0}^{\infty}\frac{\alpha^n}{\sqrt{n!}}&\left(\frac{\hat{a}_3^{\dag}+i \hat{a}_4^{\dag}}{\sqrt{2}}\right)^n\ket{0} \\
&= e^{-\frac{|\alpha|^2}{2}}\sum_{n=0}^{\infty}\frac{\alpha^n}{\sqrt{n!2^n}}\sum_{k=0}^n \binom{n}{k}\hat{a}_3^{\dag k}(i \hat{a}_4^{\dag})^{n-k}\ket{0} \\
&= e^{-\frac{|\alpha|^2}{2}}\sum_{n=0}^{\infty}\frac{\alpha^n}{\sqrt{n!2^n}}\sum_{k=0}^n \binom{n}{k}i^{n-k}\ket{k,n-k}_{3,4}.
\end{align}
This state is a coherent state that is distributed over the two output modes with a superposition of different possible photons in each output mode.

Another kind of beamsplitter is a polarizing beamsplitter (see Section~\ref{sec:polarization_BB84} for the details of polarization). Polarizing beamsplitters can separate two orthogonal polarization states into its two outputs. For example, if horizontally polarized light is sent into one arm then it is always transmitted and if vertically polarized light is sent into the same arm then it is always reflected.

\subsection{Threshold Detectors}\label{sec:threshold}

A threshold detector is a measurement that gives an output, \emph{click}, when it measures one or more photons and otherwise it outputs \emph{no click}. Formally, its POVM elements are the projection onto $\kb{0}{0}$ (the vacuum) and $\sum_{n=1}^{\infty}\kb{n}{n}$ (one or more photons). This kind of measurement can be implemented in various ways. Here we present an implementation of threshold detectors: avalanche photodiodes (see \cite{dixon09,yuan10,patel12,kalliakos14} for experiments characterizing these devices).

Avalanche photodiodes are made out of a semiconductor material (such as indium gallium arsenide, InGaAs) that has an electric field applied to it \cite{ramaswami02}. These detectors exploit the photoelectric effect so that an incident photon excites an electron in the semiconductor. Since an electric field is applied, the electron has enough energy to excite one or more electrons, which can excite further electrons, which go on to excite more electrons, leading to an avalanche of excited electrons. If many electrons are excited then a current can be measured, indicating that at least one photon hit the detector.

The avalanche is a random process that depends on the strength of the electric field. However, if an electron absorbs energy from the semiconductor (i.e.~a phonon) then an avalanche can occur without any incident photons. These events are called \emph{dark counts}. The stronger the electric field, the more likely it is that dark counts will occur. 

Conversely, the photon may excite an electron, but if too few electrons are excited then there is no avalanche, so no current will be registered. Therefore, the weaker the electric field, the more likely that a photon will not induce an avalanche, resulting in loss. Therefore, by changing the strength of the electric field there is a tradeoff between the probability of dark counts and the efficiency of the detector. In addition, the photon may not be absorbed by the material but may be reflected or pass through the material, which also results in loss.

After the avalanche, the semiconductor needs to have all of its electrons return to their unexcited state by turning the electric field off. The time it takes for the electrons to return to their unexcited state is called the \emph{recovery time} or \emph{dead time} (since the threshold detector cannot make a measurement when it is recovering). Sometimes there will also be \emph{after pulses}, when the energy from a relaxing electron causes the excitation of further electrons, resulting in a second avalanche causing a second click event (see Section~\ref{sec:time_resolution}).

Avalanche photodiodes also have a limited temporal resolution; from the time when a photon first starts an avalanche to the recovery of the detector, more photons may hit the material. This means that the detector is an integrated measurement: if a current is detected then one or more photons hit the detector in the time period designated for each detection event.

The quality of an avalanche photodiode primarily consists of three things: the probability of dark counts in a given time frame, the recovery time, and the efficiency. Typically, the efficiency is quite low for avalanche photodiodes at room temperature \cite{comandar14}. However, by decreasing the temperature the efficiency can increase dramatically, but the recovery time is extended due to the increased probability of after pulses, since it requires less energy to cause an after pulse at lower temperatures.

There are also efforts to use different kinds of single photon detectors at low temperatures, called superconducting single-photon detectors, as an alternative to avalanche photodiodes. Superconducting single-photon detectors have a higher efficiency and lower dark count rates due to less energy available to excite the system (see \cite{covi14,shibata14} as examples of recent experiments). For a comparison of the performance of these detectors compared to avalanche photodiodes, see \cite{scarani09}

\subsection{Mach-Zehnder Interferometers} \label{sec:MZ}

A Mach-Zehnder interferometer is a particular arrangement of beamsplitters and mirrors, which can be followed by threshold detectors (see Fig.~\ref{fig:MZI} and see \cite{micuda14} for a recent experiment). Here we describe an unbalanced interferometer that can be used to measure the relative phase between two pulses, as is necessary for the COW and DPS protocols (see Section~\ref{sec:discrete_protocols}), as well as one implementation of the BB84 protocol (Section~\ref{sec:phase_BB84}). The mirrors in the Mach-Zehnder interferometer can be thought of as a beamsplitter with reflectivity $i$ and transmissivity $0$.

As an example of the use of a Mach-Zehnder interferometer, consider a single photon distributed over two pulses separated by a distance equal to the relative distance between two arms of the Mach-Zehnder interferometer. If the two pulses have a relative phase of $\phi\in[0,2\pi)$ then the state before the Mach-Zehnder interferometer is
\begin{equation}
\frac{\ket{t}+e^{i\phi}\ket{t+1}}{\sqrt{2}},
\end{equation}
where $\ket{t}$ denotes a photon at time slot $t$ of the first pulse and $\ket{t+1}$ denotes a photon at the time slot of the second pulse. After the first beamsplitter, which has vacuum as the second input, we can use the relation Eq.~\ref{eq:BS} for a 50:50 beamsplitter to find that the state is 
\begin{equation}
\frac{\ket{t,S}-i\ket{t,L}+e^{i\phi}\ket{t+1,S}-ie^{i\phi}\ket{t+1,L}}{2},
\end{equation}
where $\ket{t,S}$ denotes a photon at time slot $t$ in the short arm of the interferometer and $\ket{t,L}$ is a photon at time slot $t$ in the long arm. After the delay in the long arm and the reflections on the two mirrors, but before the second beamsplitter, the state is
\begin{equation}
\frac{i\ket{t-1,L}+\ket{t}\left(\ket{S}+ie^{i\phi}\ket{L}\right)+e^{i\phi}\ket{t+1,S}}{2}.
\end{equation}
After the second beamsplitter the state becomes
\begin{align}
\frac{1}{2}\left(i\ket{t-1}\left(\frac{\ket{1}-i\ket{0}}{\sqrt{2}}\right)+\ket{t}\left(\frac{(1+e^{i\phi})\ket{0}-i(1-e^{i\phi})\ket{1}}{\sqrt{2}}\right)\right.\nonumber\\
\left.+e^{i\phi}\ket{t+1}\left(\frac{\ket{0}-i\ket{1}}{\sqrt{2}}\right)\right),
\end{align}
where $\ket{0}$ is a photon at threshold detector $D_0$ and $\ket{1}$ is a photon at threshold detector $D_1$ (see Fig.~\ref{fig:MZI}). If we condition on getting an outcome at time slot $t$ and if $\phi=0$ then only detector $D_0$ can click. If $\phi=\pi$ then only detector $D_1$ can click. This means that the relative phase between two pulses can be measured with certainty if $\phi\in\{0,\pi\}$. However, with probability $1/2$, either detector can click at time slot $t-1$ or $t+1$, where either $D_0$ or $D_1$ will click with equal probability.

\begin{figure} \centering
\includegraphics[width=0.6\textwidth]{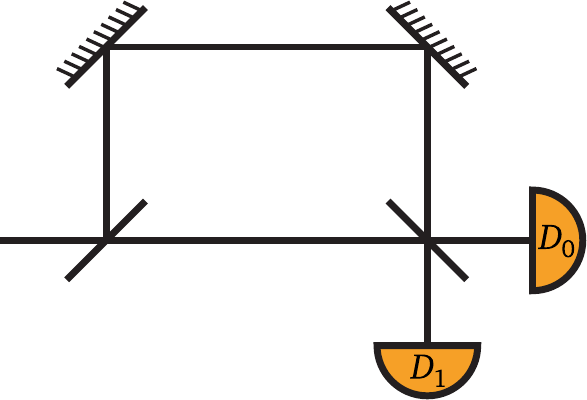}
\caption[An Unbalanced Mach-Zehnder Interferometer Measurement]{An unbalanced Mach-Zehnder interferometer measurement. It is composed of two 50:50 beamsplitters and two mirrors on the long arm followed by two threshold detectors. Two pulses separated by a distance equal to the length difference between the two paths in the interferometer will have their relative phase measured with probability $1/2$. At time slot $t$, if threshold detector $D_0$ clicks then the relative phase was at an angle of $0$ and if threshold detector $D_1$ clicks then the relative phase was at an angle of $\pi$. If a measurement result occurs at time slot $t-1$ or $t+1$ then the relative phase is unknown.}
\label{fig:MZI}
\end{figure}

The POVM elements that describe a perfect Mach-Zehnder interferometer with a detection at time slot $t$ are the projections onto the states
\begin{equation}
\frac{\ket{t}+\ket{t-1}}{\sqrt{2}} , \frac{\ket{t}-\ket{t-1}}{\sqrt{2}},
\end{equation}
for $D_0$ and $D_1$ respectively.

A phase modulator can be added on one arm of the Mach-Zehnder interferometer so that it can distinguish the relative phase between different phases other than $0$ and $\pi$. The next section describes a phase modulator.

\subsection{Other Devices} \label{sec:other_devices}

There are some other devices that are used in quantum-cryptography and QKD implementations, such as polarizers and phase modulators.

A polarizer is a filter that only allows output light to be of a particular fixed polarization. Polarization is a degree of freedom of a photon that represents the relative phase between the oscillating magnetic and electric fields of the photon \cite{saleh91}. If they are in phase then the photon may be linearly polarized either horizontally or vertically, relative to some reference frame (see Section~\ref{sec:polarization_BB84}). If they are out of phase (e.g.~there is a relative phase of $\pi/2$) then the polarization can be either left or right circularly polarized. Polarization can be thought of as the orientation of the combination of the waves while looking in the plane perpendicular to the direction of movement of the photon. From this view, the linear polarization is a line, while the circular polarization is a rotation around a circle (either clockwise or anti-clockwise). There is also elliptically polarized light, which is a superposition of circular and linear polarization.

There are two polarizers of interest: linear polarizers, which output linearly polarized light, and circular polarizers, which output circularly polarized light. They are constructed from materials which are \emph{birefringent}, which means that light has a different speed of travel depending on the its polarization. The result is that the light that is transmitted through the material has the desired polarization.

Two polarizers of interest can be constructed from half- and quarter-wave plates. Wave plates are birefringent materials that are chosen to have a thickness that induces a desired polarization. Half-wave plates induce a relative phase of $\pi$, while quarter-wave plates induce a relative phase of $\pi/2$. Also, there are materials which change their birefringence depending on an electric field that is applied to the material. This process is called the \emph{Kerr effect}, which can be used to change the polarization filter on demand \cite{saleh91}.

Phase modulators manipulate the relative phase between two pulses. For most QKD purposes, this phase modulation should only induce a short delay in the propagating light that is of the order of the light's wavelength.

Both polarizers and phase modulators have a loss associated with them, which should be taken into account in implementations of quantum-cryptography protocols (for example, see \cite{ferenczi12}).

\subsection{Channel Models}

Recall that for the robustness of the protocol the probability that the protocol aborts when there is no eavesdropper should be known (see Section~\ref{sec:robustness}). This probability is typically found by assuming a model for the quantum channel between Alice and Bob, as well as a model for Alice's and Bob's devices. Given these models, the probability that the protocol aborts can be calculated.

A typical model for the quantum channel is a \emph{depolarizing} channel. It can be described as
\begin{equation}
\rho \mapsto p\rho+(1-p)\frac{\identity}{d},
\end{equation}
for $\rho\in\hilbert$, $d$ is the dimension of $\hilbert$, and $p$ is a probability. Usually in QKD Bob does a measurement to try to distinguish two or more quantum states, $\rho_i$. With probability $p$ he will get $\rho_i$ so he can distinguish these states (conditioned on him measuring in the correct basis, for protocols with a basis choice) and with probability $1-p$ he will get a maximally mixed state, so he gets each of his measurement outcomes with probability $\Tr(F_i/d)$, for a POVM with POVM elements $\{F_i\}$. For many of the protocols in Section~\ref{sec:discrete_protocols}, $\Tr(F_i)$ is the same for all $i$ and therefore the probability of getting an error is the same for all measurement outcomes.

If errors are seen to be equally likely, regardless of the measurement outcome, then the channel can be modelled as a depolarizing channel. Therefore, this is a calibrated assumption that the channel is depolarizing, since the errors are usually only approximately equally likely.

Channels also have losses. The loss is characterized in units of $dB/km$, which is the log of the ratio of the power (of a classical optical signal) between the input and output signals, times ten, per kilometre. The lowest loss fibre-optic cables possible with current technology have a loss of $0.17$ $dB/km$, which means that the power decreases to $\sim 96\%$ of the input power over one kilometre of fibre \cite{saleh91}.

The loss in a quantum channel can be modelled as a beamsplitter that takes the input state as one input and the vacuum as its other input. The output of the channel is the transmission output of the beamsplitter and the other output of the beamsplitter is lost to the environment (which we can assume Eve gets). The loss can be modelled this way because, typically, the losses do not depend on the state of the system and are just probabilistic: a photon is transmitted through the channel or lost to the environment regardless of the photon's state \cite{scarani09}. However, it should be taken into account that Eve can control when losses occur and she may perform attacks where the loss may depend upon the state sent through the channel.

\section{Implementations of BB84} \label{sec:BB84_implementations}

Two practical examples we use to discuss assumptions about the devices used in QKD protocols are implementations of the BB84 protocol. We will describe the kinds of devices that are used in these implementations. For the perfect description of the BB84 protocol, see Section~\ref{sec:discrete_protocols}.

One implementation uses free space (e.g.~the atmosphere or space) to transmit photons that encode the qubits that Alice wants to send Bob in the polarization degree of freedom of the photon (see \cite{erven08} for an experiment). Since polarization of an individual photon is a two level system, polarization is an ideal property to use for the qubit of the BB84 protocol.

In fibre optics, a photon with a particular polarization undergoes polarization drift. Due to imperfections in the cable and in the environment (such as temperature differences) the polarization can be transformed as the photon goes through the fibre due to birefringence (see Section~\ref{sec:other_devices}). Over time scales smaller than the time it takes to perform the quantum \stage of the protocol this polarization drift can change, which makes it difficult to use polarization to transmit quantum data. In this case another implementation of BB84 can be used that encodes information in the relative phase of two pulses made from one photon. This encoding is similar to the distributed phase protocols DPS and COW (see Section~\ref{sec:discrete_protocols}).

\subsection{Polarization BB84}\label{sec:polarization_BB84}

The polarization of photons can be used to store quantum information (see Section~\ref{sec:other_devices}). Linearly and circularly polarized light form three bases of a qubit Hilbert space, and the polarization implementation of BB84 uses two of these three qubit spaces. The three possible bases are horizontal and vertical polarization $\{H,V\}$, diagonal linear polarization $\{D_+,D_-\}$ (where $D_+=(H+V)/\sqrt{2}$ and $D_-=(H-V)/\sqrt{2}$), and circular polarization $\{R,L\}$ (where $R=(H+iV)/\sqrt{2}$ and $L=(H-iV)/\sqrt{2}$). We can correspond the states in the BB84, SARG04, and six-state protocol with the polarization space, where $H=\ket{0}$, $V=\ket{1}$, $D_+=\ket{+}$, $D_-=\ket{-}$, $R=\ket{i}$, and $L=\ket{-i}$. Here we consider the implementation of the BB84 protocol that uses the $\{H,V\}$ basis and the $\{D_+,D_-\}$ basis.

Now we can implement the BB84 protocol as follows \cite{bennett92c} (see Fig.~\ref{fig:polarization}). The preparation of photons can be done by using a weak coherent laser pulse followed by a polarizer to set the polarization of the photons. For the measurement, Bob can use a polarizing beamsplitter, which separates two orthogonal polarization states. For example, one output of the polarizing beamsplitter could be $H$ and the other would then be $V$. Bob can place threshold detectors after each output of the polarizing beamsplitter to measure whether his state was $H$ or $V$. To measure in the other basis, he may actively control a polarization rotator before his polarizing beamsplitter so that he can measure $D_+$ and $D_-$ instead (see Fig.~\ref{fig:pol_active}).

Bob can also do his measurement in a passive way, so he does not have to control the orientation of his polarizing beamsplitter (see Fig.~\ref{fig:pol_passive}). First, he can put a 50:50 beamsplitter, which at one output has a polarizing beamsplitter and threshold detectors to measure in the $H/V$ basis, while the other output of the 50:50 beamsplitter has a polarizing beamsplitter and threshold detectors to measure in the $D_+/D_-$ basis. The 50:50 beamsplitter simulates the basis choice and Bob does not need to actively control his measurement device.

\begin{figure} \centering
\begin{subfigure}[b]{\textwidth}
\includegraphics[width=\textwidth]{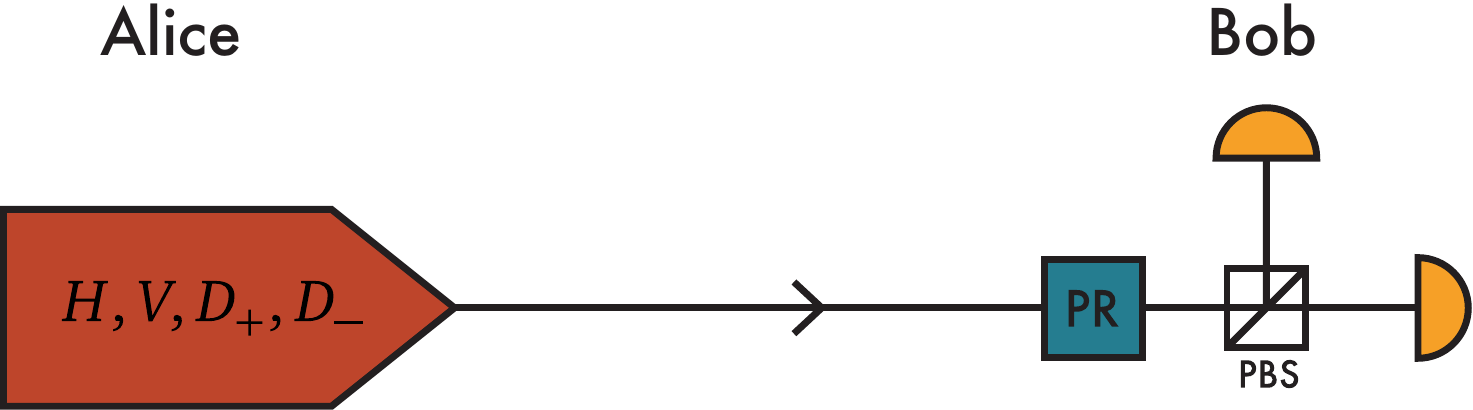}
\caption{The polarization implementation of the BB84 protocol with an active basis choice. Alice prepares states using a laser source of coherent states that go into a polarizer to produce $H,V,D_+,$ or $D_-$. Bob measures $H$ and $V$ by leaving the polarization the same by separating them using a polarizing beamsplitter (PBS) followed by two threshold detectors. Bob can measure $D_+$ and $D_-$ by applying a polarization rotator (PR) before his polarizing beamsplitter.}
\label{fig:pol_active}
\end{subfigure} \\ \vspace{1cm}
\begin{subfigure}[b]{\textwidth}
\includegraphics[width=\textwidth]{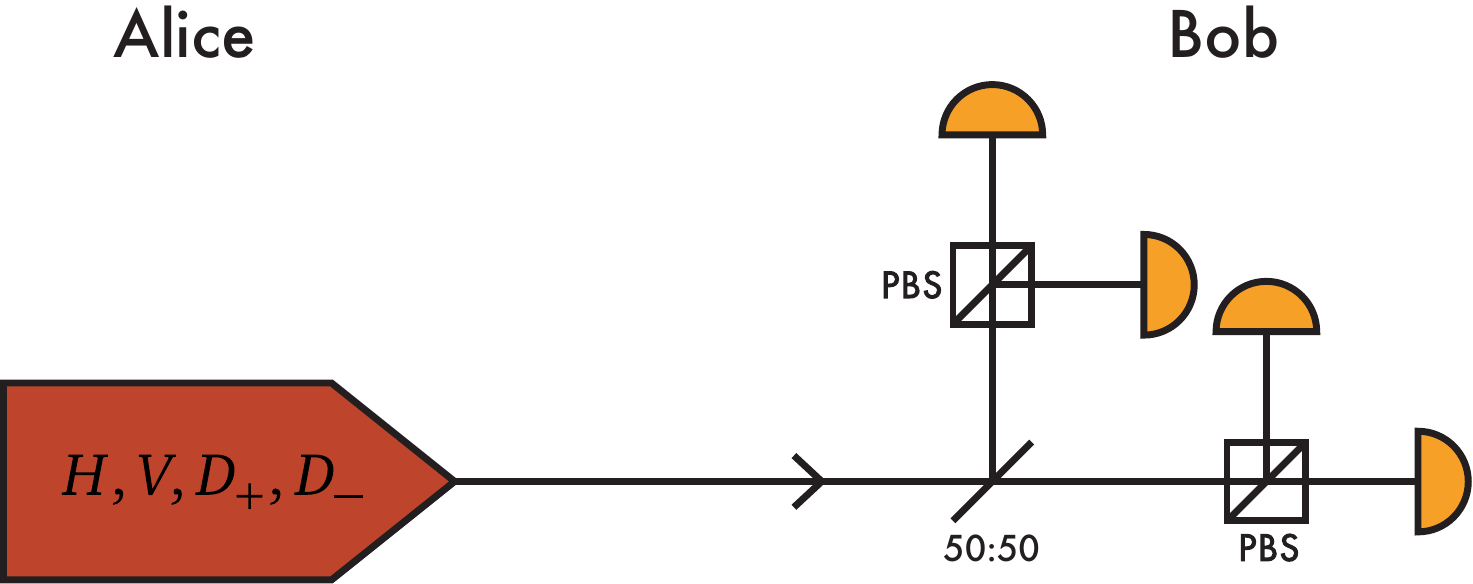}
\caption{The polarization implementation of the BB84 protocol with a passive basis choice. Instead of the polarization rotator Bob uses a 50:50 beamsplitter so that a single photon randomly goes to a measurement in the $H,V$ basis that uses one polarizing beamsplitter or the $D_+,D_-$ basis that uses another polarizing beamsplitter.}
\label{fig:pol_passive}
\end{subfigure}
\caption[The Polarization Implementation of the BB84 Protocol]{Two implementations of the BB84 protocol using polarized photons.}
\label{fig:polarization}
\end{figure}

\subsection{Phase BB84} \label{sec:phase_BB84}

In the phase implementation of BB84, the states $\{\ket{0},\ket{1},\ket{+},\ket{-}\}$ are represented as the relative phase between two pulses from a single photon \cite{ekert92} (see Fig.~\ref{fig:phase}). Alice can prepare a weak coherent laser pulse and input it into a 50:50 beamsplitter. Each arm of the beamsplitter has a different length and they will be recombined. On one arm of the beamsplitter Alice applies a phase modulation to change the relative phase between its output and the output of the other arm of the beamsplitter. Alice changes the relative phase to angles of $0$, $\pi$, $3\pi/2$, or $\pi/2$ (these phases are factors of $e^{i\phi}$ where $\phi$ is the phase resulting in $1$, $-1$, $i$, and $-i$). If we denote the two spatial modes as $s_1$ and $s_2$, we can write the states required for the BB84 protocol as
\begin{equation}
\ket{0} = \frac{\ket{s_1}+\ket{s_2}}{\sqrt{2}}, \ket{1} = \frac{\ket{s_1}-\ket{s_2}}{\sqrt{2}}, \ket{+}=\frac{\ket{s_1}+i\ket{s_2}}{\sqrt{2}}, \ket{-}=\frac{\ket{s_1}-i\ket{s_2}}{\sqrt{2}}.
\end{equation}
These are valid representations of the states for BB84 since they have the same overlaps, $\bk{\psi}{\varphi}$ for $\ket{\psi},\ket{\varphi}\in\{\ket{0},\ket{1},\ket{+},\ket{-}\}$.

On Bob's side, he will use a Mach-Zehnder interferometer followed by threshold detectors to measure this relative phase (see Section~\ref{sec:MZ}). A phase modulator is placed on the long arm in order to choose between measuring in the $\{0,\pi\}$ basis or the $\{\pi/2,3\pi/2\}$ basis. With probability $1/2$ he will get an outcome that tells him the phase and otherwise he gets an outcome that does not tell him what the relative phase was. Bob will then communicate to Alice when he gets a bad outcome and when he was able to discern the overlap depending on the timing of his measurement outcomes.

\begin{figure} \centering
\includegraphics[width=\textwidth]{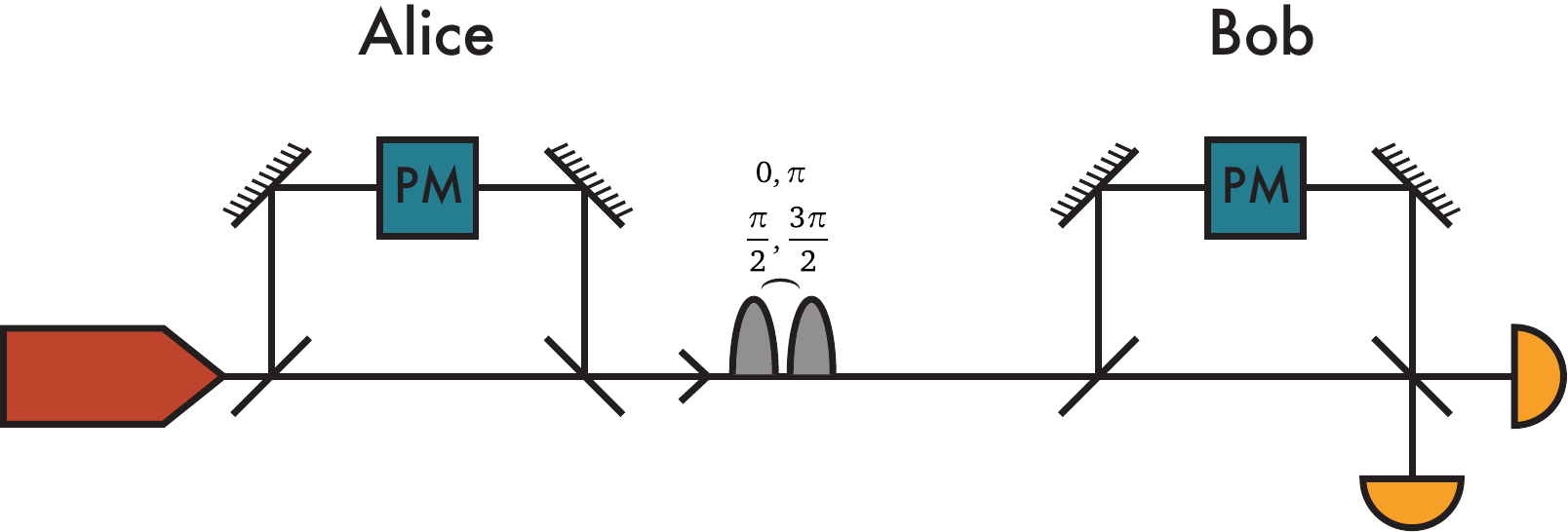}
\caption[The Phase Implementation of the BB84 Protocol]{The implementation of the BB84 protocol using the relative phase between two pulses. Alice prepares a coherent state from a laser, followed by a separation of this state into two pulses. She chooses the relative phase between them using a phase modulator (PM) resulting in a phase of $0,\pi/2,\pi$ or $3\pi/2$. Bob measures the relative phase by using an unbalanced Mach-Zehnder interferometer followed by two threshold detectors. His phase modulator chooses whether to measure $0$ and $\pi$; or $\pi/2$ and $3\pi/2$.}
\label{fig:phase}
\end{figure}

Now that implementations of the BB84 protocol have been introduced, we discuss the assumptions made about sources (Section~\ref{sec:sources}), measurements (Section~\ref{sec:measurements}), and the classical parts of the protocol (Section~\ref{sec:post-processing}).

\section{Source Imperfections and Assumptions} \label{sec:sources}

There are several ways that sources do not produce the idealized states required for a given protocol. There may be correlations between the state in its desired degree of freedom (such as polarization or relative phase) with other degrees of freedom, such as frequency or time. Subsequent states may not be independent, so an eavesdropper can get information from these correlations. We now list several of these assumptions, what class of assumption they are (fundamental, calibrated, verifiable, or satisfiable, see Section~\ref{sec:assumption_classification}), as well as any techniques used to justify these assumptions.

\subsection{Phase Coherence} \label{sec:phase_coherence}

For QKD protocols that do not use the phase as the degree of freedom for encoding often assume that the phase of each pulse is completely unknown to Eve. If this is the case, for example in the polarization implementation of the BB84 protocol, then the prepared state from the laser before encoding by Alice is a coherent state $\ket{\alpha}$ where $\alpha=re^{i\theta}$ and $r\in\mathds{R}^+, \theta\in [0,2\pi)$ \cite{lo07}:
\begin{equation}
\int_{0}^{2\pi} \frac{1}{2\pi}\kb{re^{i\theta}}{re^{i\theta}} d\theta = e^{-r^2}\sum_{n=0}^{\infty} \frac{r^{2n}}{n!}\kb{n}{n},
\end{equation}
which is a mixed state of a Poisson distribution of number states. However, if Eve has some information about the phase, then this is not an accurate description of the prepared state from Eve's perspective. Eve can get information about the phase in protocols that have a strong reference pulse, such as some forms of the B92 protocol \cite{tamaki03,tamaki04} and the Plug \& Play version of the BB84 protocol \cite{ribordy98,muller97} (see Section~\ref{sec:two_protocols}). Eve can then compare the phase between the strong pulse and the quantum state sent from Alice. Even in protocols that do not have a strong reference pulse, Eve can learn the phase of the source by using several weak pulses.

The assumption that the phase of each pulse is unknown to Eve is satisfiable, since Alice can apply a random phase to each state she sends into the quantum channel \cite{lo07}. As long as the randomness used to choose this phase is true randomness (see Section~\ref{sec:post-processing}) and the source is isolated from Eve, then Eve cannot get information about the polarization of the states by measuring their relative phase. If the phase is not completely randomized then Eve may get some information about the key \cite{lo05b,sun214,kobayashi14}.

The relative phase between different pulses may also give information to Eve. In the polarization implementation of the BB84 protocol, the relative phase between subsequent states may depend on the polarization of the states prepared. If Eve measures the relative phase, she may learn information about the polarization of the state. The assumption that the relative phase does not give any information about the polarization of the state is a calibrated assumption, since the states are prepared using sources that are not designed to change the phase over subsequent pulses.

\subsection{Multiple Photons}

As mentioned in Sections~\ref{sec:weak_laser} and \ref{sec:PDC}, the states that can be prepared in practice for discrete QKD protocols are coherent states, which are superpositions of photon number states. Sometimes multiple photons can be sent into the channel with the same encoding of the information, such as the polarization or the relative phase. Eve can then do a measurement to determine how many photons are present in the sent signal and then store the extra photons that Alice prepared while sending a single photon on to Bob. This attack is called the \emph{photon-number splitting} attack, and was noticed and analyzed in \cite{lutkenhaus99,lutkenhaus00,gottesman04}. Eve can either get full or partial information about Alice's state depending on the protocol and how many extra photons there are. Also, Alice and Bob will not detect this attack because it does not introduce any errors.

A method that has been developed to compensate for the photon number splitting attack is the \emph{decoy state method} \cite{hwang03,wang05,lo05}. Alice will prepare different states with a different number of average photons. She can choose in advance from a discrete set of possible average photon numbers. From this choice of states, Alice and Bob can estimate the number of errors they have for single photons, for two photons, etc. Formally, Alice and Bob have a set of linear equations for their error rates:
\begin{equation}
Q_{\text{total}} = p_1^iQ_1+p_2^iQ_2 + \cdots,
\end{equation}
for a total error rate $Q_{\text{total}}$, error rates for each photon number $Q_j$ ($j\in\{1,2,\dots\}$), and a set of probability distributions $P^i=(p_1^i,p_2^i,\dots)$ (one for each average photon number setting).

If the decoy state method is not used, then Alice and Bob can assume that all of their errors come from measurement outcomes on single photons. However, this estimation is pessimistic. Using the decoy state method allows for an estimation of the single photon error rate, which is usually lower than for other photon numbers. As an example, in the infinite-key limit, this estimation results in a scaling of the key rate, $r$, so that the probability that a single photon is created ($p_1$) is multiplied with the key rate for single photons, which is a function of the single photon error rate, $r_1(Q_1)$. Also, if the key rate is non-zero for two or more photons then these key rates (which are each functions of the error rate for that many photons) can be taken into account by adding them together, each multiplied by the probability of having that many photons: $r=p_1 r_1(Q_1)+p_2r_2(Q_2)+\cdots$. The details of how to perform the estimation procedure for the decoy state method can be found in \cite{scarani09,hwang03,wang05,lo05,moroder08}.

The assumption that a discrete protocol is secure even when Alice prepares states that can contain multiple photons is satisfiable, since the decoy state method can account for this imperfection in the implementation and analysis. Alternatively, the error rate observed in the protocol without decoy states can still give an estimate of the number of errors for single photons. However, using this kind of bound results in a lower key rate (since the total error rate $Q$ satisfies $Q\geq Q_1$ and the key rate is monotonically decreasing in the error rate) \cite{scarani09}.

The decoy state method can also be done in a passive way, instead of having to actively change the average photon number in the pulses (see \cite{curty09,curty10,krapick14,xu14} and references therein). One way to implement passive decoy states is to use a weak coherent pulse with a beamsplitter followed by a threshold detector \cite{curty09}. Depending on if the threshold detector clicks or not, different superpositions of number states will be prepared, which can be used for a two-state decoy method. The passive decoy state method has been used in security proofs for QKD protocols \cite{zhou14} and implemented in a recent experiment \cite{sun14}.

Another method to counteract the photon-number splitting attack is to do a protocol that is robust against this kind of attack, such as the SARG protocol \cite{scarani04} (see Section~\ref{sec:discrete_protocols}).

\subsection{State Structure and Symmetry}

Many assumptions can be made about the states produced from the source and the states sent into the channel. Usually it is assumed that the states in discrete-variable protocols are independent. This assumption is necessary for analyses that need to treat signals in an independent way. For example, in the polarization implementation of BB84 the states prepared are not independent in practice. As mentioned in Section~\ref{sec:phase_coherence}, photon sources can have coherence in the phase between subsequent pulses, e.g.~the sequence of states have the form $\ket{\alpha}\ket{\alpha}\cdots\ket{\alpha}$. However, this assumption is satisfiable by randomizing the phase. Other degrees of freedom in the state of the photons may also be correlated between subsequent states but usually a calibrated assumption is made that the states do not have such correlations. However, this assumption should be verified by characterizing the source to ensure that this is the case.

Another assumption is that there are an infinite number of signals sent during the protocol. As discussed in Section~\ref{sec:classes}, this is an assumption that is not physically possible and therefore it is a fundamental assumption. However, there are many recent results that take finite-key effects into account, removing the need for this assumption \cite{rosenberg07,scarani08a,curty10,furrer12,tomamichel12a,leverrier13,walenta13,furrer14a,furrer14b,zhou14}.

In addition to correlations between subsequent states, there may be correlations between the degree of freedom used to encode the bits Alice wants to communicate to Bob and other degrees of freedom. For example, the frequency of the photon may be correlated with the polarization of the photon in polarization BB84. In addition, if different sources are used for different states then, as an example, each source might have a different frequency that will tell Eve the polarization of the photons \cite{kurtsiefer01,scarani09b}. Therefore, the assumption that there are no correlations between other degrees of freedom and the intended degree of freedom is a calibrated assumption.

In the phase implementation of the BB84 protocol, Alice prepares states using a phase modulator on one of her pulses. The phase modulator may induce a loss, which lowers the intensity of one pulse compared to the other. This loss creates a different state, which should be taken into account, such as in \cite{ferenczi12}. If this loss is not taken into account in the security proof then it is a calibrated assumption since it is an approximation of the intended state and the state that results from the implementation.

Some security proofs assume that a fixed finite Hilbert space describes the states sent from Alice to Bob. This assumption is fundamental because any particle that Alice sends to Bob has many degrees of freedom that could be correlated with the degree of freedom Alice uses to send information to Bob. If it is assumed that the quantum states have an i.i.d.~structure then this may either be satisfiable by using the post-selection theorem (which requires further assumptions) or it is a fundamental assumption.

If i.i.d.~states are assumed then symmetry may be exploited to prove that instead of having to prove security for all possible i.i.d.~states, only a small class of states need to be considered. For example, the states shared between Alice and Bob in the BB84 and six-state protocols may be completely determined by the number of errors measured in the protocol \cite{rennerphd}. The states are determined because Alice and Bob can rotate their states (as described at the end of Section~\ref{sec:reductions}) and the protocol is identical. This method only requires that the structure of the protocol satisfies this symmetry, which is a calibrated assumption.

\subsection{The Local Oscillator}

The local oscillator is a strong reference pulse that is sent along with a quantum state in some versions of the B92 protocol and continuous-variable protocols. The local oscillator is usually used in the measurement of the quantum state. The fundamental assumption is typically made that Eve does not interfere with the local oscillator. Since the local oscillator is sent through an insecure channel, this is not a justifiable assumption. However, if Bob monitors the intensity and phase of the local oscillator then the assumption can be satisfiable \cite{haseler08}. Alternatively, Bob can do a measurement of the local oscillator followed by a recreation of his own local oscillator with the same phase as the received local oscillator's phase \cite{koashi04}.

\section{Measurement Imperfections and Assumptions} \label{sec:measurements}

Measurements may also deviate from their perfect models in many ways. Measurements may respond to several photons, even when the single photon subspace is used for the encoding of the information Alice wants to send to Bob. The timing of the signals may be changed by Eve, which can influence Bob's measurement outcomes. Measurement devices also have unintended behaviour, such as with threshold detectors, which can have clicks when there is no signal and also have a limited efficiency so even when there is an incoming photon the detector may not click. Measurements can also deviate from their intended model entirely. An example of this kind of deviation can be illustrated with a blinding attack, where Eve completely controls Bob's measurement outcomes by shining bright light into Bob's detector \cite{lydersen10}. Finally, for device-independent protocols, Bell tests, such as the CHSH experiment, need to be performed precisely according to the model, otherwise Alice and Bob may see a CHSH violation but their states may not be quantum, which means that Alice and Bob should not be able to extract a key from these measurement outcomes.

We now investigate the imperfections of measurements in detail.

\subsection{The Squashing Model} \label{sec:squashing}

Since sources used in qubit device-dependent protocols usually produce weak coherent states instead of single photons, Bob may detect multiple photons, even if there is no eavesdropper. In addition, if Eve is present then she is not restricted to only begin able to send single photons to Bob but she can send any state she wants. It used to be a fundamental assumption that Eve does not get any advantage from this deviation from the intended model \cite{gottesman04}. However, now there is a precise way of determining if a given measurement device is equivalent to its perfect model. This technique is called the squashing model.

If the POVM elements of the measurement on the full Hilbert space of all optical modes is known and suitable POVM elements for a perfect model are chosen then it can be determined if these measurement devices are equivalent \cite{beaudry08,fung11,gittsovich14}. Note that the measurement on the full Hilbert space may give outcomes that never occur in the perfect model. For example, if in polarization BB84 multiple photons are input into the measurement device, then both threshold detectors may click. This event is called a \emph{double-click} which does not happen for single photons in the perfect model.

It is not clear what bit Bob should assign to these measurement outcomes. Bob could treat these outcomes as loss so that in the sifting step of the protocol Alice and Bob will ignore these outcomes. However, Eve can attack the protocol in a way that will give her full information about the key if Alice and Bob discard the double-click events \cite{lutkenhaus99}. Instead, Alice and Bob can treat these events as an error and each randomly assign a bit value to these measurement outcomes. This assignment corresponds to assigning one of the perfect model's outcomes (a $0$ or a $1$) randomly to the double-click events.

Given an assignment of detection events in the implementation to detection events in the perfect model, POVM elements are defined that describe the measurement outcomes and any classical assignment of these outcomes to bits. Now we can formally define the squashing model.

Given a POVM on a large Hilbert space $\{F_i\}$ and a POVM on a small Hilbert space $\{F^Q_i\}$ with an association between these two POVM elements (so that $F_i$ represents an outcome in the small Hilbert space described by $F_i^Q$) then there exists a squashing model if there exists a CPTP map $\mathcal{T}$ such that its Choi-Jamio{\l}kowski matrix $T$ (see Theorem~\ref{thm:CJ}) satisfies
\begin{align}
T^R\kket{F^Q_i} = \kket{F_i} \label{eq:squash_1} \;\forall i\\
T^{\dag}=T\geq 0, \label{eq:squash_2}
\end{align}
where $T^R$ is the Normal map for $\mathcal{T}$ (Defn.~\ref{defn:normal_rep}) and $\kket{G}$ is the vector representation of the matrix $G$ (Defn.~\ref{defn:vector}).

If the POVM elements are known, then the linear equations in Eq.~\ref{eq:squash_1} put constraints on the elements of the Hermitian matrix $T$. After these constraints are applied, the matrix $T$ can be checked to see if an assignment of any of the remaining open parameters in $T$ can make it positive semi-definite. If $T$ is positive semi-definite then a squashing model exists. If it is not positive semi-definite then both sets of POVM elements can be mixed with classical noise to form new POVM elements that correspond to adding noise to the outcomes of Alice's and Bob's measurements. If enough noise is added then a squashing model always exists \cite{gittsovich14}. Therefore, a squashing model is only practical if no noise or a low amount of noise is added (e.g.~less than the threshold of the protocol minus any inherent errors in Alice's and Bob's devices and in the quantum channel).

In addition, there are several imperfections in devices that can be taken into account by the squashing model, such as time resolution and inefficiency \cite{gittsovich14,narasimhachar11,fung11}.

As examples of measurements that have squashing models, the BB84 active and passive measurement devices for the polarization implementation, as well as the phase implementation measurement have a squashing model to a single-qubit equivalent measurement \cite{beaudry08,tsurumaru08,narasimhachar11}. Surprisingly, the squashing model for the six-state protocol with an active basis measurement (and without the addition of noise) does not exist \cite{beaudry08}. However, noise can be added in the classical post-processing to make a squashing model possible \cite{gittsovich14}.

The squashing model is a satisfiable assumption that requires the calibrated assumption that the full description of the measurement is known. The squashing model may also require the classical addition of noise in order for a squashing model to be possible, which is a verifiable assumption without the need for further assumptions (since the classical post-processing of the measurement outcomes can be implemented in Alice's and Bob's isolated labs).

\subsection{Measurement Structure}\label{sec:bell_ur}

There are a variety of assumptions made about the structure of measurements in QKD protocols. As with the squashing model above, the calibrated assumption that the measurement POVM elements are completely known is one such assumption. However, this assumption can lead to side-channel-attack strategies for Eve whenever the measurement model deviates from the assumed description. For example, if avalanche photodiodes are used as threshold detectors (see Section~\ref{sec:threshold}), Eve can continuously shine bright light into Bob's detector. This light causes the threshold detector to have an avalanche and it cannot recover from it, since the electrons are constantly excited by the bright light. Eve can then completely control Bob's measurement device \cite{lydersen10}.

For example, in polarization BB84, if Eve stops sending bright light of a certain polarization for a time longer than the recovery time for one of Bob's threshold detectors, Eve can make Bob's threshold detectors click when she wants to. This control allows Eve to measure the states sent by Alice and force Bob's measurement device to have exactly the same outputs, which makes the protocol completely insecure.

This blinding attack has been demonstrated experimentally for BB84 \cite{lydersen10,gerhardt11} and SARG04 \cite{jain14}. This attack has also been examined for superconducting detectors in the DPS protocol \cite{fujiwara13}. Potential ways of avoiding this attack have been discussed in \cite{yuan11,stipcevic14,lim14}.

Eve may do other attacks that work outside of the model for the measurement, such as changing the timing of the signals (Section~\ref{sec:time_resolution}) or by using other degrees of freedom, such as frequency, to change the response of Bob's measurements. For example, the measurement device may be calibrated to measure a certain frequency of light. If the light is outside of a narrow range of frequencies then Bob's detector may have a lower efficiency. Eve could then perform an attack where she changes the frequency of the light depending on its state, so that if Bob gets a measurement outcome, then she has partial knowledge of which outcome for Bob is most likely.

Since the calibrated assumption that Bob's measurement device is completely known can lead to many attacks, it would be ideal to have a weaker assumption that is sufficient to still prove security. If the entropic uncertainty relation (Theorem~\ref{thm:ur}) is used then Bob's measurement device only needs to be characterized by the overlap, $\max_{x,z}\|\sqrt{F_{x}}\sqrt{G_{z}}\|_{\infty}^2$, for the POVM elements $F_x$ and $G_z$. While the POVM elements are needed to determine this overlap, there is a related overlap and corresponding uncertainty relation that can be experimentally verified without knowing the POVM elements \cite{tomamichel13} (see Definition 7.2 in \cite{tomamichelthesis}). This reduces the calibrated assumption that the POVM elements are fully known to a verifiable assumption. The experiment to verify the overlap is a CHSH test, which requires further assumptions (see Sections~\ref{sec:CHSH} and \ref{sec:bell_tests}).

\subsection{Time Resolution}\label{sec:time_resolution}

Measurements have a finite time resolution. For example, threshold detectors are not able to perform a measurement between the time an avalanche has started and the detector has recovered (see Section~\ref{sec:threshold}). This down time is usually not taken into account in the models used for security proofs and therefore Eve may get an advantage from this imperfection.

In addition to the recovery time, threshold detectors can have after pulses, where an electron that is recovering can cause another avalanche. If the time window for detection events is small enough, then after pulses could be registered as a separate detection event, causing an error.

One attack that takes advantage of the dead time is the \emph{time-shift attack}, where Eve changes the timing of the signals so that Bob is more likely to measure one state rather than another (since while one threshold detector is recovering, another detector can still click) \cite{fung07b,weier11}.

Another possible attack is the \emph{phase remapping attack}, where in the phase implementation of the BB84 protocol Eve can change Bob's phase modulation by changing the timing of the signals so that the state reaches the phase modulator right before or right after the phase modulation is applied \cite{fung07,xu10}.

Another way that information can be leaked to Eve is if Alice prepares states using parametric down-conversion (see Section~\ref{sec:PDC}) and uses a threshold detector to measure the idler. During the downtime of the threshold detector, Alice may also produce another state that will be output through the same state preparation. In this case Eve will get multiple states she can use to try to determine Alice's preparation setting. The assumption that Eve does not get an advantage with this preparation, due to the down time in the threshold detector, is satisfiable, since Alice can block any states from being output during the time between sending a state and the recovery of her threshold detector.

Alice and Bob also need to communicate their state preparation and measurement times, so that they can pair each prepared state to a measurement outcome. It turns out that if their timing information is too accurate, the timing communication may give information to Eve \cite{lamas-linares07} and so Alice and Bob should limit the accuracy of their timing information \cite{scarani09b}. Therefore, the assumption that Eve does not get an advantage from the timing information is a calibrated assumption.

\subsection{Loss} \label{sec:losses}

There are two kinds of losses in quantum-cryptography protocols: losses from the quantum channel and losses in Alice's and Bob's devices.

Loss in the quantum channel can usually be taken into account in the security proof, since Eve is allowed to do anything allowed by quantum mechanics. For the key rate, loss usually just scales the key rate by a constant, since the key rate is the number of secure bits that are produced per signal sent. However, loss can be a difficult issue for many device-independent security proofs (see Section~\ref{sec:bell_tests}). Also, the losses in a measurement are usually not taken into account in the security proof and therefore it would be convenient to have a method of relating security proofs that assume lossless measurements with the security of protocols that have lossy measurements.

Typically, security proofs assume that measurements do not have any loss but there is loss in the quantum channel. If we can model loss in a measurement as loss that occurs in the quantum channel followed by a lossless detector then we can apply these security proofs to implementations with lossy measurements.

For example, the loss in a single-photon detector is the probability that it will give a vacuum output given that it receives a single photon. Typically, it is assumed that losses happen with a fixed probability, i.e.~independent of how many photons are input to the detector and independent of the structure of the state. This is a set of calibrated assumptions, since this loss model approximately describes the ways losses occur in practice.

For measurements that have threshold detectors, a lossy threshold detector can be modelled as a beamsplitter followed by a threshold detector with perfect efficiency. One input to the beamsplitter is the input state and the other input is the vacuum; the reflected output goes to the environment and the transmitted output goes to a threshold detector with perfect efficiency.

For example, in the active BB84 measurement using polarization, there are two threshold detectors after a polarizing beamsplitter. If the efficiency of each threshold detector is the same (this is a calibrated assumption) then we can decompose each lossy threshold detector into a beamsplitter and a lossless threshold detector. Since a beamsplitter commutes with a polarizing beamsplitter, the two equivalent beamsplitters after a polarizing beamsplitter are equivalent to one of these beamsplitters followed by the polarizing beamsplitter (see Fig.~\ref{fig:commute_beamsplitter}). This means that loss in the detector can be modelled as loss that occurs in the quantum channel followed by a lossless detector. Now we can apply a security proof that assumes that there are losses in the quantum channel to this implementation.

\begin{figure} \centering
\includegraphics[width=0.8\textwidth]{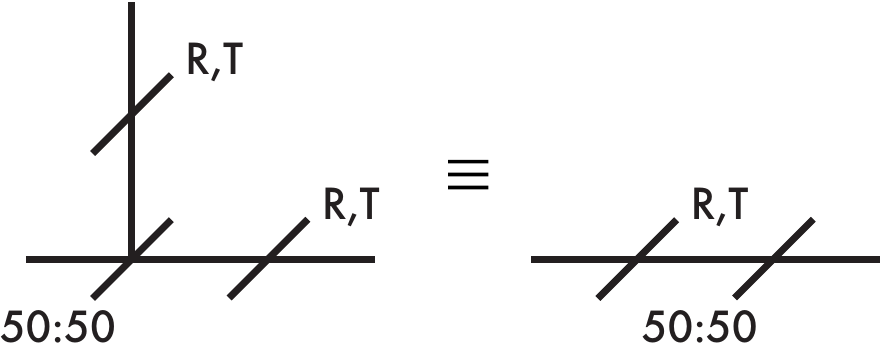}
\caption[Commuting Beamsplitters]{Commuting beamsplitters. The situation on the left is a 50:50 beamsplitter followed by two identical beamsplitters with transmissivity $T$ and reflectivity $R$. This situation is equivalent to one of the $R,T$ beamsplitters followed by a 50:50 beamsplitter.}
\label{fig:commute_beamsplitter}
\end{figure}

However, if the losses for each threshold detector are not the same then this imperfection needs to be taken into account by the security proof (see \cite{fung09} for an example).

As another example, consider the measurement used in the phase implementation of BB84 (see Fig.~\ref{fig:phase}). If there is loss in the threshold detectors then (under the same calibrated assumption that the loss can be modelled as a beamsplitter) the measurement is equivalent to a beamsplitter followed by a lossless measurement. However, the phase modulator may have loss as well. In this case, if the loss is modelled as a beamsplitter then to commute the beamsplitters and make the same argument as with the active polarization BB84 measurement, it changes the ratio of the 50:50 beamsplitter to a new ratio (see \cite{ferenczi12} for the details of this ratio). Then a new security proof is needed for a lossless measurement that does not have a 50:50 beamsplitter \cite{ferenczi12}.

In general, losses in measurements need to be taken into account by either having a calibrated assumption about the model of the loss (to separate a lossy measurement into loss followed by a lossless measurement) or by including lossy measurements directly in the security proof.

\subsection{Bell Tests} \label{sec:bell_tests}

Device-independent QKD protocols require a CHSH experiment or other kind of Bell test (see Section~\ref{sec:DI}). To use the outputs of a Bell test for QKD, some assumptions about the devices and channel may be necessary. If in an implementation there are deviations from these assumptions, such that a Bell inequality violation is observed but there are states that are described by a local hidden variable model that could give the same measurement outcomes, then there is a \emph{loophole} in the implementation.

Two recent reviews discuss many of the loopholes that exist in implementations of Bell tests \cite{larsson14,brunner14}. A major loophole is that detectors are not perfectly efficient so some measurement outcomes are lost \cite{ma08,lim13a}. These lost measurements should be replaced with uniformly random bits to ensure that an eavesdropper did not correlate the losses with Alice's and Bob's measurement outcomes. Another loophole is that Alice's and Bob's devices need to be spatially separated, otherwise there may be signalling between them. For example, it could be that Alice's measurement is chosen at her end and then a message is sent to Bob's measurement to tell it to output a bit that depends on Alice's measurement outcome and basis choice. This situation can be avoided if Alice and Bob ensure the timing of their measurements is close enough that signalling is not possible.

It has been assumed for many device-independent security proofs that each measurement is done independently (see \cite{arnon-friedman12,barrett13a} and references therein). This assumption could only be justified in the device-independent setting if a separate measurement device was used, which is completely impractical. However, now there are security proofs and methods to avoid this assumption \cite{barrett12,reichardt13,vazirani12}.

The assumptions for Bell tests are usually fundamental assumptions because they require a strict adherence to a perfect model that is not possible in practice.

\subsection{Sampling with Measurements} \label{sec:exam_problem}

For P\&M protocols with a basis choice, Alice and Bob need to sift out measurement outcomes that were not prepared and measured in the same basis. Similarly, for entanglement based protocols with a basis choice, Alice and Bob will remove outcomes where they did not measure in the same basis. For a protocol with two bases, such as BB84, the bases can be chosen uniformly at random. However, this means that half of the measurement outcomes will be removed. To increase the key rate, it may be preferable to bias one basis over the other. If one basis is chosen with probability $p<1/2$ and the other with probability $1-p$ then the fraction of measurement outcomes that are sifted will be approximately $2p(1-p)$.

However, the probability $p$ cannot be made too small, or else there will not be enough measurements in one basis for parameter estimation. This probability can then be optimized for the given protocol, such as in \cite{tomamichel12a}.

As we have described parameter estimation in this thesis thus far, it is performed by Alice or Bob picking a random subset of their measurement outcomes to be communicated to the other party. However, for protocols with a biased basis choice it is preferable to use the basis that is chosen with probability $p$ to be used for parameter estimation, while the other basis' measurement outcomes will be used for the key. This assignment would be sufficient for parameter estimation, since by using a tool like the entropic uncertainty relation (Theorem~\ref{thm:ur}), the measurement outcomes of one basis are sufficient to prove security (see Section~\ref{sec:device_dependent}).

However, to perform parameter estimation, it is important to be able to infer the statistics of Alice's and Bob's entire strings from a small sample. Typically a result like Serfling's inequality (Lemma~\ref{lemma:serfling}) is used to do this estimation. These kind of results require that the sample is taken uniformly at random from a classical string. For example, Alice may choose a random subset of her string to communicate to Bob by using some classical randomness. Instead, the sampling can be done through a measurement's basis choice, where Alice and Bob may only communicate measurement results from one basis and use the measurement results from the other basis for the key. In this setting it is not clear if the same sampling statistics like Serfling's inequality still hold and it is currently an open question \cite{beaudry14}. Ideally, Alice and Bob would still only need to communicate a small fraction of their strings in order to get the same bound as in Lemma~\ref{lemma:serfling}. Currently, the assumption that sampling by using a basis choice gives the same estimate as classical estimation is a fundamental assumption.

\section{Classical Post-Processing Assumptions} \label{sec:post-processing}

Not only are there assumptions about the devices and quantum states used in the protocol but there are assumptions made about the classical components of the protocol and the classical post-processing as well.

First, the randomness used in the protocol (such as for basis choices, picking the random sample in parameter estimation, picking a hash function for privacy amplification, etc.) must be truly random (see Randomness Extraction in Section~\ref{sec:qc_protocols}). If the protocol does not use true random numbers then it is not clear if the steps of the protocol that use randomness still produce the same results. Therefore, it is important to investigate if security still holds if this assumption is relaxed. For example, if Eve is able to control some of Alice's and Bob's basis choices then Alice and Bob may still be able to certify that they have violated a Bell inequality \cite{koh12}. However, if true random numbers are generated then the assumption that true randomness is used is a verifiable assumption. If the random numbers are only approximately truly random (for example, only a lower bound on the min-entropy of some classical data is known) then the assumption can be verifiable, since a randomness extraction protocol can extract almost perfect randomness from an imperfect source (see Sections~\ref{sec:qc_protocols} and \ref{sec:privacy_amplification2}). Randomness extraction is universally composable, so randomness that is extracted from an imperfect random source can be combined with a universally composable QKD protocol that uses that randomness. If the QKD protocol requires perfectly uniform randomness then it can use the almost perfect randomness from a randomness extraction protocol and the QKD protocol will still be secure. The security parameter for the composition of randomness extraction and QKD will depend on the specific protocols that are used (see \cite{portmann14} for an example of composition of QKD with another protocol).

Second, Alice and Bob have to estimate how much information Eve gets from the classical communication sent during information reconciliation so that they can remove Eve's knowledge of Alice and Bob's string in privacy amplification. If this estimation is too low, then the protocol may be insecure. A precise analysis of how much information Eve gets from the communication in information reconciliation can be found in \cite{tomamichel14a}.

Lastly, the classical computers that are used to store the measurement outcomes and to perform the classical post-processing should be isolated. For example, if these computers are connected to the internet then Eve may be able to hack into the computer to discover Alice's and Bob's strings. It is therefore a fundamental assumption that Eve does not get access to Alice's and Bob's classical computers.


\chapter{Contributions} \label{chap:contributions}

\section{Introduction}

In this chapter we review two contributions that are relevant to the framework discussed in this thesis.

The first contribution is a security proof of two QKD protocols that use two-way quantum communication. In these protocols Alice sends states to Bob, Bob performs an encoding operation on these states, and then he sends the states back to Alice \cite{beaudry13}. Here we present these idealized protocols and discuss the assumptions necessary for the security proofs to hold.

The second contribution is a proof of one of the most fundamental properties in information theory: the data-processing inequality \cite{beaudry12a}. Informally, this inequality states that if a physical system undergoes a transformation, then the information content of that system cannot increase. If the inequality was not true, then the world we would live in would be very strange! For example, a computer could run an algorithm such that the computer would learn everything about the universe without having to interact with the universe.

The data-processing inequality is used extensively throughout quantum information theory in a variety of contexts. We include the data-processing inequality in this thesis because of its fundamental nature and widespread use, but also because our proof is of a similar spirit to this thesis: this proof separates a fundamental property that is easy to prove from a specialization that is difficult. In Chapters~\ref{chap:security_proofs} and \ref{chap:assumptions} we have separated the `simple' task of proving security for an idealized model of a QKD protocol from the more challenging task of connecting the security statement with implementations.

In addition to these two contributions, this thesis is a contribution, since it discusses several topics in the field of quantum cryptography and quantum key distribution: security, security proof methods and assumptions. Another contribution was outlined in Section~\ref{sec:squashing}, which was work done prior to this PhD \cite{beaudry08,moroder10,gittsovich14}. There is another contribution that deals with generalizing the mutual information but it is unrelated to the contents of this thesis \cite{ciganovic14}. Lastly, there is a contribution in progress that is related to Section~\ref{sec:exam_problem} \cite{beaudry14}.

\section{Two-Way QKD} \label{sec:two_protocols}

Two-way QKD uses two quantum channels: one from Alice to Bob and one from Bob to Alice. In practical implementations, the second channel may be the first channel in the reverse order. These protocols have been introduced as an alternative to one-way QKD, which could have higher key rates compared to one-way equivalents in certain implementation scenarios \cite{cai04a,cai04b,deng04,bostrom02,lucamarini05}.

One inefficiency in BB84-like protocols is the need for basis sifting. Since Alice and Bob need to throw away a fraction of their key, the key rate would be higher if they used a protocol that is deterministic, e.g.~that has a basis choice but does not require basis sifting. Since Alice both prepares and measures the states in the two-way protocols we consider, she can choose the basis of her measurement based on what state she prepared. Therefore, no basis sifting is necessary for these two-way protocols.

Protocols with a basis choice can be made more efficient by choosing the bases with different probabilities (see Section~\ref{sec:exam_problem}). In the infinite-key scenario, this removes the effect of basis sifting, since an arbitrarily strong bias of one basis over the other(s) can be made while still getting perfect statistics about measurement outcomes from each basis. However, in the finite-key regime there is a tradeoff between the basis choice probability and other parameters in the protocol, which results in a limit on the basis bias (see Table II of \cite{tomamichel12a} for an example). Deterministic two-way protocols might have an efficiency advantage over protocols with basis sifting in the finite-key scenario \cite{beaudry13}.

There also exist two-way implementations of one-way protocols. For example, the BB84 protocol can be implemented in a \emph{Plug \& Play} version \cite{ribordy98,muller97}, which is a two-way protocol. In this protocol, one bit is communicated for every quantum state sent forwards and backwards through the same quantum channel. However, it is known that two bits can be communicated by sending one qubit (this task is called super dense coding, see Section~\ref{sec:SDC}) \cite{bennett92d}. We propose a QKD protocol that can send two bits of information per qubit and prove its security.

A severe limitation of two-way QKD is the scaling of the losses with the length of the quantum channel between Alice and Bob. If the loss in each channel is $\eta$ then the total loss is $\eta^2$. This means that two-way QKD is primarily only useful for short-range applications. However, it may still be useful to perform QKD from the ground to a satellite \cite{vallone14}.

Here we describe two two-way protocols in their perfect forms and then discuss how the security proofs of these protocols fit into the framework for assumptions discussed in Chapter~\ref{chap:assumptions}.

\subsection{Modified LM05 QKD Protocol} \label{sec:LM05}

The LM05 protocol described here is a modified version of the original protocol in \cite{lucamarini05}. Alice prepares one of the four states $\{\ket{0},\ket{1},\ket{+},\ket{-}\}$ uniformly at random from the BB84 protocol and sends the state to Bob through one quantum channel (see Fig.~\ref{fig:LM05}).

Bob applies a map with probability $r$ or a measurement followed by a state preparation with probability $1-r$  to the output of the first channel. The map is chosen uniformly at random from one of the four maps $\{\id,\sigma_X,\sigma_Y,\sigma_Z\}$ (which we call an encoding), where $\sigma_i$ are the Pauli operators (see Section~\ref{sec:discrete_protocols}) and $\id$ is the identity map. These maps are applied to the state $\rho$ so that the output is $\sigma_i\rho\sigma_i$.

Bob's measurement and state preparation is one of two possibilities, which defines two versions of the protocol. In Version 1, Bob applies an $X$-basis measurement and in Version 2 he uniformly at random picks between a measurement in the $X$ basis or the $Z$ basis. After his measurement he prepares the state that corresponds to his measurement outcome. For example, if he measured $\ket{0}$ in the $Z$ basis then he would prepare the state $\ket{0}$.

Bob sends his newly prepared state or the outcome of the encoding on his received state into a second quantum channel back to Alice. Alice then measures the state out of the second channel either in the $X$ or $Z$ basis as in the BB84 protocol (see Section~\ref{sec:discrete_protocols}). Alice chooses her basis to match the state she prepared. For example, if she prepared $\ket{-}$ then she will measure in the $X$ basis.

Alice and Bob will use most of the instances where an encoding was performed for the key. They will use the times when Bob did a measurement (and preparation of a state) for parameter estimation only. For direct reconciliation Alice publicly reveals which basis she used for each signal. Bob will reveal which basis he measured in and when he measured instead of applying a map. Alice's raw key is made up of the XOR of her measurement outcomes and her preparation bit (which is $0$ when she prepared $\ket{0}$ or $\ket{+}$ and $1$ when she prepared $\ket{1}$ and $\ket{-}$). Bob's string comes from his encoding operation which correspond the two bits $00, 10, 11, 01$ to $\identity,\sigma_X,\sigma_Y,\sigma_Z$ respectively. When Alice measures in the $Z$ basis Bob keeps his first bit and when Alice measures in the $X$ basis Bob keeps his second bit. The case of reverse reconciliation can be treated similarly.\footnote{In \cite{beaudry13} the roles of Alice and Bob are reversed compared to the way they are presented here. This means that reverse reconciliation for this thesis means direct reconciliation in the paper. We reverse the roles here to be consistent with our definition of direct and reverse reconciliation in this thesis.}

\begin{figure} \centering
\includegraphics[width=\textwidth]{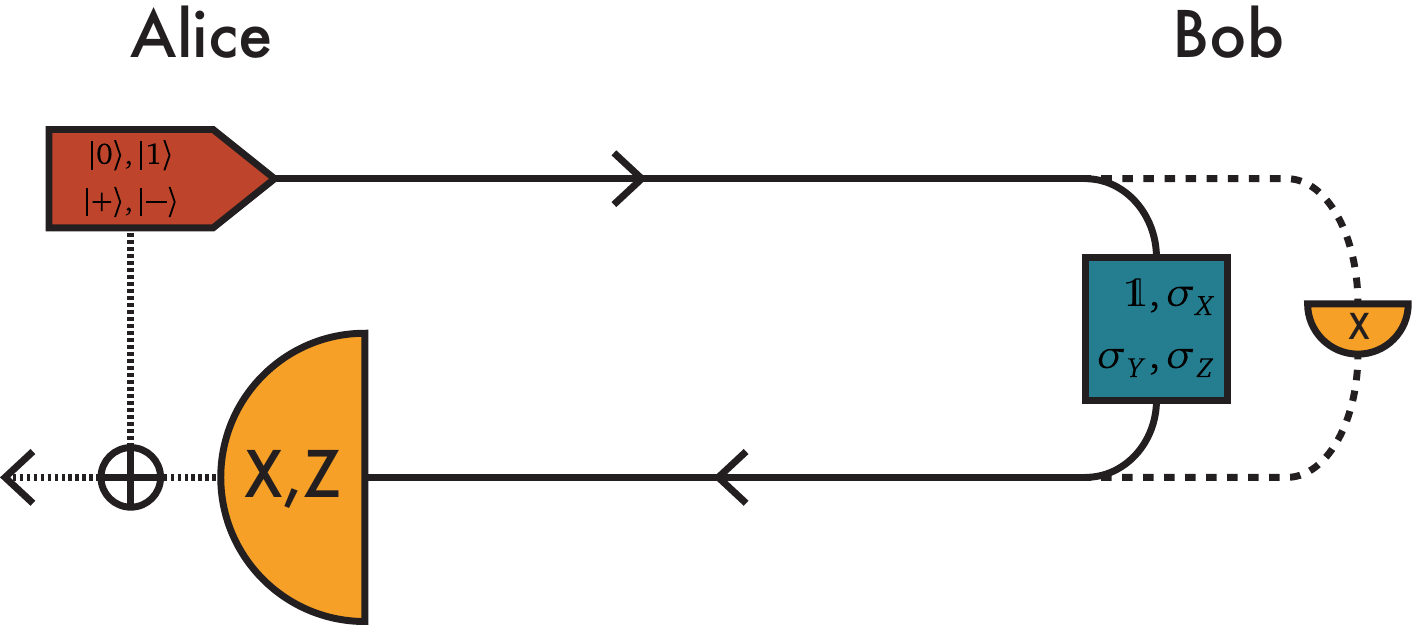}
\caption[The Modified LM05 QKD Protocol]{The modified LM05 protocol (Version 1). Alice prepares one of the four BB84 states and sends it to Bob through one insecure quantum channel. Bob either does one of four possible encoding operations and sends the state back to Alice or a measurement in the $X$ basis. After the measurement he sends the post-measurement state back to Alice. Alice measures in the $X$ or $Z$ basis at random and adds the bit that corresponds to her state preparation and measurement outcome together to form her string. Bob's string is constructed depending on which encoding operations he performs.}
\label{fig:LM05}
\end{figure}

The modification from the original LM05 protocol is the addition of the measurement on Bob's system (see Fig.~\ref{fig:LM05}). This measurement is sufficient to avoid an attack by Eve that gives her full information and does not introduce any errors in Alice's measurements. For example, when Eve receives a state from Alice in the first channel, Eve can store it in a quantum memory and can prepare a maximally entangled Bell state $\ket{\psi^+}$ (see Eq.~\ref{eq:bell_2}). Eve sends half of the Bell state into Bob's encoding. In the unmodified LM05 protocol this state goes directly into Bob's encoding and Eve gets the outcome. Eve can do a Bell measurement on the outcome of the encoding and the other half of her prepared entangled state to determine Bob's encoding with certainty. Eve can apply the encoding to the state she stored from Alice and send it back to Alice in the second channel. By using this attack, Eve learns Bob's encoding and Alice does not get any errors in her measurement. By adding an $X$-basis measurement (in Version 1) or both $Z$- and $X$-basis measurements (Version 2) on Bob's side, Alice and Bob will estimate an error rate of $1/2$ in the first channel (after post-selecting on when Alice's and Bob's bases match) if Eve tries this attack.

We discuss the security proof of the LM05 protocol and the assumptions required in Sections~\ref{sec:2_security} and \ref{sec:2_assume} below.

\subsection{Super Dense Coding QKD Protocol} \label{sec:SDC}

The super dense coding (SDC) QKD protocol is similar to the LM05 protocol and is introduced in \cite{beaudry13}. The SDC QKD protocol is based on the quantum information task of super dense coding (hence the name).

Super dense coding is the task of sending two classical bits from Alice to Bob by Alice sending one qubit to Bob. To do this task, one qubit of a maximally entangled two-qubit state is sent to Alice and the other qubit is sent to Bob. For example, this state may be
\begin{equation} \label{eq:bell_1}
\ket{\psi^+}_{AB} = \frac{\ket{00}_{AB}+\ket{11}_{AB}}{\sqrt{2}}.
\end{equation}
Depending on the two bits Alice wants to communicate $\{00,01,10,11\}$ she will apply $\{\id,\sigma_X,\sigma_Z,\sigma_Y\}$ respectively, where $\sigma_i$ are the Pauli operators (see Section~\ref{sec:discrete_protocols}) and $\id$ is the identity map. As in the LM05 protocol, these are applied as $\sigma_i\rho_A\sigma_i$ to the input $\rho_A$. This map results in either the state $\ket{\psi^+}$ for $00$ or one of the other three states from the Bell basis (Eq.~\ref{eq:bell_2}). Alice now sends her qubit to Bob. Bob can do a measurement in the Bell basis to determine which state he has and learn which two bits Alice wants to communicate.

Note that Alice's sent qubit appears to be the maximally mixed state $\identity_A/2$ to an eavesdropper who does not know Alice's two bits. We show that this kind of protocol can be used for QKD and is provably secure \cite{beaudry13}.

The SDC QKD protocol starts with Alice preparing $\ket{\psi^+}$ and keeping one qubit of the state in a quantum memory (see Fig.~\ref{fig:SDC}). She sends the other qubit to Bob through a quantum channel. Bob will, with probability $r$, apply the same set of maps that are used in the LM05 protocol and superdense coding (uniformly at random one of $\{\id,\sigma_X,\sigma_Z,\sigma_Y\}$); and, with probability $1-r$, he measures the qubit in the $Z$ basis and then uniformly at random prepares a state in the $X$ basis. Bob sends this random $X$-basis state or the outcome of his encoding back to Alice. Alice will, with probability $r$, measure in the Bell basis. This measurement will ideally tell her what map Bob applied and therefore she learns two bits from Bob. With probability $1-r$ Alice measures her stored qubit in the $Z$ basis and her received qubit in the $X$ basis.

The encoding and Bell measurement are used for the key, while the $Z$- and $X$-basis measurements are used for parameter estimation.

\begin{figure} \centering
\includegraphics[width=\textwidth]{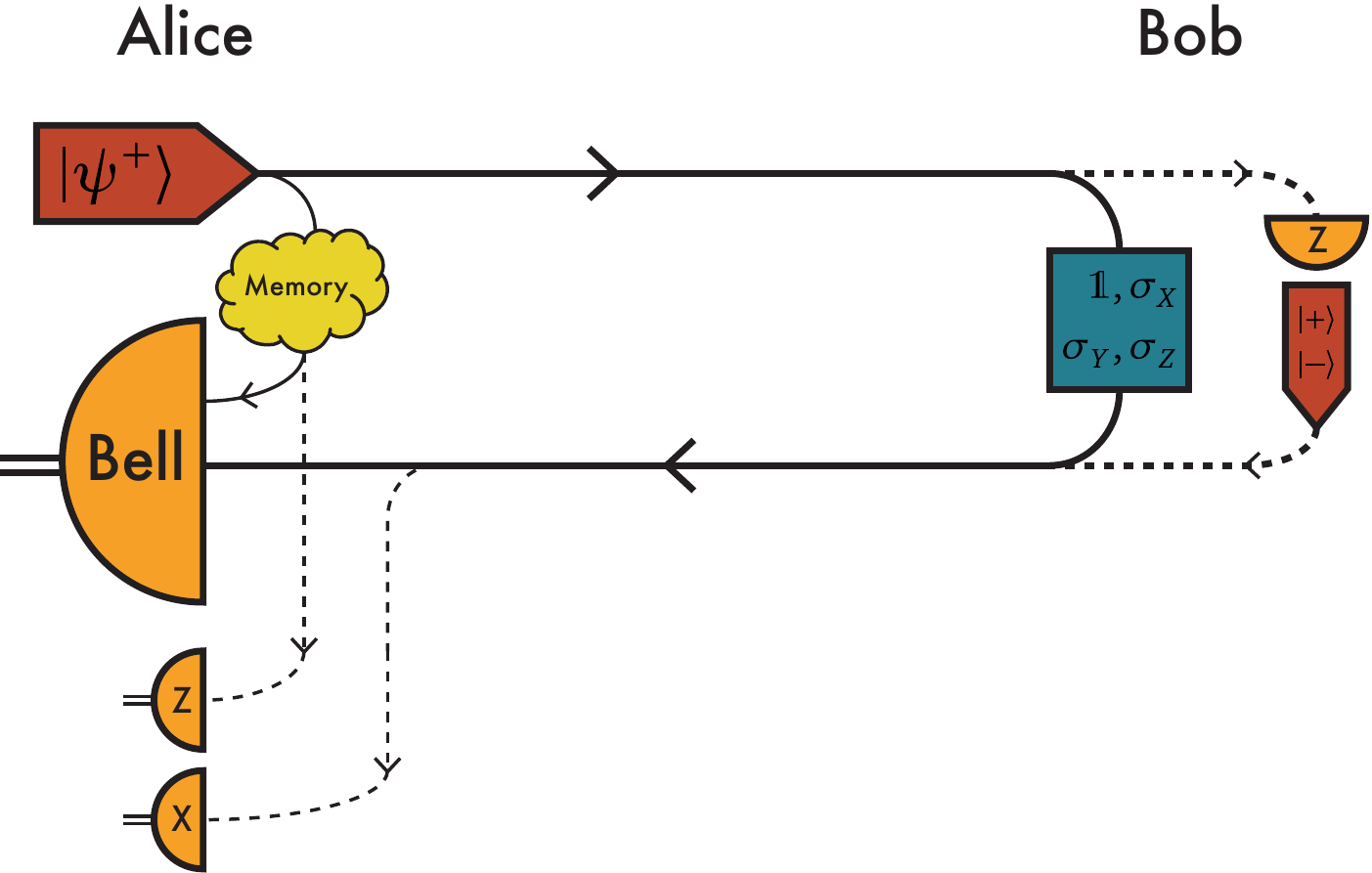}
\caption[The SDC QKD Protocol]{The SDC QKD protocol. Alice prepares the maximally entangled state $\ket{\psi^+}$ and stores half of it in a quantum memory. Bob applies an encoding or a $Z$-basis measurement followed by a random preparation of $\ket{+}$ or $\ket{-}$. Bob sends the resulting state back to Alice who either does a Bell measurement or measures her state in the $Z$ basis and Bob's returned state in the $X$ basis.}
\label{fig:SDC}
\end{figure}

We now discuss the security of the LM05 and SDC protocol, followed by the assumptions under which security holds.

\subsection{Security Proofs of Modified LM05 and SDC QKD} \label{sec:2_security}

The protocol model under which we prove security for both the modified LM05 protocol and the SDC protocol is for the protocol class [partially device-dependent, infinite-key, qubits, basis choice\footnote{While the LM05 protocol has a basis choice, this choice is \emph{deterministic} in that Alice knows which basis to measure in and therefore there is no basis sifting needed.}, coherent attacks] (see Section~\ref{sec:classes}). The security proof applies to both the P\&M protocols described above and equivalent entanglement-based protocols. The proof also applies regardless of whether the measurements are done in an active or passive way.

The proofs of security for both protocols use the entropic uncertainty relation of Theorem~\ref{thm:ur} as well as the Devetak-Winter key rate (see Section~\ref{sec:device_dependent}). To apply these tools, the P\&M protocols are shown to be equivalent to an entanglement-based protocol.

For the LM05 protocol, by using the uncertainty relation and the Devetak-Winter rate, we get a lower bound on the key rate of
\begin{equation} \label{eq:key_LM05}
r \geq 1 - \min_i h(q_{G^i})-h(q_F),
\end{equation}
where $h(\cdot)$ is the binary entropy function (Defn.~\ref{defn:binary_entropy}), $q_F$ is the error rate in Alice's measurement compared to Bob's prepared states, and $q_{G^i}$ is the error rate in Bob's measurements with Alice's preparations, where $i\in\{0,1\}$ denotes whether it is the error rate in the $Z$ basis from the first channel or the $X$ basis in the second channel. Eq.~\ref{eq:key_LM05} is an improvement on the key rate of \cite{lu11}. In addition, we make less assumptions about the states and devices than \cite{lu11} (see Section~\ref{sec:2_assume}).

For the SDC protocol we get a lower bound on the key rate of
\begin{equation}
r \geq 2 -h_4(q_G)-h_4(q_F),
\end{equation}
where $h_4(q)$ is the Shannon entropy of a distribution with four values, $q=\{q_1,q_2,q_3,q_4\}$,
\begin{equation}
h_4(q) := - q_1\log q_1 - q_2 \log q_2 - q_3\log q_3 - q_4\log q_4,
\end{equation}
where $\sum_{i=1}^4 q_i=1$ and $q_i\geq 0, i\in \{1,2,3,4\}$. The error rate $q_G$ is the set of errors between Alice's and Bob's $Z$- and $X$-basis measurements, $q_G:=\{q_G^1,q_G^2,q_G^3,q_G^4\}$: $q_G^1$ corresponds to no error, $q_G^2$ corresponds to an error in the $Z$ basis but not in the $X$ basis, $q_G^3$ to an error in the $X$ basis but not in the $Z$ basis, and $q_G^4$ to an error in both the $Z$ and $X$ basis. The error rate $q_F:=\{q_F^1,q_F^2,q_F^3,q_F^4\}$ is the set of errors between Bob's encoding and Alice's Bell measurement. Each of the error rates correspond to whether there is no error in either bit of the Bell measurement, an error in the first bit only, an error in the second bit only, and whether there is an error in both bits.

\subsection{LM05 and SDC Assumptions} \label{sec:2_assume}

We now consider the assumptions necessary for the security proofs to apply to an implementation of the LM05 or SDC protocol \cite{beaudry13}. We use the classification of Section~\ref{sec:assumption_classification} to denote what kinds of assumptions they are and how they are justified: fundamental, calibrated, verifiable, or satisfiable.

\begin{enumerate}
\item {\bf Qubits are prepared.}

This assumption is fundamental, since qubits cannot be prepared in practice. However, this assumption can be removed: if Alice prepares entangled bipartite states and does a measurement with a basis choice on one half of it, using the other half as her prepared state sent into the first channel, then no assumption is necessary about the preparation of states. 

This assumption is necessary to make the connection between the P\&M protocols and their entanglement-based equivalents.

\item {\bf Bob's output encoded state is a fixed state.}

Formally, this assumption can be stated using the maps that Bob uses in his encoding, $\mathcal{E}_i$, which maps states acting on $\hilbert_A$ to states acting on $\hilbert_D$, as
\begin{equation}\label{eq:encoding_assumption}
\frac{1}{4}\sum_{i=1}^4\mathcal{E}_i (\rho_A) = \sigma_D \quad\forall \rho_A\in S_{=}(\hilbert_A).
\end{equation}

This assumption is calibrated, since Bob can calibrate his encoding device so that on average over the encodings his state is approximately a fixed state, $\sigma_D$. Ideally $\sigma_D$ should be a maximally mixed state, since in the perfect protocol description we have:
\begin{equation}
\frac{1}{4}\left(\rho+\sigma_X\rho\sigma_X+\sigma_Y\rho\sigma_Y+\sigma_Z\rho\sigma_Z\right) = \frac{\identity}{2},
\end{equation}
for any qubit state $\rho$. However, we do not assume that Bob receives a qubit. We only assume that Eq.~\ref{eq:encoding_assumption} holds for any input state $\rho_A$, regardless of the Hilbert space $\hilbert_A$.

This assumption is necessary in order to make the connection between the P\&M protocols and their entanglement based versions.

\item {\bf Measurements detect each signal independently.}

This assumption is calibrated, since the measurement devices are not designed to have a memory. However, if threshold detectors are used then during their dead times they may have some memory effects (see Section~\ref{sec:measurements}).

This assumption is necessary to use the uncertainty relation and the Devetak-Winter rate applied to each signal independently.

\item {\bf Alice or Bob's devices are characterized by a single constant.}

This assumption is verifiable. Depending on whether direct or reverse reconciliation is performed, the uncertainty relation is applied such that Alice's or Bob's devices only need to be characterized by a single constant. This constant can be verified for Alice's measurement by doing a Bell test (see Section~\ref{sec:bell_ur}).

\item {\bf There are no losses.}

This assumption is fundamental, since in practice there will always be some loss. It was made to simplify the analysis.
\end{enumerate}

If Alice does her preparations using entangled states then these security proofs require one unjustified fundamental assumption, two calibrated assumptions, and one verifiable assumption. There are also other fundamental assumptions that are implicitly made in all QKD implementations, such as the isolation of Alice's and Bob's labs and that quantum mechanics is correct (see Section~\ref{sec:universal_assumptions}). It remains as future work to extend this security proof to take losses and finite-key effects into account.

\section{The Data-Processing Inequality} \label{sec:DPI}

The data-processing inequality was first proven in \cite{lieb73,lieb73b}. However, the proof of this inequality has remained quite challenging: it requires many mathematical tools to prove this very fundamental property of information. For example, there have been initial proofs using abstract operator properties \cite{lieb73,lieb73b,simon79} and their simplified versions \cite{nielsen05,petz86,ruskai07}. There are other proofs using the operational meaning of the von Neumann entropy \cite{horodecki06,horodecki05}, Minkoski inequalities \cite{carlen99,carlen08}, or holographic gravity theory \cite{headrick07,hirata07}. All of these different techniques give insight into why the data-processing inequality is true.

Our contribution is a proof that provides intuition as to why the proof is difficult: the property itself is not hard to prove, it is the specialization of this property to the quantity used: the von Neumann entropy \cite{beaudry12a}.

Formally, the data-processing inequality for the von Neumann entropy is stated as
\begin{equation}\label{eq:DPI2}
H(A|BC)_{\rho} \leq H(A|B)_{\rho},
\end{equation}
for a quantum state $\rho_{ABC} \in S_{=}(\hilbert_{ABC})$. As mentioned in Section~\ref{sec:iid_entropy}, this inequality also implies that the entropy cannot decrease under any CPTP map acting on $B$.

We can prove the inequality for the smooth min-entropy, where the proof follows almost directly from the definition and then we specialize the smooth-min-entropy inequality to the von Neumann entropy via the quantum asymptotic equipartition property (QAEP) (Theorem~\ref{thm:qaep}) \cite{tomamichel08}. We also provide an alternative proof to the QAEP than \cite{tomamichel08}, where we do not concern ourselves with the rate at which the smooth min-entropy approaches the von Neumann entropy in the limit of Theorem~\ref{thm:qaep} \cite{beaudry12a}.

The proof methods used in our proof of the data-processing inequality show the power of quantum information theory: it can prove this fundamental property in a simple way for a more general quantity. In addition, our proof highlights that one-shot entropies are more fundamental than the von Neumann entropy, since the proof of the data-processing inequality is easy for the smooth min-entropy.

The proof of the data-processing inequality for the min-entropy is simple, so we reproduce it here. 

\begin{thm} [Smooth min-entropy data-processing inequality \cite{rennerphd,tomamichel10c,koenig08,beaudry12a}] \label{thm:DPI_min} Let $\rho_{ABC} \in S_{=}(\mathcal{H}_{ABC})$. Then
\begin{equation} \label{eq:dpi}
H_{\min}^{\epsilon}(A|BC)_{\rho} \leq H_{\min}^{\epsilon}(A|B)_{\rho}.
\end{equation}
\end{thm}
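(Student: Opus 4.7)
The plan is to prove the inequality directly from the definition of the (smooth) min-entropy, exploiting the fact that the defining operator inequality for the min-entropy is stable under partial trace. Throughout I would use monotonicity of the purified distance under CPTP maps (in particular, under the partial trace over $C$), which guarantees that the $\varepsilon$-ball shrinks compatibly when systems are discarded.

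First I would attack the non-smooth version as a warm-up. Let $\lambda = H_{\min}(A|BC)_\rho$. By Definition~\ref{defn:Hmin} there exists $\sigma_{BC} \in S_{\leq}(\hilbert_{BC})$ such that
\begin{equation}
\rho_{ABC} \leq 2^{-\lambda}\, \identity_A \otimes \sigma_{BC}.
\end{equation}
Since the partial trace is a positive map, applying $\Tr_C$ to both sides preserves the inequality, yielding
\begin{equation}
\rho_{AB} \leq 2^{-\lambda}\, \identity_A \otimes \sigma_B,
\end{equation}
where $\sigma_B = \Tr_C \sigma_{BC} \in S_{\leq}(\hilbert_B)$. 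The candidate $\sigma_B$ certifies that $H_{\min}(A|B)_\rho \geq \lambda$, which gives the non-smooth statement.

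For the smooth version I would lift this argument through the $\varepsilon$-ball. Let $\bar{\rho}_{ABC} \in \ball{\rho_{ABC}}$ be the optimizer for $H_{\min}^{\varepsilon}(A|BC)_\rho$, so that $H_{\min}(A|BC)_{\bar\rho} = H_{\min}^{\varepsilon}(A|BC)_\rho$. Applying the non-smooth argument above to $\bar{\rho}_{ABC}$ produces a state $\bar{\rho}_{AB} = \Tr_C \bar{\rho}_{ABC}$ satisfying $H_{\min}(A|B)_{\bar\rho} \geq H_{\min}^{\varepsilon}(A|BC)_\rho$. The remaining step is to verify $\bar{\rho}_{AB} \in \ball{\rho_{AB}}$, which follows from the fact that the purified distance is non-increasing under CPTP maps (here, $\Tr_C$):
\begin{equation}
P(\bar{\rho}_{AB},\rho_{AB}) \leq P(\bar{\rho}_{ABC},\rho_{ABC}) \leq \varepsilon.
\end{equation}
Since $\bar{\rho}_{AB}$ is a valid candidate in the maximization defining $H_{\min}^{\varepsilon}(A|B)_\rho$, we conclude $H_{\min}^{\varepsilon}(A|B)_\rho \geq H_{\min}(A|B)_{\bar\rho} \geq H_{\min}^{\varepsilon}(A|BC)_\rho$.

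The only non-trivial ingredient is the contractivity of the purified distance under CPTP maps; everything else is a one-line application of positivity of the partial trace. This is precisely the sense in which the one-shot data-processing inequality is structurally simple: the quantity $H_{\min}$ is defined by an operator inequality that is automatically respected by the partial trace, so no deep operator-theoretic machinery (such as Lieb concavity or strong subadditivity) is needed. The von Neumann version Eq.~\ref{eq:DPI2} would then follow as a corollary by invoking the QAEP (Theorem~\ref{thm:qaep}) on the i.i.d.\ extension $\rho_{ABC}^{\otimes n}$, taking $n \to \infty$ and then $\varepsilon \to 0$ on both sides of the smooth inequality.
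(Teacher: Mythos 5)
Your proposal is correct and follows essentially the same route as the paper's proof: take the optimizers for $H_{\min}^{\epsilon}(A|BC)$, apply $\Tr_C$ to the defining operator inequality, and use contractivity of the purified distance under the partial trace to show the resulting pair is a valid candidate for $H_{\min}^{\epsilon}(A|B)$. The only cosmetic difference is that you separate out the non-smooth case as a warm-up before lifting it through the $\varepsilon$-ball, which the paper does in a single pass.
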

\begin{proof} The proof works by finding a candidate solution to the maximization inside $H^{\epsilon}_{\min}(A|B)$ by using $H^{\epsilon}_{\min}(A|BC)$. Their definitions (Defns.~\ref{defn:Hmin} and \ref{defn:sHminmax} together) are
\begin{align}
H_{\min}^{\epsilon}(A|B) &= \max_{\rho'_{AB} \in \mathcal{B}^{\epsilon}(\rho_{AB})}  \max_{\sigma'_B} \sup_{\lambda'} \{ \lambda' : \rho'_{AB} \leq 2^{-\lambda'} \identity_A \otimes \sigma'_B\}, \\
H_{\min}^{\epsilon}(A|BC) &= \max_{\tilde{\rho}_{ABC} \in \mathcal{B}^{\epsilon}(\rho_{ABC})} \max_{\sigma_{BC}} \sup_{\lambda} \{ \lambda : \tilde{\rho}_{ABC} \leq 2^{-\lambda} \identity_A \otimes \sigma_{BC}\}.
\end{align}

First, we we find the optimal values for all of the maximizations in the definition of $H_{\min}^{\epsilon}(A|BC)$ so we have $\lambda = H_{\min}^{\epsilon}(A|BC)_{\rho}$, $\tilde{\rho}_{ABC} \in \ball{\rho_{ABC}}$, and $\sigma_{BC}$ fixed. We have the inequality
\begin{equation}
\tilde{\rho}_{ABC} \leq 2^{-\lambda} \mathbbm{1}_{A} \otimes \sigma_{BC}.
\end{equation}
By applying the trace over system $C$ this inequality becomes
\begin{equation}
\tilde{\rho}_{AB} \leq 2^{-\lambda} \mathbbm{1}_{A} \otimes \sigma_{B}.
\end{equation}
We know that $\tilde{\rho}_{ABC} \in \ball{\rho_{ABC}}$ and so $P(\rho_{ABC},\tilde{\rho}_{ABC})\leq \epsilon$. Since the purified distance does not increase under the partial trace (Lemma~\ref{lemma:CPTP_dist}), it follows that $P(\rho_{AB},\tilde{\rho}_{AB})\leq \epsilon$. Therefore we have $\tilde{\rho}_{AB} \in \ball{\rho_{AB}}$, and $\sigma_{B}\in S_{=}(\mathcal{H}_{B})$, which are candidates for maximizing $H_{\min}^{\epsilon}(A|B)_{\rho}$. Since the optimal values for the maximizations in $H_{\min}^{\epsilon}(A|B)$ will result in the largest $\lambda'$, it follows that $\lambda\leq \lambda'$, which is the desired inequality.
\end{proof}

Combining the QAEP with the DPI for the smooth min-entropy immediately implies the data-processing inequality for the von Neumann entropy.

The QAEP can be proved by upper and lower bounding the min-entropy by the von Neumann entropy in the limit as the smoothing parameter goes to zero and the number of systems, $n$, is taken to infinity (called the i.i.d.~limit).

For the upper bound, we upper bound the smooth min-entropy by the von Neumann entropy of a state that is close to the state in the min-entropy. Then in the i.i.d.~limit, since the von Neumann entropy is continuous in its state (via Fannes' inequality \cite{fannes73}), the von Neumann entropy of this close state approaches the i.i.d.~state.

For the lower bound, we use a chain rule to break up the conditional min-entropy into a sum of two non-conditional entropies and these can be lower bounded in the i.i.d.~limit by the non-conditional R{\'e}nyi entropy \cite{renyi61}.

In summary, we have shown a proof of the data-processing inequality of the min-entropy (which is relatively simple) and performed a specialization of this data-processing inequality to the von Neumann entropy (which is much more involved than the proof for the min-entropy DPI). Because of the QAEP, the min-entropy can be thought of as a generalization of the von Neumann entropy to the one-shot scenario. This means that the more fundamental property is the data-processing inequality for the min-entropy, which is easily proved. The difficulty in proving the data-processing inequality for the von Neumann entropy can then be thought of as trying to prove both the fundamental inequality and the specialization at the same time, which has so far proven difficult.


\chapter{Conclusion and Outlook} \label{chap:conclusion}

This thesis has presented the current understanding of security in quantum key distribution and recent techniques used to prove security for various protocols (Chapter~\ref{chap:security_proofs}). Many common assumptions made in QKD and quantum cryptography have also been discussed (Chapter~\ref{chap:assumptions}). We also outlined two contributions: an intuitive proof of the data-processing inequality and the security proofs of two two-way QKD protocols (Chapter~\ref{chap:contributions}).

In particular, we have presented two frameworks that can be used to classify QKD protocols and their assumptions.

Protocol classes (Section~\ref{sec:classes}) allow for the classification of proof techniques to specify which techniques apply to which kinds of protocols. Several proof techniques and reductions were explained in Section~\ref{sec:methods} and their applicability was specified using the framework of protocol classes.

Assumption classes (Section~\ref{sec:assumption_classification}) classify assumptions into four types: fundamental, calibrated, verifiable, and satisfiable. These clarify the level of justification these assumptions have and whether Eve can exploit them or not. Fundamental assumptions are either dependent on the underlying physical theory (and are therefore justified) or are completely unjustified. Calibrated assumptions are approximately justified by the structure of the devices but Eve can exploit these assumptions to get partial or full information about the key. Verifiable assumptions are completely justified by an experimental test. Satisfiable assumptions are justified by a modification of the protocol, but may require further assumptions that are not justified.

Several examples of the kinds of assumptions made in quantum cryptography and QKD protocols were analyzed in Chapter~\ref{chap:assumptions}. However, this was certainly not an exhaustive list. As several results have shown \cite{lutkenhaus99,dusek99,kurtsiefer01,lo05b,gisin06,lamas-linares07,fung07b,fung07,lydersen10,weier11,barrett13a,sun214,jain14} there are assumptions that have just not been thought of previously and this will continue to be the case for the models of device-dependent protocols and their implementations. In addition, this chapter focused on QKD but many of the assumptions made there also apply to the implementations of other quantum-cryptography protocols (for example, see the recent coin-flipping experiment \cite{berlin11}).

Ideally, a quantum-cryptography protocol should only make verifiable assumptions; fundamental assumptions that rely on the laws of physics or assumptions that are unavoidable (such as the isolation of Alice's and Bob's labs); or satisfiable assumptions that only lead to assumptions that are justified. There is no such proof in QKD that is both physically implementable with current technology and only has assumptions of these types. This is the goal, first motivated in \cite{ekert91,mayers98a} and more recently analyzed in \cite{mckague10,braunstein12,yang14} that QKD strives to achieve. For example, devices could be produced by an eavesdropper that are `self-testing.' Alice and Bob can perform a test of their devices before the protocol so that when they run the protocol they are guaranteed that they get a secure key with high probability or the protocol aborts, regardless of what an eavesdropper does or how the devices behave. This kind of QKD would truly merit the name of ``unconditional'' security, since it would only rely on the most basic of assumptions, such as the laws of physics.

The authors of \cite{scarani09b} proposed that the QKD community would go in two directions: device-dependent proofs with increasingly more realistic models and device-independent proofs that ignore the underlying states and only deal with the conditional probability distributions resulting from inputs and outputs from measurements. It is still not clear if device-independent proofs will be experimentally implementable without making further assumptions, such as the assumption that loophole-free Bell tests are implemented. The protocols must also be able to tolerate realistic errors and loss. However, this direction seems most promising for removing unjustified assumptions from security proofs of QKD protocols.

With the advent of the entropic uncertainty relation (Theorem~\ref{thm:ur}) and measurement-device-independent QKD (Section~\ref{sec:MDI}) it seems a hybrid approach may also be possible, where only some devices need to be characterized. Experimentalists have recently implemented QKD using a security proof that uses the uncertainty relation \cite{bacco13} and MDI QKD has been implemented as well \cite{rubenok13,liu13}. Also, there is other theoretic work that seeks to connect perfect models with realistic implementations. For example, \cite{arrazola14} connects protocols that use qubits, unitaries, and projective measurements to protocols that use coherent states, linear optics, and threshold detectors.

An interesting tool that is currently lacking would be an entropic uncertainty relation that applies to single POVMs, so that a basis choice is not necessary. Perhaps it would also take loss into account, so that security could be proven for the DPS and COW protocols as well as prove security of the B92 protocol in a new way. The current uncertainty relation could be applied to these protocols but they relate the min-entropy of the actual protocol to the max-entropy of a counterfactual protocol. It is then unclear how to infer what the max-entropy is in the counterfactual protocol by only using parameter estimation in the actual protocol.

There are several efforts to make QKD more practical and useful by increasing the maximum distance possible \cite{hughes14} and doing QKD with a satellite \cite{meyer-scott11,wang12,vallone14,qi214}. There are also efforts to extend QKD to new frontiers: to not only use quantum mechanics but to also use relativity for QKD \cite{radchenko14,cotler14} and to perform QKD underwater \cite{shi14}.

This thesis has focused primarily on discrete-variable QKD, but there is much work being done on continuous-variable QKD. Imperfections and side-channel attacks in CV QKD are considered in \cite{jouguet12a,huang14}. Device-independent CV QKD has been proposed \cite{marshall14}. Experiments have also implemented CV QKD, for example, in free space \cite{heim14}. Improvements have also been found to increase the distance and key rate \cite{ma14,jouguet14,furrer14b}.

The fate of two-way QKD is still not clear, i.e.~it is not clear if two-way QKD provides an advantage over one-way protocols. A recent result does an error analysis for the original LM05 protocol \cite{shaari13} in an attempt to see if the protocol is secure without the need for the modifications in \cite{beaudry13} (see Section~\ref{sec:LM05}). It would be interesting to see how the key rate for a secure finite-key two-way QKD protocol would compare to one-way equivalents. While our result \cite{beaudry13} hints that such an advantage may be possible for the protocol based on super dense coding (due to the deterministic basis choices in the measurement) only a finite-key rate would take this advantage into account.

\begin{appendices}

\chapter{Squeezed States and Phase Space} \label{app:squeezed}

This appendix defines squeezed states and how they can be represented in phase space (as well as how to represent coherent states in phase space). These states are used for continuous-variable quantum key distribution (see Section~\ref{sec:continuous_protocols}).

Squeezed states are related to coherent states (Eq.~\ref{eq:coherent_state}), in that they are a superposition of the number states of a particular form. First, note that a coherent state may be written as
\begin{equation} \label{eq:coherent_op}
\ket{\alpha} = e^{\alpha\hat{a}^{\dag} - \alpha^*\hat{a}}\ket{0},
\end{equation}
where $\hat{a}^{\dag}$ is the creation operator, $\hat{a}$ is the annihilation operator, and $\ket{0}$ is the vacuum state \cite{loudon00}. The operator in front of the vacuum state is called the \emph{displacement} operator, $\hat{D}(\alpha):= e^{\alpha\hat{a}^{\dag} - \alpha^*\hat{a}}$. This operator is named this way because it displaces the creation and annihilation operators: $\hat{D}^{\dag}(\alpha)\hat{a}\hat{D}(\alpha)=\hat{a}+\alpha$ and $\hat{D}(\alpha)\hat{a}^{\dag}\hat{D}^{\dag}(\alpha)=\hat{a}^{\dag}+\alpha^*$. Also, the displacement operator is unitary. To show this, we define $\hat{c}:=\alpha\hat{a}^{\dag} - \alpha^*\hat{a}$ and notice that $\hat{c}^{\dag}=-\hat{c}$, then
\begin{equation}
\hat{D}(\alpha)\hat{D}^{\dag}(\alpha) = \hat{D}^{\dag}(\alpha)\hat{D}(\alpha) =  e^{\hat{c}-\hat{c} +\frac{1}{2}[\hat{c},-\hat{c}]} = 1,
\end{equation}
where we use the property $e^{\hat{f}}e^{\hat{g}} = e^{\hat{f}+\hat{g}+\frac{1}{2}[\hat{f},\hat{g}]}$ and $[\hat{f},\hat{g}]:=\hat{f}\hat{g}-\hat{g}\hat{f}$ is the commutator.

A squeezed state may be written in a similar way to Eq.~\ref{eq:coherent_op} as a state being acted upon by the squeezing operator:
\begin{equation}
\hat{S}(\zeta) := e^{\frac{1}{2} \left( \zeta^{*} \hat{a}^2 - \zeta \hat{a}^{\dag 2} \right)}, \quad \zeta\in\mathds{C}.
\end{equation}
The squeezing operator can be shown to be unitary in the same way as the displacement operator. If we write the squeezing operator applied to the vacuum state as $\ket{\zeta}:=\hat{S}(\zeta)\ket{0}$, and write $\zeta$ in its polar form $\zeta=re^{i\theta}$ it can be written in the photon number basis \cite{loudon00} as
\begin{equation}
\ket{\zeta} = \sqrt{\sech r} \sum_{n=0}^{\infty} \frac{\sqrt{(2n)!}}{n!} \left( -\frac{1}{2} e^{i\theta} \tanh r \right)^n \ket{2n},
\end{equation}
where $\sech$ and $\tanh$ are the hyperbolic secant and tangent functions.
This state is called the squeezed vacuum state.

There are also squeezed coherent states, where the squeezed vacuum state is displaced by the displacement operator, $\ket{\alpha,\zeta}:=\hat{D}(\alpha)\ket{\zeta}$.

The idea in continuous-variable QKD is to send either coherent states or squeezed coherent states with different values of $\alpha$ and $\zeta$. Note that coherent states are nonorthogonal:
\begin{equation}
\bk{\beta}{\alpha} = e^{-\frac{|\beta|^2}{2}}e^{-\frac{|\alpha|^2}{2}} \sum_{n=0}^{\infty} \frac{\beta^{* n}\alpha^n}{n!} = e^{-\frac{|\beta|^2}{2}-\frac{|\alpha|^2}{2}+\beta^*\alpha} = e^{-|\alpha-\beta|^2}.
\end{equation}
The overlap is also known for squeezed coherent states \cite{sandoval92}, and they are also non-orthogonal. Therefore, Eve cannot distinguish the states that Alice sends with certainty.

A useful way to depict coherent states and squeezed states is by drawing them pictorially in phase space. If we define the quadrature operators $\hat{X}$ and $\hat{Y}$ as
\begin{equation}
\hat{X} := \frac{1}{2}(\hat{a}^{\dag}+\hat{a}),\quad \hat{Y} := \frac{1}{2} i (\hat{a}^{\dag}-\hat{a}),
\end{equation}
then recalling that coherent states are eigenvectors of coherent states, $\hat{a}\ket{\alpha} = \alpha\ket{\alpha}$, the expectation values of these operators for coherent states and squeezed coherent states are \cite{loudon00}:
\begin{align}
\bra{\alpha}\hat{X}\ket{\alpha} &= \frac{1}{2}\bra{\alpha}(\hat{a}^{\dag}+\hat{a})\ket{\alpha} = \frac{1}{2} (\alpha^*+\alpha) = \text{Re}(\alpha) \\
\bra{\alpha}\hat{Y}\ket{\alpha} &= \frac{i}{2}\bra{\alpha}(\hat{a}^{\dag}-\hat{a})\ket{\alpha} = \frac{i}{2} (\alpha^*-\alpha) = \text{Im}(\alpha) \\
\bra{\alpha,\zeta}\hat{X}\ket{\alpha,\zeta} &= \text{Re}(\alpha) \\
\bra{\alpha,\zeta}\hat{Y}\ket{\alpha,\zeta} &= \text{Im}(\alpha),
\end{align}
where we leave out the calculation of the squeezed coherent state expectation values (see \cite{loudon00}). Also, using the commutation relation $[\hat{a},\hat{a}^{\dag}]=1$ the expectation of the second moments of the coherent state are
\begin{align} 
\bra{\alpha}\hat{X}^2\ket{\alpha} &= \frac{1}{4}\bra{\alpha}(\hat{a}^{\dag}+\hat{a})^2\ket{\alpha} = \frac{1}{4} (\alpha^{* 2} + 2 |\alpha|^2 +\alpha^2 + 1) \\ 
&= \text{Re}(\alpha)^2 +\frac{1}{4}\\
\bra{\alpha}\hat{Y}^2\ket{\alpha} &= \frac{-1}{4}\bra{\alpha}(\hat{a}^{\dag}-\hat{a})^2\ket{\alpha} = \frac{-1}{4} (\alpha^{* 2} - 2 |\alpha|^2 +\alpha^2 - 1) \\
&= \text{Im}(\alpha)^2+\frac{1}{4}.
\end{align}

These expectations can be used to calculate the variances for coherent states and squeezed coherent states, where the squeezing parameter is $\zeta=re^{i\theta}$,
\begin{align}
\langle\Delta X\rangle^2 &= \langle \hat{X}^2\rangle -\langle \hat{X}\rangle^2 = \frac{1}{4} \\
\langle\Delta Y\rangle^2 &= \langle \hat{Y}^2\rangle -\langle \hat{Y}\rangle^2 = \frac{1}{4} \\
\langle\Delta X\rangle^2 &= \frac{1}{4} \left( e^{2r} \sin^2 \left(\frac{\theta}{2}\right) + e^{-2r} \cos^2\left(\frac{\theta}{2}\right) \right)\\
\langle\Delta Y\rangle^2 &= \frac{1}{4} \left( e^{2r} \cos^2 \left(\frac{\theta}{2}\right) + e^{-2r} \sin^2\left(\frac{\theta}{2}\right) \right).
\end{align}
This means that we can represent a coherent state in the $\hat{X}$-$\hat{Y}$ plane as centred on its expected value for $\hat{X}$ and $\hat{Y}$ (i.e.~$(\text{Re}\alpha,\text{Im}\alpha)$) with a region surrounding this point inside the variance. It can be shown that any linear combination of $\hat{X}$ and $\hat{Y}$ also leads to the variance of $1/4$, which means that this region is a circle (see Fig.~\ref{fig:phase_space}).

For a squeezed coherent state, we can similarly centre the state at $(\text{Re}\alpha,\text{Im}\alpha)$, and the variance is now an ellipse with major axis length $e^{2r}$ and minor axis length $e^{-2r}$ at an angle of $\theta/2$ (see Fig.~\ref{fig:phase_space}).

\begin{figure} \centering
\includegraphics[width=0.8\textwidth]{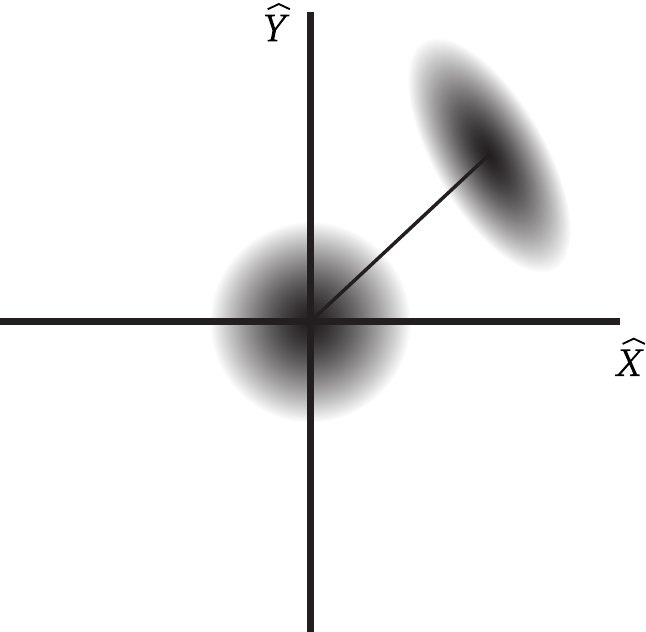}
\caption[A Coherent State and Squeezed State in Phase Space]{A coherent state and a squeezed state represented in phase space. The coherent vacuum state is centred at the origin with a Gaussian distribution of its probability density. The squeezed state is represented as an ellipse centred at $(\text{Re}(\alpha),\text{Im}(\alpha))$ at an angle $\theta$, whose width is given by the squeezing parameter $r$ and with a Gaussian distribution of its density.}
\label{fig:phase_space}
\end{figure}

\chapter{Miscellaneous Math} \label{app:math}

Here we discuss various mathematical properties used in the main text, including fields (Section~\ref{sec:fields}), Big O Notation (Section~\ref{sec:bigO}), and norms (Section~\ref{sec:norms}).

\section{Fields} \label{sec:fields}

In mathematics, a field is a set of elements with two operations, called addition and multiplication, that satisfy a list of properties. If $\mathcal{F}$ is a field, $a,b,c\in \mathcal{F}$ are elements of $\mathcal{F}$, and $+$ and $\times$ are the addition and multiplication operations then the operations $+$ and $\times$ must satisfy
\begin{itemize}
\item Closure: $a+b \in \mathcal{F}$ and $a \times b \in\mathcal{F}$.
\item Associativity: $a+(b+c) = (a+b) +c$ and $a \times (b\times c) = (a\times b)\times c$.
\item Commutativity: $a+b=b+a$ and $a\times b = b\times a$.
\end{itemize}
There must also exist identity elements and inverses in $\mathcal{F}$.
\begin{itemize}
\item Identity elements: There exist elements $0 \in \mathcal{F}$ and $1\in \mathcal{F}$ such that $\forall a \in \mathcal{F}$, $a+0=a$ and $a\times 1 = a$.
\item Inverses: For all $a\in \mathcal{F}$ there exists an element $-a \in \mathcal{F}$ and $a^{-1}\in\mathcal{F}$ such that $a+(-a)=0$ and (except for $a=0$) $a\times a^{-1} =1$.
\end{itemize}
Finally, $\times$ should be distributive over $+$.
\begin{itemize}
\item Distributivity: For all $a,b,c\in\mathcal{F}$ then $a\times (b+c) = (a\times b) + (a\times c)$.
\end{itemize}

One example of a field are the rational numbers with the usual addition and multiplication operations from arithmetic. One kind of field we use explicitly in Section~\ref{sec:post_processing} is a finite field, also called a Galois field, $GF$. They only exist when the number of elements are equal to a prime number to an integer power, such as $2^n$, where $n$ is an integer. An example of a Galois field is the set of integers modulo a prime number.

\section{Big O Notation} \label{sec:bigO}

Given a function of several variables, the limiting behaviour of the function can be characterized by its dominant term in a certain limit. For example, if the number of signals in a QKD protocol approaches infinity then the amount of classical communication needed may scale according to a function of the number of signals. We use the following notation from computer science to denote this scaling behaviour. 

\begin{defn}[Big $O$ notation] \label{defn:bigO}
Given two functions $f$ and $g$ that map from a subset of $\mathds{R}$ to a subset of $\mathds{R}$ then we write $f(x) = O(g(x))$ iff there exists a constant, $c$, and a real number $x_0$ such that 
\begin{equation}
|f(x)|\leq c |g(x)| \;\forall x\geq x_0.
\end{equation}
\end{defn}

As an example, consider the function $f(x) = 2x + \log x$, then $f=O(x)$.

\section{Norms} \label{sec:norms}

Norms are used to measure the size (in an abstract sense) of vectors in a vector space. Formally they are defined as functions from a vector space to a subfield of the complex numbers (i.e.~a field whose elements are complex numbers). A norm is denoted as $\| x \|$ for an element of a vector field $x\in V$. For all $a\in\mathcal{F}\subseteq \mathds{C}$ and $u,v \in V$ the norm $\| \cdot \|$ satisfies the following properties.
\begin{itemize}
\item Absolute linearity: $\|av\| = |a| \|v\|$.
\item Triangle inequality: $\|u+v\| \leq \|u\| + \|v\|$.
\item Zero vector: If $\|v\|=0$ then $v$ is the zero vector.
\end{itemize}

We now consider norms that are used in this thesis. 

\begin{defn}[Operator Norm, Infinity Norm] Let $L$ be a linear operator from $\hilbert_A$ to $\hilbert_B$, then the operator norm is defined as
\begin{equation}
\| L \|_\infty := \sup_{\ket{\psi}\in\hilbert_A} \frac{\left\|L\ket{\psi}\right\|}{\left\|\ket{\psi}\right\|},
\end{equation}
where $\left\| \ket{\psi}\right\|:=\sqrt{\bk{\psi}{\psi}}$. This norm is equivalent to the largest singular value\footnote{Singular values of an operator, $L$, are the eigenvalues of $\sqrt{L^{\dag}L}$.} of $L$. If $L$ is a normal matrix\footnote{A normal matrix, $N$, is one that satisfies $N^{\dag}N = N N^{\dag}$.}, then the singular values of $L$ are the same as the eigenvalues of $L$.
\end{defn}

\begin{defn}[Trace Norm] Let $L$ be a linear operator from $\hilbert_A$ to $\hilbert_B$ with singular values $s_i(L)$ then the trace norm is defined as
\begin{equation}
\| L \|_1 := \sum_{i} s_i (L) = \Tr |L|,
\end{equation}
where $|L|:=\sqrt{L^{\dag}L}$.
\end{defn}

\begin{defn}[Hilbert-Schmidt Norm] \label{defn:HSnorm} Let $L$ be a linear operator from $\hilbert_A$ to $\hilbert_B$ with singular values $s_i(L)$ then the Hilbert-Schmidt norm is defined as
\begin{equation}
\| L \|_2 := \sqrt{\sum_i {s_i(L)}^2} = \sqrt{\Tr (L^{\dag}L)}.
\end{equation}
\end{defn}

Norms can also be used to define a measure of distance, called a \emph{metric}. If $X$ is a set, then metrics are functions from $X\times X$ to the real numbers $\mathds{R}$. A metric, $d$, for all $x,y,z \in X$ has the following defining properties. 
\begin{itemize}
\item Non-negativity: $d(x,y)\geq 0$.
\item Identity of indiscernibles: $d(x,y)=0$ iff $x=y$.
\item Symmetry: $d(x,y)=d(y,x)$.
\item Triangle inequality: $d(x,z) \leq d(x,y) + d(y,z)$.
\end{itemize}

To use our norms defined above to define a metric, we simply take the difference of two vectors under the norm. The trace norm defines the $L_1$-distance between two vectors $x$ and $y$ as
\begin{equation}
\|x-y\|_1 = \sum_i \| x_i-y_i\|,
\end{equation}
where $x_i$ and $y_i$ are the elements of the vectors $x$ and $y$ respectively. As we will see below, this norm can also be used as a metric between quantum states.

There is also a quantum generalization of this distance measure. 

\begin{defn}[Trace Distance] \label{defn:tdist} Let $\rho,\sigma\in S_{\leq}(\hilbert)$ then the trace distance between $\rho$ and $\sigma$ is defined as
\begin{equation}
D(\rho,\sigma) := \frac{1}{2}\| \rho-\sigma \|_1.
\end{equation}
\end{defn}
The trace distance can be interpreted as a distinguishing probability. Given a state that is guaranteed to be either $\rho$ or $\sigma$ then the average success probability of correctly guessing which state it is by performing the optimal measurement whose outcome indicates that the state is $\rho$ or $\sigma$ is given by
\begin{equation}
\Pr[\text{guess}] = \frac{1}{2}+\frac{1}{2}D(\rho,\sigma).
\end{equation}

Another common quantity that characterizes the distance between quantum states is the fidelity. Here we define the fidelity as the generalized fidelity for unnormalized states.

\begin{defn}[Fidelity] \label{defn:fidelity} Let $\rho,\sigma\in S_{\leq}(\hilbert)$ then the generalized fidelity between $\rho$ and $\sigma$ is defined as
\begin{equation}
F(\rho,\sigma) := \|\sqrt{\rho}\sqrt{\sigma}\|_1 + \sqrt{(1-\Tr\rho)(1-\Tr\sigma)}.
\end{equation}
Note that if $\rho$ or $\sigma$ is normalized, then the generalized fidelity reduces to the fidelity $F(\rho,\sigma)=\|\sqrt{\rho}\sqrt{\sigma}\|_1$.
\end{defn}

The fidelity is unfortunately not a proper metric ($F(\rho,\sigma)=0$ iff $\rho=\sigma$ is not true, but instead $F(\rho,\sigma)=1$ iff $\rho=\sigma$). However, the fidelity is useful for several properties. One of these properties is its unitary invariance ($F(\rho,\sigma) = F(U\rho U^{\dag},U\sigma U^{\dag})$). Another useful property is that the states can be purified (see Section~\ref{sec:operators}) and the fidelity remains unchanged. 

\begin{thm}[Uhlmann's Theorem \cite{nielsen00}] Let $\rho,\sigma\in\hilbert$ and let a purification of $\rho$ be $\ket{\phi}$ then
\begin{equation}
F(\rho,\sigma) = \max_{\ket{\psi}}|\bk{\psi}{\phi}|,
\end{equation}
where $\ket{\psi}$ is a purification of $\sigma$.
\end{thm}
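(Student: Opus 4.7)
The plan is to reduce the problem to a standard matrix optimization using the vector representation of operators from Definition~\ref{defn:vector}. First I would reduce to the normalized case: if $\rho,\sigma \in S_{\leq}(\hilbert)$ are sub-normalized, embed them into normalized states $\tilde{\rho} = \rho \oplus (1-\Tr\rho)\kb{\perp}{\perp}$ and $\tilde{\sigma} = \sigma \oplus (1-\Tr\sigma)\kb{\perp}{\perp}$ on a Hilbert space extended by one extra basis vector $\ket{\perp}$. Any purification of $\tilde{\rho}$ restricted to the original subspace purifies $\rho$, and the extra cross-term $\sqrt{(1-\Tr\rho)(1-\Tr\sigma)}$ appearing in the generalized fidelity emerges naturally as the $\ket{\perp}\bra{\perp}$ block of $\sqrt{\tilde{\rho}}\sqrt{\tilde{\sigma}}$. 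It therefore suffices to prove Uhlmann's identity for normalized states.

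Next I would exploit the fact that every purification of $\rho$ can be written, using the vector representation, in the canonical form $\ket{\phi} = \kket{\sqrt{\rho}\,V}$ for some isometry $V$ on the purifying system (and similarly $\ket{\psi} = \kket{\sqrt{\sigma}\,U}$ for $\sigma$). This is just the statement after Definition~\ref{defn:vector} that all purifications are equivalent up to an isometry on the purifying system, combined with the explicit purification $\sum_i \sqrt{\lambda_i}\ket{i}_A\ket{i}_B$ built from the spectral decomposition of $\rho$. A direct calculation from Definition~\ref{defn:vector} gives $\bbra{A}\kket{B} = \Tr(A^{\dag}B)$, and therefore
\begin{equation}
\bk{\psi}{\phi} = \Tr\bigl(U^{\dag}\sqrt{\sigma}\sqrt{\rho}\,V\bigr).
\end{equation}
Since $V$ is fixed by the given purification $\ket{\phi}$, the problem reduces to maximizing $|\Tr(U^{\dag} M V)|$ over isometries $U$, where $M := \sqrt{\sigma}\sqrt{\rho}$.

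The key technical ingredient I would invoke is the variational formula for the trace norm: $\max_{W} |\Tr(WN)| = \|N\|_1$, where the maximum is over unitaries $W$ of appropriate dimension. This follows from the singular value decomposition $N = X D Y^{\dag}$, after which $\Tr(WN) = \Tr(Y^{\dag}WXD)$ is a sum of the positive singular values weighted by diagonal entries of a unitary, hence bounded by $\Tr D = \|N\|_1$, with equality attained at $W = YX^{\dag}$. Applied with $N = MV$ and $W = U^{\dag}$, and noting that $\|MV\|_1 = \|M\|_1 = \|\sqrt{\sigma}\sqrt{\rho}\|_1 = \|\sqrt{\rho}\sqrt{\sigma}\|_1$ because $V$ is an isometry and the trace norm is invariant under taking adjoints, this yields exactly $F(\rho,\sigma)$.

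The main obstacle I anticipate is bookkeeping the dimensions of the purifying systems carefully: the minimum-dimension purifying system has dimension $\rank(\rho)$ or $\rank(\sigma)$, and I need to argue both that enlarging beyond this cannot increase the maximum (any additional degrees of freedom are absorbed into the optimization over $U$) and that the purifying systems for $\ket{\phi}$ and $\ket{\psi}$ may be taken to be the same without loss of generality (by padding the smaller one with a zero block, which does not change the inner product). Once this is handled, the combination of the canonical parametrization of purifications, the trace formula for inner products of vectorized operators, and the variational characterization of the trace norm will deliver the theorem.
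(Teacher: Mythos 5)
The paper does not prove this theorem; it is stated in Appendix~\ref{sec:norms} as a cited result from \cite{nielsen00}, so there is no in-paper argument to compare against. Your proposal is a correct reconstruction of the standard textbook proof: parametrize purifications as $\kket{\sqrt{\rho}\,V}$ and $\kket{\sqrt{\sigma}\,U}$ via the polar decomposition, use $\bbra{A}\kket{B}=\Tr(A^{\dag}B)$ to reduce the overlap to $\Tr(U^{\dag}\sqrt{\sigma}\sqrt{\rho}\,V)$, and apply the variational formula $\max_{W}|\Tr(WN)|=\norm{N}_1$. The one point that deserves the care you already flag is that $U$ ranges over the (co-)isometries parametrizing purifications of $\sigma$ rather than over all unitaries: the upper bound still holds because $|\Tr(WN)|\leq\norm{N}_1$ for any contraction $W$, and the bound is attained because the optimizer $W=YX^{\dag}$ from the singular value decomposition is itself a unitary, hence corresponds to a legitimate purification once the purifying systems are taken of dimension $d_A$. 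With that observation, and noting that $\norm{MV}_1=\norm{M}_1$ holds because $VV^{\dag}$ acts as the identity on $\supp\rho\supseteq\supp(M^{\dag}M)$ (not merely because $V$ is an isometry in the abstract), your argument is complete; the sub-normalized reduction via the $\ket{\perp}$ block is also sound, since the trace norm is additive over the block decomposition of $\sqrt{\tilde{\rho}}\sqrt{\tilde{\sigma}}$.
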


One way to turn the fidelity into a metric is by using the purified distance. 

\begin{defn}[Purified Distance \cite{tomamichel10c}] \label{defn:pdist} Let $\rho,\sigma\in S_{\leq}(\hilbert)$ then the purified distance between $\rho$ and $\sigma$ is defined as
\begin{equation}
P(\rho,\sigma) := \sqrt{1-F^2(\rho,\sigma)}.
\end{equation}
\end{defn}

This distance inherits many of the properties of the fidelity, and in addition is now a metric. To see the relationship between these various distance measures, there is the following set of inequalities. 

\begin{lemma}[Relationship of Trace Distance and Fidelity \cite{nielsen00}]
\begin{equation}
1-F(\rho,\sigma) \leq D(\rho,\sigma) \leq \sqrt{1-F^2(\rho,\sigma)}.
\end{equation}
\end{lemma}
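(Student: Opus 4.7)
I will handle the upper and lower bounds separately, first for normalized states and then extending to the subnormalized case by a standard embedding.

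For the upper bound $D(\rho,\sigma) \leq \sqrt{1-F^2(\rho,\sigma)}$, my plan is to reduce to the pure-state case via Uhlmann's theorem. By Uhlmann, there exist purifications $\ket{\psi},\ket{\phi}$ of $\rho,\sigma$ on an enlarged Hilbert space such that $|\bk{\psi}{\phi}| = F(\rho,\sigma)$. For pure states a direct computation using the spectral decomposition of the rank-two Hermitian operator $\kb{\psi}{\psi}-\kb{\phi}{\phi}$ gives the identity $D(\kb{\psi}{\psi},\kb{\phi}{\phi}) = \sqrt{1-|\bk{\psi}{\phi}|^2}$. Since the partial trace is a CPTP map, Lemma~\ref{lemma:CPTP_dist} (monotonicity of the trace distance under CPTP maps) yields $D(\rho,\sigma) \leq D(\kb{\psi}{\psi},\kb{\phi}{\phi}) = \sqrt{1-F^2(\rho,\sigma)}$. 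This part is clean and short.

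For the lower bound $1-F(\rho,\sigma) \leq D(\rho,\sigma)$, the plan is to pass through classical distributions obtained by measurements. I first establish the classical version: for probability distributions $\{p_x\},\{q_x\}$ with $F_c := \sum_x\sqrt{p_xq_x}$ and $D_c := \tfrac{1}{2}\sum_x|p_x-q_x|$, one has $1-F_c \leq D_c$. This follows by rewriting $1 - F_c = \tfrac12\sum_x(\sqrt{p_x}-\sqrt{q_x})^2$ and bounding $|\sqrt{p_x}-\sqrt{q_x}| \leq \sqrt{p_x}+\sqrt{q_x}$ term by term, giving $(\sqrt{p_x}-\sqrt{q_x})^2 \leq |p_x-q_x|$. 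To lift this to the quantum setting, I will use the fact that there exists a POVM $\{E_x\}$ achieving $F(\rho,\sigma) = \sum_x\sqrt{\Tr(E_x\rho)\Tr(E_x\sigma)}$ (the measurement attaining the Fuchs--Caves minimum, which can be constructed explicitly from the eigendecomposition of $\sigma^{-1/2}\rho\sigma^{-1/2}$). For that POVM, the induced classical distributions satisfy $F = F_c$, and simultaneously $D(\rho,\sigma) \geq D_c$ because any measurement is a CPTP map from states to classical distributions and the trace distance is contractive under CPTP maps. Combining these with the classical inequality gives $1 - F(\rho,\sigma) = 1 - F_c \leq D_c \leq D(\rho,\sigma)$.

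The main obstacle I anticipate is the lower bound, specifically invoking the existence of a fidelity-attaining measurement, which is not stated explicitly in the excerpt. A self-contained alternative I would pursue if that result cannot be cited directly is to work with the two-outcome measurement $\{P, \identity-P\}$ where $P$ is the projector onto the positive part of $\rho-\sigma$: this directly gives $D(\rho,\sigma) = |p-q|$ with $p=\Tr(P\rho)$, $q=\Tr(P\sigma)$, and one needs to show $\sqrt{pq}+\sqrt{(1-p)(1-q)} \geq F(\rho,\sigma)$ to bound the fidelity from above using this specific measurement. Proving this bound on $F(\rho,\sigma)$ requires a short argument via the operator Cauchy--Schwarz inequality applied to $\sqrt{P}\sqrt{\rho}$ and $\sqrt{P}\sqrt{\sigma}$ (and similarly for $\identity-P$), which yields $\|\sqrt{\rho}\sqrt{\sigma}\|_1 \leq \sqrt{pq}+\sqrt{(1-p)(1-q)}$ after summing. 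The classical inequality then finishes the bound.

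Finally, to extend from $S_=(\hilbert)$ to $S_\leq(\hilbert)$, I will use the standard embedding: given $\rho,\sigma \in S_\leq(\hilbert)$, define $\tilde\rho := \rho \oplus (1-\Tr\rho)\kb{\bot}{\bot}$ and $\tilde\sigma$ analogously on $\hilbert \oplus \mathds{C}$. Then $\tilde\rho,\tilde\sigma$ are normalized, $D(\tilde\rho,\tilde\sigma) = D(\rho,\sigma)$ (since the off-diagonal blocks vanish and the $\bot$-components cancel), and the standard fidelity of the embedded states equals the generalized fidelity $F(\rho,\sigma)$ from Defn.~\ref{defn:fidelity} because $\|\sqrt{\tilde\rho}\sqrt{\tilde\sigma}\|_1 = \|\sqrt{\rho}\sqrt{\sigma}\|_1 + \sqrt{(1-\Tr\rho)(1-\Tr\sigma)}$. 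Applying the normalized-case inequalities to $\tilde\rho,\tilde\sigma$ then yields the claim for the subnormalized case without further work.
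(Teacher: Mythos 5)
The paper does not prove this lemma --- it is stated with a citation to \cite{nielsen00} --- so there is no in-house proof to compare against. Your argument for normalized states is the standard one from that reference (Uhlmann plus the pure-state identity plus monotonicity of the trace distance for the upper bound; the fidelity-attaining measurement plus the classical inequality $1-F_c\le D_c$ for the lower bound) and that core is correct. However, both of the extras you add contain genuine errors.

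First, your ``self-contained alternative'' for the lower bound has the key inequality pointing the wrong way. The Helstrom projector gives you $D(\rho,\sigma)=|p-q|=D_c$ exactly, but measurement monotonicity of the fidelity only yields $F(\rho,\sigma)\le \sqrt{pq}+\sqrt{(1-p)(1-q)}=F_c$, i.e.\ an \emph{upper} bound on $F$, whereas to conclude $1-F\le D$ you need a \emph{lower} bound on $F$ in terms of $p$ and $q$. The chain you would obtain, $1-F\ge 1-F_c$ together with $1-F_c\le D_c=D$, does not combine into anything. This is precisely why the argument must pass through the measurement that attains the fidelity (so that $F_c=F$ and the slack $D_c\le D$ sits on the harmless side) rather than the measurement that attains the trace distance. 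The Fuchs--Caves ingredient you flagged as the ``main obstacle'' is therefore not optional in this route, and the fallback cannot replace it.

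Second, the extension to $S_{\le}(\hilbert)$ is broken. With the embedding $\tilde\rho=\rho\oplus(1-\Tr\rho)\kb{\bot}{\bot}$ one indeed gets $F(\tilde\rho,\tilde\sigma)=F(\rho,\sigma)$, but the $\bot$-components do \emph{not} cancel in the trace distance unless $\Tr\rho=\Tr\sigma$; in general $D(\tilde\rho,\tilde\sigma)=D(\rho,\sigma)+\tfrac12\left|\Tr\rho-\Tr\sigma\right|$. The upper bound still transfers because $D(\rho,\sigma)\le D(\tilde\rho,\tilde\sigma)$, but the lower bound does not, and it is in fact false for the trace distance of Defn.~\ref{defn:tdist}: for $\rho=\kb{0}{0}$ and $\sigma=\tfrac12\kb{0}{0}$ one has $D(\rho,\sigma)=\tfrac14$ while $1-F(\rho,\sigma)=1-1/\sqrt{2}\approx 0.29$. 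The inequality holds for subnormalized states only with the generalized trace distance $\tfrac12\norm{\rho-\sigma}_1+\tfrac12\left|\Tr\rho-\Tr\sigma\right|$, so the lemma as cited should be read as a statement about normalized states, and your last paragraph should either be dropped or restated with that corrected distance.
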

This relationship means that the fidelity and trace distance (and also the purified distance) characterize the distance between states in a similar way.

An important property of all of these distances is that they are monotone under CPTP maps. 

\begin{lemma}[Distances under CPTP maps \cite{nielsen00,tomamichelthesis}] \label{lemma:CPTP_dist} Let $\rho,\sigma\in S_{\leq}(\hilbert)$ and given a CPTP map $\mathcal{E}$ from $S_{\leq}(\hilbert)$ to $S_{\leq}(\hilbert')$, then
\begin{align}
D(\rho,\sigma) &\geq D(\mathcal{E}(\rho),\mathcal{E}(\sigma)) \\
F(\rho,\sigma) &\leq F(\mathcal{E}(\rho),\mathcal{E}(\sigma)) \\
P(\rho,\sigma) &\geq P(\mathcal{E}(\rho),\mathcal{E}(\sigma)).
\end{align}
\end{lemma}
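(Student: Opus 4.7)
The plan is to prove the three monotonicity inequalities in the order fidelity, then purified distance, then trace distance, since once fidelity monotonicity is in hand the purified distance inequality is immediate, and the trace distance can be handled independently by a short direct argument. Throughout, the two essential features of a CPTP map $\mathcal{E}$ that will be exploited are trace preservation (needed for the sub-normalized correction in the generalized fidelity, and for the Jordan--Hahn argument below) and complete positivity (in order to extend $\mathcal{E}$ to an isometric representation via Stinespring, Theorem~\ref{thm:stinespring}).

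For the fidelity, the idea is to combine Uhlmann's theorem with Stinespring dilation. Write $\mathcal{E}(\cdot) = \Tr_R(V(\cdot)V^{\dag})$ for some isometry $V$ from $\hilbert$ to $\hilbert'\otimes\hilbert_R$. Normalize first: assume for a moment that $\rho,\sigma\in S_{=}(\hilbert)$. Pick purifications $\ket{\psi_\rho}_{AA'}$ and $\ket{\psi_\sigma}_{AA'}$ that achieve the maximum in Uhlmann's theorem, so that $F(\rho,\sigma) = |\bk{\psi_\sigma}{\psi_\rho}|$. Then $(V\otimes\identity_{A'})\ket{\psi_\rho}$ and $(V\otimes\identity_{A'})\ket{\psi_\sigma}$ are purifications of $\mathcal{E}(\rho)$ and $\mathcal{E}(\sigma)$ respectively (with purifying system $RA'$), and since $V$ is an isometry their inner product equals $\bk{\psi_\sigma}{\psi_\rho}$. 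Hence this particular inner product is a candidate in the maximization defining $F(\mathcal{E}(\rho),\mathcal{E}(\sigma))$, which gives the desired inequality. For sub-normalized states, the trace preservation of $\mathcal{E}$ ensures that the extra term $\sqrt{(1-\Tr\rho)(1-\Tr\sigma)}$ in Defn.~\ref{defn:fidelity} is invariant under $\mathcal{E}$, so this reduces to the normalized case applied to $\rho/\Tr\rho$ and $\sigma/\Tr\sigma$ with appropriate scalar bookkeeping.

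The purified-distance inequality is then immediate: since $F$ is monotone non-decreasing under $\mathcal{E}$ and takes values in $[0,1]$, the quantity $1-F^2$ is monotone non-increasing, so $P(\rho,\sigma) = \sqrt{1-F^2(\rho,\sigma)} \geq \sqrt{1-F^2(\mathcal{E}(\rho),\mathcal{E}(\sigma))} = P(\mathcal{E}(\rho),\mathcal{E}(\sigma))$ from Defn.~\ref{defn:pdist}.

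For the trace distance, I would use the Jordan--Hahn decomposition. Since $\rho-\sigma$ is Hermitian, write $\rho-\sigma = P_+ - P_-$ with $P_+, P_- \in \mathcal{P}(\hilbert)$ having orthogonal supports, so that $\|\rho-\sigma\|_1 = \Tr P_+ + \Tr P_-$. Applying $\mathcal{E}$ gives $\mathcal{E}(\rho) - \mathcal{E}(\sigma) = \mathcal{E}(P_+) - \mathcal{E}(P_-)$, but now $\mathcal{E}(P_\pm)$ need not have orthogonal supports. Nevertheless the triangle inequality for the trace norm, combined with $\|\mathcal{E}(P_\pm)\|_1 = \Tr\mathcal{E}(P_\pm) = \Tr P_\pm$ (using positivity and trace preservation of $\mathcal{E}$), yields $\|\mathcal{E}(\rho)-\mathcal{E}(\sigma)\|_1 \leq \Tr P_+ + \Tr P_- = \|\rho-\sigma\|_1$, and dividing by two gives the trace-distance inequality. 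The main obstacle I anticipate is not the core arguments, which are short, but rather ensuring clean treatment of the sub-normalized case in the fidelity step; once the correction term is handled via trace preservation, the rest is routine.
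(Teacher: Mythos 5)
The paper does not prove this lemma itself but defers to \cite{nielsen00,tomamichelthesis}, and your argument is precisely the standard one from those references: Uhlmann plus Stinespring for the fidelity (with the generalized-fidelity correction term handled by trace preservation together with the homogeneity $\|\sqrt{c\rho}\sqrt{d\sigma}\|_1=\sqrt{cd}\,\|\sqrt{\rho}\sqrt{\sigma}\|_1$), the purified distance as an immediate corollary, and the Jordan--Hahn decomposition plus the triangle inequality for the trace norm. All three steps are correct, including the sub-normalized bookkeeping (which also makes your trace-distance argument work without assuming $\Tr P_+=\Tr P_-$), so there is nothing to fix.
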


In addition, the trace distance is strongly convex. 

\begin{thm}[Strong convexity of the trace distance \cite{nielsen00}] \label{thm:trace_convex} Let $\rho^i,\sigma^i\in S_{\leq}(\hilbert)$ and $P$ and $Q$ be probability distributions with probabilities $p_i$ and $q_i$ for indices $i\in\mathcal{I}$. Then
\begin{equation}
D\left(\sum_i p_i \rho^i ,\sum_i q_i \sigma^i \right) \leq D(P,Q) + \sum_i p_i D(\rho^i,\sigma^i)
\end{equation}
\end{thm}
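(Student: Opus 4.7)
The plan is to prove the bound by first applying the triangle inequality for the trace distance to insert the auxiliary mixture $\sum_i p_i \sigma^i$, then handling the two resulting pieces separately. Concretely, I would write
\begin{equation}
D\!\left(\sum_i p_i \rho^i, \sum_i q_i \sigma^i\right) \leq D\!\left(\sum_i p_i \rho^i, \sum_i p_i \sigma^i\right) + D\!\left(\sum_i p_i \sigma^i, \sum_i q_i \sigma^i\right),
\end{equation}
which is valid because the trace norm satisfies the triangle inequality. The goal is then to bound the first summand by $\sum_i p_i D(\rho^i,\sigma^i)$ and the second summand by $D(P,Q)$.

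For the first term I would use ordinary (non-strong) convexity of the trace distance. This follows directly from the triangle inequality and absolute linearity of the trace norm:
\begin{equation}
D\!\left(\sum_i p_i \rho^i, \sum_i p_i \sigma^i\right) = \tfrac{1}{2}\left\|\sum_i p_i(\rho^i - \sigma^i)\right\|_1 \leq \tfrac{1}{2}\sum_i p_i \|\rho^i - \sigma^i\|_1 = \sum_i p_i D(\rho^i,\sigma^i),
\end{equation}
using that the $p_i$ are nonnegative. No subtlety arises from the states being subnormalized since this argument only invokes norm properties.

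For the second term I would factor out the common $\sigma^i$ and again use the triangle inequality for the trace norm together with the bound $\|\sigma^i\|_1 \leq 1$ valid on $S_{\leq}(\hilbert)$:
\begin{equation}
D\!\left(\sum_i p_i \sigma^i, \sum_i q_i \sigma^i\right) = \tfrac{1}{2}\left\|\sum_i (p_i - q_i)\sigma^i\right\|_1 \leq \tfrac{1}{2}\sum_i |p_i - q_i|\,\|\sigma^i\|_1 \leq \tfrac{1}{2}\sum_i |p_i - q_i| = D(P,Q),
\end{equation}
where in the last equality I use that $D(P,Q)$ for classical distributions is the trace distance of the corresponding diagonal density operators, i.e.\ half the $L_1$ distance between $P$ and $Q$. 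Combining the two bounds yields the claim.

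I do not expect any of these steps to be a serious obstacle; the proof is essentially a bookkeeping exercise built out of triangle inequalities. The only mild point of care is ensuring that the bound $\|\sigma^i\|_1 \leq 1$ suffices uniformly in $i$ so that the $(p_i - q_i)$ sum exactly reproduces the classical trace distance, and that the sign of $(p_i - q_i)$ is correctly absorbed into the absolute value rather than being weighted by $\|\sigma^i\|_1$ itself (which could strictly be less than $1$). Since the inequality only claims an upper bound, replacing $\|\sigma^i\|_1$ by $1$ is harmless.
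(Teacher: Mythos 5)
Your proof is correct. The paper itself offers no proof of this theorem---it is stated in the appendix with only a citation to \cite{nielsen00}---so there is nothing internal to compare against; I can only check your argument and relate it to the cited textbook. Your three ingredients all hold: the triangle inequality through the intermediate state $\sum_i p_i\sigma^i$; ordinary convexity of the trace distance from subadditivity and absolute homogeneity of $\|\cdot\|_1$; and the bound $\|\sigma^i\|_1=\Tr\sigma^i\leq 1$, which is valid because elements of $S_{\leq}(\hilbert)$ are positive semidefinite with trace at most one, and which is harmless to relax to $1$ since only an upper bound is claimed. The decomposition also correctly reproduces the asymmetry of the statement (the last term weighted by $p_i$, not $q_i$), because you insert the $p$-weighted mixture of the $\sigma^i$. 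For comparison, the proof in Nielsen and Chuang takes a slightly different route: it uses the variational characterization $D(\rho,\sigma)=\max_{0\leq\Pi\leq\identity}\Tr\bigl(\Pi(\rho-\sigma)\bigr)$, writes
\begin{equation}
\Tr\Bigl(\Pi\Bigl(\sum_i p_i\rho^i-\sum_i q_i\sigma^i\Bigr)\Bigr)=\sum_i p_i\Tr\bigl(\Pi(\rho^i-\sigma^i)\bigr)+\sum_i(p_i-q_i)\Tr(\Pi\sigma^i),
\end{equation}
and bounds the two sums term by term using $0\leq\Tr(\Pi\sigma^i)\leq 1$. The two arguments are essentially the same bookkeeping seen through dual lenses (norm subadditivity versus the optimal distinguishing measurement); yours has the minor advantage of not needing the variational formula, while the textbook version makes the operational meaning of each term more transparent. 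No gap.
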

\end{appendices}

\cleardoublepage 
\phantomsection 
\addcontentsline{toc}{chapter}{Bibliography}
\bibliographystyle{alphaarxiv.bst}
\bibliography{library2}

\end{document}